\documentclass[10pt,a4paper]{article}
\usepackage{a4wide}

\usepackage{microtype}

\usepackage[utf8]{inputenc}

\usepackage{latexsym}
\usepackage{amssymb,amsmath,amsthm}
\usepackage{graphicx}
\usepackage{url}
\usepackage{hyperref}
\usepackage{xcolor}
\usepackage{enumerate}
\usepackage[capitalize]{cleveref}
\usepackage{multirow}
\usepackage{authblk}
\usepackage[normalem]{ulem}
\usepackage{mathtools} 
\usepackage{wrapfig} 

\newtheorem{theorem}{Theorem}
\newtheorem{proposition}[theorem]{Proposition}
\newtheorem{observation}[theorem]{Observation}
\newtheorem{lemma}[theorem]{Lemma}
\newtheorem{corollary}[theorem]{Corollary}

\newtheorem{claim}{Claim}[theorem]
\newtheorem*{claim*}{Claim}
\theoremstyle{definition}
\newtheorem{definition}[theorem]{Definition}
\theoremstyle{remark}

\makeatletter
\newcommand{\ProofEndBox}{{\ifhmode\unskip\nobreak\hfil\penalty50 \else
          \leavevmode\fi\quad\vadjust{}\nobreak\hfill$\Box$
            \finalhyphendemerits=0 \par}}
\makeatother

\newcommand{\R}{{\mathbb{R}}}


\newcommand\cA{\mathcal{A}}
\newcommand\cB{\mathcal{B}}
\newcommand\cC{\mathcal{C}}
\newcommand\cG{\mathcal{G}}
\newcommand\cH{\mathcal{H}}
\newcommand\cI{\mathcal{I}}
\newcommand\cK{\mathcal{K}}
\newcommand\cO{\mathcal{O}}
\newcommand\cS{\mathcal{S}}
\newcommand\cR{\mathcal{R}}
\newcommand\cT{\mathcal{T}}
\newcommand\cU{\mathcal{U}}
\newcommand\cV{\mathcal{V}}

\newcommand\vx{{\sf x}}
\newcommand\vy{{\sf y}}
\newcommand\vz{{\sf z}}

\newcommand\makevec[1]{{\bf #1}}
\def \aa {\makevec{a}}
\def \bb {\makevec{b}}
\def \cc {\makevec{c}}
\def \dd {\makevec{d}}
\def \ee {\makevec{e}}

\def \rr {\makevec{r}}
\renewcommand \ss {\makevec{s}}
\def \ttt {\makevec{t}}
\def \uu {\makevec{u}}
\def \vv {\makevec{v}}
\def \ww {\makevec{w}}
\def \xx {\makevec{x}}
\def \yy {\makevec{y}}
\def \zz {\makevec{z}}
\def \PP {\makevec{P}}
\def \RR {\makevec{R}}

\newcommand\ggamma{{\boldsymbol{\gamma}}}
\newcommand\ddelta{{\boldsymbol{\delta}}}
\newcommand\vvarepsilon{{\boldsymbol{\varepsilon}}}

\newcounter{sideremark}
\newcommand{\marrow}{\stepcounter{sideremark}\marginpar{$\boldsymbol{\longleftarrow\scriptstyle\arabic{sideremark}}$}}

\newif\ifcmts
\cmtstrue

\ifcmts
\newcommand{\pavel}[1]{{\color{teal}\vskip 5pt\textsf{*** (Pavel) \marrow #1\vskip 5pt}}}
\newcommand{\martin}[1]{{\color{blue}\vskip 5pt\textsf{*** (Martin) \marrow #1\vskip 5pt}}}
\else
\newcommand{\pavel}[1]{}
\newcommand{\martin}[1]{}
\fi

\title{Shellability is hard even for balls
}

\author[1,2]{Pavel Pat\'{a}k}
\author[1]{Martin Tancer}

\affil[1]{\small Department of Applied Mathematics, Charles University, Malostransk\'{e} n\'{a}m.
25, 118~00~~Praha~1, Czech Republic}
\affil[2]{Czech Technical University in Prague, Faculty of Information Technology, Th\'{a}kurova 2700/9, 160~00~~Praha~6, Czech Republic}

\begin{document}

\maketitle

\begin{abstract}
The main goal of this paper is to show that shellability is NP-hard for
triangulated $d$-balls (this also gives hardness for triangulated
$d$-manifolds/$d$-pseudomanifolds with boundary) as soon as $d \geq 3$. This
  extends our earlier work with Goaoc, Pat\'akov\'a and Wagner 
  on hardness
  of shellability of $2$-complexes and answers some questions implicitly raised
  by Danaraj and Klee in 1978 and explicitly mentioned by Santamar\'ia-Galvis
  and Woodroofe. Together with the main goal, we also prove that collapsibility is NP-hard for
  3-complexes embeddable in 3-space, extending an earlier work of the
  second author
  and answering an open question mentioned by Cohen, Fasy, Miller, Nayyeri,
  Peng and Walkington; and that shellability is NP-hard for 2-complexes embeddable in 3-space, answering
  another question of Santamar\'ia-Galvis and Woodroofe (in a slightly
  stronger form than what is given by the main result).
\end{abstract}
\bigskip

{\bf Keywords:} Shellability, NP-hard, collapsibility, $3$-ball recognition,
$3$-manifolds
\bigskip

\tableofcontents

\section{Introduction}
\label{s:intro}

A simplicial complex is \emph{pure} if all its \emph{facets} (i.e. maximal
faces) have the same dimension. A pure simplicial complex $K$ is shellable if there
is an ordering 
$F_1, \dots, F_m$ of all its facets such that for every $i \in
  \{2,\dots, m\}$, the complex
$K[F_i] \cap K[F_1, \dots, F_{i-1}]$
  is pure and
  $(d-1)$-dimensional where $K[\vartheta_1, \dots, \vartheta_{i-1}]$ stands for the subcomplex
  of $K$ spanned by $\vartheta_1, \dots, \vartheta_{i-1}$. Shellability of $K$ is often a great
advantage for understanding the structure of $K$.
There are however
also variants of shellings for cell/polytopal complexes, PL-manifolds, posets or even
monoids which makes this notion widely applicable, for example, 
in combinatorial topology, polytope theory, combinatorial commutative algebra,
or group theory. We will explain in more detail relations to combinatorial topology which is the most relevant for our paper; for references to other areas, we
refer to the introductions of~\cite{gpptw19, santamaria-galvis-woodroofe21}.

History of shellings in combinatorial topology traces back at least to works of
Furch~\cite{furch24}, Newman~\cite{newman26} and Frankl~\cite{frankl31} on
non-shellable balls. 
Shellings of triangulated manifolds or only pseudomanifolds are of particular
interest because a shellable pseudomanifold has to be a sphere or ball.  A
significant effort has been devoted to constructions of non-shellable balls
with various properties, especially in dimension $3$. Apart from the three
aforementioned references~\cite{furch24, newman26, frankl31}, this is also (part of) the
contents of~\cite{rudin58,bing64,ziegler98,lutz08}, for example. For more details on
various constructions (up to 1998) we refer to~\cite{ziegler98}. Non-shellable
triangulated 3-spheres also exist~\cite{lickorish91, lutz04}; however, as far as we are
able to judge they are more scarce in the literature. Shellings in the context
of balls/spheres/manifolds appear to be very useful also in the current
research---often in surprising context; see, e.g.,~\cite{benedetti-ziegler11,
dgkm16, adiprasito-benedetti17, adiprasito-benedetti20, adiprasito-liu20}.

From the computational point of view shellability was treated already in 70's in papers
of Danaraj and Klee~\cite{danaraj-klee78algo, danaraj-klee78spheres}.
In~\cite{danaraj-klee78algo} they provide an efficient algorithm for
$2$-pseudomanifolds and they point out that it is unknown whether shellability
can be tested in polynomial time for a pseudomanifold of dimension at least
$3$. In~\cite{danaraj-klee78spheres} they explicitly ask how efficiently can
shellability be tested? For general simplicial complexes this has been recently
answered in our earlier work with Goaoc, Pat\'{a}kov\'{a} and Wagner~\cite{gpptw19} by showing that this problem is NP-hard (already for
2-dimensional complexes). 
However, it is clear from the context
of~\cite{danaraj-klee78algo, danaraj-klee78spheres} that Danaraj and Klee were
also interested in shellability of pseudomanifolds and in particular spheres,
which could be in principle easier. We completely answer this question for
pseudomanifolds with boundary. The same construction also provides an answer
for manifolds with boundary or the balls. The question of the complexity status of shellability
for balls is also explicitly raised in~\cite{santamaria-galvis-woodroofe21}.
By an analogy with complexity status of the unknot recognition,
Santamar\'{i}a-Galvis and Woodroofe~\cite{santamaria-galvis-woodroofe21}
speculate that shellability of balls could belong to coNP. The answer is `no'
unless $\hbox{NP} = \hbox{coNP}$ which is widely believed not to be true.

\begin{theorem}
  \label{t:main}
  Let $d \geq 3$.
  \begin{enumerate}[(i)]
  \item
It is NP-hard to decide whether a given $d$-dimensional pseudomanifold
  with boundary is shellable.
\item
It is NP-hard to decide whether a given $d$-dimensional manifold
  with boundary is shellable.
\item
  It is NP-hard to decide whether a given $d$-dimensional triangulated ball is shellable.
  \end{enumerate}
\end{theorem}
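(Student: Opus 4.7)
The plan is to reduce from \textsc{3-Sat} (or a convenient variant thereof, in the spirit of the authors' earlier reduction in~\cite{gpptw19}) to the shellability problem for triangulated $3$-balls. This will suffice for all three parts of the theorem: since every triangulated $d$-ball is, in particular, a $d$-manifold with boundary, and hence a $d$-pseudomanifold with boundary, a polynomial reduction producing a $3$-ball immediately yields NP-hardness of (iii), (ii), and (i) when $d=3$. For $d>3$, I would pass from the $3$-dimensional construction to higher dimensions via a dimension-raising operation that preserves both (non-)shellability and the ball/manifold/pseudomanifold property, for instance by iterated joins with a vertex.

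Given a $3$-SAT instance $\phi$ on variables $x_1,\dots,x_n$ and clauses $C_1,\dots,C_m$, I would build a triangulated $3$-ball $B(\phi)$ assembled from three kinds of gadgets, each a $3$-ball in its own right: (a) variable gadgets $V_i$ whose shellings split into two families labelled \textsc{true} and \textsc{false}, encoding an assignment to $x_i$; (b) clause gadgets $Z_j$ that admit a shelling only if at least one of the three literals of $C_j$ is set to \textsc{true}; and (c) connector gadgets that propagate the chosen truth value from each $V_i$ to all clause gadgets containing the corresponding literal. All gadgets are glued along $2$-disks in their boundaries, and since the union of two $3$-balls glued along a $2$-disk on their boundary spheres is again a $3$-ball, iterating this shows that $B(\phi)$ is a $3$-ball by construction.

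The easier \emph{forward} direction of correctness (satisfiable $\Rightarrow$ shellable) is constructive: a satisfying assignment dictates a \textsc{true}/\textsc{false} shelling inside each $V_i$, which is then propagated through the connectors, after which each clause gadget is shown to admit a compatible shelling, and the whole thing is checked to patch together into a shelling of $B(\phi)$. The \emph{reverse} direction is the technical heart and the main obstacle: one needs a rigidity lemma stating that any shelling of $B(\phi)$ must, possibly after a local reordering, respect the intended gadget structure, so that it can be decoded into a consistent truth assignment that satisfies every clause. The difficulty here is that requiring the whole complex to be a genuine $3$-ball (in particular contractible, simply connected, with a $2$-sphere boundary) severely restricts the combinatorial freedom available in the gadget design and rules out the looser topological constructions that were sufficient for the $2$-complex setting of~\cite{gpptw19}; producing gadgets that are simultaneously $3$-balls, compatible under boundary-disk gluing, and rigid enough to force any shelling to reveal a satisfying assignment is where the bulk of the technical effort will lie.
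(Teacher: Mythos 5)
Your high-level skeleton matches the paper's: a single polynomial construction of a triangulated $3$-ball whose shellability is equivalent to satisfiability settles (iii) and hence (i) and (ii), and the passage to $d\geq 4$ by repeated coning (join with a vertex) is exactly the paper's dimension-raising step, since coning preserves shellability and turns balls/manifolds/pseudomanifolds with boundary into the corresponding objects one dimension up. However, two steps that your plan treats as routine are precisely where it would fail as stated. First, the claim that $B(\phi)$ is a $3$-ball ``by construction'' because gadgets are glued along $2$-disks is not sound: the union of many $3$-balls glued pairwise along boundary disks is a ball only if, at every stage of some ordering, the next piece meets the union of all previously attached pieces in a single $2$-disk, and with variable, clause and connector gadgets attached in the incidence pattern of an arbitrary formula this is far from automatic (gluing in a cycle can already produce a solid torus). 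In the actual construction some gadget pairs cannot even share a $2$-disk --- the natural variable-to-connector attachments are too small --- so extra filler gadgets (the blocker houses) are needed to close the gaps, and ballness is then proved not by construction but by exhibiting a PL shelling order of the gadgets that works whether or not $\phi$ is satisfiable (Proposition~\ref{p:ball}, via Lemma~\ref{l:pseudomanifold}).

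Second, for the reverse direction you only name the need for a ``rigidity lemma,'' but the specification you give for the gadgets is not one that can be implemented, and the missing ideas are the content of the theorem. A clause gadget that ``admits a shelling only if some literal is \textsc{true}'' is not a meaningful local property: in a shelling down of a ball every tetrahedron is eventually removed, so no subcomplex is ever simply ``unshellable'' in isolation; what must be controlled is the \emph{order} in which facets can become free. The paper does this through Lemma~\ref{l:free_facet} together with blocking lemmas for each thickened gadget (Lemmas~\ref{l:1house_blocked} and~\ref{l:turbine_blocked}), which show that no tetrahedron inside a gadget can be free before a specific tetrahedron outside it has been shelled; on top of that one needs a gadget enforcing the conjunction over \emph{all} clauses (the conjunction cone, Lemma~\ref{l:conjunction_cone}, together with the blocker houses) and a rigidity lemma for the variable gadget forbidding inconsistent use of a variable in positive and negative clauses (Lemma~\ref{l:tp_blocked}); these last two requirements do not appear in your gadget list at all. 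Finally, to make the gadgets embeddable and gluable in $3$-space one reduces not from plain \textsc{3-Sat} but from planar monotone rectilinear \textsc{3-Sat}, which is what makes the geometric assembly (and hence the ballness argument) possible. So the architecture is right, but the two assertions carrying the proof --- ballness of the glued complex and the existence of shelling-rigid gadgets --- are exactly the parts still to be supplied.
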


Let us remark that the input for all variants of Theorem~\ref{t:main} is always
a simplicial complex. It is polynomial time testable whether a given simplicial
complex is a pseudomanifold which perhaps makes the statement~(i) the most
natural. 
For (ii) and (iii) we implicitly assume that the input is correctly
given as it is not know how to check in polynomial time 
whether a given simplicial complex is a $d$-manifold for $d \geq 4$ or whether
it is a $d$-ball for $d \geq 3$. Let us also remark that shellability of
simplicial complexes easily belongs to NP by guessing the shelling. Therefore
any NP-hardness result immediately implies NP-completeness (provided that the
input is correct). This is also true for collapsibility discussed later on.

We also point out that although our main result, Theorem~\ref{t:main}, is
algorithmic, it is interesting as well from purely mathematical point of view.
For a proof of the theorem, we will build triangulated balls $\cK_\phi$ such that
$\cK_\phi$ is shellable if and only if a certain logical formula $\phi$ is satisfiable.
(This is discussed in more detail in the paragraph on the proof method below.)
In particular, for non-satisfiable $\phi$, we construct non-shellable balls
where the reason for non-shellability 
is not due to presence or absence of a problematic small subcomplex of the triangulation (such as
non-existence of a free simplex or existence of a suitably `knotted' edge) but it is more global---it is deeply hidden in satisfiability of $\phi$. This complements earlier results on
non-shellable balls. 

Regarding spheres, the complexity of recognition of shellability for
spheres (or manifolds or pseudomanifolds without boundary) remains open. It is
plausible to expect that a modification of our construction would yield
NP-hardness as well. However, currently, we do not have any suitable
modification (which we would be able to analyze).

\paragraph{Relations to $3$-ball/$3$-sphere recognition.}
A partial motivation of Danaraj and Klee~\cite[page~38,
(7)(9)(12)]{danaraj-klee78spheres} to consider the complexity of shellability is
the $3$-sphere/$3$-ball recognition problem. (Note that the $3$-sphere and $3$-ball recognition are closely
related. If there is an efficient algorithm for one of them, then there is an
efficient algorithm for another one by either taking the double of the $3$-ball
or removing a tetrahedron in the $3$-sphere; see,
e.g.~\cite[Corollary~1.1]{schleimer11}.) They proposed that there may exist an
algorithm that transforms a $3$-manifold (with boundary) into another
$3$-manifold (with boundary) in such a way that if we have started with a sphere
or ball then the result would be a shellable sphere or ball. Then it would be
sufficient to check shellability. 
Although it is not known whether the suggested algorithm may be obtained, there
is some hope that it is realistic.\footnote{The $3$-sphere recognition problem is
closely related to the unknot recognition problem. For the unknot recognition
problem,
Lackenby~\cite{lackenby15} proved that there is a polynomial upper bound on the
Reidemeister moves required to untangle an unknot. The desired algorithm of
Danaraj and Klee could be based on considering a suitable subdivision of the
input manifold.
Interpreting loosely
subdivisions and shellings (for 3-sphere recognition) as an analogue of
Reidemeister moves (for unknot recognition), such an algorithm would be a counterpart of Lackenby's
result.} 
On the other hand,
Theorem~\ref{t:main} surely reveals some
problems with the second step. For completeness, let us point that the
$3$-ball/$3$-sphere recognition problem is more understood nowadays. It belongs
to NP~\cite{ivanov01,schleimer11} and modulo generalized Riemann hypothesis to
coNP~\cite{zentner18}; however no polynomial time algorithm is known.

\paragraph{Proof method and related results.}

First of all, the core of the proof of Theorem~\ref{t:main} is to show
NP-hardness for $d=3$. It is easy to check (see,
e.g.~\cite[Section~2]{gpptw19}) that a simplicial complex $K$ is shellable if
and only if the cone $c * K$ (over $K$ with apex $c$) is shellable. Because a cone
over a $k$-pseudomanifold with boundary, or a $k$-manifold with boundary, or a
$k$-ball is a $(k+1)$-pseudomanifold with boundary, or $(k+1)$-manifold with
boundary, or a $(k+1)$-ball respectively, the hardness result for $d=3$ implies
hardness for any $d \geq 3$. Thus we restrict our discussion to the interesting
case $d=3$.

Then the idea of our proof can be split into two major steps. 

The first step is related to another notion for simplification of simplicial
complexes called collapsibility. It has been proved by Malgouyres and
Franc\'{e}s~\cite{malgouyres-frances08} that it is NP-hard to decide whether a
3-dimensional simplicial complex collapses to a $1$-complex and this was further generalized
by the second author~\cite{tancer16} by showing that it is NP-hard to decide whether a
given simplicial complex is collapsible. (Even when restricted to complexes of
dimension $3$. The dimension 3 cannot be dropped as $2$-complexes can be
collapsed greedily (see~\cite[page~20]{hms93} or~\cite[Lemma~1 +
Corollary~1]{malgouyres-frances08}).) We further generalize the main result
of~\cite{tancer16} by showing hardness even when restricted to complexes
embeddable into $3$-space. This has been also mentioned as an open question
by Cohen, Fasy, Miller, Nayyeri,  Peng and Walkington~\cite{cfmnpw14}.

\begin{theorem}
\label{t:collapsibility_hard}
Collapsibility of simplicial complexes embeddable into $\R^3$ is NP-hard.
\end{theorem}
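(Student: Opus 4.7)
The natural strategy is to upgrade the reduction of the second author~\cite{tancer16}, which encodes a 3-SAT formula $\phi$ as a 3-complex $\cC_\phi$ built from variable gadgets, clause gadgets, and wire subcomplexes, in such a way that $\cC_\phi$ is collapsible if and only if $\phi$ is satisfiable. The sole obstacle to that construction yielding an embedding into $\R^3$ is that the variable--clause incidence graph can be non-planar, so wires may be forced to cross in an unembeddable way.

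The plan is to remove this obstruction in two steps. First, reduce from a planar variant of 3-SAT (for example Planar Monotone 3-SAT) that is known to remain NP-hard; this yields a fixed planar drawing of the variable--clause incidence graph that will serve as a template for the geometric layout. Second, realize every gadget as a 3-complex living in a small ball in $\R^3$ and route the wires as pairwise disjoint 3-dimensional tubes that follow the planar drawing. Since the drawing is planar and the ambient space is 3-dimensional, the tubes can be chosen to avoid one another even with a little vertical room to spare, and each gadget-ball can be placed at the image of its vertex of the drawing. The resulting 3-complex $\cK_\phi$ then embeds in $\R^3$ by construction.

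It remains to verify that collapsibility of $\cK_\phi$ is equivalent to satisfiability of $\phi$. The proof of correctness should mirror the argument in~\cite{tancer16}: a satisfying assignment dictates an order in which free faces appear in each variable gadget, triggering the collapse of wires leading to each satisfied clause gadget; conversely, any collapse sequence for $\cK_\phi$ is forced to pick, locally at each variable gadget, one of two canonical collapse patterns, and these patterns induce a satisfying assignment. The gadgets of~\cite{tancer16} (or embeddable thickenings of them) should be reusable almost verbatim; one only needs to fix an interface on the boundary of each gadget-ball through which the wire tubes attach consistently.

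The main obstacle I expect is Step two: ensuring that the 3-dimensional realizations of the gadgets, together with their boundary interfaces and the thin tube wires, do not introduce any spurious free faces or unintended collapse paths. Embeddability constraints often force additional tetrahedra into a gadget, and such tetrahedra can easily create a new free face that allows the complex to collapse even when $\phi$ is unsatisfiable. Controlling this locally---and showing that the gadget interfaces match the wire interfaces precisely enough that the combinatorial analysis of~\cite{tancer16} still goes through---is the core technical work. A secondary (and milder) difficulty is verifying that starting from a planar variant of 3-SAT, rather than general 3-SAT, does not collapse the hardness; this follows from known hardness results for such variants.
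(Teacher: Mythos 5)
Your starting premise---that the only obstruction to embedding the reduction of~\cite{tancer16} into $\R^3$ is non-planarity of the variable--clause incidence graph, so that a planar 3-SAT variant plus careful tube routing suffices---is false, and this is where the proposal has a genuine gap. As the paper points out, the gadgets of~\cite{tancer16} are \emph{themselves} not embeddable into $3$-space, and even their local attachments (gluing several gadgets along a common edge) obstruct embeddability; no amount of planar layout or ``embeddable thickening'' fixes a gadget that is intrinsically non-embeddable. The actual proof replaces the clause gadget by the turbine of Santamar\'{\i}a-Galvis and Woodroofe~\cite{santamaria-galvis-woodroofe21}, introduces a brand-new variable gadget (the bipyramid, with the blocking property of Lemma~\ref{l:bipyramid}(iii)), and adds auxiliary gadgets absent from~\cite{tancer16}: splitter, incoming, outgoing and blocker houses, the last of which are needed precisely to enforce, in an embeddable way, that no variable is read as both TRUE and FALSE. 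So the correctness argument does not ``mirror'' \cite{tancer16} almost verbatim; the conflict-freeness step is carried by a different mechanism.

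A second concrete problem is your wiring scheme. Routing the logical dependencies through ``3-dimensional tubes'' does not work: a triangulated solid tube has free faces all along it and can be collapsed from either end (or from its interior), so it transmits no directional information and creates exactly the spurious collapse paths you worry about. In the paper the wires are themselves $1$-houses (2-complexes whose unique free edge forces the collapse to start at one prescribed end), and even with the planar monotone rectilinear layout of~\cite{deberg-khosravi10} the construction still has unavoidable crossings---the outgoing connections of all clauses must be funneled to a single conjunction point $\vv_\pm$, and the curves $\ddelta_\kappa$ cross the curves $\ggamma_{\ell,\kappa'}$---which the paper resolves with crossing circles on the incoming houses and extra attached triangles. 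Your proposal does identify the right hardness source (planar monotone 3-SAT) and the right general worry (spurious free faces), but it leaves open the two steps that constitute the real content of the theorem: constructing embeddable gadgets with the required unique-free-face behaviour, and an embeddable mechanism for the variable-consistency constraint and the crossing resolution.
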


Theorem~\ref{t:collapsibility_hard} is a non-trivial generalization of the main
result of~\cite{tancer16} because the complexes used in~\cite{tancer16} are
very far from being embeddable into $3$-space. There are two types of
problems---they use gadgets which are themselves not embeddable into $3$-space
and also the local connections of the gadgets prevent embeddability into 3-space. We overcome the first problem with the aid of \emph{turbines} of
Santamar\'{\i}a-Galvis and Woodroofe~\cite{santamaria-galvis-woodroofe21} as
well as by introducing new variable gadget (called bipyramid).
Santamar\'{\i}a-Galvis and Woodroofe~\cite{santamaria-galvis-woodroofe21}
provided a modified proof of hardness of shellability of
$2$-complexes when compared with~\cite{gpptw19}. One of their changes is that they simplified some
of the gadgets into the aforementioned turbines, embeddable into 3-space.
Shellability and collapsibility are related closely enough so that the turbines can be used
also for Theorem~\ref{t:collapsibility_hard}. Regarding the second
aforementioned problem we overcome it by suitable placements (and adjustments)
of the gadgets employing a variant of a 3SAT problem called planar monotone
rectilinear 3SAT problem which has been proved to be NP-hard by de Berg and
Khosravi~\cite{deberg-khosravi10}.

As a side remark, let us point out that Santamar\'{\i}a-Galvis and 
Woodroofe~\cite{santamaria-galvis-woodroofe21} also asked whether shellability
remains NP-complete for complexes embeddable in $3$-space.
Theorem~\ref{t:main} shows that this is NP-hard, therefore NP-complete
(for $3$-complexes).
By a minor modification of the construction in the proof of
Theorem~\ref{t:collapsibility_hard}, we can show that the problem remains
NP-hard even for 2-complexes which is the best possible (and it is mainly the
context of~\cite{santamaria-galvis-woodroofe21}).
\begin{theorem}
\label{t:embedded_hard}
Shellability of simplicial $2$-complexes embeddable into $\R^3$ is NP-hard.
\end{theorem}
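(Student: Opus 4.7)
The plan is to follow exactly the pipeline of the proof of Theorem~\ref{t:collapsibility_hard}, but to replace its $3$-dimensional collapsibility gadgets with pure $2$-dimensional shellability gadgets that are already known to be embeddable in $\R^3$. As before, I would reduce from the planar monotone rectilinear $3$-SAT problem of de Berg and Khosravi~\cite{deberg-khosravi10}, which supplies, for each input formula $\phi$, an orthogonal planar drawing in which variable vertices lie on a horizontal line and each positive (resp.\ negative) clause is a rectangular box above (resp.\ below) that line, joined to its three literals by pairwise non-crossing vertical segments.

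The variable gadgets would be the turbines of Santamar\'{\i}a-Galvis and Woodroofe~\cite{santamaria-galvis-woodroofe21}, which are pure $2$-complexes embeddable into $\R^3$ and which force a binary \emph{true}/\emph{false} choice in any shelling via the order in which their two ``wings'' are opened. For each clause I would take the $2$-dimensional shellability clause gadget of~\cite{santamaria-galvis-woodroofe21} (or its variant from~\cite{gpptw19}), which admits a completing shell only if at least one of its three input legs has already been opened. Variables and clauses would then be wired together by thin pure $2$-dimensional ribbons of triangles routed along the vertical segments of the rectilinear drawing, designed so that the combinatorial state ``this literal is true'' transmits to the clause end as the freedom to open a prescribed free edge there. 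The resulting pure $2$-complex $\cK_\phi$ is then shellable if and only if $\phi$ is satisfiable, by the same local gadget-by-gadget analysis already carried out for Theorem~\ref{t:collapsibility_hard}.

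Embeddability into $\R^3$ is obtained essentially for free from the planar layout: each turbine and each clause gadget has a small $\R^3$-neighbourhood into which it can be placed, and since the vertical wires of the rectilinear drawing are pairwise disjoint, the ribbons connecting the gadgets can be realised as disjoint thin strips in a slab of $\R^3$. This is the same geometric trick that powers Theorem~\ref{t:collapsibility_hard}; the point is that in that theorem one \emph{thickens} the $2$-dimensional skeleton to obtain the free faces needed for a collapse, whereas here one can work directly with the $2$-dimensional skeleton since all the shellability arguments happen within it.

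The main obstacle I expect is purely verificational: one has to check that the gluing of a $2$-dimensional clause gadget to three incoming ribbons behaves correctly both combinatorially (the gadget is shellable exactly when at least one leg arrives pre-opened, and no shortcut shelling is created) and geometrically (the identifications along the gluing edges respect an $\R^3$-embedding). Since the imported turbines and clause gadgets are already $2$-dimensional and embeddable, and the rectilinear layout provides pairwise disjoint routing channels, this should be a careful but routine adaptation of the computations from the proof of Theorem~\ref{t:collapsibility_hard}, rather than a new conceptual step.
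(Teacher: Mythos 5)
There is a genuine gap, and it sits in the direction ``shellable implies satisfiable''. You propose to certify it ``by the same local gadget-by-gadget analysis already carried out for Theorem~\ref{t:collapsibility_hard}'', but that analysis is an analysis of \emph{collapsibility}: it rests on free faces and on the order in which free edges become available (Lemmas~\ref{l:bipyramid}(iii), \ref{l:thin_1house}, \ref{l:thin_turbine}). Shellings down of a $2$-complex are strictly more permissive than collapses: a shelling step may remove a triangle meeting the remainder in all three of its edges, a move with no counterpart in collapsing, so ``this gadget has no free face yet'' does not block a shelling. This is precisely why the hardness of shellability for $2$-complexes in~\cite{gpptw19, santamaria-galvis-woodroofe21} needed its own, much more delicate argument, and why the paper does not argue gadget-by-gadget here at all: it takes the collapsibility complex $\cK'_\phi$, removes the interiors of the ($3$-dimensional) bipyramid variable gadgets, shows the resulting $2$-complex is homotopy equivalent to a wedge of $n$ spheres (Lemma~\ref{l:cK'_contractible} and Corollary~\ref{c:wedge}), and then uses Hachimori's theorem~\cite{hachimori08} to turn a shelling into collapsibility after deleting exactly one triangle per variable sphere, which feeds back into the already-established collapsibility analysis of Subsection~\ref{ss:col=>sat}. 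Your proposal contains no substitute for this step, and without one the ``only if'' implication is unproved.

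There are also concrete problems with the gadgets and the layout as you describe them. Turbines are the clause (OR) gadgets with three blades, not binary variable gadgets; you offer no $2$-dimensional variable gadget whose behaviour under \emph{shellings} enforces consistency (Lemma~\ref{l:bipyramid}(iii) is a statement about collapses of a $3$-dimensional gadget), and you drop the conjunction/blocker machinery (outgoing houses along the curves $\ddelta_\kappa$, the blocker chain through $\vv_\pm$) that is what actually forces a variable not to be used both positively and negatively. Moreover the routing is not planar ``for free'': the wires from the clauses to the common sink must cross the variable-to-clause wires, which is why the construction needs crossing circles and the extra triangles of Step~5 in Subsection~\ref{ss:gluing_thin}. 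Finally, even the ``satisfiable implies shellable'' direction needs an explicit conversion of collapses into shellings; the paper supplies this via Observation~\ref{o:2i} and the strengthened disk-collapsing Lemma~\ref{l:stronger_16} using only $(2,0)$- and $(2,1)$-collapses. Each of these points is fixable in principle, but together with the missing ``only if'' argument they show the proposal is not a routine adaptation of the collapsibility proof.
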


The second major step for the proof of Theorem~\ref{t:main} is related to thickenings of simplicial complexes to
manifolds. Thickenings are classical tools in PL topology (see
e.g.~\cite{hudson69, rourke-sanderson72}) and they allow to
thicken a simplicial complex PL embedded into $\R^d$ to a $d$-manifold with
boundary while preserving some important properties (for example preserving the
homotopy type). In the context of shellability, thickenings were used in
constructions of Frankl~\cite{frankl31} and Bing~\cite{bing64} of non-shellable
balls (see also~\cite{ziegler98} for a nice description of the ideas). 
The idea is to take a 2-dimensional simplicial complex embedded into 
3-space which is contractible but not collapsible (for example Bing's house
with 2 rooms) due to the reason that it does not have any so called free face. (It is
not possible to start with a collapse at all.) Then one can build a thick
(3-dimensional) version of such a complex in a way that the thick version again
misses the free face (this time for shelling). This is however not automatic
but it requires some careful choices to get the desired property. For example, in
case of Bing's construction~\cite{bing64}, Bing strongly uses that the Bing's
house is built of (possibly non-convex) polygons embedded in axis-aligned
planes. Finally, because the original $2$-complex was contractible, the
resulting thick complex is a $3$-ball, thereby a non-shellable $3$-ball.

Following the aforementioned ideas, Theorem~\ref{t:main} may seem as a quite
direct corollary of Theorem~\ref{t:collapsibility_hard} (perhaps requiring to
revisit the proof a bit). Indeed, Theorem~\ref{t:collapsibility_hard} is proved in a way that
for a formula $\phi$, which is an instance of planar monotone
rectilinear 3SAT, we build a complex $\cK'_\phi$ such that $\cK'_\phi$ is
collapsible if and only if $\phi$ is satisfiable which shows NP-hardness.
It also follows from the construction that $\cK'_\phi$ is always contractible.
Therefore, following the ideas above we may thicken $\cK'_\phi$ to certain
ball $\cK_\phi$ in a way that $\cK_\phi$ is shellable if and only if $\phi$
is satisfiable.

The real difficulty is however to provide the construction in such a way that 
the both implications of `$\cK_\phi$ is shellable if and only if $\phi$
is satisfiable' are satisfied simultaneously. The earlier works say nothing
about the `if' implication. 
However, intuitively, this should
work if the triangulation is fine enough (or well chosen). The `only if'
implication is even more problematic. The earlier constructions allow to block
shellability in the first step (or the last one depending on the order of
shellings) but they do not provide any recipe if the reason for
non-collapsibility is more complex (hidden in satisfiability of $\phi$). This is a huge
and critical difference for us. We have tried several attempts of constructions 
such that ideally shellings of $\cK_\phi$ would \emph{have to} follow collapses of
$\cK'_\phi$. (Thereby non-collapsibility of $\cK'_\phi$ would imply
non-shellability of $\cK_\phi$.)  However, based on our attempts, we now
doubt that it is easily possible to derive Theorem~\ref{t:main} from
Theorem~\ref{t:collapsibility_hard} by some kind of general construction.

Therefore, we use a somewhat different approach. The formula $\cK'_\phi$ is
built from various auxiliary gadgets which emulate logical dependencies in
$\phi$. We thicken each gadget simultaneously and then we try to glue them
together so that we emulate the same logical dependencies. So in principle,
formally speaking our proof of Theorem~\ref{t:main} does not use
Theorem~\ref{t:collapsibility_hard}. However, we still include
Theorem~\ref{t:collapsibility_hard} as it is an essential step for
understanding the proof of Theorem~\ref{t:main}.

This approach of thickening the gadgets simultaneously and then gluing them
together brings, however, numerous new problems that have to be resolved.
These problems are also a reason why the proof of Theorem~\ref{t:main} is
technically much more complicated then the proofs of hardness of shellability
of $2$-complexes in~\cite{gpptw19} or~\cite{santamaria-galvis-woodroofe21}.
Here we spell out only a few of the problems and a brief sketch how do we treat
them.

The most significant problem is perhaps the treatment of the free faces. The
gadgets of $\cK'_\phi$ contain numerous free faces (in collapsibility sense)
in order to encode the logical dependencies of $\phi$. These free faces are
in general edges and the logical dependency for $\cK'_\phi$ is emulated by gluing such an edge
to other gadgets. Following Bing~\cite{bing64}, a natural way how to thicken an
edge $e$ of $\cK'_\phi$ is to thicken it to a triangulated cube $C_e$ in
$\cK_\phi$. If the original edge is not free, it is somehow easier to avoid
free tetrahedra in $C_e$. However, if $e$ is free, we would ideally 
like to emulate this only by a single tetrahedron in $C_e$; otherwise $C_e$ may
admit some unwanted shellings. In our construction we solve this by ad hoc
solutions for different types of connections; see
Section~\ref{s:shellability_gadgets}. (We also need to know that no new free
tetrahedra appear even after gluing the gadgets. This is treated mainly in
Lemmas~\ref{l:1house_blocked} and~\ref{l:turbine_blocked}.)

A related problem to the previous one is the exact description of
triangulations of individual gadgets. In principle we need to specify exact set
of vertices and tetrahedra of each of the gadgets (this would also specify
edges and triangles as the gadgets are pure). However, this is
too tedious to do so directly and
it would make our paper even longer.
Thus we use an approach in between, we
first describe the gadgets first as certain polytopal complexes which are
easier to understand. Only then we specify the triangulations of individual
polytopes. (Both steps are done in Section~\ref{s:shellability_gadgets}.) But we have to make some careful choices so that the free faces
behave as we want and also that the gadgets can be glued together (in
Section~\ref{s:shellability_construction}). There is
also one more caveat: triangulations of polytopes may be in principle
non-shellable. Thus we need some general recipe so that even after such a
triangulation the gadgets are shellable as we need. On high level, this is
usually guaranteed by Lemma~\ref{l:shell_canonical} which provides a recipe how
to shell a single triangulated polytope. But then we of course have to check
for every gadget individually, that this really works; see mainly
Lemmas~\ref{l:shelling_house} and~\ref{l:shelling_turbine}.

We also need to get a $3$-ball after gluing all gadgets. This becomes more
tricky around so called variable gadgets which intersect with some other
gadget only in an edge, thus the gaps around these intersections have to be
filled by yet another gadgets. (See mainly the construction in
Subsections~\ref{ss:thick_variable}, \ref{ss:thick_splitter}
and~\ref{ss:blocker_house}.) This makes shellings somewhat complicated for
satisfiable formulas.

 There is also one more problem regarding gluing the gadgets. In $\cK'_\phi$,
 the individual gadgets are typically glued along edges (or $1$-complexes).
 This is something which is easy to treat with respect to subdivisions. In
 $\cK_\phi$ the shared parts of the gadgets thicken to $2$-dimensional pieces.
 Often we need to glue rows of squares together in our original polytopal
 complex before full triangulation.  After the full triangulation, it could in
 principle happen that the diagonals will not match. In all cases, this can be
 resolved but it makes the proof and the description (in
 Section~\ref{s:shellability_construction}) more tedious. 

\paragraph{Organization.} Theorem~\ref{t:collapsibility_hard} is proved in
Sections~\ref{s:thin_gadgets}--\ref{s:correctness}. Theorem~\ref{t:main} is
proved in Sections~\ref{s:shelling_balls}--\ref{s:shell=>sat}.
Theorem~\ref{t:embedded_hard} is proved in Section~\ref{s:embedded_hard}.

\section{Preliminaries}
\label{s:prelim}

Here we briefly explain the notions used in the introduction.

\paragraph{Simplicial complexes, polyhedra, triangulations and pseudomanifolds.}

Throughout the paper we work with \emph{geometric simplicial complexes}, that
is, collections $K$ of simplices in some $\R^d$ such that face of any simplex
in $K$ belongs to $K$ as well and if $\sigma, \tau \in K$, then $\sigma \cap
\tau$ is a face of both $\sigma$ and $\tau$ (possibly empty).  A geometric
simplicial complex can be modified to an abstract simplicial complex which
collects only the list of vertices and the list of subsets of vertices that
form a simplex. We use abstract simplicial complexes only when we consider a
simplicial complex as an input of a decision problem.  A \emph{polyhedron}
$|K|$ of a simplicial complex $K$ is the union of simplices in $K$. A
\emph{triangulation} of a topological space $X$ is a simplicial complex $K$
with $|K|$ homeomorphic to $X$. We also often use the term \emph{polyhedron}
without specifying $K$. Then we mean the polyhedron of some simplicial complex
without specifying any triangulation. In our descriptions of gadgets, we will
often specify the polyhedron first and only then we specify the triangulation.
For more details on simplicial complexes and polyhedra, we refer
to~\cite{matousek03, rourke-sanderson72}. A \emph{facet} in a
simplicial complex is an inclusion-wise maximal face. A simplicial complex is
\emph{pure} if all its facets have the same dimension.
A pure simplicial complex of dimension $d$ is a \emph{$d$-pseudomanifold} (with
boundary) if each face of dimension $(d-1)$ is contained in at most two facets
(and necessarily at least one due to pureness).

\paragraph{Collapsibility.}
A face in a simplicial complex is \emph{free} (for collapsibility) if it is not
a facet but it is contained in only one facet.\footnote{Some authors require
that the dimension of the face and the facet differ by one but this does not
affect the resulting notion of collapsibility.} A simplicial complex $K'$
arises from $K$ by an \emph{elementary collapse} if it is obtained from $K$ by
removing a free face $\sigma$ (and all faces that contain $\sigma$). In this
case we also say that \emph{$K$ collapses to $K'$ through $\sigma$}.
A
simplicial complex $K$ collapses to $L$ if $L$ can be obtained from $K$ by a
sequence of elementary collapses. A simplicial complex $K$ is
\emph{collapsible} if it collapses to a point.

\paragraph{Shellability.}
Given faces $\vartheta_1, \dots, \vartheta_m$ of a simplicial complex $K$, the
symbol $K[\vartheta_1,\dots, \vartheta_m]$ will denote the complex generated by
$\vartheta_1, \dots, \vartheta_m$, that is, the complex formed by those faces
$\sigma$ which are in some $\vartheta_i$ for $i \in \{1, \dots, m\}$.

The following definition is the standard definition of the shelling of a
simplicial complex. We call it \emph{shelling up} in order to distinguish it
from \emph{shelling down} that we will use
very heavily. 

\begin{definition}[Shelling up]
  \label{d:shell_up}
  Let $K$ be a pure $d$-dimensional simplicial complex. A \emph{shelling up} is
  an ordering $F_1, \dots, F_m$ of all its facets such that for every $i \in
  \{2,\dots, m\}$, the complex $K[F_i] \cap K[F_1, \dots, F_{i-1}]$ is pure and $(d-1)$-dimensional.
  A simplicial complex is \emph{shellable} if it admits shelling up.
\end{definition}

It will be a great advantage for us to revert the shelling order so that it
works in the same direction as for collapses. We call this \emph{shelling down}, or
simply \emph{shelling}. We stress that this is less standard. In most of the
literature, this would be called reverse shelling. However this is the default
notion for us and in some portion of literature it appears in this
direction, possibly in slightly different context; see,
e.g.,~\cite{rourke-sanderson72,adiprasito-liu20}.
We define it in slightly more general setting allowing to perform only a few initial shelling steps.

\begin{definition}[Shelling = Shelling down]
\label{d:shell_down}
  Let $K$ be a pure $d$-dimensional simplicial complex with $m$ facets. A \emph{shelling down}
  is an ordering $F_1, \dots, F_k$ for $k \leq m-1$ of some\footnote{Because we use only some
  facets, a more precise name would be \emph{incomplete shelling down}.
  However, we want to avoid using too many adjectives.} facets of $K$ such that for every $i \in
  \{1, \dots, k\}$ the complex $K[F_i] \cap K[F_{i+1}, \dots, F_{m}]$ is pure and
  $(d-1)$-dimensional where $F_{k+1}, \dots, F_m$ are the remaining facets of
  $K$ in an arbitrary order. A complex $K$ \emph{shells down} to a subcomplex
  $L$ if $L = K[F_{k+1}, \dots, F_m]$ with respect to some shelling down as
  above. 
\end{definition}
When we say shelling without up or down we implicitly mean shelling down.
The relation between the two aforementioned notions is that a pure
$d$-dimensional simplicial complex is shellable if and only if it admits a
shelling down to a $d$-simplex (i.e. one of its facets). Indeed, it is
straightforward to check that $F_1, \dots, F_m$ is a shelling up if and only if
$F_m, \dots, F_2$ is a shelling down to $F_1$.

\section{Gadgets for collapsibility}
\label{s:thin_gadgets}
In this section, we describe gadgets that will be used in the proof of
Theorem~\ref{t:collapsibility_hard}.

\subsection{Bipyramid}
First we define a bipyramid which will serve as our variable gadget. Namely, by
\emph{bipyramid} we mean the complex which is a join of the boundary of a
triangle with an edge. In more detail we take a boundary of a triangle with
vertices $u, v, w$ and an edge with vertices $a^+$ and $a^-$. The bipyramid
contains all simplices with vertex sets of a form $\sigma \cup \tau$, where
$\sigma \subsetneq \{u,v,w\}$ and $\tau \subseteq \{a^+, a^-\}$;  see
Figure~\ref{f:bipyramid}. 

\begin{figure}
  \begin{center}
    \includegraphics[page=9]{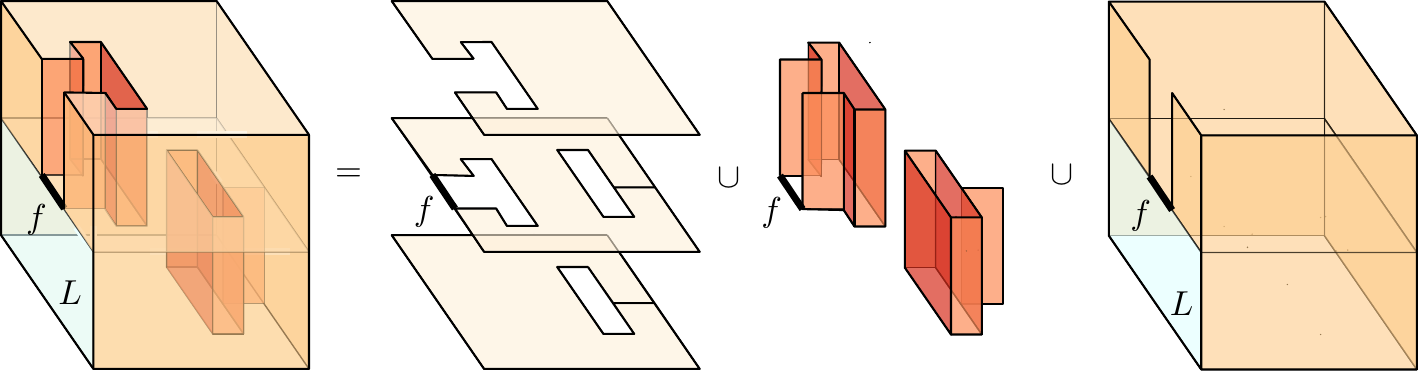}
    \caption{The bipyramid}
    \label{f:bipyramid}
  \end{center}
\end{figure}

For purposes of the following lemma, let $K$ be a complex and $L$ its
subcomplex. Following~\cite{tancer16}, we define the constraint complex of
$L$ in $K$ as a subcomplex of $L$ formed by the faces contained
in some face of $K \setminus L$.

\begin{lemma}
  \label{l:bipyramid}
\hfill

  \begin{enumerate}[(i)] 
    \item The bipyramid collapses to a subcomplex formed by the edge
      $va^+$ and the triangles $uva^-$, $uwa^-$ and $vwa^-$ (and the subfaces
      of these faces).
    \item The bipyramid collapses to a subcomplex formed by the edge
      $va^-$ and the triangles $uva^+$, $uwa^+$ and $vwa^+$ (and the subfaces
      of these faces).

    \item Assume that the bipyramid $B$ is a subcomplex of a complex $K$ such
      that the constraint complex of $B$ in $K$ is the complex
      formed by the edges $uv, uw, vw, va^+$ and $va^-$. Then in any collapsing
      of $K$ it is not possible that both $va^+$ and $va^-$ would be collapsed
      before all three edges $uv, uw$ and $vw$ via collapses removing
      $va^+$ and some triangle in $K$ outside $B$ and removing $va^-$ and some
      triangle in $K$ outside $B$.
  \end{enumerate}
\end{lemma}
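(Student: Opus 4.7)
For parts (i) and (ii), which are symmetric under the swap $a^+ \leftrightarrow a^-$, the plan is to exhibit an explicit collapse sequence. In $B$ the three tetrahedra are $T_1 = uva^+a^-$, $T_2 = uwa^+a^-$, $T_3 = vwa^+a^-$, and the six triangles $uva^\pm, uwa^\pm, vwa^\pm$ are free (each lies in a unique tetrahedron). For (i) I would first perform the three tetrahedral collapses $(uva^+, T_1), (uwa^+, T_2), (vwa^+, T_3)$, which kills all tetrahedra along with the three $a^+$-top triangles. After this the edges $ua^+$ and $wa^+$ have become free via $ua^+a^-$ and $wa^+a^-$, so I collapse $(ua^+, ua^+a^-)$ and $(wa^+, wa^+a^-)$; finally $a^+a^-$ is free through $va^+a^-$, and I collapse $(a^+a^-, va^+a^-)$. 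A direct book-keeping check then confirms that precisely the edge $va^+$ together with the three triangles $uva^-, uwa^-, vwa^-$ (and their subfaces) survive. Part (ii) is identical after exchanging $a^+$ and $a^-$.

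For part (iii) the plan is an Euler-characteristic obstruction. Assume for contradiction that at some moment of a collapsing of $K$ both $va^+$ and $va^-$ have just been collapsed through some triangles of $K$ outside $B$, while $uv, uw, vw$ are still alive; let $\hat{B}$ denote the subcomplex of $B$ still present at that moment. For $va^+$ to be free with an outside triangle as its unique cofacet, every face of $B$ strictly containing $va^+$ must be absent from $\hat{B}$; the same argument applied to $va^-$ forces the seven faces $T_1, T_3, uva^\pm, vwa^\pm, va^+a^-$ to all be missing from $\hat{B}$. I then classify the collapses performed so far as \emph{internal} (both faces in $B$) or \emph{external} (exactly one in $B$). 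The hypothesis on the constraint complex, together with the closure of $B$ under subfaces, implies that in any external collapse the $B$-face is the smaller (free) face and must lie in the constraint complex; since $u, v, w$ and $uv, uw, vw$ are alive, the only possible external kills are among $va^+, va^-, a^+, a^-$.

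An internal collapse removes two $B$-faces of consecutive dimensions, so it preserves $\chi(\hat{B})$; whereas an external collapse killing a $B$-face of dimension $d$ changes $\chi(\hat{B})$ by $(-1)^{d+1}$. Writing $n_{\mathrm{ext}} \in \{0, 1, 2\}$ for the number of $\{a^+, a^-\}$ killed externally, and using $\chi(B) = 1$, one obtains $\chi(\hat{B}) = 3 - n_{\mathrm{ext}}$. On the other hand, let $d$ be the total number of dead $a$'s in $\hat{B}$ (so $n_{\mathrm{ext}} \le d$). Then $\hat{B}$ must contain the ``skeleton'' consisting of $u, v, w$, the edges $uv, uw, vw$, and the surviving vertices from $\{a^+, a^-\}$, of Euler characteristic $2 - d$; and a short enumeration shows that adding any further admissible $B$-face to this skeleton (an edge of $B$ outside the constraint complex, one of the four remaining triangles $uwa^\pm, ua^+a^-, wa^+a^-$ together with any absent boundary edges, or $T_2$ with its entire closure) strictly decreases $\chi$. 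Hence $\chi(\hat{B}) \le 2 - d$. Combining, $3 - n_{\mathrm{ext}} \le 2 - d$ forces $d - n_{\mathrm{ext}} \le -1$, contradicting $n_{\mathrm{ext}} \le d$. The main technical obstacle will be precisely this last enumeration: checking that no addition of a triangle or of $T_2$, together with any absent edges its closure requires, produces a net positive contribution to $\chi$. This is a small finite check over the few admissible candidate subcomplexes.
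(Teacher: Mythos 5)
Your parts (i) and (ii) are correct and essentially the paper's own argument: the paper collapses through $uwa^+$, $wa^+$, $ua^+$, $a^+a^-$, and your sequence is the same collapse broken into elementary pairs. For (iii) you take a genuinely different route. The paper refines the collapsing into a discrete-Morse-style matching and gets a contradiction by counting: just before $va^-$ is removed, at least nine triangles of $B$ must already be matched, while only the three tetrahedra and the five edges $a^+a^-$, $ua^\pm$, $wa^\pm$ are available as partners (this needs the cascade argument that $ua^+a^-$, $wa^+a^-$, $uwa^\pm$ must also be missing). Your Euler-characteristic bookkeeping packages the same tension differently: external removals can only hit the constraint complex, and since $uv,uw,vw$ are alive they contribute $+2$ (for $va^\pm$) and at worst $-1$ per dead $a^\pm$, giving $\chi(\hat B)\ge 3-d$, while the survivor analysis gives $\chi(\hat B)\le 2-d$. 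This trades the paper's case cascade for one aggregate subcomplex bound; both end in a small finite check, and yours is arguably less case-driven.

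Two repairs are needed. First, the paper's elementary collapse removes a free face together with \emph{all} faces containing it, possibly more than two, so your internal/external dichotomy (``both faces in $B$ / exactly one in $B$'') does not apply verbatim. Either refine every collapse into pairs of consecutive dimensions (the paper does this explicitly), or observe that each removed set is an interval $[\sigma,\Delta]$: its alternating sum vanishes when $\Delta\in B$, and when $\Delta\notin B$ it meets $B$ only in constraint faces containing $\sigma$; moreover, since $va^\pm$ are by hypothesis removed in their own collapses, an external removal of $a^\pm$ deletes only that vertex. Second, your claim that ``adding any further admissible $B$-face strictly decreases $\chi$'' is false face-by-face: if $wa^+$, $wa^-$, $a^+a^-$ are already present, adding the triangle $wa^+a^-$ alone raises $\chi$ by one, so a greedy per-face (or per-package) check would not close the argument. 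What is true, and suffices, is the aggregate bound: among the faces of $B$ that can still survive beyond your skeleton (the five edges $ua^\pm$, $wa^\pm$, $a^+a^-$, the four triangles $uwa^\pm$, $ua^+a^-$, $wa^+a^-$, and $T_2$), every subcomplex contains at least as many of these edges as triangles --- for instance via the injection $uwa^+\mapsto ua^+$, $uwa^-\mapsto ua^-$, $ua^+a^-\mapsto a^+a^-$, $wa^+a^-\mapsto wa^+$ --- so the extra contribution to $\chi$ is at most $0$ and $\chi(\hat B)\le 2-d$ holds. With these two fixes your argument is complete.
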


\begin{proof}
 For (i), it is straightforward to check that the collapses through the faces
  $uwa^+, wa^+, ua^+, a^+a^-$ in this order are admissible and they yield the
  desired complex. Case (ii) is symmetric by swapping $a^+$ with $a^-$.

  For (iii) for contradiction assume that both $va^+$ and $va^-$ are collapsed
  before all three edges $uv, uw$ and $vw$ and via the collapses as in the
  statement.

  Without loss of generality let us
  assume that $va^+$ is collapsed before $va^-$. Let $K'$ be the intermediate
  complex obtained exactly before collapsing $va^-$. Now we refine those
  elementary  collapses yielding $K'$ which remove some face of $B$ to
  collapses which remove exactly two faces via each elementary collapse. For
  example, an elementary collapse through a vertex $a$ in a facet (triangle)
  $abc$ can be refined to two elementary collapses, first removing $ab$ and $abc$ and the
  second removing $a$
  and $ac$. In a similar way, we refine a collapse through a vertex or an edge
  in a tetrahedron. (This is related to a description of collapses via discrete
  Morse theory~\cite{forman98}. In the rest of the proof we essentially study
  the properties of the corresponding discrete Morse matching until we show
  that it cannot exist.)
  
  Let $\tau$ be a triangle
  in $B$ which is not in $K'$. 
  Such a \emph{missing} triangle has to be matched with either a
  tetrahedron or an edge with which it is removed in some collapse. In case of tetrahedron,
  we observe that this tetrahedron belongs to $B$. This follows
  from the description of the constraint complex of $B$ in
  $K$---no triangle of $B$ is in a tetrahedron outside $K$. In case of an edge
  we observe that this edge is an edge of $B$ different from $uv, uw, vw, va^+$
  and $va^-$. For $uv, uw$ and $vw$ this is excluded from the definition of $K$
  and the fact that $uv, uw$ and $vw$ are collapsed later. For $va^+$ 
  and $va^-$ this follows from the condition on their collapses. Therefore, we
  altogether have three admissible tetrahedra $uva^+a^-$, $vwa^+a^-$ and
  $uwa^+a^-$ and the five edges $a^+a^-$, $ua^+$, $ua^-$, $wa^+$ and $wa^-$
  for matching the missing triangles. We will show that this is insufficient
  obtaining the desired contradiction.

  The requirement on the collapses of $va^+$ and $va^-$ implies that
  all triangles in $B$ containing $va^+$ have to be removed before collapsing
  $va^+$ as well as all triangles in $B$ containing $va^-$ have to be removed before collapsing
  $va^-$. That is, the five triangles $va^+a^-$, $uva^+$, $uva^-$, $vwa^+$ and
  $vwa^-$ have to be missing. Considering triangles $uva^+$, $uva^-$, the only
  tetrahedron containing any of them is $uva^+a^-$, thus  at most
  one of them can be matched to this tetrahedron. This implies that either the
  former one is matched to the edge $ua^+$ or the latter one is matched to the
  edge $ua^-$. However, all triangles containing such an edge have to be
  removed before collapsing such an edge. In particular, the triangle $ua^+a^-$
  has to be missing. By a symmetric argument replacing $u$ with $w$, we get
  that $wa^+a^-$ is missing as well. Therefore we have at least seven missing
  triangles. Because we have only three tetrahedra, at least four such edges
  have to be matched with missing triangles. But now we deduce that also the
  triangles $uwa^+$ and $uwa^-$ are missing---both of them contain two edges
  (from our list of possible edges matching missing triangles) and they have to
  be removed before removing any such edge. Thus we have nine missing triangles
  which is too much for three tetrahedra and five edges.
\end{proof}
\subsection{1-house}

A \emph{1-house} is a $2$-polyhedron depicted in Figure~\ref{f:1house}. Its
construction goes back to Malgouyres, Franc\'{es} and
the second author~\cite{malgouyres-frances08,tancer16} modifying the well-known Bing's
house.\footnote{$1$-house implicitly appears
in~\cite{malgouyres-frances08} and explicitly in~\cite{tancer16} under a slightly
different name. The name $1$-house is according to~\cite{gpptw19} and refers to
the property that that there is one free edge.}
It
contains a distinguished edge $f$ and a distinguished rectangle $L$ called the
lower wall. When triangulated, it is collapsible but any collapsing must start
in an edge subdividing $f$. We will need the properties of 1-house as stated 
in Lemma~9 in~\cite{gpptw19}; however the content of this lemma also goes back
to~\cite{malgouyres-frances08,tancer16}.

Given a polyhedron $P$, a triangulation $K$ of $P$ is \emph{geometric}, if $P = |K|$.

\begin{figure}
  \begin{center}
    \includegraphics[page=1]{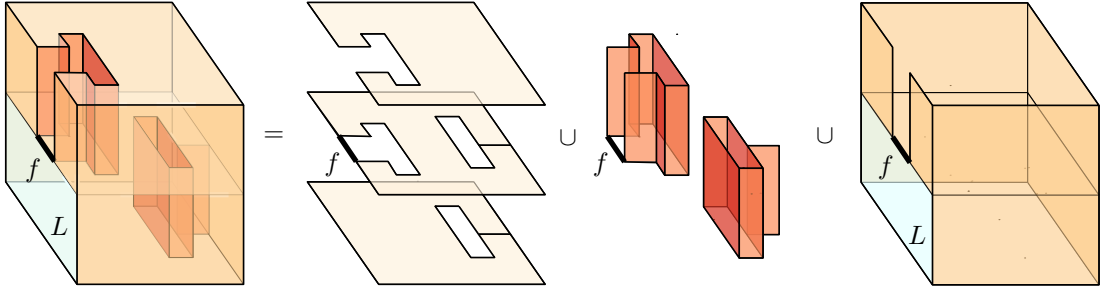}
  \end{center}
  \caption{1-house. (We are very thankful to our coauthors
  from~\cite{gpptw19} for their approval to reuse and to further modify this
  drawing.)}
  \label{f:1house}
\end{figure}

\begin{lemma}[Lemma~9 in~\cite{gpptw19}]
  \label{l:thin_1house}
  Let $H$ be a 1-house, $f$ its free edge and $L$ its lower wall. 
  In any geometric\footnote{We use geometric triangulations here because
  of the convenience of the reference to~\cite{gpptw19}---these are the
  triangulations used in~\cite{gpptw19}. By checking the proofs one can
  observe that it is possible to relax this condition to requiring only that no
  triangle of the triangulation contains two edges subdividing $f$.} triangulation of $H$, the free faces are exactly the edges that
  subdivide $f$. Moreover, $H$ collapses to any subtree of the 1-skeleton of
$H$ that is contained in $L$ and shares with the boundary of $L$ a single
  endpoint of $f$.
\end{lemma}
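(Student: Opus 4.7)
The plan is to prove the two claims separately, each reducing to a local analysis of the triangulated 1-house.

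For the first claim, I would examine, for each edge $e$ of the triangulation, how many $2$-simplices contain $e$; this count equals the number of local sheets of $|H|$ incident to a point in the relative interior of $e$. By the very design of the 1-house, every point of $|H|$ lies in at least two local sheets except for the points of $f$, so the only candidates for free edges are those whose relative interior lies on $f$. Conversely, a geometric triangulation (or, more weakly, one in which no triangle contains two edges subdividing $f$) guarantees that each such edge is contained in exactly one triangle and is therefore free. No triangle of the triangulation can be free, since each triangle is incident to the boundary of at most one $2$-cell of the polyhedral structure and hence has free status only if its interior lies in $|H|$, which would force an edge covered by only one triangle elsewhere.

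For the second claim, I would build an explicit collapsing sequence. Begin at a free edge $e \subset f$: the elementary collapse through $e$ removes the unique incident triangle and exposes a new free edge in one of the walls adjacent to $f$. Continuing in this fashion, I would peel off the walls of the 1-house one by one in the order dictated by its architecture, in the Bing's-house style of collapse. After all walls outside the lower wall $L$ have been removed, the remaining complex is $L$, possibly together with a collapsible tail ending at the endpoint $p$ of $f$ lying on $\partial L$. Since $L$ is a triangulated $2$-disc, it collapses to any spanning tree of its $1$-skeleton containing $p$, and in particular to any subtree $T$ contained in $L$ sharing precisely the point $p$ with $\partial L$ (extend $T$ to such a spanning tree first, then collapse the spanning tree to $T$ by a sequence of elementary collapses through the leaves outside $T$).

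The main obstacle is to verify that at every intermediate stage of the wall-by-wall peeling there really is a free edge to collapse next, and to keep track of how the collapses in one wall propagate to the next. This bookkeeping is essentially the content of the original arguments in \cite{malgouyres-frances08, tancer16, gpptw19}, which I would follow. Since the only new feature here is the use of an arbitrary geometric triangulation rather than a specific one, it suffices to observe that the arguments depend only on the local combinatorial structure, and in particular on the condition flagged in the footnote that no triangle contains two edges subdividing $f$; this holds automatically for any geometric triangulation because two edges subdividing $f$ are collinear and therefore cannot be two edges of a single non-degenerate simplex.
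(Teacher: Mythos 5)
The paper gives no proof of this lemma at all: it is imported verbatim as Lemma~9 of \cite{gpptw19} (with only the footnote observing that geometricity can be relaxed to the condition that no triangle contains two edges subdividing $f$), and your sketch—which defers the wall-by-wall collapsing bookkeeping to \cite{malgouyres-frances08,tancer16,gpptw19} and justifies the relaxed condition by the collinearity of edges subdividing $f$, exactly as the footnote does—is in substance the same appeal to those external arguments, so there is nothing in this paper to check it against. The one point you would need to add for the first claim to be complete is that vertices are not free either; compare the paper's proof of the analogous Lemma~\ref{l:thin_turbine}, where geometricity is used precisely to argue that every vertex lies in an edge contained in at least two triangles.
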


We will be using the lemma above only for trees of specific shape (subdivided
star). Such a tree contains a \emph{central path} (with one bend) connecting 
a vertex $a$ in $f$ (the right one) with another vertex $b$ inside $L$ but right from $f$. 
In addition it contains a \emph{splitting star} consisting of $b$ and some number
of edges emanating from $b$ to the right; see Figure~\ref{f:1house_circles}. In
addition, we will also need certain distinguished pairwise disjoint 
circles in the $1$-skeleton of some triangulation of the 1-house 
called \emph{crossing circles}. They meet $L$ in a collection of vertical
parallel segments each of which intersects the central path in exactly one
point. Then they extend to the $1$-house as in Figure~\ref{f:1house_circles}.

\begin{figure}
  \begin{center}
    \includegraphics[page=2]{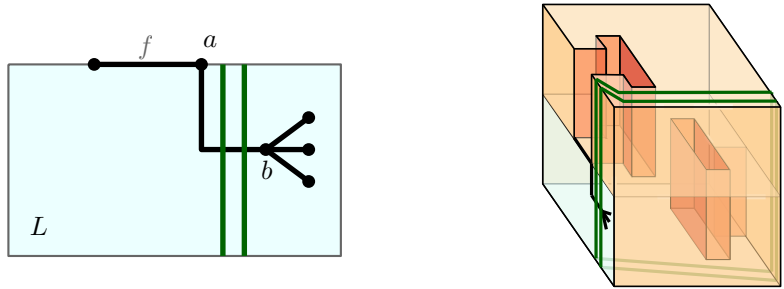}
  \end{center}
\caption{1-house with our tree and crossing circles.}
  \label{f:1house_circles}
\end{figure}

\paragraph{Encapsulating.}

In order to glue later on various gadgets together in $3$-space, it is
useful to embed a 1-house into an auxiliary 3-ball called a \emph{capsule}. In fact,
the capsule is the cube which is the convex hull of the 1-house 
as in Figure~\ref{f:1house_in_cube}. 
Note that the edge $f$, our tree and the 
crossing circles appear on the boundary of the capsule. The 1-house in this
cube is called an \emph{encapsulated 1-house}. After a homeomorphism, the
capsule can be realized as a cylinder along the central path where the crossing
circles are parallel with the bases of the cylinder; see Figure~\ref{f:1house_in_cylinder}.

\begin{figure}
  \begin{center}
    \includegraphics[page=3]{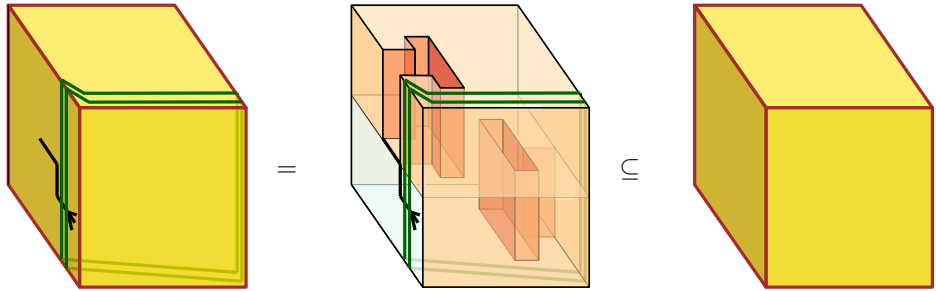}
  \end{center}
\caption{1-house with a our tree and crossing circles embedded in a cube.}
  \label{f:1house_in_cube}
\end{figure}
\begin{figure}
  \begin{center}
  \includegraphics[page=4]{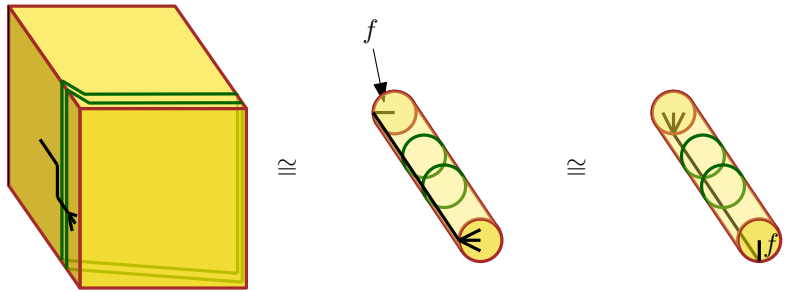}
  \end{center}
\caption{1-house with our tree and crossing circles embedded in a cube
  after a homeomorphism.}
  \label{f:1house_in_cylinder}
\end{figure}

\subsection{(Modified) turbine}
\label{ss:thin_turbine}
A gadget called turbine has been introduced by Santamar\'{\i}a-Galvis and
Woodroofe~\cite{santamaria-galvis-woodroofe21}. We will use a slightly modified
version of a turbine with 3 blades from~\cite{santamaria-galvis-woodroofe21}. 
However, in spite of the modification, we still call it just \emph{a turbine}
for simplicity of the terminology. This modification would not be necessary for
a proof of Theorem~\ref{t:collapsibility_hard}. However, it is important later
on for the proof of Theorem~\ref{t:main} when thickening the turbine. Thus we
want to be consistent. 

The turbine will be a $2$-polyhedron made from a \emph{central triangle}
and three \emph{blades}.  The central triangle is a triangle with vertices $y_1, y_2, y_3$ and 
the midpoints of the edges $m_1, m_2, m_3$ as in
Figure~\ref{f:turbine_triangle}. The barycenter of the triangle is denoted $c$.
The turbine will also contain an important subpolyhedron called \emph{central
tree}. The portion of the central tree inside the central triangle will consist
of the edges $m_1c$, $m_2c$, $m_3c$ and also one extra distinguished edge $e$
emanating from $c$ as in Figure~\ref{f:turbine_triangle}.

\begin{figure}
  \begin{center}
    \includegraphics[page=23]{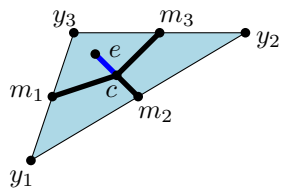}
    \caption{The central triangle of a modified thin turbine}
    \label{f:turbine_triangle}
  \end{center}
\end{figure}

Then we glue three \emph{blades} to this triangle. A single blade is the
complex depicted in 
Figure~\ref{f:blade_orthogonal}. The left picture shows
only the blade while the right one 
(partially transparent) also includes the thick path which is
the portion of the central tree in the blade. (It may also be useful to check
Figure~\ref{f:collapsing_blade_1} where the blade is gradually collapsed which
may remove doubts which faces are present.) The thick path also contains an
important distinguished edge $f_i$ (in the $i$th blade for $i \in \{1, 2,3\}$). 

\begin{figure}
  \begin{center}
    \includegraphics[page=7]{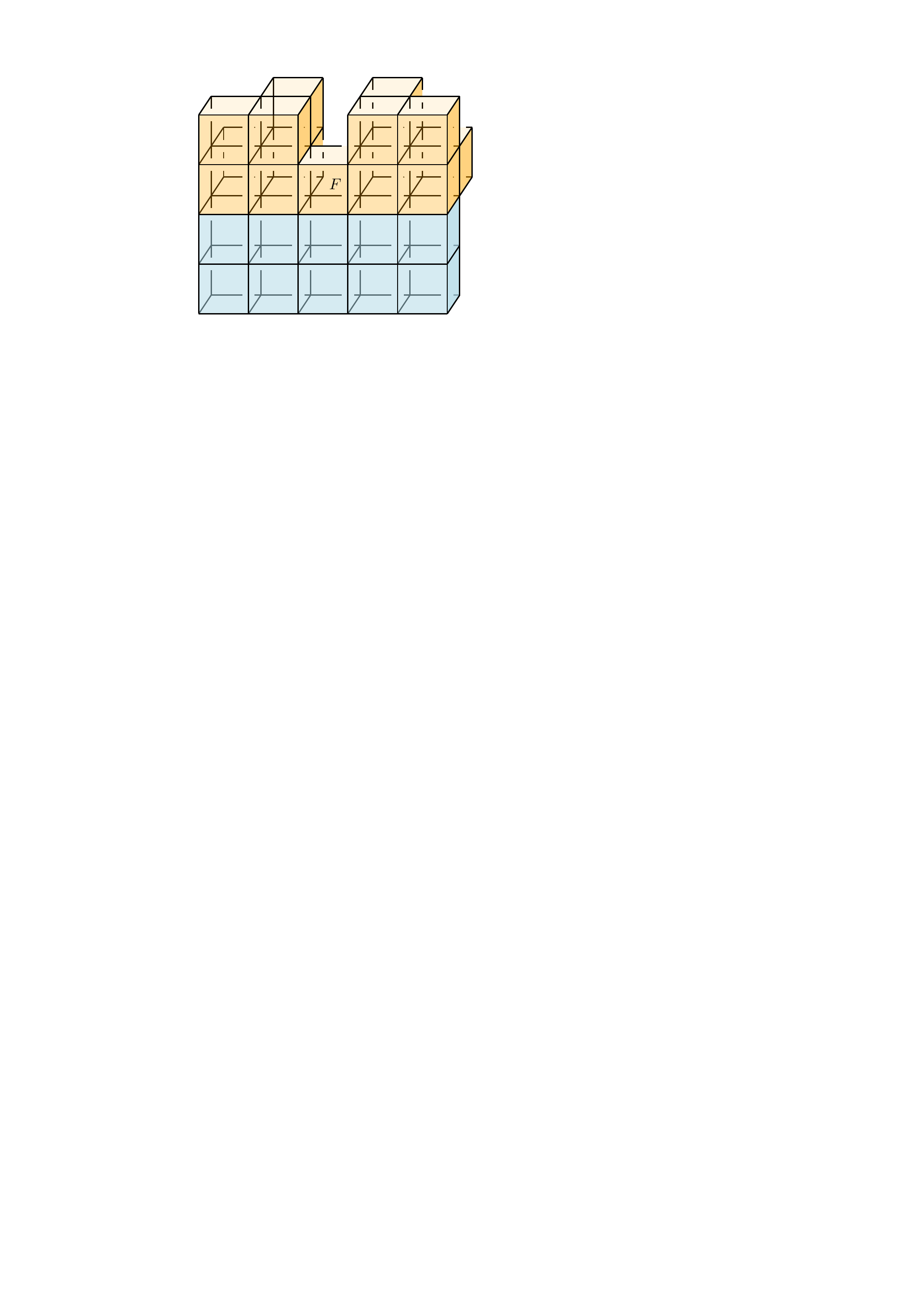}
  \end{center}
  \caption{The $i$th blade of non-thickened turbine in orthogonal scheme}
  \label{f:blade_orthogonal}
\end{figure}

The blades are glued to the central triangle as in
Figure~\ref{f:turbine_orthogonal}.
Only two blades are drawn in the picture for
simplicity but there is one blade also attached in an analogous way along the
line $y_1y_2$. After gluing all parts together, it is easy to check that the
central tree is indeed a tree. By $\pi_i$ we denote the path in this tree
between $c$ and the vertex of $f_i$ which is not a leaf of the tree.
(Therefore, $\pi_i$ does not contain $f_i$.)

The two depicted blades are intentionally drawn in a way
resembling a right angle at the vertex $y_3$.  
This will be useful for
thickening in Subsection~\ref{ss:thick_turbine}. Of course if we want the thin
turbine as a polyhedral complex embedded
in $\R^3$, then only one right angle is possible in the central triangle.
However, if we do not fix the embedding, we can locally get the right angle at
any vertex we want by an affine transform.

\begin{figure}
  \begin{center}
    \includegraphics[page=8]{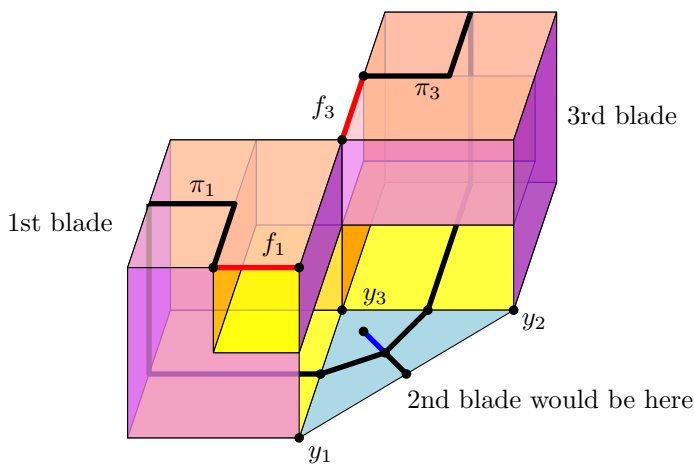}
  \end{center}
  \caption{A non-thickened variant of a turbine in orthogonal scheme (one blade
  is not drawn for simplicity of the picture).}
  \label{f:turbine_orthogonal}
\end{figure}

The following lemma is an analog of Lemma~\ref{l:thin_1house} for a turbine.
(It is also an analog of Lemma~8 in~\cite{tancer16}; however for a different
gadget.)

\begin{lemma}
  \label{l:thin_turbine}
  Let $T$ be a turbine with vertices and edges denoted as in its definition. In
    any geometric\footnote{The assumption on geometric triangulation can
    be again weakend similarly as in the case of Lemma~\ref{l:thin_1house}.} triangulation of $T$ which also restricts to a triangulation of the central
  tree, the free faces are exactly the edges that subdivide
  $f_1, f_2$ or $f_3$. Moreover such a triangulation collapses to any $1$-complex spanned by
  $e$, $\pi_1$, $\pi_2$, $\pi_3$ and any subset of $\{f_1, f_2, f_3\}$ except
  $\{f_1, f_2, f_3\}$ itself.
\end{lemma}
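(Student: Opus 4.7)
The plan is to mimic the strategy of Lemma~\ref{l:thin_1house}, viewing each blade of $T$ as a $1$-house-like gadget with distinguished free edge $f_i$ attached to the central triangle along one of its sides. For the first part of the lemma, I would argue locally: any edge subdividing some $f_i$ lies on the boundary of $T$ and is therefore contained in exactly one triangle of the triangulation, hence free. Any other edge $\sigma$ of the triangulation is contained in at least two triangles: if $\sigma$ lies in the interior of a polyhedral $2$-cell of $T$ this is immediate from the geometric triangulation, while if $\sigma$ lies on the $1$-skeleton of $T$, the combinatorics of the turbine exhibits at least two polyhedral $2$-cells meeting along the corresponding $1$-cell. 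The hypothesis that the triangulation restricts to a triangulation of the central tree is used precisely to guarantee this conclusion for edges lying on $e$, on the $m_ic$, and on each $\pi_i$; the remaining cases are analogous to the corresponding local analysis for the $1$-house in~\cite{malgouyres-frances08, tancer16, gpptw19}.

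For the collapsibility part, fix a subset $S \subsetneq \{f_1,f_2,f_3\}$ and pick $j$ with $f_j \notin S$. The collapse would proceed in three stages, each modelled on the $1$-house argument. Stage~$1$: starting from a free edge subdividing $f_j$ and treating blade $j$ as a $1$-house with distinguished free edge $f_j$, collapse blade $j$ down to $\pi_j$ together with its attaching edge $y_jy_{j+1}$ on the central triangle. Stage~$2$: the attaching edge $y_jy_{j+1}$ is now shared only with triangles of the central triangle (the blade side has been emptied), hence it is free; use it to collapse the central triangle down to $e \cup m_1c \cup m_2c \cup m_3c$ together with the two attaching edges of the remaining blades. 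Stage~$3$: for each $i \neq j$, the attaching edge $y_iy_{i+1}$ is now free from the blade's side, so we may use it to collapse blade $i$ down to either $\pi_i$ (if $i \notin S$) or $\pi_i \cup f_i$ (if $i \in S$). The residual complex is precisely $e \cup \pi_1 \cup \pi_2 \cup \pi_3 \cup \bigcup_{i\in S} f_i$, as required.

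The main obstacle is Stage~$3$, which goes beyond the statement of Lemma~\ref{l:thin_1house}. One must show that each blade can be collapsed starting from its attaching edge rather than from $f_i$, and moreover that there is enough flexibility to stop at either $\pi_i$ or $\pi_i \cup f_i$. Concretely, this requires a careful combinatorial examination of the blade in Figure~\ref{f:blade_orthogonal}: I would verify that a collapse initiated from the attaching side sweeps the blade in such a way that $f_i$ becomes a free edge only at the very last step, whence it may be kept or removed at will. The argument is a mirror image of the $1$-house analysis carried out from the opposite side of the gadget; I expect it to succeed, but it is the most delicate point of the proof and is the reason the lemma has to be stated with the extra flexibility of choosing $S$ rather than just exhibiting a single collapse.
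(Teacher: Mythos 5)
Your first part (identification of the free faces) and your overall strategy---start at a free edge over some $f_j$ with $j\notin S$, sweep through the gadget, keep $f_i$ for $i\in S$---are in the spirit of the paper's proof, but Stage~2 as you state it cannot be carried out, and this is a genuine gap rather than bookkeeping. Concretely, after blade $j$ (say the one attached along $y_1y_2$) has been emptied, the three spokes $cm_1,cm_2,cm_3$ cut the central triangle into three sectors, one at each vertex $y_i$. You can empty the two sectors meeting the side $y_1y_2$, but the third sector, at the vertex $y_3$ shared by the two remaining attaching sides, is then stuck: its boundary consists of two spokes, which you insist on keeping (they lie on $\pi_1\cup\pi_2\cup\pi_3$), and two half-sides that are still covered by the triangles of the intact blades, so none of its boundary edges is available as a free face, and its interior edges and vertices are not free either. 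Equivalently, the complex you claim to reach at the end of Stage~2 contains the cycle through $c$, the midpoint of $y_2y_3$, the vertex $y_3$, and the midpoint of $y_3y_1$, whose two spoke edges lie in no remaining triangle; hence that complex has nontrivial first homology, while every complex obtained from the (contractible) turbine by collapses is contractible. So ``all of the central triangle first, then the remaining blades'' is impossible; the sector collapses and the blade collapses must be interleaved---empty one sector, then the adjacent blade from the half-side this frees, then the next sector, and so on---which is exactly what the paper's proof does (see the order in Figures~\ref{f:collapsing_blade_1} and~\ref{f:collapsing_blade_3}).

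Your Stage~3 is the right statement to want, but you leave its core---that a blade can be collapsed starting from its attaching (half-)side down to $f_i$ together with the in-blade portion of $\pi_i$ and some pendant segments to be cleaned up afterwards---at ``I expect it to succeed''; the paper establishes precisely this by exhibiting the collapses (Figure~\ref{f:collapsing_blade_3}), so as written this step is asserted, not proved. Note also that the flexibility you worry about is cheaper than you suggest: one always keeps $f_i$ while collapsing blade $i$, and if $f_i\notin S$ one deletes it at the very end by a collapse through its free leaf vertex, as in the paper (``if needed, we collapse $f_2$ and/or $f_3$''); there is no need for $f_i$ to ``become free only at the last step''---edges subdividing $f_i$ are free throughout. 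Finally, in the first part of the lemma you should also rule out free \emph{vertices}, which the paper does by observing that every vertex lies in an edge contained in at least two triangles (this is where the geometric-triangulation hypothesis is used).
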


\begin{proof}
  The first claim regarding the free faces is straightforward to check:
  In any triangulation, 
  triangles cannot be free as the complex is $2$-dimensional. Edge which is not
  contained in $f_1, f_2$ or $f_3$ is in at least two triangles (and an edge
  contained in $f_1, f_2$ or $f_3$ is in a single triangle). Any vertex is
  contained in an edge which is in at least two triangles. (Here we use
  that the triangulation is geometric.)
  
\begin{figure}
  \begin{center}
    \includegraphics[page=47]{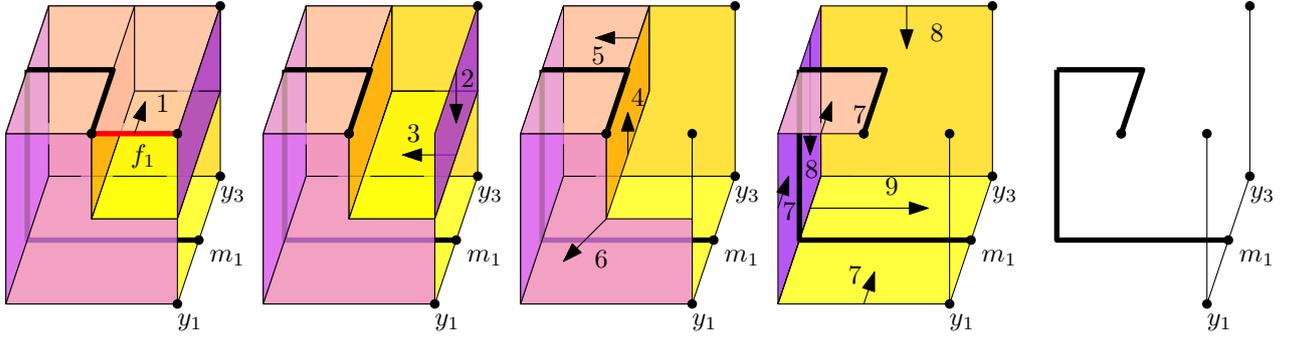}
  \end{center}
  \caption{Collapsing the first blade.}
  \label{f:collapsing_blade_1}
\end{figure}

  Then we collapse a part of the central triangle (between segments $cm_1$ and
  $cm_3$ except $e$) and the third blade by collapses
  in Figure~\ref{f:collapsing_blade_3}. The third blade is collapsed to $f_3$,
  portion of $\pi_3$ inside this blade, the segment above $y_2$ and the segment
  $m_3y_2$. 

\begin{figure}
  \begin{center}
    \includegraphics[page=48]{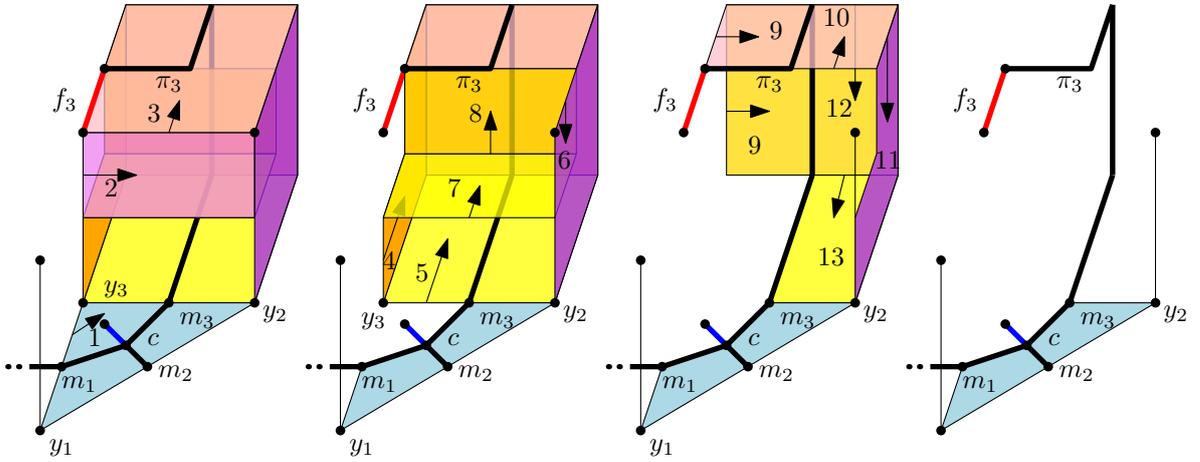}
  \end{center}
  \caption{Collapsing the third blade.}
  \label{f:collapsing_blade_3}
\end{figure}
 
  By completely analogous collapses we collapse the part of the central
  triangle between segments $cm_3$ and $cm_2$ (except avoiding $e$) 
  and then the second blade to $f_2$,
  portion of $\pi_2$ inside this blade, the segment above $y_1$ and the segment
  $y_1m_2$. After this, we collapse the segment above $y_1$ and the portion of
  the central triangle between segments $cm_1$ and $cm_2$ obtaining the central
  tree except $f_1$. If needed, we collapse $f_2$ and/or $f_3$ and we are done.
\end{proof}

\paragraph{Encapsulating.}

It is routine to check that it is possible to embed the turbine into a $3$-ball
so that the central triangle appears on boundary; up to a homeomorphism this is
as in Figure~\ref{f:turbine_embedded}. Indeed, we can add the second blade to
the drawing in Figure~\ref{f:turbine_orthogonal}  in a natural way so that after
taking the convex hull, the central tree completely appears on the boundary.
The ball containing the turbine this way is called a \emph{capsule} and the
turbine together with this ball is called an \emph{encapsulated turbine}.

\begin{figure}
  \begin{center}
  \includegraphics[page=49]{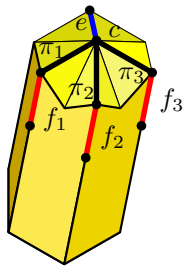}
  \end{center}
  \caption{Encapsulated turbine.}
  \label{f:turbine_embedded}
\end{figure}

\section{The construction for collapsibility}
\label{s:thin_construction}
\subsection{Planar monotone rectilinear $3$-SAT}
\label{ss:pmr3sat}

In our reduction we use the planar monotone rectilinear 3-SAT
from~\cite{deberg-khosravi10}. First, we recall the well known $3$-SAT problem and then we explain what the
adjectives `planar monotone rectilinear' mean. A \emph{literal} is a boolean
variable or its negation. A \emph{clause} is a boolean formula which is a
disjunction of literals, that is, of the form $\kappa = \ell_1 \vee \cdots \vee \ell_k$
where $\ell_1, \dots, \ell_k$ are literals. A boolean formula is a
\emph{CNF-formula}, if it is a conjunction of clauses, that is, it is of the form
$\phi = \kappa_1 \wedge \cdots \wedge \kappa_m$ where $\kappa_1, \dots, \kappa_m$ are clauses (CNF
stands for `conjunctive normal form'). An input for the 3-SAT problem is a CNF
formula $\phi$ where every clause of $\phi$ contains at most three literals.
The output states whether $\phi$ is \emph{satisfiable}, that is, whether
there is an assignment to the variables TRUE or FALSE such that the whole
formula evaluates to TRUE.

 The \emph{monotone 3-SAT} is a restricted variant of the 3-SAT problem where
 every clause is \emph{monotone}; that is, either all literals in the clause
 are variables, or all of them are negations of variables.
 The former one is called a \emph{positive} clause, the latter one is a
 \emph{negative} clause.

 The \emph{planar monotone rectilinear 3-SAT} is a further restriction of the
 monotone 3-SAT subject to the condition that the variables, clauses and their
 adjacencies can be represented via suitable drawing in the plane according to
 the following rules (see Figure~\ref{f:rectiSAT}):

\begin{figure}
  \begin{center}
    \includegraphics[page=1]{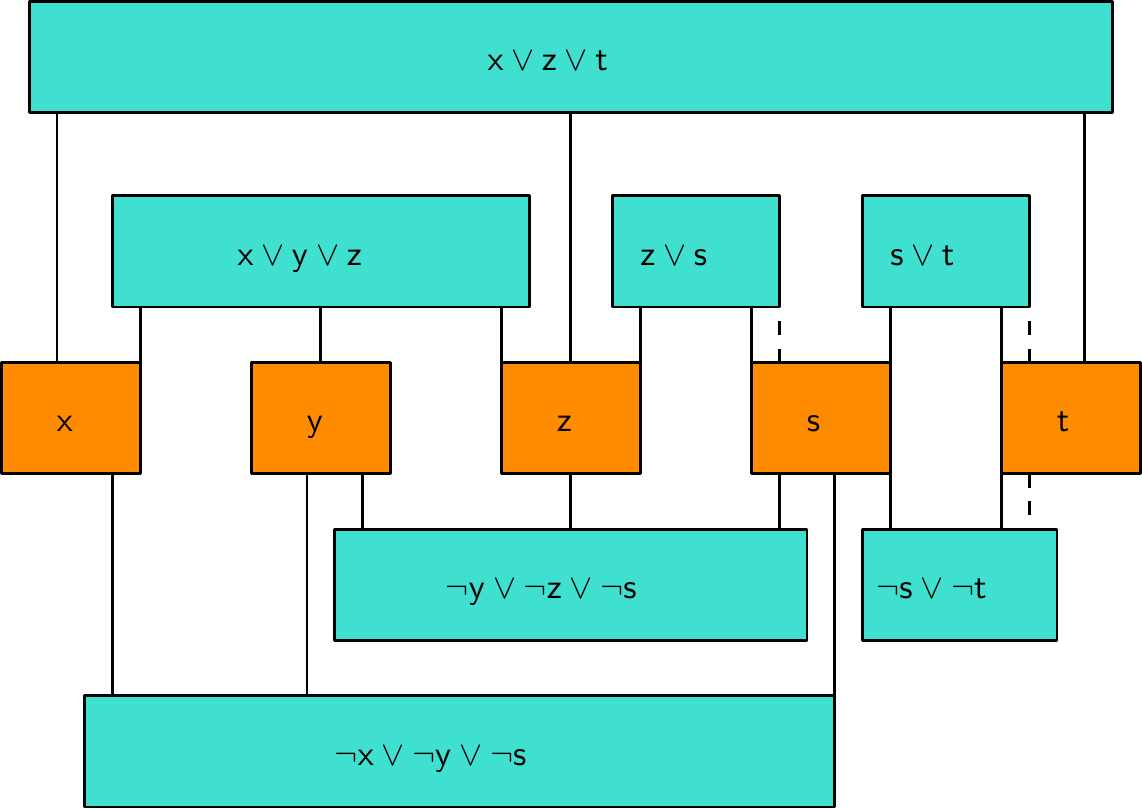}
    \caption{An example of a representation of a formula from the planar
    monotone rectilinear 3-SAT. Dashed segments appear only after the minor
    modification explained below the construction.}
    \label{f:rectiSAT}
  \end{center}
\end{figure}
\begin{itemize}
  \item
    Each variable and clause is represented by an axis-aligned rectangle (for
    simplicity) of fixed height~$1$. The rectangles are pairwise disjoint. In addition,
 the $x$-axis is the line of symmetry of the rectangles representing the
 variables. The rectangle representing a variable $\vx$ will be further denoted
    $\RR_{\vx}$; the rectangle representing a clause $\kappa$ will be denoted
    $\RR_\kappa$.
    \item A rectangle representing a \emph{positive} clause is above the
      $x$-axis. A rectangle representing a \emph{negative} clause is below the
      $x$-axis.
    \item Whenever a variable $\vx$ or its negation $\neg \vx$ is contained in a
      clause $\kappa$, then the $\RR_{\vx}$ and $\RR_\kappa$ are
      connected with a vertical segment. This vertical segment does not meet any
      other rectangle. The vertical segments are pairwise disjoint.  
\end{itemize}

In our reduction we will use that the planar monotone rectilinear 3SAT is
NP-hard assuming that the representation is a part of the input~\cite{deberg-khosravi10}.

\paragraph{Minor modification.} We emphasize that, by definition, an instance
of the planar monotone rectilinear 3SAT may also contain clauses of size 1 or
2.  However, we perform a minor modification so that
every rectangle representing a clause has exactly $3$ incoming vertical
segments. Given a clause $\kappa$ with two literals, say $\kappa =
\ell_1 \vee \ell_2$, we reach this by duplicating the vertical segment coming
from $\ell_2$. That is, between the rectangle for $\ell_2$ and the rectangle
for $\kappa$ we will have two parallel vertical segments (close to each other)
after a possible enlargement of one of the rectangles; see
Figure~\ref{f:rectiSAT}. Intuitively, this corresponds to replacing $\ell_1
\vee \ell_2$ with $\ell_1 \vee \ell_2 \vee \ell_2$ which does not affect which
assignments are satisfying.

For a clause $\kappa$ with a single literal, we could proceed similarly with three
parallel vertical segments from the rectangle of $\kappa$. On the other hand, it is
also easy to see that we can avoid clauses with a single literal without
affecting NP-hardness (and the possibility to find a suitable planar monotone
rectilinear representation).

\subsection{The template}
\label{ss:template}

Now we describe the template for our construction coming from an instance of
the planar monotone rectilinear 3SAT. For this purpose we fix a 3SAT formula $\phi$ and its planar monotone rectilinear representation. Consider an axis aligned
rectangle containing the representation of $\phi$ in the interior. We will
refer to this rectangle as the \emph{bounding box}.

As we want to preserve `above' and `below' from the planar monotone rectilinear
3SAT, we will imagine that our bounding box is situated in $3$-space so
that it is drawn on a board or displayed on the monitor. In other words, the
$x$-axis corresponds to the directions left/right, the $y$-axis corresponds to
the directions above/below, and the $z$ axis corresponds to the directions in
front of/behind. The bounding box is an axis aligned rectangle in the plane $z
= 0$. Some triangulation of the bounding box will be part of our construction. Then we will glue various gadgets always in front of the bounding box but never behind it. 

In order to glue the gadgets to the bounding box, we first create certain
template in the bounding box;  see Figure~\ref{f:template}. Then we will glue the gadgets according to the
template. 

\begin{figure}
  \begin{center}
    \includegraphics[page=2]{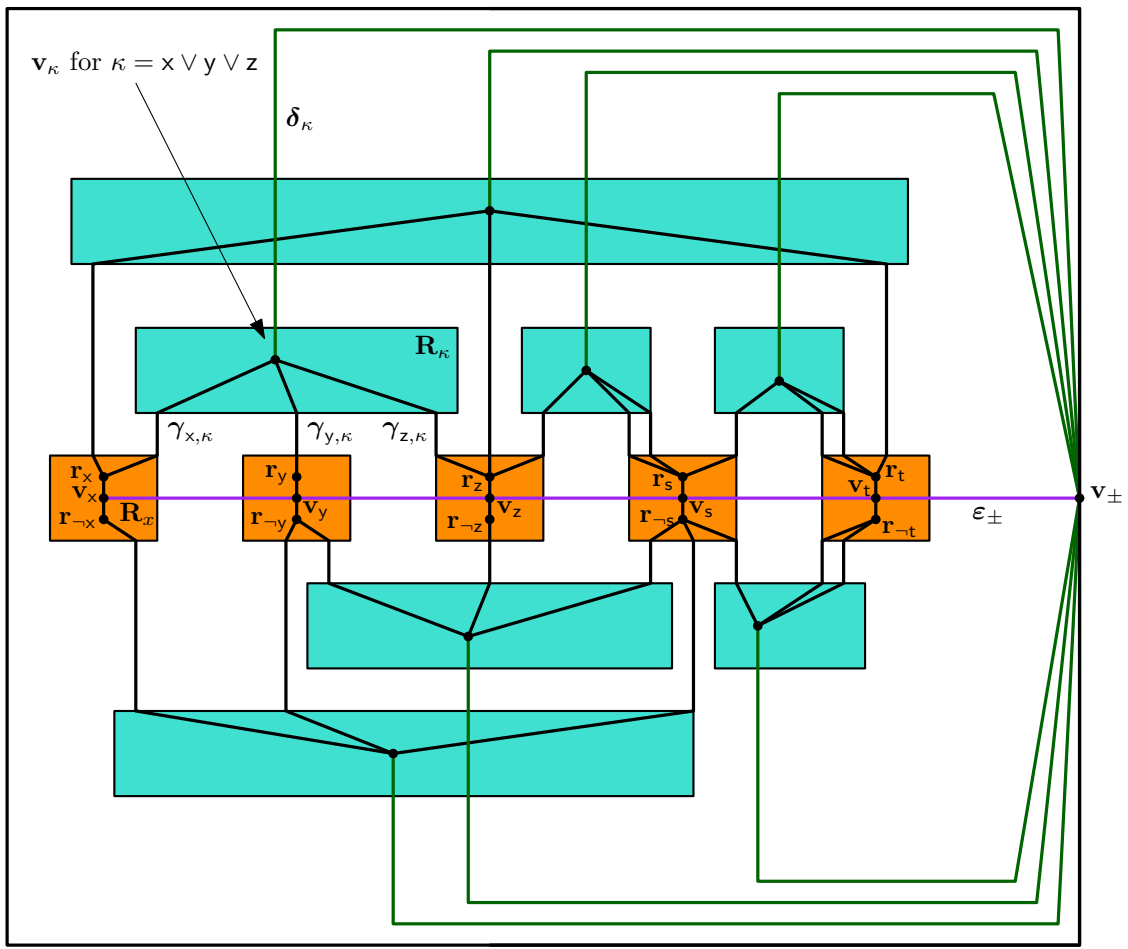}
    \caption{Creating the template for the formula from
    Figure~\ref{f:rectiSAT}.}
    \label{f:template}
  \end{center}
\end{figure}

For every variable $\vx$ we place vertices $\vv_{\vx}$, $\rr_{\vx}$ and
$\rr_{\neg \vx}$ inside $\RR_{\vx}$ so that $\vv_{\vx}$ is the midpoint of the
rectangle, $\rr_{\vx}$ is directly above $\vv_{\vx}$ and $\rr_{\neg \vx}$ is directly below it. 
 For every clause $\kappa$ we place a vertex $\vv_{\kappa}$ somewhere in the
 interior of the rectangle for $\kappa$. In addition, we require that every two
 vertices $\vv_{\kappa}$, $\vv_{\kappa'}$ for distinct clauses $\kappa$ and
 $\kappa'$ have distinct $x$-coordinates and distinct $y$-coordinates. 
On the
 intersection of the $x$-axis and the right side of the bounding box, we place
 a vertex $\vv_\pm$.

Now we define a few auxiliary piecewise linear curves. For every variable $\vx$,
we connect $\vv_{\vx}$ with $\rr_{\vx}$ and $\rr_{\neg \vx}$ with straight segments. 
For every pair $(\ell,\kappa)$
such that $\ell$ is a literal and $\kappa$ is a clause containing $\ell$
we define a piecewise linear curve $\ggamma_{\ell, \kappa}$ with at most two bends
connecting $\rr_{\ell}$ and $\vv_\kappa$. If $x$ is the variable of $\ell$ (that is $\ell
\in \{\vx, \neg \vx\}$) then we require that $\ggamma_{\ell, \kappa}$ contains the vertical segment connecting
the rectangles $\RR_{\vx}$ and $\RR_{\kappa}$ (from the rectilinear representation) and then it
connects the endpoints of this segment inside the rectangles with $\rr_\ell$ and
$\vv_{\kappa}$ by another segments. If there are more vertical segments connecting the
rectangles of $\vx$ and $\kappa$ (which may happen after our earlier `minor
modification'), then we have one such curve $\ggamma_{\ell,\kappa}$ for each vertical
segment. With slight abuse of the notation, we will not distinguish these
curves and we will refer to any of them as $\ggamma_{\ell,\kappa}$. 

For every clause $\kappa$ we also define another piecewise linear curve
$\delta_\kappa$ with at most two bends connecting $\vv_\kappa$ and $\vv_\pm$.
The curve $\ddelta_\kappa$ consists of three segments. Now suppose that $\kappa$ is positive. 
The first segment is a vertical emanating up from $\vv_\kappa$ until it almost
reaches the boundary of the bounding box. The second segment emanates right from the earlier
position until it almost reaches the right side of the bounding box. Finally,
the third segments connects the current endpoint with $\vv_\pm$. If $\kappa$ is
negative then the description is analogous with the exception that the first
segment emanates down. In addition, the curves $\ddelta_\kappa$ are mutually
positioned so that they do not intersect each other except in $\vv_\pm$; see
Figure~\ref{f:template}. On the other hand, the first segment of
$\ddelta_\kappa$ may intersect $\RR_{\kappa'}$ for some other clause $\kappa'$
and some curves
$\ggamma_{\ell,\kappa'}$ inside $\RR_{\kappa'}$. However, note that
$\ddelta_\kappa$ does not pass through any $\vv_{\kappa'}$ as the
$x$-coordinates of $\vv_{\kappa}$ and $\vv_{\kappa'}$ are distinct.

Finally, we add a 
segment $\vvarepsilon_{\pm}$ connecting $\vv_\pm$ and the rightmost
point among the points
$\vv_{\vx}$ where $\vx$ is a variable.

The role of the vertices and the curves defined above is that we will glue the
other gadgets to the template exactly in these vertices and along these curves.

\subsection{Gluing the gadgets}
\label{ss:gluing_thin}

\paragraph{Step 1: the variable gadget.} 
For every variable $\vx$ we take a copy $\cV_{\vx}$ of the bipyramid called the
\emph{variable gadget}. Denote the vertices of $\cV_{\vx}$ by $\uu_{\vx}$,
$\vv_{\vx}$, $\ww_{\vx}$, $\aa_{\vx}$ and $\aa_{\neg \vx}$ so that they correspond in this order to the 
vertices $u$, $v$, $w$, $a^+$ and $a^-$ in the original definition of the bipyramid. Note
that $\vv_x$ is already a vertex in the template, thus we also require that
$\cV_{\vx}$ intersects the template exactly in this vertex.

In order to embed the final construction into $3$-space, it already pays
off to specify the geometric position of the bipyramid: We place the bipyramid
in a sufficiently small neighborhood of the vertex $\vv_{\vx}$ so that it is fully in
front of the template, that is, in the halfspace given by $z \geq 0$. We also
assume that the vertices $\uu_{\vx}$ and $\ww_{\vx}$ lie in the plane given by
$y=0$ and $\uu_{\vx}$ is on the right from $\ww_{\vx}$. Finally, $\aa_{\vx}$ is above the
plane $y=0$ and $\aa_{\neg \vx}$ below this plane; see
Figure~\ref{f:variable_gadget_top}.

\begin{figure}
  \begin{center}
    \includegraphics[page=10]{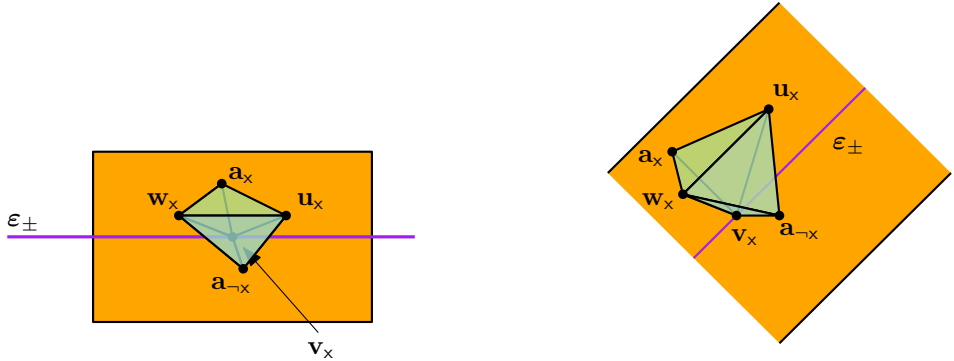}
    \caption{The variable gadget in front of the template. Left: a viewpoint
    corresponding to our conventions regarding the coordinate system. Right: a
    slightly rotated viewpoint such that all vertices are visible for a better
    visualization.}
    \label{f:variable_gadget_top}
  \end{center}
\end{figure}

\paragraph{Step 2: the clause gadget.}

 For every clause $\kappa$ we take a copy $\cC_\kappa$ of the turbine called the
 \emph{clause gadget}. We specify both the attachment of $\cC_\kappa$
 to the template and its embedding into $\R^3$. It pays off to consider
 $\cC_\kappa$ as a part of an encapsulated turbine and we will describe an
 embedding of the capsule.

 First, we explain the construction for a positive clause $\kappa = \vx
 \vee \vy \vee \vz$.
 According to our earlier minor modification, we allow that two of the
 variables coincide. Let us assume that in the planar rectilinear
 representation of our formula, the variables $\vx, \vy, \vz$ appear from left to right in this order. 

 We take a copy of an encapsulated turbine and name its vertices as in
 Figure~\ref{f:clause_gadget}, left. Thus the vertex $c$ in the original
 notation for turbine corresponds to $\vv_\kappa$; the edge $e$ corresponds to
 $\vv_\kappa  \uu_\kappa$; and the edges $f_1, f_2$ and $f_3$ correspond to
 $\vv_{\vz,\kappa}\ww_{\vz,\kappa}$, $\vv_{\vy,\kappa}\ww_{\vy,\kappa}$ and
 $\vv_{\vx,\kappa}\ww_{\vx,\kappa}$ respectively. (Compare
 Figures~\ref{f:turbine_embedded} and~\ref{f:clause_gadget}, left.) The exact
 triangulation is not so important; we only require that it is
 geometric; the edges $e, f_1,  f_2$ and $f_3$ are not subdivided and that the central tree of the turbine
 appears in the 1-skeleton (possibly subdivided in paths $\pi_1, \pi_2,
 \pi_3$). We also require that each triangle meets the central tree
 either in an edge, a vertex or it does not meet it. Note that this can always
 be achieved by subdividing problematic triangles without subdividing any edges
 of the central tree. This last assumption will guarantee that we get a
 simplicial complex when gluing the gadgets together.
 We always take the same copy of the clause gadget, thus its size is
 constant.
If some two of the variables coincide, this
 may create a notational conflict---two distinct vertices may have the same
 notation. However, with a slight abuse of notation, we ignore this notational
 conflict similarly as we did for $\ggamma_{\ell,\kappa}$. 

 \begin{figure}
  \begin{center}
    \includegraphics[page=11]{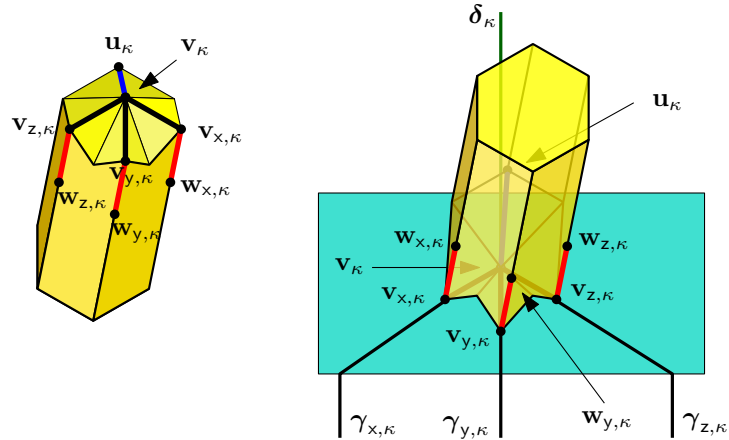}
    \caption{The clause gadget.}
   \label{f:clause_gadget}
  \end{center}
\end{figure}

 Now we rotate our encapsulated turbine upside down and we identify the vertex
 $\vv_\kappa$ with the vertex $\vv_\kappa$ of the template.
 Next, we position the vertex $\vv_{\vx,\kappa}$ on the
 curve $\ggamma_{\vx,\kappa}$ somewhere in the rectangle of $\kappa$ and we identify the
 (possibly subdivided) segment $\vv_\kappa \vv_{\vx,\kappa}$ with the corresponding segment (subcurve
 of $\ggamma_{\vx,\kappa}$)
 on the template. We proceed analogously with $\vv_{\vy,\kappa}$ and
 $\vv_{z,\kappa}$. We place $\vv_{\vx,\kappa}$, $\vv_{\vy,\kappa}$ and
 $\vv_{z,\kappa}$ close enough to $\vv_\kappa$ so that the (encapsulated) turbine does not
 intersect any curve $\ddelta_{\kappa'}$.
 Note
 that after these identifications, the encapsulated turbine and the template
 intersect exactly in the tripod formed by segments $\vv_\kappa
 \vv_{\vx,\kappa}$, $\vv_\kappa  \vv_{\vy,\kappa}$ and $\vv_\kappa
 \vv_{\vz,\kappa}$. This gives an embedding of the encapsulated
 turbine (and therefore of $\cC_\kappa$ as well) in $\R^3$ in front of the
 template. See Figure~\ref{f:clause_gadget}, right.

 In case  of notational conflict, say that $\vz = \vy$, we have two curves
 $\ggamma_{\vy,\kappa}$ and two vertices $\vv_{\vy,\kappa}$. Then we perform the identifications
 in such a way that the overall identification can be performed in 
 $3$-space (that is the cyclic orders of the identified tripods are opposite on
 the capsule and the template).

 Finally, if $\kappa = \neg \vx \vee \neg \vy \vee \neg \vz$ is a negative
 clause, we perform the same construction in a 
 mirror-symmetric fashion. The newly introduced vertices are called
 $\vv_{\neg \vx,\kappa}, \dots, \ww_{\neg \vz,\kappa}$.

\paragraph{Step 3: the splitter house.}

Another gadget we need is the splitter house $\cS_\ell$ defined for every
literal $\ell$. Let $\vx$ be the variable of the literal $\ell$. 
We take a copy of encapsulated 1-house with no crossing circles so that the
number of edges of the splitting star equals to the number of paths
$\ggamma_{\ell, \kappa}$ emanating from $\rr_\ell$. We take a \emph{geometric} triangulation such
that the free edge is not subdivided and the size of this triangulation is
linear in the number of edges of the splitting star. Similarly as for the
clause gadget, we also assume that each triangle of the triangulation meets the
union of $f$, the central path and the splitting star in an edge, a vertex, or it does not meet it. (We impose analogous
restrictions for other 1-houses used later on adding also the 
crossing circles to the union.  The size is linear either in the
number of edges in the splitting star or in the number of crossing
circles---these numbers will not be both more than one simultaneously.)
We identify the central path of the encapsulated 1-house with the segment $\vv_{\vx} \rr_\ell$ of the template. Then we identify the free
edge of encapsulated 1-house with the edge $\vv_{\vx} \aa_{\ell}$ of the variable
gadget $\cV_x$. For the actual embedding we embed this encapsulated 1-house in
front of the template in a small neighborhood of the segment $\vv_{\vx}
\rr_\ell$ (this neighborhood includes the variable
gadget considering the variable gadget sufficiently small) so that we do not
introduce any unwanted intersections with other gadgets nor their capsules; see
Figure~\ref{f:splitter}. Then the splitter
house $\cS_\ell$ is 1-house inside this encapsulated embedded 1-house (not the
whole capsule).

\begin{figure}
  \begin{center}
    \includegraphics[page=12]{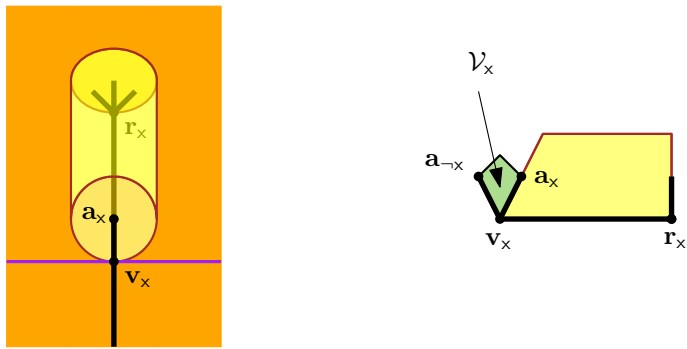}
    \caption{Left: the splitter house for $\ell = \vx$ (with the capsule) embedded in front of the template;
    compare with Figures~\ref{f:1house_in_cylinder}, right and~\ref{f:template}.
    The base of the cylinder which contains $\aa_{\vx}$ leans slightly backwards so
    that there is enough space for the variable gadget. Right: Cut through the
    plane containing $\aa_{\vx}\rr_{\vx}\vv_{\vx}$ revealing the attachment to the variable
    gadget.}
    \label{f:splitter}
  \end{center}
\end{figure}

\paragraph{Step 4: the incoming house.}

For every curve $\ggamma_{\ell,\kappa}$ we define one \emph{incoming house}
$\cI_{\ell,\kappa}$ (possibly with multiplicities if there are multiple curves 
$\ggamma_{\ell,\kappa}$).\footnote{The name `incoming' is considered relatively
to the clause gadget. Intuitively, this house transfers the information (from
variables) into clause gadgets. Later on, we will also consider `outgoing'
house that passes the information from the clause gadgets further.}
We again take a
copy of encapsulated 1-house. This time this house contains as many
crossing circles as is the number of crossings of our $\ggamma_{\ell,\kappa}$ with
curves $\ddelta_{\kappa'}$ (for arbitrary $\kappa'$). On the other hand, the
splitting star
has only one edge. Now we identify the central path of the encapsulated house with
our curve $\ggamma_{\ell,\kappa}$ except the part which is already identified with
the clause gadget. Then we identify the splitting star (edge) with the edge
$\vv_{\ell,\kappa} \ww_{\ell,\kappa}$ of the clause gadget. Finally, we identify the free edge
of this encapsulated house with $k$th edge of the splitting star of the
splitter house where $k$ is the order of $\ggamma_{\ell,\kappa}$ among the curves of
this type emanating from $\ww_\ell$. Then we embed everything in front of the
template in a small neighborhood of $\ggamma_{\ell, \kappa}$ so that there are no
unwanted intersections with other gadgets nor their capsules. (So $k$ above was
chosen in such a way that this embedding is possible.) Consult
Figure~\ref{f:incoming} in order to visualize that such an embedding is
possible. Finally, we also position the crossing circles so that they intersect
$\ggamma_{\ell,\kappa}$ exactly in the points where some $\ddelta_{\kappa'}$ meets
$\ggamma_{\ell,\kappa}$ in the template.

\begin{figure}
  \begin{center}
    \includegraphics[page=13]{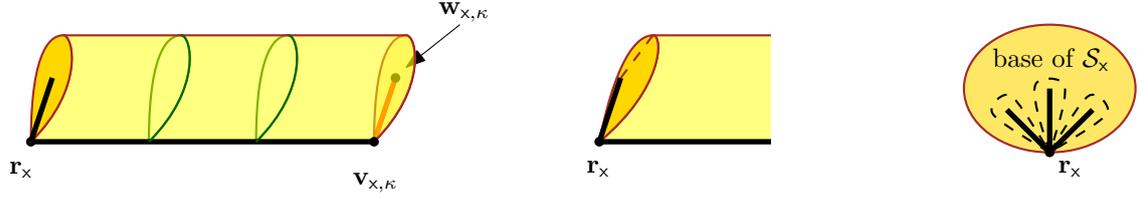}
    
    \caption{This time we realize encapsulated $\cI_{\ell, \kappa}$ (for $\ell
    = \vx$) as a prism over
    a sufficiently thin drop-shaped basis as in the left picture. After this we let the basis
    containing $\rr_{\vx}$ to lean slightly backwards except that the free edge of
    the house (the thick edge containing $\rr_{\vx}$) keeps its position. This is
    depicted in the middle picture. These two adjustments provide enough space
    to connect $\cI_{\ell, \kappa}$ to the base of the capsule of the 
    splitter house $\cS_{\vx}$ without introducing unwanted overlaps. This base is
    drawn on the right picture (compare with Figure~\ref{f:splitter}, left).
    For attachment of $\cI_{\ell, \kappa}$ to the clause gadget, we perform an
    analogous adjustment also on the base containing the edge
    $\vv_{\vx,\kappa}\ww_{\vx,\kappa}$.
    }
    \label{f:incoming}
  \end{center}
\end{figure}

\paragraph{Step 5: the outgoing house.}
For every clause $\kappa$ we define an \emph{outgoing house} $\cO_\kappa$.

We take a copy of encapsulated 1-house without crossing circles and with
only one edge in the splitting star. We identify the free edge of this 
1-house with the distinguished edge of the turbine $\cC_\kappa$, that is, with the edge
$\vv_\kappa \uu_\kappa$ so that $\vv_\kappa$ also belongs to the central path
of the 1-house.
Now we aim to glue the central path of the (encapsulated) 1-house to the
path $\ddelta_\kappa$ on the template so that the 1-house is in front of the
template. This is, however, problematic near to points where
$\ddelta_\kappa$ possibly crosses some of the paths $\ggamma_{\ell,\kappa'}$. Therefore, we
now explain how to resolve such a crossing.

Let $p$ be (temporarily) the point where $\ddelta_\kappa$ and
$\ggamma_{\ell,\kappa'}$ cross. Let $C$ be (temporarily) the crossing circle of
$\cI_{\ell, \kappa'}$ which contains $p$. For simplicity of the description assume that there is no bend on
$\ddelta_\kappa$ nor $\ggamma_{\ell,\kappa'}$ in $p$; $C$ is fully contained in the plane,
temporarily denoted $\rho$, perpendicular to the template containing (locally near $p$) a segment of
$\ddelta_\kappa$; and that $C$ is axis-symmetric around the line perpendicular to the
template passing through $p$.

We glue to the (current) construction two triangles (temporarily denoted) $\tau_1$ and $\tau_2$.
Each of them is inside $\rho$ and $p$ is the vertex of each of them. One side,
containing $p$ of $\tau_i$ is glued to a (small) subsegment (temporarily
denoted) $pq_i$ of $C$ so
that this subsegment is different for the two triangles. The other side is
glued to a subsegment (temporarily denoted) $pq_i'$ of $\ddelta_\kappa$ so that $pq_i$
projects to $pq_i'$ in the orthogonal projection to the template; see
Figure~\ref{f:extra_triangles}, left.

\begin{figure}
  \begin{center}
    \includegraphics[page=14]{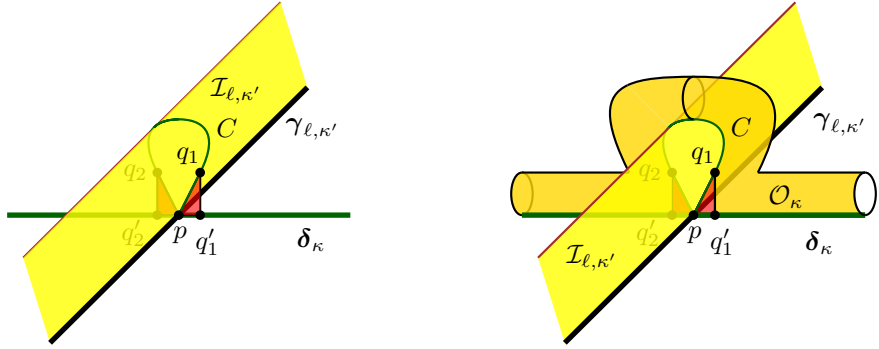}
    \caption{Resolving the crossings; the picture is rotated for simpler
    visualization.}
    \label{f:extra_triangles}
  \end{center}
\end{figure}

Now, we glue our encapsulated 1-house along $\ddelta_\kappa$ except that near
each such crossing we skip the segment $q'_1q_2'$ and instead we glue along the
path formed by segments $q'_1q_1$, $q'_2q_2$ and the portion of $C$ which is
not glued to any of the triangles.

Finally, note that the splitting star of each such 1-house consists of a
single edge with endpoint $\vv_\pm$.
We glue together the splitting star edges of $\cO_\kappa$ for all clauses
$\kappa$ and we call the second shared endpoint $\ww_\pm$.

\paragraph{Step 6: the blocking houses.}

Finally, we glue $n+1$ (encapsulated) 1-houses $\cB_0, \dots, \cB_n$ along the path
$\vvarepsilon_\pm$ and the boundaries
of the triangles $\uu_{\vx}\vv_{\vx}\ww_{\vx}$. Here $n$ is the number of
variables. For purpose
of this construction, let us assume that the variables are $\vx_1, \dots, \vx_n$
in order from left to right.
Each house we consider is again without crossing circles and with only one edge
in the splitting star.

We glue the house $\cB_0$ along the path formed by two segments
$\vv_{\vx_1}\ww_{\vx_1}$ and $\ww_{\vx_1}\uu_{\vx_1}$ 
in such a way that the free edge contains $\uu_{\vx_1}$ and
the single edge of the splitting star contains $\vv_{\vx_1}$. 

For $i \in \{1, \dots, n-1\}$ we glue the house $\cB_i$ along the path formed by
the segments $\uu_{\vx_i}\vv_{\vx_i}$, $\vv_{\vx_i}\vv_{\vx_{i+1}}$,
$\vv_{\vx_{i+1}}\ww_{\vx_{i+1}}$ and $\ww_{\vx_{i+1}}\uu_{\vx_{i+1}}$ in such a way that the free edge contains
$\uu_{\vx_{i+1}}$ and the single edge of the splitting star contains $\uu_{\vx_i}$.

The final house we add is $\cB_n$ along the path formed by the segments
$\uu_{\vx_n}\vv_{\vx_n}$ and $\vv_{\vx_n}\vv_{\pm}$.

Finally, for every $i \in \{1, \dots, n\}$ we identify the free edge of
$\cB_{i-1}$ with the splitting star edge of $\cB_i$ and we also identify the free edge of
$\cB_n$ with the edge $\vv_\pm \ww_\pm$. 

It is routine to check that these houses can be realized in close neighborhoods
of the paths to which they are glued so that we get an embedding in 
$3$-space in front of the template.

\paragraph{Triangulating the template.} We emphasize that the template is a
part of the construction. We triangulate it suitably so that the aforementioned
gluings are possible. (For example, if we glue a certain path to some edge of
the template, it is necessary to subdivide that edge.)
We also keep the size of the triangulation comparable with the sum of sizes of
individual gadgets getting the whole construction of polynomial size.

This finishes a construction of a complex $\cK'_\phi$ as described in the
introduction.

\section{Correctness of the reduction}
\label{s:correctness}

\subsection{Satisfiability implies collapsibility}
\label{ss:sat=>col}

Assume that $\phi$ is satisfiable and fix a satisfying assignment. We will show
that $\cK'_\phi$ is collapsible.

First, for a variable $\vx$ which is assigned TRUE, collapse the variable gadget
$\cV_x$ to a subcomplex formed by the edge $\vv_{\vx} \aa_{\vx}$ and triangles
$\uu_{\vx}\vv_{\vx}\aa_{\neg \vx}$, $\uu_{\vx}\ww_{\vx}\aa_{\neg \vx}$ and
$\vv_{\vx}\ww_{\vx}\aa_{\neg \vx}$. 
This is possible by Lemma~\ref{l:bipyramid}(i). (Note that
$\cV_x$ is attached to other gadgets only in the edges $\vv_{\vx} \aa_{\vx}$,
$\vv_{\vx} \aa_{\neg \vx}$, $\uu_{\vx}\vv_{\vx}$, $\uu_{\vx}\ww_{\vx}$ and
$\vv_{\vx}\ww_{\vx}$ thus the other gadgets do not block such a
collapse.) This makes the edge $\vv_{\vx} \aa_{\vx}$ free in the remaining complex.
Similarly, if $\xx$ is assigned FALSE, collapse $\cV_{\vx}$ to a subcomplex formed by
the edge $\vv_{\vx} \aa_{\neg \vx}$ and triangles 
$\uu_{\vx}\vv_{\vx}\aa_{\vx}$, $\uu_{\vx}\ww_{\vx}\aa_{\vx}$ and
$\vv_{\vx}\ww_{\vx}\aa_{\vx}$ using Lemma~\ref{l:bipyramid}(ii), This makes the
edge $\vv_{\vx} \aa_{\neg \vx}$ free.

Now assume that $\ell$ is a literal assigned TRUE. That is, either $\ell = \vx$
for $\vx$ assigned TRUE or $\ell = \neg \vx$ for $\vx$ assigned FALSE where $\vx$ is
the variable of this literal. As $\vv_{\vx} \aa_\ell$ is a free edge now, we can
collapse the splitter house $\cS_\ell$ to its central path and splitting star
which is possible due to Lemma~\ref{l:thin_1house}. (Note that at this moment,
$\cS_\ell$ meets the remainder of the complex exactly in the 
central path and splitting star, thus these collapses are not blocked by
anything else.) This makes the edges of the splitting star free in the remaining complex.

Now we would like to collapse the incoming houses. However, they may be blocked
by some outgoing houses. We explain how to make the collapses for positive
clauses. The negative clauses are analogous.

We order the positive clauses $\kappa$ by according to the $y$-coordinate of $\vv_\kappa$
starting with the lowest $y$-coordinate. (Here we use that the $y$-coordinates
are pairwise distinct.) We take a positive clause $\kappa$ and we inductively assume
that for every preceding clause $\kappa'$ the outgoing house $\cO_{\kappa'}$ has been
already collapsed to its central path and splitting star. This means that for any literal $\ell$
in $\kappa$ the incoming house $\cI_{\ell, \kappa}$ is not blocked by any
outgoing house. (There is still a pair of triangles attached to the crossing circles of 
$\cI_{\ell, \kappa}$; however we immediately collapse these triangles to the crossing circles as
each of them has a free edge.) 

As we have started with a satisfying assignment, there is a positive literal
$\ell$ in $\kappa$. This allows to collapse the incoming house
$\cI_{\ell, \kappa}$ to its central path and the single splitting star edge
$\vv_{\ell,\kappa}
\ww_{\ell, \kappa}$ of the incoming house via Lemma~\ref{l:thin_1house}. (This
edge is also an edge of $\cC_\kappa$).
Subsequently, this makes the edge $\vv_{\ell,\kappa} \ww_{\ell, \kappa}$ free. Therefore, we
can collapse the clause gadget $\cC_\kappa$ through $\vv_{\ell,\kappa}
\ww_{\ell, \kappa}$ to the remainder of its central tree via
Lemma~\ref{l:thin_turbine}. This makes the edge
$\vv_\kappa \uu_\kappa$ free. Thus we can collapse the outgoing house
$\cO_\kappa$ to its central path and the splitting star edge $\vv_\pm\ww_\pm$ (via Lemma~\ref{l:thin_1house}) 
verifying our induction assumption.

We also perform analogous collapses on negative clauses.

After these steps, the edge $\vv_\pm\ww_\pm$ becomes free. Thus we can
collapse the blocker houses $\cB_n, \dots, \cB_0$ in this
order (repeatedly using Lemma~\ref{l:thin_turbine}). This makes edges
$\uu_{\vx}\vv_{\vx}$, $\uu_{\vx}\ww_{\vx}$ and $\vv_{\vx}\ww_{\vx}$ free for every variable
$\vx$. This allows to collapse the remainder of the variable gadget
(consisting of three triangles and a pending edge) to the edge
$\vv_{\vx} \aa_{\neg \vx}$ if $\vx$ was assigned TRUE, or to the edge
$\vv_{\vx} \aa_{\vx}$ if $\vx$ was assigned FALSE. 

Consequently, we can collapse the remaining splitter houses (for literals
assigned FALSE) to their splitting stars and then the remaining incoming houses
(via Lemma~\ref{l:thin_1house}). As all the outgoing houses have been already
collapsed, the incoming houses are not blocked by them. Thus we get the
template after this step. As the template is a triangulated disk, we collapse
it to a point.

\subsection{Collapsibility implies satisfiability}
\label{ss:col=>sat}

Assume that we have a collapsing of our complex $\cK'_\phi$. Let us write $\sigma \prec
\tau$ if the simplex $\sigma$ is collapsed before $\tau$ in our collapsing.

Because $\uu_\kappa\vv_\kappa$ is the only free face of an outgoing house $\cO_\kappa$
via Lemma~\ref{l:thin_1house} and $\vv_\pm\ww_\pm$ is contained in this house, we get
$\uu_\kappa\vv_\kappa \prec \vv_\pm\ww_\pm$ for every clause $\kappa$.

Because $\uu_\kappa\vv_\kappa$ belongs to the clause gadget $\cC_\kappa$,
Lemma~\ref{l:thin_turbine} gives that each clause $\kappa$ contains a 
literal $\ell$ with $\vv_{\ell, \kappa} \ww_{\ell,\kappa} \prec
\uu_\kappa\vv_\kappa$. (The only free edges of $\cC_\kappa$ are the edges
$\vv_{\ell, \kappa} \ww_{\ell,\kappa}$ for some $\ell$.
This will be a
basis of our satisfying assignment: If $\vv_{\ell, \kappa} \ww_{\ell,\kappa} \prec \uu_\kappa\vv_\kappa$ we
set $\ell$ to TRUE (that is, if $\ell = \vx$, then we set $\vx$ to TRUE and if
$\ell = \neg \vx$, we set $\vx$ to FALSE). If $\vy$ is a variable which did not get
any assignment from this rule, we set it arbitrarily TRUE or FALSE. As soon as
we verify that there are no conflicts, that is, no literal has been assigned
both TRUE and FALSE, we get a satisfying assignment as each clause $\kappa$ contains
a literal $\ell$ with $\vv_{\ell, \kappa} \ww_{\ell,\kappa} \prec
\uu_\kappa\vv_\kappa$.

Thus it remains to check that there are no conflicts. For contradiction assume
that $\ell$ has been set both TRUE and FALSE. Without loss of generality $\ell
= \vx$ for some variable $\vx$ (otherwise we swap $\ell$ with $\neg \ell$). That is
for some clause $\kappa$ we get $\vv_{\vx, \kappa} \ww_{\vx,\kappa} \prec
\uu_\kappa\vv_\kappa$ and for another clause $\kappa'$ we get $\vv_{\neg \vx,
\kappa'} \ww_{\neg x,\kappa'} \prec \uu_{\kappa'}\vv_{\kappa'}$.

Let $\ee_{\vx}$ be the edge shared by $\cI_{\vx,\kappa}$ and $\cS_{\vx}$. This
is the only free edge of $\cI_{\vx,\kappa}$ via Lemma~\ref{l:thin_1house}, thus
we get $\ee_{\vx} \prec \vv_{\vx, \kappa} \ww_{\vx,\kappa}$. By a similar
argument (on splitter houses), we get $\vv_{\vx}\aa_{\vx} \prec \ee_{\vx}$. 
If we put earlier inequalities together, we get $\vv_{\vx}\aa_{\vx}
\prec \vv_{\pm}\ww_{\pm}$. Similarly, we deduce $\vv_{\vx}\aa_{\neg \vx} \prec
\vv_{\pm}\ww_{\pm}$. Considering the free edges of the blocker houses, we also
get that $\vv_{\vx}\aa_{\vx}$ and $\vv_{\vx}\aa_{\neg \vx}$ are collapsed
before any edge contained in any of the blocker houses. (Note that the free
edge of $\cB_i$ has to be collapsed before any edge of $\cB_{i-1}$.)

This gives a contradiction with Lemma~\ref{l:bipyramid}(iii): 
The constraint complex of $\cV_\vx$ in our complex consists of edges
$\uu_{\vx}\vv_{\vx}$, $\uu_{\vx}\ww_{\vx}$, $\vv_{\vx}\ww_{\vx}$ (where $\cV_\vx$ is
glued to some blocker) and edges $\vv_{\vx}\aa_{\vx}$ and $\vv_{\vx}\aa_{\neg
\vx}$ (where $\cV_\vx$ is glued to some splitter). The collapses removing $\vv_{\vx}\aa_{\vx}$ and $\vv_{\vx}\aa_{\neg
\vx}$ remove a triangle outside $\cV_\vx$ because $\vv_{\vx}\aa_{\vx}$ and $\vv_{\vx}\aa_{\neg
\vx}$ are the unique free edges of the corresponding splitters (there is no
other way how to collapse these splitters). Finally both $\vv_{\vx}\aa_{\vx}$ and $\vv_{\vx}\aa_{\neg
\vx}$ are collapsed before $\uu_{\vx}\vv_{\vx}$, $\uu_{\vx}\ww_{\vx}$,
$\vv_{\vx}\ww_{\vx}$ because they are contained in some blocker house(s). 

\section{Shelling $d$-balls}
\label{s:shelling_balls}

From now on we focus on the proof of Theorem~\ref{t:main}. 
First we need a
little bit more background on shellability than what is given in
Section~\ref{s:prelim}. For shelling up and shelling down = shelling, we refer
to Definitions~\ref{d:shell_up} and~\ref{d:shell_down}. We also refer to
Section~\ref{s:prelim} for the notation $K[\vartheta_1, \dots, \vartheta_m]$.

\paragraph{Shelling in PL sense.}

We will occasionally use shelling in PL sense that allows to remove a larger
subcomplex than just a $d$-simplex in a single step. We use the definition
from~\cite{rourke-sanderson72} (above Lemma~3.25) with an exception that we do
not require to shell only a manifold. It is natural to state the definition and
the auxiliary claims we need in general dimension $d$ which requires
a careful distinction of balls and PL balls. However, we remark that we will
apply our claims only in dimension at most $3$ (typically $3$) where balls and
PL balls coincide (as well as manifolds and PL manifolds). Thus the distinction
of PL and non PL structures can be ignored by the reader if restricted to $d
\leq 3$.

\begin{definition}
  Assume that $K$ is a pure $d$-dimensional simplicial complex decomposed as $K
  = L \cup B$ where $B$ is a PL $d$-ball and $L \cap B$ is a PL $(d-1)$-ball. Then
  $K$ \emph{elementarily shells in PL sense} to $L$. Next we say that $K$
  \emph{shells in PL sense} to a subcomplex $M$, if there is a sequence of
  complexes $K_0=K, K_1, \dots, K_s = M$ such that $K_{i-1}$ elementarily shells in
  PL sense to $K_i$ for $i \in \{1, \dots, s\}$.
\end{definition}

We explicitly remark that we want to follow the definition of shelling
in~\cite{rourke-sanderson72} as a special case of collapse. This means that
even not every (standard simplicial) shelling is a shelling in PL
sense---only those steps that do not modify the homotopy type are allowed.

\paragraph{Shelling and balls.}

The following lemma is a variant of Proposition~2.4(i) in~\cite{ziegler98} for PL
shellings (see also Proposition~4.7.22
in~\cite{bjorner-lasvergnas-sturmfels-white-ziegler99} and the references above
it). A \emph{$d$-pseudomanifold} is a pure $d$-dimensional simplicial complex
such that every $(d-1)$-face is contained in one or two $d$-faces. (In this
definition, we allow boundary.)

\begin{lemma}
  \label{l:pseudomanifold}
  Assume that $M$ is a $d$-pseudomanifold. If $M$ shells
  in PL sense to a PL $d$-ball then $M$ is a PL $d$-ball.
\end{lemma}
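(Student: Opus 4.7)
The plan is to proceed by induction on the length $s$ of the PL shelling sequence $M = K_0, K_1, \ldots, K_s = B$, where each $K_{i-1} = K_i \cup B_i$ with $B_i$ a PL $d$-ball and $K_i \cap B_i$ a PL $(d-1)$-ball. For $s = 0$ there is nothing to prove. For the inductive step, observe that $K_1$ shells in PL sense to a PL $d$-ball in $s-1$ steps, and is also a $d$-pseudomanifold since it is a subcomplex of $M$ whose $d$-faces are a subset of those of $M$. Actually, we need slightly more: the induction hypothesis should give that $K_1$ is a PL $d$-ball (so a fortiori a $d$-pseudomanifold) provided $K_1$ itself is a $d$-pseudomanifold, which I would record as a hidden hypothesis that propagates through the shelling.

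The key step is then to identify $M = K_0$ as the union of two PL $d$-balls $K_1$ and $B_1$ glued along a PL $(d-1)$-ball in their common boundary, and invoke the classical PL topology result (see, e.g., Rourke--Sanderson) that such a union is again a PL $d$-ball. Thus the crux is verifying that $K_1 \cap B_1 \subseteq \partial K_1 \cap \partial B_1$.

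Here is where the pseudomanifold hypothesis does the essential work. Let $\tau$ be a $(d-1)$-face of $K_1 \cap B_1$. Because $K_1 \cap B_1$ is pure $(d-1)$-dimensional, no $d$-face of $K_1$ lies in $B_1$ and vice versa; consequently, the $d$-faces of $K_1$ containing $\tau$ are disjoint from those of $B_1$ containing $\tau$. Both $K_1$ (a PL $d$-ball) and $B_1$ are pure of dimension $d$, so each contributes at least one $d$-face above $\tau$, giving at least two $d$-faces of $M$ containing $\tau$. The $d$-pseudomanifold property of $M$ forces equality: exactly one from each side. Hence $\tau \in \partial K_1$ and $\tau \in \partial B_1$. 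Since $K_1 \cap B_1$ is pure and every face sits inside one of its $(d-1)$-facets, this yields $K_1 \cap B_1 \subseteq \partial K_1 \cap \partial B_1$, as required.

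The main obstacle is exactly this boundary condition: without the pseudomanifold hypothesis there would be no control on how $B_1$ attaches to $K_1$, and the PL ball gluing lemma would not apply. A minor bookkeeping point is to ensure the induction hypothesis is strong enough, namely that the intermediate complexes $K_i$ inherit the $d$-pseudomanifold property from $M$; this is immediate since subcomplexes obtained by removing $d$-simplices can only decrease the number of $d$-faces above any $(d-1)$-face.
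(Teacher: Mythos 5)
Your proposal is correct and takes essentially the same route as the paper: work one elementary PL shelling step at a time, use the pseudomanifold hypothesis to force the attaching PL $(d-1)$-ball into the boundary of both PL $d$-balls (otherwise some $(d-1)$-simplex would lie in at least three $d$-simplices), and then apply the Rourke--Sanderson gluing result, iterating along the sequence. The only cosmetic difference is that the paper runs the induction backwards from $M_s = B$ and invokes only the pseudomanifold property of $M$ itself (every intermediate complex is a subcomplex of $M$), which makes your extra bookkeeping that the intermediate complexes inherit the pseudomanifold property unnecessary.
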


\begin{proof}
Let $B$ be the PL $d$-ball to which $M$ shells. Let $M_0=M, M_1, \dots, M_s = B$ be
  a sequence of subcomplexes of $M$ such that $M_{i-1}$ elementarily shells in
  PL sense to $M_i$ for $i \in \{1, \dots, s\}$. By the assumptions $M_s$ is a
  PL $d$-ball. Then $M_{s-1}$ is a union of two PL $d$-balls along a PL
  $(d-1)$-ball. This $(d-1)$-ball lies necessarily in the boundary of the both
  two $d$-balls otherwise we would have a $(d-1)$-simplex in at least three
  $d$-simplices which contradicts the fact that $M$ is a pseudomanifold. Thus
  $M_{s-1}$ is also a PL $d$-ball by~\cite[Corollary~3.16]{rourke-sanderson72}.
  By repeating this argument inductively, we deduce that every $M_i$, including
  $M_0$, is a PL $d$-ball.
\end{proof}

We also need a similar lemma for simplicial shelling.
Given a shelling down $F_1, \dots, F_k$ of a pure $d$-dimensional simplicial
complex $K$ and $i \in \{1,\dots,k\}$, we will also use the simplified notation $K_{F_i}$ for the
subcomplex $K[F_i, \dots, F_{m}]$ where $F_{k+1}, \dots, F_m$ are the $d$-faces
(i.e. facets) of $K$ not used in the shelling in an arbitrary order.

\begin{lemma}
  \label{l:all_balls}
  For any shelling down $F_1, \dots, F_{m-1}$ of a triangulated PL $d$-ball $B$ to
  a $d$-simplex $F_m$ the complex $B_{F_i}$ is a PL $d$-ball for every $i \in \{1,
  \dots, m-1\}$.
\end{lemma}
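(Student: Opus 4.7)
The plan is to proceed by downward induction on $i$, with base case $i = m$ where $B_{F_m}$ is a single $d$-simplex and hence trivially a PL $d$-ball. For the inductive step, assuming $B_{F_{i+1}}$ is a PL $d$-ball, I decompose $B_{F_i} = B_{F_{i+1}} \cup B[F_i]$ and set $L := B_{F_{i+1}} \cap B[F_i]$. The shelling down condition gives that $L$ is a pure $(d-1)$-dimensional subcomplex of $\partial F_i$. Moreover $B_{F_i}$ is automatically a $d$-pseudomanifold: its facets are the $d$-simplices $F_i,\dots,F_m$, and the upper bound of two facets on each $(d-1)$-face is inherited from $B$.

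The key is to show that $L$ is a PL $(d-1)$-ball. Once this holds, $B_{F_i}$ elementarily shells in PL sense to $B_{F_{i+1}}$, and Lemma~\ref{l:pseudomanifold}, applied to the pseudomanifold $B_{F_i}$ together with the induction hypothesis, yields that $B_{F_i}$ is a PL $d$-ball. Since $L$ is a pure $(d-1)$-subcomplex of $\partial F_i$, it is generated by some subset of the $d+1$ facets of the $(d-1)$-sphere $\partial F_i$; as soon as this subset is proper, any ordering of its facets is a shelling up in which every attaching intersection is a proper subcomplex of the boundary of a $(d-1)$-simplex, so $L$ is easily shown to be a PL $(d-1)$-ball. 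Thus everything reduces to ruling out the degenerate case $L = \partial F_i$.

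This degenerate case is the hard part. If $L = \partial F_i$, then every $(d-1)$-face of $F_i$ belongs to $B_{F_{i+1}}$, and the pseudomanifold property of $B_{F_i}$ forces it to lie in exactly one facet $F_j$ with $j > i$, hence in $\partial B_{F_{i+1}}$. Thus $\partial F_i \subseteq \partial B_{F_{i+1}}$. By the induction hypothesis, $|\partial B_{F_{i+1}}|$ is a PL $(d-1)$-sphere, and invariance of domain together with connectedness forces the closed $(d-1)$-manifold $|\partial F_i|$ to coincide with $|\partial B_{F_{i+1}}|$, so $\partial F_i = \partial B_{F_{i+1}}$. Consequently $B_{F_i}$ becomes a closed $d$-pseudomanifold whose underlying space is two PL $d$-balls glued along their common boundary sphere, a topological $d$-sphere. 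For $i = 1$ this contradicts $B = B_{F_1}$ being a ball, and for $i > 1$ the closedness forbids any $(d-1)$-face from being shared between the facet sets $\{F_j : j \geq i\}$ and $\{F_k : k < i\}$, which disconnects the dual graph of the connected pseudomanifold $B$.
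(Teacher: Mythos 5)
Your proof is correct, and its outer skeleton coincides with the paper's: write $B_{F_i}=B_{F_{i+1}}\cup B[F_i]$, observe that the attachment $L$ is a pure $(d-1)$-dimensional subcomplex of $\partial F_i$, reduce everything to excluding $L=\partial F_i$ (a proper union of facets of $\partial F_i$ being a PL $(d-1)$-ball), and conclude via an elementary shelling in PL sense together with Lemma~\ref{l:pseudomanifold}. Where you genuinely diverge is in excluding the degenerate case. The paper gets it in one stroke, for all $i$ simultaneously, from Theorem~1.3 of~\cite{bjorner84}: since $B$ is a ball, no facet of the reversed order (a shelling up) can be attached along its entire boundary; no induction is needed. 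You avoid that citation by running a downward induction, so that $B_{F_{i+1}}$ is already known to be a PL ball, and showing that $L=\partial F_i$ would force $\partial F_i=\partial B_{F_{i+1}}$, making $B_{F_i}$ a closed pseudomanifold; this contradicts $B=B_{F_1}$ being a ball when $i=1$ and, for $i>1$, severs every ridge-adjacency between $\{F_j:j\ge i\}$ and $\{F_k:k<i\}$. The trade-off is self-containedness versus one extra standard fact, and that fact is the only point you should tighten: connectivity of the dual graph of $B$ does \emph{not} follow from $B$ being a connected pseudomanifold in the sense of Section~\ref{s:prelim} (two $d$-simplices sharing a single vertex form a connected pseudomanifold with disconnected dual graph); it holds because $B$ is a triangulated manifold, and triangulations of connected manifolds are strongly connected. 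Either invoke that explicitly, or sidestep the case split altogether: once $B_{F_i}$ is a closed $d$-manifold, identifying $|B|$ with a ball in $\R^d$ and applying invariance of domain makes $|B_{F_i}|$ a nonempty compact open subset of $\R^d$, which is impossible for every $i\ge 1$.
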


\begin{proof}
  The sequence $F_m, \dots, F_1$ is a shelling up of a $d$-ball.
  By~\cite[Theorem~1.3]{bjorner84}, $F_i \cap K[F_{i+1},\dots, F_m]$ cannot be
  the boundary of $F_i$ for $i \in \{1, \dots, m-1\}$ 
  (the characteristic in~\cite[Theorem~1.3]{bjorner84} is
  $0$ as $B$ is a ball), thus $F_i \cap K[F_{i+1},\dots, F_m]$ is a PL
  $(d-1)$-ball (union of some facets of the boundary of a $d$-simplex). This
  implies that the shelling down from $B_{F_i}$ to $B_{F_{i+1}}$ is also an
  elementary shelling in PL sense. Therefore, Lemma~\ref{l:pseudomanifold}
  implies that
  $B_{F_i}$ is a PL $d$-ball for every $i \in \{1,
  \dots, m-1\}$.
\end{proof}

\begin{definition}
  Let $B$ be a triangulated $d$-ball. A facet $F$ of $K$ is \emph{free
  (for shelling)} if it meets $\partial B$ in a $(d-1)$-ball.
\end{definition}

\begin{lemma}[Essentially Proposition~2.4(iv) form~\cite{ziegler98}]
\label{l:free_facet}
  Let $F_1, \dots, F_{m-1}$ be a shelling down of a triangulated $d$-ball $B$ with
  $m$ facets. Then, for every $i \in \{1, \dots, m-1\}$, the face $F_i$ is free
  in $B_{F_i}$.
\end{lemma}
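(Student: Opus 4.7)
The plan is to combine Lemma~\ref{l:all_balls} with an explicit local analysis of the boundary of $B_{F_i}$ around $F_i$. By Lemma~\ref{l:all_balls}, $B_{F_i}$ is itself a PL $d$-ball, so asking for freeness of $F_i$ in $B_{F_i}$ makes sense: we must verify that $F_i \cap \partial B_{F_i}$ is a PL $(d-1)$-ball.

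First I would identify $F_i \cap \partial B_{F_i}$ combinatorially. Since $B$ is a manifold, every $(d-1)$-face of $B$ lies in at most two facets, so a $(d-1)$-face $\sigma$ of $F_i$ belongs to $\partial B_{F_i}$ exactly when no $F_j$ with $j>i$ contains $\sigma$. Writing $C_i := K[F_i] \cap K[F_{i+1},\dots,F_m]$, which is a subcomplex of $\partial F_i$, this is the same as saying that $F_i \cap \partial B_{F_i}$ is the subcomplex of $\partial F_i$ generated by the $(d-1)$-faces of $F_i$ that are \emph{not} facets of $C_i$.

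The key step is then to check that $C_i$ is a proper non-empty sub-union of the $d+1$ facets of $\partial F_i$. Non-emptiness and purity in dimension $d-1$ are built into Definition~\ref{d:shell_down}. On the other side, the very same application of \cite[Theorem~1.3]{bjorner84} already used in the proof of Lemma~\ref{l:all_balls} (for the shelling up $F_m,\dots,F_1$) forbids $C_i = \partial F_i$. Hence the complementary collection of facets of $\partial F_i$ is also non-empty and proper, and the subcomplex it generates is a PL $(d-1)$-ball: any ordering of the $d+1$ facets of the simplex boundary $\partial F_i$ is a shelling up, and any initial segment of such a shelling up of a $(d-1)$-sphere generates a $(d-1)$-ball. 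This ball is exactly $F_i \cap \partial B_{F_i}$, so $F_i$ is free in $B_{F_i}$.

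The main obstacle is really only bookkeeping: I must make sure that the two facts extracted from the definition of shelling down and from Björner's theorem—purity and properness of $C_i$ inside $\partial F_i$—cleanly transfer to purity and properness of the complementary collection of facets, and that I can invoke the standard fact that a non-empty proper sub-union of facets of a simplex boundary is a shellable $(d-1)$-ball. Once these pieces are in place, the statement follows with no further calculation.
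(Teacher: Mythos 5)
The gap is at the step you present as a mere reformulation: \emph{``this is the same as saying that $F_i \cap \partial B_{F_i}$ is the subcomplex of $\partial F_i$ generated by the $(d-1)$-faces of $F_i$ that are not facets of $C_i$.''} It is not the same. What you have verified is only which $(d-1)$-faces of $F_i$ lie in $\partial B_{F_i}$; the intersection $F_i \cap \partial B_{F_i}$ may in addition contain faces of $F_i$ of dimension at most $d-2$ that lie on $\partial B_{F_i}$ because they are contained in boundary $(d-1)$-faces of \emph{other} facets of $B_{F_i}$. In that case the intersection is the ball you describe plus extra lower-dimensional pieces (typically an isolated vertex), hence not a $(d-1)$-ball; ruling this out is precisely the content of the lemma, not bookkeeping. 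A $2$-dimensional example showing that the facts you actually use do not suffice: let $B$ be the disk with triangles $uvw$, $uwp$, $vwq$ (a fan around $w$) and $F=uvw$. Then $C:=K[F]\cap K[uwp,vwq]$ is the path with edges $uw$ and $wv$, which is pure, nonempty and a proper subcomplex of $\partial F$, and $B$ itself is a ball (playing the role of $B_{F_i}$); nevertheless $F\cap\partial B$ is the edge $uv$ together with the isolated vertex $w$, so $F$ is not free. This does not contradict the lemma, since no shelling down of length $m-1$ can start with $uvw$; but purity and nonemptiness from Definition~\ref{d:shell_down}, properness via \cite{bjorner84}, and Lemma~\ref{l:all_balls} applied to $B_{F_i}$ are exactly the inputs of your argument, and they all hold in this example.

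What is missing is an argument that exploits the \emph{continuation} of the shelling, i.e.\ that $B_{F_{i+1}}$ is also a PL ball. For instance: if $\sigma$ is a face of $F_i$ such that every $(d-1)$-face of $F_i$ containing $\sigma$ lies in $C_i$, then $\lk_{C_i}(\sigma)$ equals the whole boundary of the simplex $\lk_{F_i}(\sigma)$; since $C_i\subseteq\partial B_{F_{i+1}}$ and $\lk_{B_{F_{i+1}}}(\sigma)$ is a PL ball whose boundary sphere contains (hence equals) that boundary sphere, the link $\lk_{B_{F_i}}(\sigma)=\lk_{B_{F_{i+1}}}(\sigma)\cup\lk_{F_i}(\sigma)$ is two balls glued along their entire boundaries, i.e.\ a sphere, so $\sigma$ is an interior face of $B_{F_i}$ and cannot lie on $\partial B_{F_i}$. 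This is in essence the argument behind Proposition~2.4(iv) of \cite{ziegler98}: the paper's own proof simply cites that proposition for the first facet and propagates it by induction, whereas your write-up, which otherwise assembles correct but peripheral facts, omits exactly this key ingredient.
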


\begin{proof}
  Although Proposition~2.4(iv) in~\cite{ziegler98} is not exactly stated in
  the same way as our lemma, the proof in~\cite{ziegler98} exactly describes
  which face is free. (We assume that Proposition~2.4(iv) in~\cite{ziegler98} is stated in
  counter-positive and the shelling is reverted.) In particular, it gives that
  $F_1$ is free in $B_{F_1} = K$. It also gives that $B_{F_2}$ is a $d$-ball,
  and thus $F_2, \dots, F_{m-1}$ is a shelling down of this $d$-ball. By
  repeating the previous argument we obtain that $F_2$ is free in $B_{F_2}$ and
  then by induction, we obtain the statement of the lemma.
\end{proof}

\section{Triangulating polytopal complexes}

A \emph{polytopal complex} is a finite collection $\Gamma$ of polytopes in
$\R^d$ for some $d$ satisfying the following conditions.

\begin{enumerate}[(i)]
   \item If $P \in \Gamma$ and $F$ is a face of $P$ (including the empty face),
     then $F \in \Gamma$.
   \item Any two polytopes in $\Gamma$ intersect in a face of both (possibly the
     empty face).
\end{enumerate}

One way to triangulate a polytopal complex $\Gamma$ is to order the
vertices of $\Gamma$ in an arbitrary total order. Then we inductively triangulate
$i$-dimensional faces: For $i = 0, 1$ there is nothing to do. For $i >1$, we
triangulate an $i$-face $F$ so that we pick the first vertex $v$ of $F$ in our
total order and we triangulate $F$ as a cone with apex $v$ over the (already
triangulated) faces of $F$ which do not contain $v$. We call such a
triangulation \emph{canonical}.\footnote{This
triangulation is also known as \emph{pulling triangulation} with a somewhat
different definition in the literature; see~\cite{lee-santos17} or
\cite[Lemma~1.4]{hudson69}. We do not need an equivalence of the two
definitions, thus it is easier to use another name rather than to prove the
equivalence of the two definitions.} Similarly a triangulation of a polytope
$P$ is \emph{canonical} if it can be obtained as above considering the
polytopal complex formed by all faces of $P$.

\begin{lemma}
  \label{l:shell_canonical}
    Let $K$ be a pure (simplicial) $3$-complex. Let us assume
    that $K = L \cup P$ where $L$ and $P$ are pure (simplicial) $3$-dimensional 
    subcomplexes of $K$ such that $P$ is a canonically triangulated polytope.
    In addition, let us assume that $L \cap P$ is a disk (i.e. a $2$-ball)
    fully contained in $\partial P$. 
    Then $K$ shells to $L$.
\end{lemma}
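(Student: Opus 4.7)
The plan is to exhibit an ordering of the tetrahedra of $P$ that gives a shelling down of $K$ to $L$. Since $L \cap P = D$ lies in $\partial P$, the subcomplexes $L$ and $P$ share no tetrahedra, so the shelling must remove exactly the tetrahedra of $P$. A first simplification: for any tetrahedron $F$ of $P$, $F \cap L = F \cap (L \cap P) = F \cap D$, and hence the shelling condition at step $i$, namely that $F_i \cap K[F_{i+1},\dots,F_m]$ be pure $2$-dimensional, is equivalent to $F_i \cap \bigl(D \cup \bigcup_{j>i} F_j\bigr)$ being pure $2$-dimensional. So the problem reduces to shelling the polytope $P$ (viewed on its own) down to the boundary disk $D$.

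To exploit the canonical structure, let $v$ be the first vertex of $P$ in the canonical order. Then every tetrahedron of $P$ contains $v$ and has the form $v * \sigma$, where $\sigma$ ranges over the triangles of $Q$, the subcomplex of $\partial P$ consisting of faces not containing $v$. The complex $Q$ is a triangulated $2$-disk, namely the complement of the open star of $v$ in the triangulated $2$-sphere $\partial P$. The shelling order is produced by choosing an ordering of the $2$-faces of $\partial P$ not containing $v$ and, within each such $2$-face, using the canonical (pulling) ordering of its triangles; these choices are made so that the $2$-faces and triangles incident to $D$ are processed last. For each $F_i = v * \sigma_i$, the kept $2$-faces (those remaining in the rest) among the four $2$-faces of $\partial F_i$---namely $\sigma_i$ and the three $v * e$ for edges $e$ of $\sigma_i$---are: $\sigma_i$ itself if $\sigma_i \subset D$, and $v * e$ whenever either $e$ is interior to $Q$ and shared with some later $\sigma_j$, or $e$ lies on $\partial Q$ with $v * e \subset D$.

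The main technical obstacle is to verify that, for the chosen order, the kept $2$-faces always assemble into a non-empty connected disk on $\partial F_i$ that contains every lower-dimensional face of $F_i$ still present in the remaining complex; this is exactly what makes the intersection pure $2$-dimensional. The verification relies on the cone structure $F_i = v * \sigma_i$ combined with the shellability of the canonical triangulation of each $2$-face of $\partial P$, and must be carried out carefully near $\partial Q$, where interior edges of $Q$ meet boundary edges of $Q$ and where $D$ can straddle the interface between $Q$ and the star of $v$ in $\partial P$.
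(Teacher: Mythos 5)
Your setup is sound: reducing the shelling condition to $F_i\cap\bigl(D\cup\bigcup_{j>i}F_j\bigr)$, observing that every tetrahedron of $P$ is $v*\sigma$ with $\sigma$ a triangle of the antistar $Q$ of $v$ in $\partial P$, and identifying which of the four $2$-faces of each tetrahedron can be ``kept'' are all correct and match the skeleton of the paper's argument. But the proof stops exactly where the lemma's content begins. You name the verification that the kept faces form a non-empty pure $2$-dimensional intersection as ``the main technical obstacle'' and assert it ``must be carried out carefully near $\partial Q$,'' without carrying it out; and the ordering you propose (process the polytope's $2$-faces not containing $v$ one at a time, within each use the pulling order of its triangles, with faces ``incident to $D$ processed last'') is neither fully specified nor shown to satisfy the condition. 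There are concrete ways such an order can fail: read backwards as a shelling up from $L$, a tetrahedron $v*\sigma$ with $\sigma\subset D$ may be attached at a moment when its only kept $2$-face is $\sigma$ itself, while $v$ already lies in the present complex because $D$ contains triangles through $v$; then the intersection contains the vertex $v$ but no $2$-face through $v$, so it is not pure. Controlling exactly this interaction between the part of $D$ in the star of $v$ and the part in $Q$ is the whole difficulty, and your sketch does not address it.

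The paper resolves it with a device absent from your proposal: split $\partial P$ into $W$ (triangles through $v$ in $L$), $X$ (through $v$, not in $L$), $Y$ (not through $v$, in $L$), $Z$ (neither), and use extendable shellability of triangulated $2$-disks and $2$-spheres to build a single $2$-dimensional shelling of $W\cup Y\cup Z$ that exhausts $W$ first, then $Y$, then $Z$. Restricting this order to $Y\cup Z$ and coning with $v$ gives the order of tetrahedra, and purity is then immediate: a triangle $\tau\in Y$ meets its predecessors in $W\cup Y$ in a pure $1$-complex $M$ and the tetrahedron $v*\tau$ meets $L$ and the earlier tetrahedra in $(v*M)\cup\tau$; for $\tau\in Z$ the intersection is $v*M$. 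To repair your proof you would need either to import this global extendable-shellability argument (rather than a facet-by-facet pulling order) or to supply a genuinely different, complete verification of purity for an explicitly specified order; as written, the proposal is a plan, not a proof.
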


\begin{proof}
Because $P$ is canonically triangulated polytope, it is a cone with an apex $v$
over a disk. Now we decompose $\partial P$ into four subcomplexes. Subcomplex
  $W$ is formed by the triangles (and their subfaces) which contain $v$ and belong
  to $L$; $X$ is formed by the triangles which contain $v$ and do not belong to
  $L$; $Y$ is formed by the triangles which do not contain $v$ and belong to
  $L$; and $Z$ is formed by the triangles which do not contain $v$ and do not
  belong to $L$. See Fig.~\ref{fig:shell_canonical}.

  \begin{figure}
      \begin{center}
	  \includegraphics{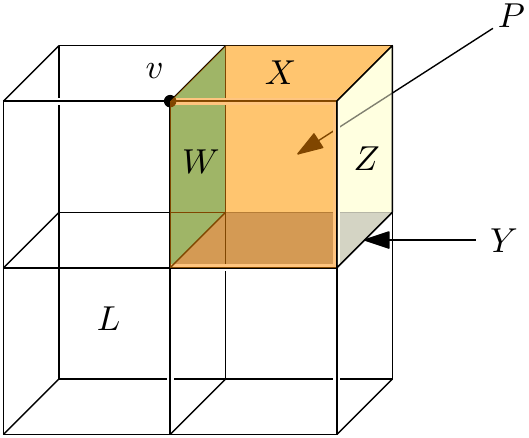}
	  \caption{Subcomplexes $W, X, Y$ and $Z$ in the proof of
	  Lemma~\ref{l:shell_canonical} (without displaying the triangulation).}\label{fig:shell_canonical}
	  \end{center}
  \end{figure}

  By assumptions, $W \cup Y$ is a disk. From this it also follows that $W$ and
  $X$ are either disks or empty by checking the neighborhood of $v$. 
  Finally,
  $W \cup Y \cup Z$ is a disk or whole $\partial P$ because $X$ is a disk or
  empty.
  
  Throughout this proof, we only use shellings up. We use an auxiliary shelling of $W \cup Y \cup Z$ obtained in
  the following way. We start with a shelling of $W$ (if $W$ is non-empty).
  Then we extend this shelling to a shelling of $W\cup Y$ by extendable
  shellability of any triangulated disk. (It is widely known and easy to prove
  that any partial shelling of a disk or sphere can be completed to a full
  shelling; see, e.g.~\cite{danaraj-klee78algo}.) Then we extend this shelling once more to $W \cup Y
  \cup Z$ using the extendable shellability again (for a disk or a $2$-sphere).  
  
  From the shelling
  that we have just obtained we aim to deduce a shelling order of tetrahedra of
  $P$ from $L$ up to $K$.
  The formula is simple, we follow our ($2$-dimensional shelling) restricted to
  the triangles of $Y \cup Z$. Then we add $v$ to each such triangle $\tau$,
  obtaining the tetrahedron $v * \tau$, and we
  claim that we get the required shelling of $P$. Because $P$ is a cone from
  $v$ over $Y \cup Z$, this covers all tetrahedra. Thus it remains to verify
  that each step is a correct shelling step.

  If $\tau$ is a triangle in $Y$, then it meets the preceding
  triangles of $W \cup Y$ in
a pure subcomplex $M$ of $\partial \tau$. This $M$ is typically one
  dimensional; however it is empty if $\tau$ is the first triangle in the
  shelling and $W$ is empty.
  Then the
  corresponding tetrahedron $v * \tau$ meets the preceding tetrahedra and $L$
  in the pure $2$-dimensional subcomplex $(v * M) \cup \tau$ as required.

If $\tau$ is a triangle in $Z$, then it meets the preceding triangles in a
  pure $1$-dimensional subcomplex $M$ of $\partial \tau$. Then the
  corresponding tetrahedron $v * \tau$ meets the preceding tetrahedra and $L$
  in the pure $2$-dimensional subcomplex $v * M$. (Note that $\tau$ does not
  belong to $L$ in this case and if $v * \tau$ meets $L$ in some face $\sigma$,
  then $\sigma$ is in some triangle $\tau'$ of $Y$, therefore $\sigma$ is
  also contained in the tetrahedron $v * \tau'$ preceding $v * \tau$.)
\end{proof}

\section{Gadgets for shellability}
\label{s:shellability_gadgets}

\subsection{Triangular prism---an analogy of the bipyramid}
Now we introduce an auxiliary complex which will serve as a variable gadget.
Our complex will be a suitably triangulated triangular prism, and we call it
simply a \emph{triangular prism}.

First we take a triangular prism with vertices $b'c'd'bcd$ as in
Figure~\ref{f:triangular_prism}, ignoring the subdivision for the moment. Then
we subdivide the triangle $bcd$ to three triangles 
by inserting a vertex $a$ into the barycentre of $bcd$ and connecting it with
$b$, $c$ and $d$. Analogously, we subdivide $b'c'd'$ by inserting $a'$ into the
barycentre. Next we subdivide the rectangles $b'c'cb$, $c'd'dc$ and 
  $d'b'bd$ by inserting one of the two possible diagonals---every combination
  of diagonals is a valid choice. Finally, we insert a vertex $e$ into the
  barycentre of the prism and triangulate the prism as a cone with the apex $e$ over
  the already triangulated boundary; see Figure~\ref{f:triangular_prism}.

The following technical lemma states a condition on order of shelling of certain
tetrahedra in the triangular prism, if this prism is suitably contained in some
$3$-ball. Although this is not obvious yet, this condition will confirm that
the triangular prism `behaves as variable gadget' with respect to our needs.

\begin{figure}
  \begin{center}
    \includegraphics[page=37]{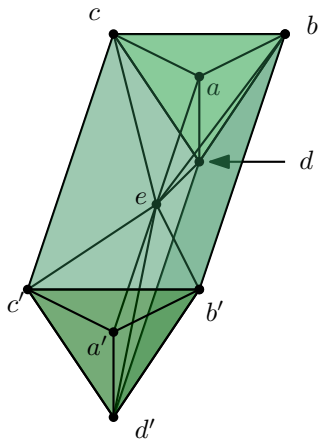}
  \end{center}
  \caption{Triangular prism. 
  The diagonals of the rectangles $b'c'cb$, $c'd'dc$ and 
  $d'b'bd$ can be chosen arbitrarily and they are not depicted.}
  \label{f:triangular_prism}
\end{figure}

\begin{lemma}
  \label{l:tp_blocked}
  Let $P$ be a triangular prism on vertices $a, b, c, d, a', b', c', d', e$ as
  in the definition above. Let us assume that $P$ is a subcomplex of a
  triangulated $3$-ball $B$ such that $P \cap \partial B$ consists exactly of
  the subcomplex formed by the four triangles $a'b'c', a'c'd', abc$ and
  $acd$. Let $L$ be another subcomplex of $B$ such that $B$ shells to $L$.
  Assume that $L$
  contains the six tetrahedra in $B$ outside $P$ which meet $\partial P$ in one of the six triangles subdividing the
  rectangles $b'c'cb$, $c'd'dc$ and $d'b'bd$.
  Then $L$ either contains all three tetrahedra $abce, abde$ and $acde$ or all three
  tetrahedra $a'b'c'e, a'b'd'e$ and $a'c'd'e$. 
\end{lemma}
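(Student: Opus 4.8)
The plan is to argue by contradiction, reusing the description of a shelling down provided by Lemmas~\ref{l:free_facet} and~\ref{l:all_balls} as the ``thick'' analogue of the discrete Morse matching that drives the proof of Lemma~\ref{l:bipyramid}(iii). So suppose $L$ contains neither triple; then during the shelling down of $B$ to $L$ at least one \emph{bottom} tetrahedron (one of $abce,abde,acde$) and at least one \emph{top} tetrahedron (one of $a'b'c'e,a'b'd'e,a'c'd'e$) gets removed, and we must derive a contradiction.

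The vertex $e$ is the object to watch. Since $e$ is the barycentre of $P$ and $P$ is coned from $e$, the twelve facets of $P$ are exactly the facets of $B$ containing $e$, and $\lk(e,B)=\partial P$ is a $2$-sphere. I would run the shelling down through its intermediate balls $B=B_0\supseteq B_1\supseteq\cdots$ (balls, by Lemma~\ref{l:all_balls}) and record, each time a facet $e*T$ of $P$ is removed, which triangles of $\partial(e*T)=T\cup\bigcup_{\epsilon\subset T}(e*\epsilon)$ lie on the boundary of the current ball; by Lemma~\ref{l:free_facet} their union must be a disk. Here $e*\epsilon$ lies on that boundary precisely when the unique other facet of $P$ across the edge $\epsilon$ has already been removed, and $T$ lies on it only if $T\in\{abc,acd,a'b'c',a'c'd'\}$ (the four triangles of $P\cap\partial B$) and the outer neighbour of $T$, if any, has already been removed. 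Since the six external tetrahedra meeting the rectangle triangles $R_1,\dots,R_6$ lie in $L$, no $R_j$ is ever on the boundary of a $B_i$ while $e*R_j$ is still present; combined with the fact that $abd$ and $a'b'd'$ are not on $\partial B$, this forces the \emph{first} facet of $P$ to be removed to have the form $e*T$ with $T\in\{abc,acd,a'b'c',a'c'd'\}$, and we may assume it is a bottom facet. Then I would propagate this restriction: while a facet containing $e$ survives, $e$ cannot be deleted, so every subsequent removal of a facet of $P$ has to ``spend'' a boundary triangle $e*\epsilon$ that an earlier removal has exposed; moreover the barycentre $a$ (resp. $a'$) can be deleted only once all three bottom (resp. top) facets are gone. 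Counting the exposable triangles as in the proof of Lemma~\ref{l:bipyramid}(iii) — where there were too few tetrahedra and edges to absorb the forced missing triangles — I expect to find that facets touching both the bottom triangle $bcd$ and the top triangle $b'c'd'$ cannot all be liberated, so that with a bottom and a top facet both removed one reaches a configuration of forced removals around $e$, $a$ and $a'$ that exhausts the triangles of $\partial P$: a contradiction. Hence $L$ contains one of the two triples.

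The step I expect to be the main obstacle is exactly this last counting: pinning down which triangles $e*\epsilon$ are forced to be already missing because of earlier removals versus which are still available to serve as the distinguished disk of a later elementary shelling step, and then showing, in the manner of Lemma~\ref{l:bipyramid}(iii), that the supply is too small to break both triples at once. The reason the six surviving external tetrahedra enter is that they keep the rectangle triangles $R_1,\dots,R_6$ — the ``equator'' of the sphere $\partial P$ — in place throughout, which is what prevents the two hemispheres of $\partial P$ from being peeled independently of each other.
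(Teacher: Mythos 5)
Your general framework (running through the intermediate balls $B_i$, invoking Lemma~\ref{l:all_balls} and Lemma~\ref{l:free_facet}, and bookkeeping which faces of a removed facet $e*T$ of $P$ lie on $\partial B_i$) is sound, but the proposal has two problems. First, a concrete misstep: you claim that the first facet of $P$ to be removed must be $e*T$ with $T\in\{abc,acd,a'b'c',a'c'd'\}$, ``combined with the fact that $abd$ and $a'b'd'$ are not on $\partial B$.'' The hypotheses only force the six external tetrahedra over the rectangle triangles to lie in $L$; the external tetrahedra across $abd$ and $a'b'd'$ are unconstrained and may be shelled before anything in $P$. After that, $abd$ (say) lies on the boundary of the intermediate ball, and $abde$ then meets that boundary in exactly the disk $abd$ (all triangles $e*\epsilon$ are still covered twice inside $P$), so $abde$ is a perfectly legal first removal from $P$. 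Second, and decisively: the contradiction is never actually derived. The entire content of the lemma sits in the ``counting of exposable triangles,'' which you only announce (``I expect to find\dots'') and yourself identify as the main obstacle. Lemma~\ref{l:bipyramid}(iii) is a count of a discrete-Morse-type matching for collapses of a $2$-complex; it does not transplant verbatim to shellings of a $3$-ball, where a removed tetrahedron must meet the current boundary in a disk rather than in a matched pair of faces. As written, the proposal is a plan with its central step missing.

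For comparison, the argument can be closed without any counting, by localizing at one tetrahedron and one vertex. After the harmless normalization that some bottom tetrahedron ($abce$, $abde$ or $acde$) is removed before any top one, let $\Delta$ be the \emph{first} tetrahedron removed among those containing $e$ and at least one of $a',b',c',d'$ (the six rectangle facets and the three top facets), and let $L'=B_\Delta$; by Lemma~\ref{l:free_facet}, $\Delta\cap\partial L'$ must be a disk. If $\Delta$ is a rectangle facet, take a primed vertex $f'$ of $\Delta$: it lies on $\partial B$, hence on $\partial L'$, yet every triangle of $\Delta$ through $f'$ is covered by a second tetrahedron still present in $L'$ --- its triangles containing $e$ by minimality of $\Delta$, and its base rectangle triangle by the external tetrahedron assumed to lie in $L$ --- so no triangle of $\Delta$ at $f'$ is on $\partial L'$ and the intersection cannot be a disk. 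If $\Delta$ is a top facet, the same obstruction appears at $e$: the vertex $e$ is already on $\partial L'$ because a bottom facet is gone, while every triangle of $\Delta$ through $e$ is shared with a facet of $P$ still present, again by minimality. This single local failure of freeness is what should replace your global count; if you want to salvage your version, you would have to carry out the bookkeeping in full and, in particular, handle the unconstrained tetrahedra at $abd$ and $a'b'd'$ that your first-step claim overlooks.
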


\begin{proof}
  For contradiction let us assume
  that one of the tetrahedra $abce, abde$ or $acde$ does not belong to $L$ as
  well as one of the tetrahedra $a'b'c'e, a'b'd'e$ or $a'c'd'e$ does not belong
  to $L$.
  
  Now we fix a shelling of $B$ down to $L$ and without loss of generality we can assume that one of the tetrahedra $abce, abde$ or $acde$ is removed earlier than any of the tetrahedra $a'b'c'e, a'b'd'e$ or $a'c'd'e$. (If one of the tetrahedra
  $a'b'c'e, a'b'd'e$ or $a'c'd'e$ is removed earlier than any of
  the tetrahedra $abce, abde$ or $acde$, then the argument is symmetric by
  replacing $a$ with $a'$, $b$ with $b'$ and $c$ with $c'$.)

  Let $\Delta$ be the tetrahedron 
  which is the first one removed in shelling down from $B$ to $L$ among
  those tetrahedra that contain $e$ and at least one of the vertices $a'$,
  $b'$, $c'$ or $d'$. Let $L'$ be the intermediate complex in our shelling down
  from $B$ to $L$ obtained exactly before shelling $\Delta$. (This is the same as
  $B_\Delta$ in the notation of Section~\ref{s:shelling_balls} for our fixed
  shelling.) By Lemma~\ref{l:free_facet} we get that $\Delta$ is free in $L'$; that
  is, $\Delta$ intersects $\partial L'$ in a disk.

  First, we verify that $\Delta$ cannot intersect both $\{a,b,c,d\}$ and $\{a',
  b', c', d'\}$. For contradiction, we assume so. Let $f' \in \{a',  b', c',
  d'\}$ be a point of intersection of $\Delta$ and $\{a',  b', c',  d'\}$. Note that
  $f'$ is on the boundary of $L'$ (because it is on the boundary of $B$ by the
  assumptions of the lemma). We observe that no triangle of $\Delta$ containing $f'$
  may belong to $\partial L'$. Indeed, let $\tau$ be such a triangle. If $\tau$
  contains $e$ then the other tetrahedron (than $\Delta$) of $P$ containing $\tau$
  is still present in $L'$ (due to the definition of $\Delta$). Thus $\tau$ belongs
  to two tetrahedra of $L$. If $\tau$ does not contain $e$, then the other tetrahedron
  of $B$ than $\Delta$ (outside $P$) is present in $L$ a fortiori in $L'$ due to the
  assumptions of the lemma. Altogether $f'$ belongs to $\partial L'$ while no
  triangle of $\Delta$ containing $f'$ belongs to $\partial L'$ which implies that
  $\Delta$ cannot meet $\partial L'$ in a disk.
 
The only remaining options are that $\Delta$ is one of the tetrahedra $a'b'c'e$,
  $a'b'd'e$, or $a'c'd'e$. In this case $e$ is on the boundary of $L'$ because
  one of the tetrahedra $abce$, $abde$ or $acde$ was removed before $\Delta$. On the
  other hand no triangle of $\Delta$ containing $e$ belongs to $\partial L'$ because
  $\Delta$ is removed before any other tetrahedron containing
  $e$ and at least one of the vertices $a'$, $b'$, $c'$ or $d'$. Thus $\Delta$
  cannot meet $\partial L'$ in a disk. This is the desired contradiction.
\end{proof}

\subsection{Thick 1-house}
\label{ss:thick_1house}

Now we want to build a thick version of the 1-house with analogies of a distinguished
tree and
crossing circles from Figure~\ref{f:1house_circles}. In our description we
closely follow Bing~\cite{bing64}, except that we build our house
not only from bricks but we will also allow other polytopes (in the thickened
lower wall). We first describe the thick 1-house as a polytopal complex, only
later on we provide a triangulation.

As a preliminary construction, we think of the thick 1-house from
Figure~\ref{f:1house} built from axis aligned bricks (cubes) such that every
pair of cubes meets in a face of both (so that we get a polytopal complex).
The walls are built from a single layer of cubes so that each
cube (except the one called $F$ that will correspond to $f$) intersects the
boundary in at least two components. We think of the free face $f$ appropriately short so that it is
represented by a single brick $F$. This brick meets the boundary in a single
component.  See Figure~\ref{f:bing_three_layers} for the front three layers of
the construction if the starting 1-house is slightly rotated.

Note that each cube naturally corresponds either to a vertex of the thin
$1$-house, or to a relative interior of an edge, or to a relative interior of
some $2$-cell (i.e. not necessarily convex polygon). In sequel, we simplify
this by saying that the cube corresponds to a vertex, an edge or a $2$-cell
without emphasizing the relative interior.

In order to avoid some problematic cases, we also assume that the bricks are
sufficiently small in a sense that if $e$ and $e'$ are two disjoint edges of
the thin house then their thickenings (to unions of cubes) are still
disjoint. (For this purpose by edges we mean the axis-aligned segments from
Figure~\ref{f:bing_three_layers}, left, where either three faces meet or two
faces bend.)

By the (thickened) \emph{lower wall} we mean the subcomplex formed by the cubes
corresponding to the thin lower wall. According to our earlier convention this
means that this does not contain the cubes corresponding to the vertices or edges of the
thin lower wall. For example, in Figure~\ref{f:bing_three_layers} the
dimensions of the lower wall are $11 \times 4 \times 1$. We also sometimes
refer to the \emph{front side of the lower wall} which is formed by the squares
of the lower wall on the `facade' of the thick 1-house.

\begin{figure}
  \begin{center}
    \includegraphics[page=15]{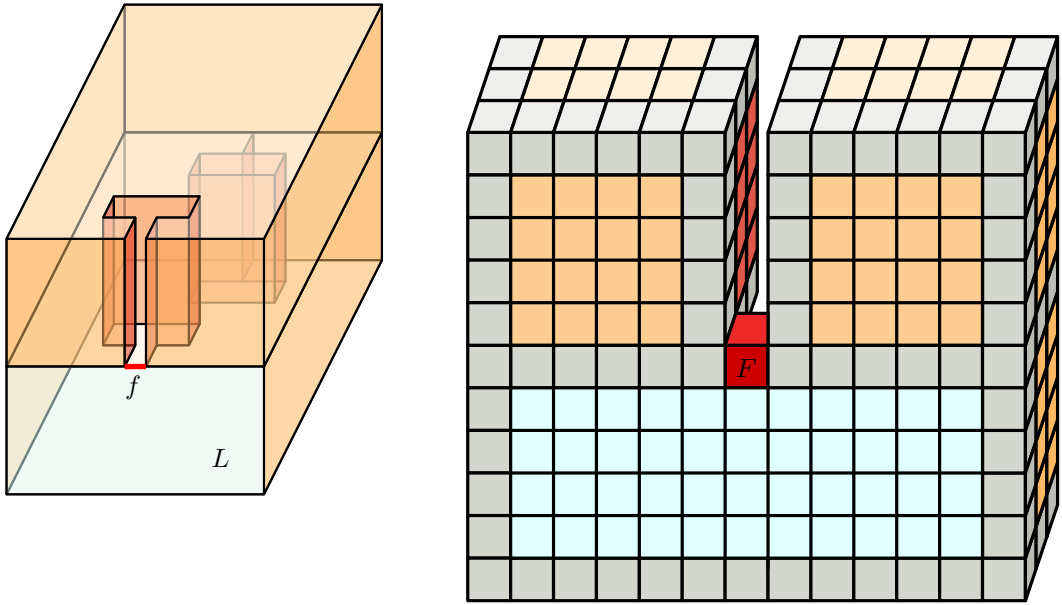}
    \caption{Left: Rotated standard 1-house, compare with Figure~\ref{f:1house}. Right:
    The first three layers of bricks of the thick 1-house. The number of
    bricks in one layer is only illustrative. In most applications, we will
    require more bricks in a layer.}
  \label{f:bing_three_layers}
  \end{center}
\end{figure}

\paragraph{Triangulation.}
Now we aim to triangulate our thick 1-house; at the moment we have only a
polytopal complex. As we also want to realize analogies of the distinguished
tree and the crossing circles from Figure~\ref{f:1house_circles}; our
triangulation will not be fixed. Later on, in some cases when we want to use the
  thick 1-house, we will allow to remove the bricks of the lower wall
  (see Figure~\ref{f:bing_cubes}, left) and to retriangulate the lower wall in
  a different way. (We also keep some flexibility regarding how many cubes we
  use when regarding the thick 1-house already as a polytopal complex.)

\begin{figure}
\begin{center}
  \includegraphics[page=16]{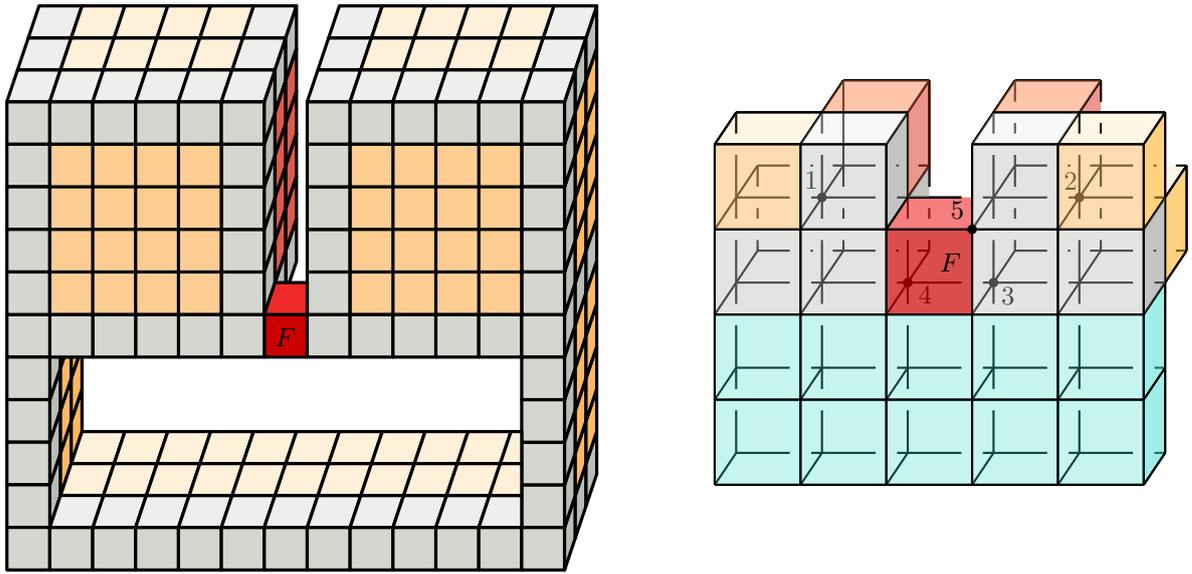}
\caption{Left: The first three layers of the thick 1-house with the bricks of the lower
  wall removed. Right: First five vertices in the order used for triangulation.}
  \label{f:bing_cubes}
\end{center}
\end{figure}

In order to obtain a triangulation of the thick 1-house, we provide
a total order on the vertices of the union of our bricks. Then we use the corresponding
canonical triangulation considering the thick 1-house as a polytopal
complex. We set up the following rules for our total order.

\begin{enumerate}[(R1)]
  \item
The first five vertices in our order are the vertices $1, \dots, 5$ from
Figure~\ref{f:bing_cubes}, right, in this order. 
    \item
    Then we put in an arbitrary order those
vertices that are a component of intersection of some cube with the boundary of
the thick 1-house. 
    \item
    Next, among the remaining ones, we put in an arbitrary
order those that are in an edge which is a component of intersection of some
cube with the boundary of the thick 1-house. 
    \item 
    Then we put the remaining
vertices in an arbitrary order. 
\end{enumerate}

\paragraph{The attachment and the shelling complexes.}
Now we aim to describe the promised analogies of the distinguished tree and
crossing circles. We do so by describing two types of subcomplexes of the boundary:
the \emph{attachment complex} and \emph{shelling complexes}. The attachment
complex is simply the piece of boundary along which we aim to glue our
thick 1-house to other gadgets. The shelling complex will be used when we shell the whole construction for a satisfiable formula. 
In general, there may be several different shelling
complexes as we will need some flexibilty to perform different types of
shellings. Usually, shelling complex will be a piece of boundary along which is our
1-house glued to the rest at the moment when we want to shell it. For this
reason, such a shelling complex will be a subcomplex of the attachment
complex. However, occasionally we will also need a 3-dimensional shelling
complex which will (of course) not be a subcomplex of the attachment complex.

We will use our thick 1-house in analogous cases as for the collapsibility
reduction, that is, as a splitter, incoming/outgoing house, or a blocker.
After the thickening, each of this cases has its own specifics, thus we discuss
each case individually. First we, however, introduce some notation.

We consider the cube $F$, the cube right of $F$ and the cup top right of $F$
and we denote their vertices $a, \dots, h, a', \dots, h'$ as in
Figure~\ref{f:F_triangulate}, left. The numbers indicate the order of the vertices
used for the triangulation; compare with Figure~\ref{f:bing_cubes}, right. In
particular, $f'$, $b'$, $e'$, $a$ in this order are the first four 
vertices among the vertices of these cubes. This induces the triangulation of
the three cubes up to the choice of the diagonal of the square $ghh'g'$. In
fact the diagonal of this square is also determined as $h'$ precedes $g$, $g'$
and $h$ due to the fact that $h'$ belongs to the edge $h'f'$  which is a
component of the intersection of the top right cube with the boundary.
Therefore $h'$ satisfies the rule (R3) of our total order while it is easy to
check that $g$, $g'$ and $h$ fall only into the (R4) case.

\begin{figure}
\begin{center}
  \includegraphics[page=17]{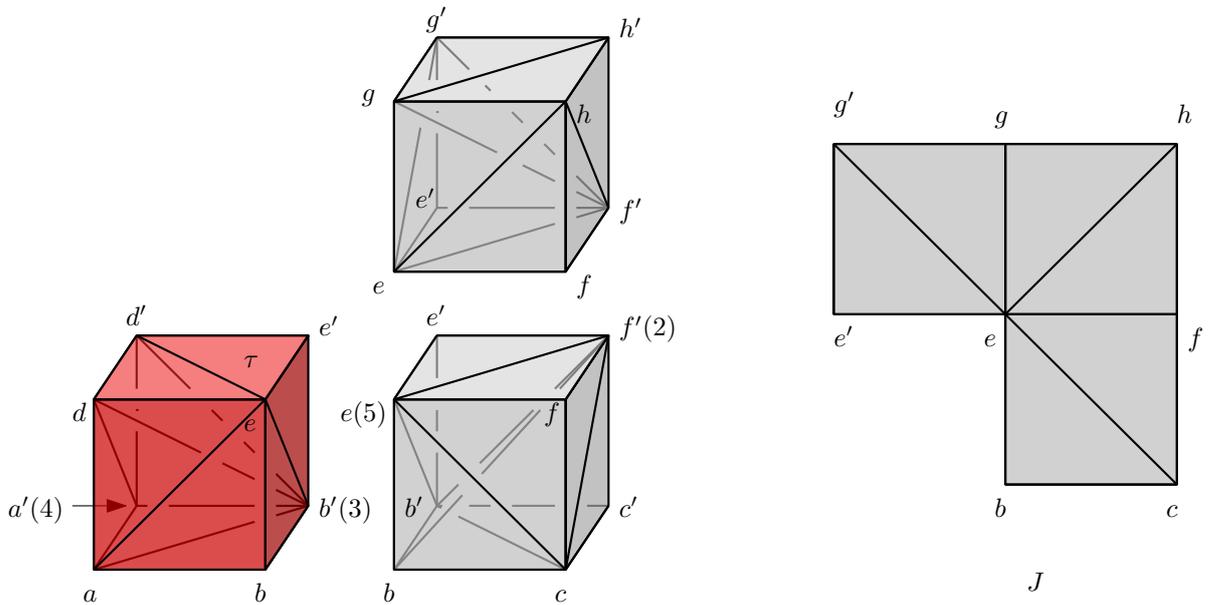}
  \caption{Left: Triangulations and the names of vertices of $F$ and two
  neighboring cubes. Right: The subcomplex $J$ after unfolding.}
  \label{f:F_triangulate}
\end{center}
\end{figure}

We also denote by $\tau$ the triangle $ee'd'$, this will be a very important
triangle, and by $J$ the subcomplex formed by the subdivided squares $bcfe$,
$efhg$ and $ee'g'g$; see Figure~\ref{f:F_triangulate}, right. The subcomplex $J$ will always be a part of the attachment
complex. The triangle $\tau$ will be a part of the attachment complex but not
the shelling complex.

\paragraph{The splitter case.}
Now we describe the attachment complex and the shelling complexes in the
\emph{splitter case}. In this case, there is a one parameter of freedom, a
positive integer $k$ which will correspond to the number of `branches' in the
splitter.

In this case, we keep the lower wall as is assuming that the dimensions of the front side are $(3k + 2) \times 6$. 
(If $k \leq 2$, we let the first dimension to be $11$ 
so that we
have enough space to finish the construction of the thick 1-house. If $k
> 2$ is even then $(3k + 2)$ is even and thus $F$ cannot appear above the `middle
column' of the lower wall. Then we let $F$ to appear above
the left one of the two middle columns. This means that our decomposition of
the thick 1-house into bricks is slightly asymmetric but this is not a problem
at all.)
We mark $k$ squares $S_1, \dots, S_k$ on the front side of the lower
wall so that the coordinates of $S_i$ are $(3i, 3)$ (assuming that the
coordinates of the bottom left square are $(1,1)$); see
Figure~\ref{f:splitter_wall} while following the construction. For further
reference, we also mark the vertices of $S_i$ by $a_i, b_i, c_i$ and $d_i$ in
the order as in the figure. (They should not be confused with vertices $a$, $b$
$c$ and $d$ in $F$ or $J$.) Each
square $S_i$ is further subdivided into two triangles, and in fact we can
choose how: When setting up our total order for triangulating the thick
1-house, all vertices of $S_i$ are subject to rule (R4). The diagonal appears
at the first one of them which we can choose independently in every $S_i$.  We
set up the diagonals in such a way that they connect $b_i$ and $d_i$.  Then we
mark a subdivided horizontal $(3k-2) \times 1$ rectangle $R_1$ which connects
the squares with coordinates $(3, 4)$ and $(3k, 4)$ if $k \geq 3$, or $(3,4)$
and $(9, 4)$ if $k \leq 2$.  Finally we mark a vertical subdivided $1 \times 2$
rectangle $R_2$ connecting $R_1$ and $J$ (i.e., it shares an edge with each of
them). 

\begin{figure}
\begin{center}
  \includegraphics[page=18]{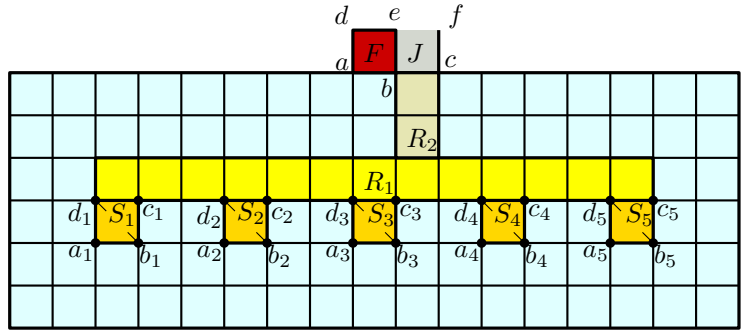}
  \caption{The lower wall in the splitter case. The squares $S_i$ and the rectangles $R_1$ and $R_2$ on the front
  side of the lower wall for $k = 5$. Each square is subdivided into two
  triangles but we depict them only for $S_i$.}
  \label{f:splitter_wall}
\end{center}
\end{figure}

Now the \emph{attachment complex} is the subcomplex of the boundary of the
thick 1-housed formed by $S_1, \dots, S_k, R_1, R_2, J$ and $\tau$. 
The are two shelling complexes. The \emph{first shelling complex} 
is formed by $S_1, \dots, S_k, R_1, R_2, J$ without $\tau$; the \emph{second
shelling complex} is formed by $S_1, \dots, S_k, R_1, R_2$ and a part of $J$ consisting of all its
triangles except $eg'e'$.

\paragraph{The incoming case.} Here we describe the attachment complex and the
shelling complex in the \emph{incoming case}. We will again use one parameter
of freedom $k \geq 0$ which will be number of \emph{crossing annuli}. They are
analogues of crossing circles used in the collapsibility reduction.

We again keep the lower wall as is assuming that the dimensions of the front
side are $(3k + 17) \times 6$. However, in this case 
we also assume that the original polytopal decomposition of the thick
1-house is very eccentric in the sense that $F$ appears above the 7th
column and also that the `chimney' is sufficiently small so that there is
enough space on the right of the chimney (this will be stated more precisely but for
the moment see Figure~\ref{f:crossing_annuli}).

Now we mark a subdivided horizontal $(3k+6)\times 1$ rectangle $R_1$ which connects the
points $(8,4)$ and $(3k+13, 4)$ and a subdivided vertical $1 \times 2$ rectangle
connecting $J$ and $R_1$. We also mark a subdivided square $S$ with coordinates
$(3k+14, 4)$ and its vertices $w, x, y, z$ as in Figure~\ref{f:incoming_wall}. 

\begin{figure}
\begin{center}
  \includegraphics[page=20]{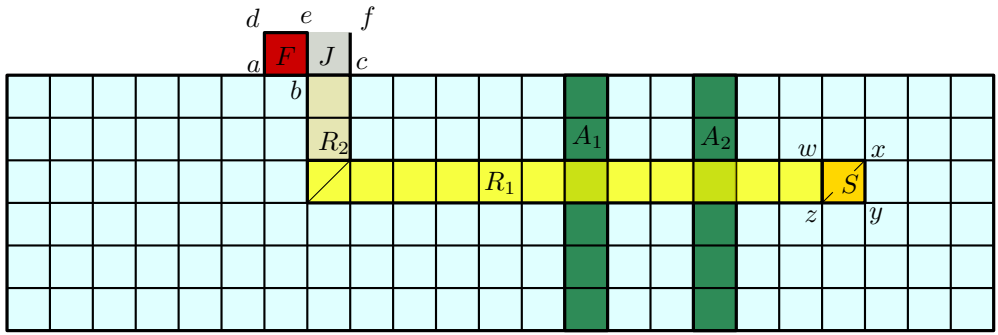}
  \caption{The lower wall in the incoming case. The squares $S$, the rectangles $R_1$ and $R_2$ and the annuli $A_i$
  on the front side of the lower wall for $k = 2$. Each square is subdivided into two
  triangles but we depict them only for $S$.}
  \label{f:incoming_wall}
\end{center}
\end{figure}

Then we mark annuli $A_1, \dots, A_k$. Each annulus $A_i$ meets not only
the lower wall but also other parts of the boundary of the thick 1-house.
Namely, $A_i$ meets the lower wall in $(3i + 11)$th column (see
Figure~\ref{f:incoming_wall}); in particular, it meets $R_i$ in the square with
coordinates $(3i+11, 4)$. Then we extend this column all the way around
the thick 1-house (so that each square of the annulus meets exactly two
other squares and they always meet in an edge, in the original decomposition of
1-house into the bricks before the triangulation); see
Figure~\ref{f:crossing_annuli} for a part of this extension. (Compare also with
Figure~\ref{f:1house_circles}; an annulus $A_i$ is just a thickening of a
crossing circle.) We also assume that the `chimney' does not touch any of the
annuli $A_i$.

\begin{figure}
\begin{center}
  \includegraphics[page=21]{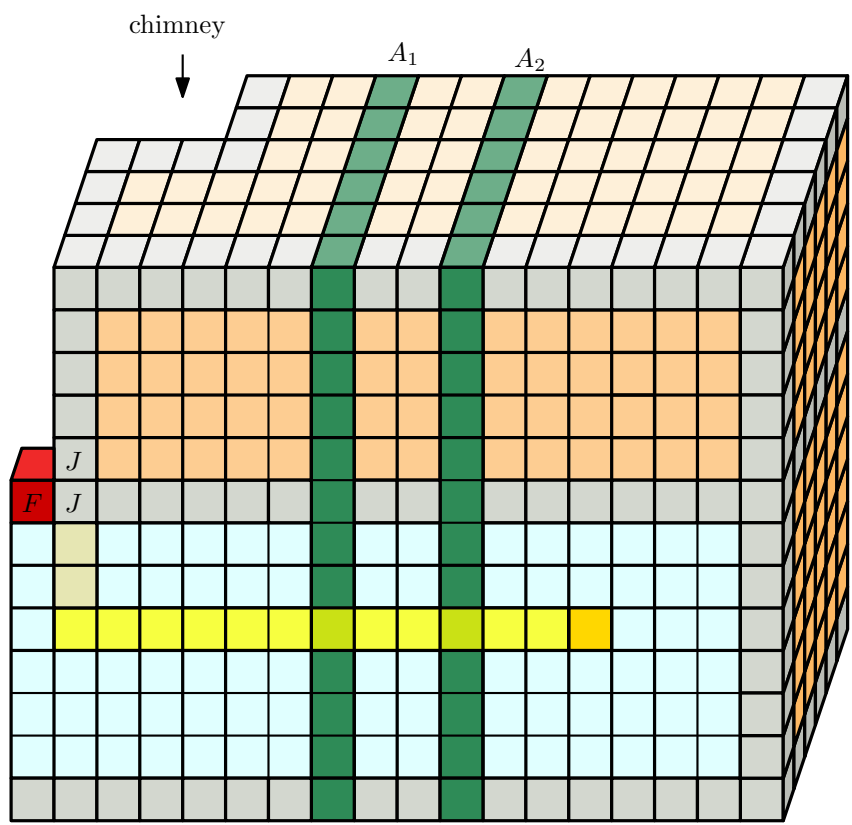}
  \caption{The crossing annuli do not touch the chimney. (Only a part of the
  house right of $F$ is drawn.)} 
\label{f:crossing_annuli}
\end{center}
\end{figure}

We also remark that we can make a choice of some diagonals of squares similarly
as in the splitter case. We will make this choice on the square with
coordinates $(8,4)$ and for the square $S$ as in Figure~\ref{f:incoming_wall}.

Finally, the \emph{attachment complex} is the subcomplex of the boundary of the
thick 1-house formed by $R_1, R_2, S, A_1, \dots, A_k, J$ and $\tau$. 
We again have two shelling complexes. The \emph{first shelling complex} is formed by $R_1, R_2, S$ and 
part of $J$ formed by all triangles triangles except $eg'e'$. The \emph{second
shelling complex} is formed by $R_1, R_2$ and
part of $J$ formed by all triangles triangles except $eg'e'$.

\paragraph{The outgoing case.} Here we describe the attachment complex and the
shelling complex in the \emph{outgoing case}. Also in this case there is
one parameter of freedom $k \geq 1$ which will be number of \emph{used squares}.

We first keep the lower wall as is (but later we may retriangulate it) assuming
that the dimensions of the front side are $11 \times  (k+3)$. The dimension 11 serves here only to keep enough
space to build the chimneys. In this case we only mark a subdivided vertical $1
\times k$ rectangle $R$ directly below $J$ and a triangle $opq$ in the square
below $R$ sharing the edge $op$ with $R$; see Figure~\ref{f:outgoing_wall}. It does not really matter whether $oq$ or $pq$ is the diagonal of the square. 

\begin{figure}
  \begin{center}
    \includegraphics[page=28]{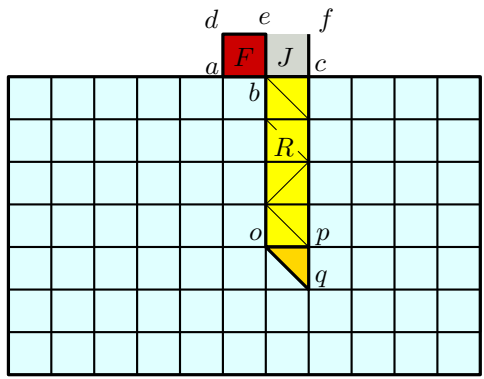}
    \caption{The lower wall in the outgoing case with $k=4$ and some choice of
    the diagonals of the squares in $R$.}
    \label{f:outgoing_wall}
  \end{center}
\end{figure}

On the other hand, we need a flexibility to choose the diagonals of $R$ as the
need arises. This can be achieved by retriangulating the lower wall:
First we decompose the 
lower wall into triangular prisms according to the
chosen diagonals (this way we can choose a diagonal in every square of the
front side of the lower wall). Note that the intersection of the lower wall with the remaining cubes is an annulus. We now set up a total
order on the vertices of the lower wall so that the
vertices of the annulus are the first (in the exactly same relative order as in
the triangulation of the whole 1-house). Then we put the remaining vertices of
the lower wall in an arbitrary order and take the corresponding
canonical triangulation. This way we obtain a
triangulation which refines the pre-chosen triangular prisms while it agrees on
the annulus. Thus it is possible to glue back the lower wall
with this triangulation.

The \emph{attachment complex} in this case consists of $R$, $J$, $\tau$ and the
triangle $opq$. The \emph{shelling complex} consists of $R$, the triangle
$opq$ and a part of $J$ formed by all its triangles except $eg'e'$.

\paragraph{The blocker case.}

Finally, we describe the attachment complex and the shelling complex in the
blocker case. In this case we take the lower wall with dimensions of the front
side $11 \times 11$ but we aim to retriangulate the lower wall quite
significantly. (The blocker is adapting to the shape of other gadgets perhaps
the most noticeably.)

We alter the triangulation of the lower wall in several
steps. First we describe what occurs on the front side; see
Figure~\ref{f:blocker_wall}. We consider  9 important squares $S_1, \dots,
S_{9}$ so that the coordinates of $S_i$ in our coordinate system from earlier
cases are $(7, 13-i)$; only six of these squares are denoted in
Figure~\ref{f:blocker_wall} due to lack of space (or possible ambiguity). We
further subdivide $S_1$, $S_6$, $S_7$, $S_8$ and $S_9$ as 
in Figure~\ref{f:blocker_wall}
and we also introduce the notation for vertices $b'_+$, $b'_-$, $c'_+$, $c'_-$,
$d'_+$, $d'_-$, $b_+$, $b_-$, $d_+$, $d_-$, $e_+$, $e_-$, $b_*$,
$c_*$, $d'_*$, $e_*$, $e'_*$, $f_*$, $g_*$, $g'_*$ and $h_*$ as depicted. The
choice of the notation is not meaningful immediately but it will be useful when gluing the blocker to the other gadgets.

\begin{figure}
  \begin{center}
    \includegraphics[page=39]{gadgets}
    \caption{The front side of the lower wall in the blocker case.}
    \label{f:blocker_wall}
  \end{center}
\end{figure}

Now we subdivide the back side of the lower wall. It is
subdivided in the same way as the front side with the exception of $S_6$. We do
not subdivide the square corresponding to $S_6$ in the back side. We
temporarily remove the cube containing $S_6$ from our lower wall and we
decompose the rest of the lower wall into prisms
according to our subdivision of the front side and the back side. Regarding the
missing cube containing $S_6$, we replace it in the wall with a polytope $P_6$ 
on $10$ vertices obtained
from this cube by pushing $d_+$ and $d_-$ a little bit forward; see
Figure~\ref{f:blocker_polytope}. We also mark a point $p$ on $P_6$ behind
$b_-$ (this will be an important point for getting a suitable triangulation).
With a slight abuse of the notation, we still denote the vertices of the front
four faces of $P_6$ as $b_+, b_-, d_+, d_-, e_+, e_-$ and the union of the four
front faces as $S_6$.

\begin{figure}
  \begin{center}
    \includegraphics[page=33]{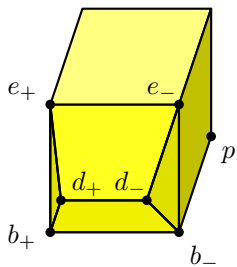}
    \caption{Pushing $d_+$ and $d_-$ a bit forward.}
    \label{f:blocker_polytope}
  \end{center}
\end{figure}

Now we triangulate the lower wall in a suitable way so that
we can glue it to the rest of the thick 1-house in the same way as it was glued
before altering the triangulation: Similarly as in the outgoing case, the lower wall meets the remaining cubes in an annulus. We set up a
total order for a canonical triangulation so that we start with the vertices of
this annulus in the same relative order as for our original triangulation of
the thick 1-house. Then we put $p$ (which ensures that $P_6$ will be
triangulated as a cone with apex $p$) and then the remaining vertices of the
lower wall in an arbitrary order. Because we have started with
the vertices of the annulus, the lower wall triangulated this
way can be glued back to the thick 1-house. 

Now the \emph{attachment complex} in this case consists of $\tau$, $J$, the subdivided
squares $S_1, \dots, S_8$ and the triangle $d'_*e'_*e_*$. 

In the blocker case, there will be actually more possible shelling complexes as
we will use several slightly different blocker houses. 

The \emph{type 0 shelling complex} consists of the subdivided squares $S_2,
\dots, S_8$ and the triangle $d'_*e'_*e_*$.

The \emph{type i shelling complex} is a (non-pure) $3$-dimensional subcomplex of
the blocker house and it consists of the subdivided squares $S_2,
\dots, S_8$, the triangle $d'_*e'_*e_*$ and $P_6$ (as a 3-dimensional piece).

Finally, the \emph{type n shelling complex} is a (non-pure) $3$-dimensional subcomplex of
the blocker house and it consists of $J$, the subdivided squares $S_1,
\dots, S_8$, the triangle $d'_*e'_*e_*$ and $P_6$ (as a 3-dimensional piece).

\bigskip

This finishes the discussion of different cases how to triangulate the lower
 wall.

Now we state and prove two lemmas on shelling the thick 1-house which will be useful in the reduction.

\begin{lemma}
  \label{l:shelling_house}
  Assume that $K$ is a pure $3$-complex. Assume that $K = H \cup L$ where
  $H$ is the thick 1-house and $H \cap L$ is the shelling complex of $H$.
  Then $K$ shells to $L$.
\end{lemma}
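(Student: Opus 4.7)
The plan is to shell off the constituent polytopes of $H$ one at a time---these are cubes throughout, except for the polytope $P_6$ in the blocker case---applying Lemma~\ref{l:shell_canonical} at each step. Concretely, the goal is to exhibit an order $P_1, \dots, P_N$ on those polytopes of $H$ not contained in $H \cap L$ (in the type~i and type~n blocker cases this excludes $P_6$, which is a 3-dimensional piece of the shelling complex and hence lies in $L$) such that, for every $j$, the intersection of $P_j$ with $L \cup \bigcup_{i>j} P_i$ is a disk contained in $\partial P_j$. Each such step then shells the polytope $P_j$ off via Lemma~\ref{l:shell_canonical}, and concatenating these yields the desired shelling of $K$ down to $L$.

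The order splits naturally into three stages. First, shell the cubes of the upper, side, and back walls of $H$ together with the chimneys, in a wavefront that advances from the end of $H$ far from $F$ toward $F$; at each step, the cube being removed meets the remaining polytopes of $H$ in two or three mutually adjacent cube faces---an L-shape or the union of three faces meeting at a vertex, each being a disk---and it does not touch $L$. Second, shell the three cubes around $F$ (namely $F$ itself, the cube right of $F$, and the cube top-right of $F$), whose canonical triangulations are completely determined by rules (R1)--(R4)---in particular, the prefix $f', b', e', a$ of the vertex order fixes the diagonals---so their tetrahedra can be enumerated and shelled in an order leaving exactly $J$ (or $J$ minus the triangle $eg'e'$) on the boundary, together with $\tau$ removed or not according to the case. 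Third, shell the polytopes of the lower wall, arranging the sub-order so that the last polytopes removed are those whose front faces lie in the shelling complex (such as the marked squares $S_i$, the rectangles $R_j$ or $R$, the triangle $opq$, or the subdivided $d'_*e'_*e_*$); every such polytope meets the remaining complex in its inside faces (shared with yet-unremoved neighbors) together with its marked front face, and these form a disk on the polytope's boundary.

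The main obstacle is Stage~3 in the blocker case, where the retriangulated lower wall consists of prisms selected by our diagonal pattern together with the non-cubical polytope $P_6$. In the type~i and type~n cases $P_6$ must be preserved since it lies in $L$, so the prisms adjacent to $P_6$ must be shelled in an order where their intersection with $P_6$ together with other unshelled neighbors forms a disk; this can be arranged because the faces shared between any prism and $P_6$ are one or two adjacent front faces of $P_6$. In the type~0 case, $P_6$ itself is shelled last, using its canonical triangulation with apex $p$ (which lies on the back, away from the shelling complex on the front), so that $P_6 \cap L$ is the front region containing $S_6$, a disk on $\partial P_6$. A secondary verification is that the global vertex order restricts consistently to every shared 2-face between adjacent polytopes, so the canonical triangulations of neighboring polytopes agree on common faces and all the intersections above are bona fide simplicial subdisks.
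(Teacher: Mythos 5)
Your proposal runs on the same engine as the paper's proof: remove the polytopes of $H$ one at a time, invoking Lemma~\ref{l:shell_canonical} at every step, and keep $P_6$ in the type~$i$/type~$n$ blocker cases; the difference is that your removal order is essentially the reverse of the paper's. The paper starts at $F$ and $F_\downarrow$, clears the lower-wall prisms not meeting the shelling complex, then follows the collapsing order of the thin 1-house (ground floor, walls, middle floor, roof), and only at the end removes $F_\nearrow$, $F_\to$ and the marked lower-wall prisms, with the special top-down/bottom-up treatment around $P_6$; you sweep from the far end toward $F$, then do the three $F$-cubes, then the lower wall. Your order can be made to work, but for a reason worth stating explicitly: removing an entire canonically triangulated cube via Lemma~\ref{l:shell_canonical} only requires its intersection with what remains to be a disk, which is far weaker than freeness of a tetrahedron (the property that Lemma~\ref{l:1house_blocked} rules out away from $F$), so a sweep may begin at a convex corner or edge cube of the far end, where the shared faces form a disk containing the stray boundary edge or vertex --- but not at a cube in the interior of a wall, where the shared faces form an annulus. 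Relatedly, your blanket description ``two or three mutually adjacent cube faces'' is not accurate at cubes thickening edges where three sheets of the 1-house meet, nor when a row of the sweep starts away from a corner: one gets strips containing opposite faces, or four to five faces, which are still disks but only if the wavefront is organized so that the already-removed neighbors of each cube stay contiguous (and Stage~1 must cover the ground floor, middle floor and interior walls, not only the outer walls and chimneys). Finally, the part of the paper's proof your sketch compresses is exactly the case analysis you defer: exhibiting, for each of the splitter/incoming/outgoing/blocker shelling complexes, the precise disks in which $F$, $F_\to$, $F_\nearrow$, and the marked lower-wall prisms meet $L$ (the paper lists these disks explicitly, and also records that $\tau$ is never part of them). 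What the paper's order buys is that these checks are uniform and short because they parallel the known collapse of the thin house; what your order buys is a conceptually simpler ``peel from the back'' picture, at the price of verifying that the wavefront never pinches at triple junctions, tunnels, and the holes in the middle floor.
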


\begin{proof}
  The main idea is to obtain a proof by a repeated application of
  Lemma~\ref{l:shell_canonical} (with a few extra steps not using the lemma).
  That is, we want to shell most of $H$ by removing the cubes of $H$ one by one
  so that each cube intersects the remainder of the intermediate complex in a
  disk. We will essentially follow the order of collapses of $1$-house
  in~\cite{gpptw19, tancer16}; however, we have to treat various shelling
  complexes carefully. For this we need some auxiliary notation.

Let $F_\downarrow$, $F_\to$, $F_\searrow$, $F_\nearrow$
  be the closest cube in $H$ below $F$, right of $F$, diagonally right below
  $F$ and diagonally right above $F$ respectively. 
  The directions are according to
  Figure~\ref{f:bing_three_layers}.
  Next, we define
  certain auxiliary complex $A$ which will be a subcomplex of the
  (triangulated) front side of the lower wall. In the splitter, ingoing or
  outcoming case we set $A$ to be the intersection of the shelling complex with
  the front side of the lower wall. In the blocker case, we set $A$ to be the
  intersection of the attachment complex with  the front side of the lower wall.
We also note that before triangulating, the lower wall is in each case decomposed into
      prisms and possibly the exceptional polytope $P_6$ in the blocker case.
      (These prism are usually cubes of the original thick 1-house, but there may
      be triangular prisms in the outgoing case or the blocker case.)
   We define $\Omega$ as the collection of those 3-dimensional prisms such that their
      intersection with the front side of the lower wall
      meets $A$ in a at most $1$-dimensional piece and
      $\Upsilon$ as the collection of the remaining 3-dimensional prisms (and $P_6$ if
      applicable). We also define $\Upsilon^+$ as the union of $\Upsilon$ and
      $\{F_\to, F_\nearrow\}$.

By considering each case of the construction separately, we observe that the
  shelling complexes, $\Omega$ and $\Upsilon$ in various cases satisfy the
  following properties:

  \begin{enumerate}[(P1)]
    \item The polyhedron of each shelling complex is a subset of the union of
      of the prisms from $\Upsilon^+$.
    \item $F_\downarrow$ belongs to $\Omega$.
    \item The prisms subdividing $F_\searrow$ belong to $\Upsilon$.
    \item The union of the prisms of $\Omega$ meets the front side of the lower wall in a disk.
    \item The union of the polytopes of $\Upsilon$ meets the front side of the lower wall in a disk.
  \end{enumerate}

Now we describe the desired shelling. 
  We start our shelling by removing the tetrahedra inside $F$ by
  Lemma~\ref{l:shell_canonical} and then the tetrahedra inside $F_\downarrow$ by the
  same lemma. This is possible due to the fact that $H \cap L$ is the shelling complex
  and due to properties (P1) and (P2). (Thus $L$ does not impose new
  restrictions on shelling $F$ and then $F_\downarrow$.) 

 Then we continue by removing the tetrahedra inside prisms from $\Omega$ using
 Lemma~\ref{l:shell_canonical} considering these prisms one by one. This can be
 done either in greedy manner (after a small thought using extendable
 shellability of a disk and (P4)), or this can be easily done by hand in each case
 separately. See Figure~\ref{f:removing_Omega} in the blocker case. (We again
 use (P1) and the fact that $H \cap L$ is the shelling complex thus these
 shellings work not only in $H$ but also in $K = H \cup L$.)

\begin{figure}
  \begin{center}
    \includegraphics[page=51]{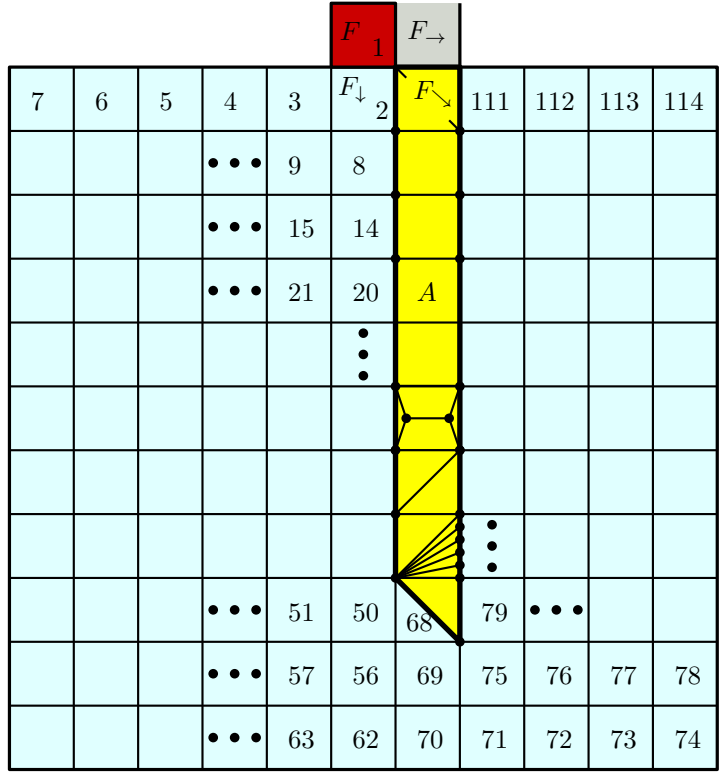}
    \caption{Order of removals of prisms in $\Omega$ (together with $F$ and
    $F_\downarrow$) in the blocker case. }
    \label{f:removing_Omega}
  \end{center}
\end{figure}

 Next, by a repeated application of Lemma~\ref{l:shell_canonical} (using (P1)
 and the fact that $H \cap L$ is the shelling complex) we remove the tetrahedra
 inside cubes of $H$ in the thickened ground floor of $H$ including the cubes
 that correspond to vertices and edges of the ground floor of the thin 1-house.
 Then we continue with (thickened) vertical walls originally touching the
 ground floor (except the lower wall, including the thickened vertical edges
 but not the horizontal ones). Next we remove the middle floor in a direction
 from the hole of the (already removed) thick bottom chimney (including the
 cubes corresponding to the vertices and edges of the thin middle floor except
 those that belong to the lower wall). Then we continue with upper vertical
 walls except those which are above the lower wall (but including the
 cubes corresponding to their vertices and edges). Next we remove the roof and
 then the remaining vertical walls, including the cubes corresponding to their
 vertices and edges except $F_\nearrow$ and $F_\to$. At this moment only the tetrahedra in the 
 prisms of $\Upsilon^+$ remain.

 Now, we remove the tetrahedra in $F_\nearrow$ by an application of
 Lemma~\ref{l:shell_canonical}. Note that $L \cap P$ (in the notation of
 Lemma~\ref{l:shell_canonical}!) is the following disk (see
 Figure~\ref{f:F_triangulate}):
\begin{itemize}
  \item The disk formed by triangles $eff', ef'e', efh, egh, egg'$ and $ee'g'$
    for the first shelling complex in the splitter case or the type $n$
    shelling complex in the blocker case.
  \item The disk formed by triangles $eff', ef'e', efh, egh$ and $egg'$ for the
    second shelling complex in the splitter case, the first or second shelling
    complex in the incoming case or the shelling complex in the outgoing case.
   \item The disk formed by triangles $eff'$ and $ef'e'$ for the type $0$ or
     type $i$ shelling complex in the blocker case.
\end{itemize}
 
 Next, we remove the tetrahedra in $F_\to$ by an application of
 Lemma~\ref{l:shell_canonical}. We again describe the disk $L \cap P$ (in the notation of
 Lemma~\ref{l:shell_canonical}):

\begin{itemize}
  \item It is the disk formed by triangles $bb'c$ and  $b'cc'$ for the type $0$ or
     type $i$ shelling complex in the blocker case.
  \item It is the disk formed $bb'c$, $b'cc'$, $bce$ and $cef$ for all other
    shelling complexes.
\end{itemize}

 It remains to remove tetrahedra in prisms in $\Upsilon$. 

 In the splitter case, incoming case or outgoing case we apply
 Lemma~\ref{l:shell_canonical} to this prisms one by one (greedily) so that the 
 disk $L \cap P$ (in the notation of Lemma~\ref{l:shell_canonical}) consists of
 the intersection of the prism with the shelling complex and the intersection
 with not yet shelled prisms. (Not every choice of a prism yields such a disk,
 but there is always a valid choice.)

 For type $0$ shelling complex in the blocker case; we first remove the prisms
 subdividing $F_\searrow$ (which contains the square $S_1$), using
 Lemma~\ref{l:shell_canonical}, and then we continue
 with other prism analogously as in the previous case. (For example, it is
 possible to shell them in topdown direction according to
 Figure~\ref{f:blocker_wall}). 

 For type $i$ and type $n$ shelling complex, we again shell using
 Lemma~\ref{l:shell_canonical} on prism of $\Upsilon$ in topdown direction
 starting with the prisms subdividing $F_\searrow$ until we reach the
 exceptional piece $P_6$. We do not remove $P_6$ as it is a part of shelling
 complex and it is supposed to be kept. Instead, we continue in bottom up
 direction starting with the prism containing the triangle $d'_*e_*e'_*$ until
 we reach $P_6$ from bottom (again we do not remove it). This finishes the
 shelling.
\end{proof}

\begin{lemma}
  \label{l:1house_blocked}
  Assume that $B$ is a triangulated $3$-ball. Assume that $H \subseteq B$ where
  $H$ is a thick 1-house. Assume that every face of $\partial H$ which is
  not in the attachment complex is also a face of $\partial B$. In addition, assume that
  the triangle $\tau$ of $H$ is not in $\partial B$. Then there is no free
  tetrahedron (for shelling) of $B$ contained in $H$.
\end{lemma}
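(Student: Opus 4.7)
The plan is to argue by contradiction: suppose some tetrahedron $T \subseteq H$ is free for shelling in $B$, i.e., $T \cap \partial B$ is a $2$-ball. The four triangular faces of $T$ fall into three types: those interior to $H$, which are shared with another tetrahedron of $H$ and are therefore interior to $B$; those on $\partial H$ outside the attachment complex, which lie in $\partial B$ by hypothesis; and those on $\partial H$ inside the attachment complex, which by the design of $H$ (the attachment complex being the region along which $H$ is glued to the remaining gadgets of $B$) are also interior to $B$, as the hypothesis makes explicit for the triangle $\tau$. Hence, up to lower-dimensional pieces, $T \cap \partial B$ is the union of those triangles of $T$ that lie on $\partial H$ outside the attachment complex, and $T$ is free only if this union forms a $2$-disk in $\partial T$.

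The argument proceeds by case analysis on which cube or prism $C$ of the polytopal decomposition of $H$ contains $T$. The structure of Bing's construction ensures that each cube $C \neq F$ meets $\partial H$ in at least two disjoint $2$-dimensional components, while the unique component of $F \cap \partial H$ is entirely contained in the subcomplex $J$ of the attachment complex. If $T \subseteq F$, then every triangle of $T$ lying on $\partial H$ belongs to $J$, so $T \cap \partial B$ is at most $1$-dimensional and cannot be a $2$-disk. If $T$ lies in a different cube or prism $C$, the canonical triangulation writes $T = v * \sigma$, where $v$ is the first vertex of $C$ in the chosen order and $\sigma$ is a triangle of $\partial C$ avoiding $v$; the triangles of $T$ on $\partial C$ are exactly $\sigma$ together with those faces $v * e$ for which the edge $e$ of $\sigma$ lies in a $2$-face of $C$ containing $v$. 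Using the ``at least two components'' property of $C \cap \partial H$, the triangles of $T$ on $\partial H$ outside the attachment complex either all sit in a single component and are too few to form a disk, or split between two disjoint components of $C \cap \partial H$, in which case $T \cap \partial B$ is disconnected.

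The main obstacle is the detailed casework for the lower wall, whose polytopal decomposition contains triangular prisms---and, in the blocker case, the exceptional polytope $P_6$---instead of plain cubes, and where the choice of diagonals in squares such as $S_i$ (splitter case) or $S$ (incoming case) affects which triangles appear in $\partial H$. In each of the splitter, incoming, outgoing, and blocker cases, one must inspect every prism of the lower wall and verify that it either preserves the two-component property of Bing's original cubes or has its $\partial H$-intersection entirely inside the attachment complex, so that the argument of the previous paragraph applies. Once this bookkeeping is carried out, the desired contradiction follows.
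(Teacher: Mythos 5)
There is a genuine gap, and it sits at the very first reduction. You assume that every triangle of $T$ lying in the attachment complex is interior to $B$, "as the hypothesis makes explicit for the triangle $\tau$" — but the hypothesis makes this explicit \emph{only} for $\tau$. Nothing in the lemma prevents other faces of the attachment complex from lying on $\partial B$, and in the intended application (Claim~\ref{c:order_H}, where $B=\cK_\Gamma$ is an intermediate ball of a shelling) parts of the attachment complex typically \emph{do} lie on the boundary, because neighbouring gadgets glued along them may already have been shelled away. So your identification of $T\cap\partial B$ with "the triangles of $T$ on $\partial H$ outside the attachment complex, up to lower-dimensional pieces" proves at best a weaker statement than the lemma. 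Moreover, discarding the lower-dimensional pieces throws away exactly the mechanism the proof needs: freeness means the \emph{entire} complex $T\cap\partial B$, including isolated vertices and edges, is a $2$-ball, and in most cubes the obstruction is precisely a vertex or edge of $T$ forced onto $\partial B$ (because it lies in a boundary face outside the attachment complex) while no triangle of $T$ through it is on $\partial B$, which makes $T\cap\partial B$ disconnected or otherwise not a disk. A necessary condition on the $2$-dimensional part alone does not yield these contradictions.

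This propagates into the case analysis. Your structural claim that every cube $C\neq F$ meets $\partial H$ in at least two disjoint $2$-dimensional components is false: the cubes thickening edges and vertices of the thin house meet $\partial H$ in components that are single edges or isolated vertices, and the whole point of the ordering rules (R1)--(R4) is to force the apex of the canonical cone triangulation of such a $C$ into one of these low-dimensional components, so that every tetrahedron of $C$ meets two components of $C\cap\partial B$. Establishing where the apex lies (including the special cubes near $F$ that contain the first few vertices of the order, and the exceptional prism analysis in the blocker case with apex $p$ of $P_6$) is the technical core of the proof, not routine bookkeeping, and your proposal does not supply it. Finally, the case $T\subseteq F$ is handled incorrectly: $F\cap\partial H$ is \emph{not} contained in $J$, so $T\cap\partial B$ can very well be $2$-dimensional there; the correct argument uses that $F$ is triangulated as a cone with apex $b'$ together with the hypothesis $\tau\notin\partial B$ to show that the subcomplex of $\partial B$ available inside $F$ contains no disk spanning $b'$ and the remaining vertices of a tetrahedron. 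Your conclusion that the intersection is "at most $1$-dimensional" rests on the two unjustified assumptions above and does not hold.
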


\begin{proof}
First we exclude all tetrahedra of $H$ which are not contained in the inner
  part of the lower wall. Then we exclude the remaining ones by a case analysis
  depending on which case of the attachment complex we consider.

  Let $\Delta$ be a tetrahedron contained in a cube (brick) $C$ which is not in the
  lower wall. Note that by the construction of the attachment complex, every
  vertex of $C$ and $\Delta$ as well belongs to $\partial B$. 

  Now we distinguish several cases depending how $C$ intersects $\partial H$.

\medskip

 The first case is that $C \cap \partial H$ consists of two opposite 
  squares of $C$. This case occurs if $C$ is an `inner' cube of some thickened
  $2$-face.
 In this case $\Delta$ has to meet both the squares in some vertex and thus
  $\Delta \cap \partial B$ is
  disconnected. (Here we use that $\Delta \cap \partial B \subseteq C \cap \partial
  B \subseteq C \cap \partial H$. We will use the same inclusions also in the
  other cases  without explicit notice.)
  Thus $\Delta \cap \partial B$ cannot be a disk. Therefore $\Delta$ is not
  free. 

\medskip

  The second case is that $C \cap \partial H$ consists of two squares sharing an edge
  together with an edge $e$ which avoids the two squares. This case occurs if
  $C$ is an `inner' cube of some thickened edge where two other $2$-faces meet
  perpendicularly. Here we want to check that $\Delta$ meets both $e$ and the two
  squares. As soon as we check this we get that $\Delta \cap \partial B$ cannot be a
  disk similarly as in the previous case. It is sufficient to check that
  $\Delta$
  meets $e$ as the two squares cover six vertices of $C$.

  For checking that $\Delta$ meets $e$ we need to distinguish two subcases. The
  first subcase is that $C$ contains at least one vertex of the 
  first five vertices (in our total order inducing the triangulation of $H$). There are four such
  cubes: these are exactly the four cubes meeting $F$ in an edge outside the
  lower wall (see Figure~\ref{f:bing_cubes}, right). By inspection of the cases
  in each of the four cases the edge $e$ meets the first or the second vertex
  in our total order and this vertex $v$ is the first vertex of $C$ in the
  total order. Therefore $C$ is triangulated as a cone with apex $v$ which
  implies that $\Delta$ contains $v$.

  In the second subcase, we assume that $C$ avoids the first five vertices.
  Then the both vertices of $e$ satisfy either (R2) or (R3) of our total order.
  On the other hand the remaining six vertices of $C$ fall only into the rule 
  (R4) by an inspection of possible neighboring cubes. (Here we use our
  assumption from the early stage of the construction 
  that two disjoint edges of the original 1-house thicken to two
  disjoint unions of cubes.) We conclude that $C$ is triangulated as a cone
  where one of the vertices of $e$ is the apex. Thus $\Delta$ meets $e$.
\medskip

 The third case is that $C$ meets $\partial H$ in three components, one of them
  is a square and the remaining two are edges. This occurs if $C$ is an `inner'
  cube of some edge where three $2$-faces meet. This is an easy case as $\Delta$ has
  to meet at least two such components and thus $\Delta \cap \partial B$ is
  disconnected.

\medskip

 The forth case is that $C$ meets $\partial H$ in two or more components; 
 at least one of them is a vertex.
 This case occurs if $C$ is a thickening of a vertex $w$ of the original
 1-house where three $2$-faces meet perpendicularly and in addition the angle
 at $w$ is $\pi/2$ in each of the $2$-faces; or if four or more
 $2$-faces meet at $w$ without any condition on angles. 
 We aim to show that $C$ is triangulated as a
 cone with some apex $v$ where $v$ is one of the vertices of $C$ that form a
 single component of $C \cap \partial H$. This implies that $\Delta$ intersects (at
 least) two  disjoint boundary components of $C \cap \partial H$ and therefore
 $\Delta \cap
 \partial B$ is disconnected. We will distinguish two subcases.

 In the first subcase, we assume that $C$ contains one of the first five
 vertices of our total order. By inspecting the bricks that contain one of
 these vertices (see Figure~\ref{f:bing_cubes}, right), we deduce that $C$ is
 either the brick left of $F$ and there is only a one option for $v$ which is
 the first vertex of the total order, or $C$ is the brick right of $F$ and $v$
 is (necessarily) the second vertex of the total
 order. In both cases $C$ is triangulated as a cone with apex $v$.

 In the second subcase, we assume that $C$ does not contain any of the first
 five vertices of our total order. Then each vertex of $C$ which forms a single
 boundary component of $C \cap \partial H$ qualifies for (R2). Let $v$ be the
 first one in our total order among these vertices. In this subcase no vertex
 of $C$ qualifies for (R1). However, also by inspection of cubes of $C$
 intersecting $C$, no other vertex of $C$ qualifies for (R2) apart from those we
 already know due to $C$. (Here we again use that two
 disjoint edges of the original 1-house thicken to two disjoint unions of
 cubes. In particular, no cube intersecting $C$ is a thickening of a vertex.) 
 Therefore $C$ is triangulated as a cone with apex $v$. 
 
\medskip

 The fifth case is that $C$ meets two boundary components; one of them is a
 union of two edges on three vertices and the second one is a union of a square
 and an edge on five vertices. This occurs if $C$ is a thickening of a vertex
 $w$ of the original 1-house where three $2$-faces meet perpendicularly and in
 addition the angle at $w$ is $3\pi/2$ at one of the $2$-faces.
 If $\Delta$ meets vertices of both components, then $\Delta \cap \partial
 B$ is disconnected and we are done. If $\Delta$ meets vertices only of a single
 component, then this is necessarily the component consisting of the union of
 the square and the edge on five vertices. As $\Delta$ is $3$-dimensional, it
 necessarily contains a vertex $v$ outside the square. Then $\Delta \cap \partial B$
 is not necessarily disconnected but the star of $v$ in $\Delta \cap \partial B$ is
 at most $1$-dimensional, thus $\Delta \cap \partial B$ cannot be a disk.

\medskip

 The sixth and the last case occurs if $C = F$. Here we use the notation from
 Figure~\ref{f:F_triangulate}, left. Here we crucially use that $\tau$ does not
 belong to $\partial B$ due to assumptions of the lemma 
 (though it belongs to $\partial H$). This means that $\Delta \cap \partial B$ is a
 subcomplex of the complex $A$ formed by the triangles $abe$, $aed$, $ded'$,
 $b'e'd'$ and $a'b'd'$. On the other hand $\Delta$ has to contain $b'$ as $F$ is
 triangulated as a cone with apex $b'$. We conclude that $\Delta \cap \partial B$ is
 not a disk as $A$ does not contain a disk containing $b'$ and four vertices
 of $\Delta$ together. (The union of triangles $b'e'd'$ and $a'b'd'$ would be a disk
 containing four vertices but they are not vertices of a tetrahedron in $F$.)

\bigskip

Now let us assume that $\Delta$ is a tetrahedron in the lower wall. 

If we are in the splitter case, the incoming case, or the outgoing case
for the attachment complex, then the attachment complex is chosen in such a way
that all vertices are necessarily on $\partial B$.  Therefore, $\Delta$ is inside a cube $C$ which meets $\partial H$ in two squares and all vertices of $C$ belong
also to $\partial B$. This implies that the intersection of $\Delta$ with $\partial
B$ is disconnected. (In the outgoing case, we were changing the triangulation
of the lower wall so that $\Delta$ is in some triangular prism $P$. But this prism
$P$ is still in some cube $C$ of the original decomposition into cubes, thus
the reasoning above applies as well.)

In the blocker case, we have to be a bit more careful because the attachment
complex contains vertices $d_+$ and $d_-$ in the interior, and thus these
vertices need not be on $\partial B$. The tetrahedron $\Delta$ may be either inside
the polytope $P_6$ (from the blocker case) or inside some other cube of the lower wall. In the latter case, we conclude that $\Delta \cap
\partial B$ cannot be a disk in the same way as in the previous cases. Thus it
remains to consider the case that $\Delta$ is inside $P_6$. This means that
$\Delta$
contains $p$ as $P_6$ is triangulated as a cone with apex $p$. If $\Delta$ contains
at least two vertices from the front side of the lower wall (that is, vertices
among $b_+$, $b_-$, $d_+$, $d_-$, $e_+$ and $e_-$), then it meets the back side
in a vertex or an edge which is a component of $\Delta \cap \partial B$. Thus
$\Delta
\cap \partial B$ is not a disk. If $\Delta$ contains one vertex from the front side,
then this vertex cannot be $d_+$ or $d_-$ as $P_6$ is triangulated as a cone
with apex $p$ while there is no triangle on $\partial P_6$ containing $d_+$ or
$d_-$ and two vertices from the back side. Thus $\Delta \cap \partial B$ contains an
isolated vertex in the front side of the lower wall so it cannot be a disk.
Finally, $\Delta$ has to contain at least one vertex from each side (it contains $p$
from the back side while the whole tetrahedron would not fit into
$2$-dimensional back side). 
\end{proof}

\subsection{Thick turbine}
\label{ss:thick_turbine}

\paragraph{Thickening to a polytopal complex.}

Now we want to thicken the (thin) turbine from Subsection~\ref{ss:thin_turbine} to a polytopal complex. In order to
avoid any ambiguity, we emphasize that we thicken each blade as well
as the central triangle separately. Only after the thickening we will identify
some of the cubes to merge everything together.

We first thicken the blades as in Figure~\ref{f:thick_blade}, left 
(compare with Figure~\ref{f:blade_orthogonal}). 
When
compared with the 1-house, this time, the dimensions are fixed; that is, the
complex at the moment is the subcomplex of a cube subdivided into $9^3$ smaller
cubes (One of the cubes is bicolored in the picture. This will be used later on
when describing certain $J_i$ while it should be ignored now.) Then we further modify the decomposition as in
Figure~\ref{f:thick_blade}, right (for the moment, the names of some
distinguished vertices should be again ignored). Namely, we remove some of the cubes in the
roof and we replace them with either rectangular cuboids or some other prisms
as in the figure. The rectangular cuboid in the $i$th thick blade 
with three rectangles not attached to anything else will be denoted $F_i$ and
will play a similar role as the cube $F$ in case of the thick $1$-house.

\begin{figure}
  \begin{center}
    \includegraphics[page=9]{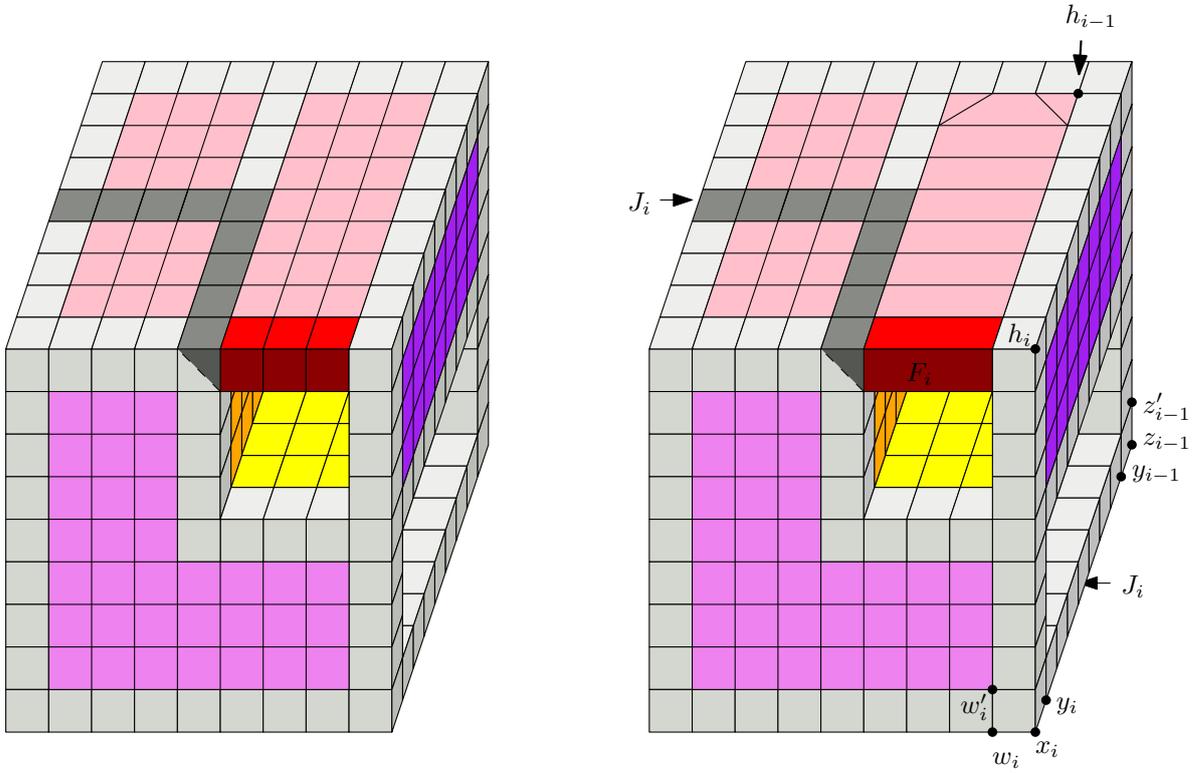}
  \end{center}
  \caption{A thick blade in orthogonal scheme}
  \label{f:thick_blade}
\end{figure}

Now we thicken the central triangle; see Figure~\ref{f:thick_triangle} (compare
with Figure~\ref{f:turbine_triangle}). First,
we build a $2$-dimensional complex as in the left picture which corresponds to
a thickening of the central triangle in $2$-space. We denote the vertices
$w_1, \dots, z_1, w_2, \dots, z_3$ as in the left picture. 
Then we take the product of this 2-complex with
the interval obtaining a $3$-dimensional complex as in the right picture. This
means that the thick central triangle is decomposed into prisms over
triangles or quadrilaterals. We think of the earlier $2$-complex as the bottom
side of the $3$-dimensional thick central triangle. On the top side we
denote vertices $w'_1, \dots, z'_1, w'_2, \dots, z'_3$ just above $w_1, \dots,
z_1, w_2, \dots, z_3$ (only a few of these vertices are marked in the right
picture).

\begin{figure}
  \begin{center}
    \includegraphics[page=10]{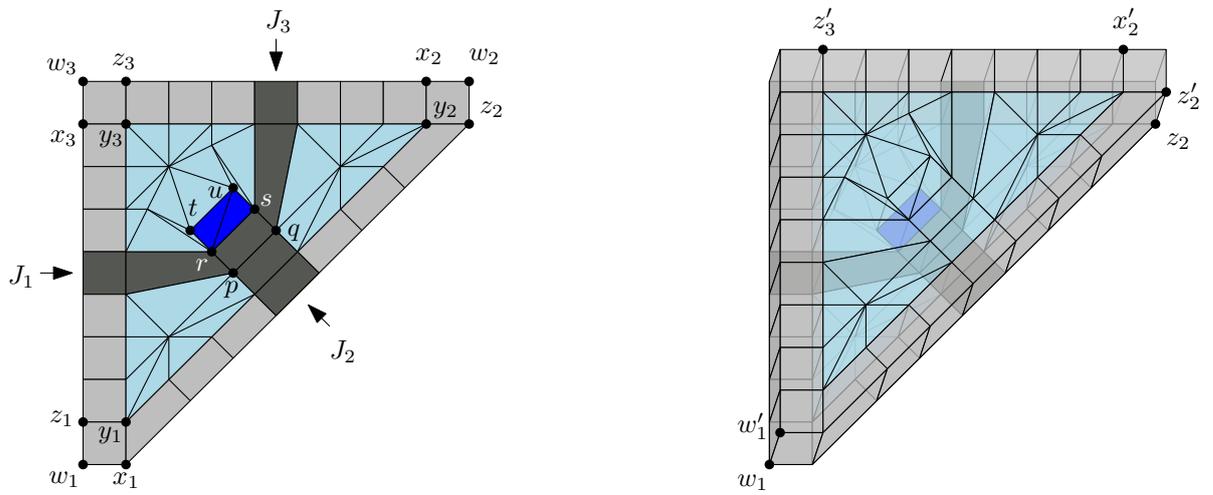}
  \end{center}
  \caption{Thickening the central triangle.}
  \label{f:thick_triangle}
\end{figure}

Now we merge the thick blades and the thick central triangle. Namely, we merge
the bottom right row (with respect to Figure~\ref{f:thick_blade}) of nine cubes
of the $i$th blade with the row of nine cubes in the thick central triangles
connecting the cubes $w_ix_iy_iz_iw'_ix'_iy'_iz'_i$ and
$w_{i-1}x_{i-1}y_{i-1}z_{i-1}w'_{i-1}x'_{i-1}y'_{i-1}z'_{i-1}$ where the
indices are considered modulo 3. Some of the merged vertices are depicted in
Figure~\ref{f:thick_blade}, right. This uniquely determines how are the two
rows of cubes merged. Note that the cube $w_ix_iy_iz_iw'_ix'_iy'_iz'_i$ belongs
to two thick blades, namely to the $i$th one and the $(i+1)$st (again modulo
$3$). We also merge the pillars above this cube in the $i$th thick blade and
the $(i+1)$st one.

\paragraph{Triangulation.} Now we aim to describe a triangulation of our
polytopal thick turbine. Our aims are similar as in the case of thick 1-house;
we want to avoid free faces in the triangulation with the
exception of the three cuboids $F_i$; for these cuboids, we still want some
control. Therefore, it should not be surprising that the description of the
triangulation will be very similar to the case of thick 1-house.
We again provide total order on the vertices of the union of our polytopes. 
Then we use the corresponding canonical triangulation with respect to this
order. 

For setting up the order, we need a small piece of notation: In the $i$th blade
we denote some of the vertices $a_i, \dots, h_i, a'_i, \dots h'_i$ is in
Figure~\ref{f:F_neighborhood_blade}, left; for the moment, the diagonals should
be ignored. For comparison with more global
picture, $h_i$ and $h_{i-1}$ are also marked in Figure~\ref{f:thick_blade}.
Now, we set up the following rules for our total order (essentially the same
as in the case of the thick 1-house).

\begin{figure}
  \begin{center}
    \includegraphics[page=24]{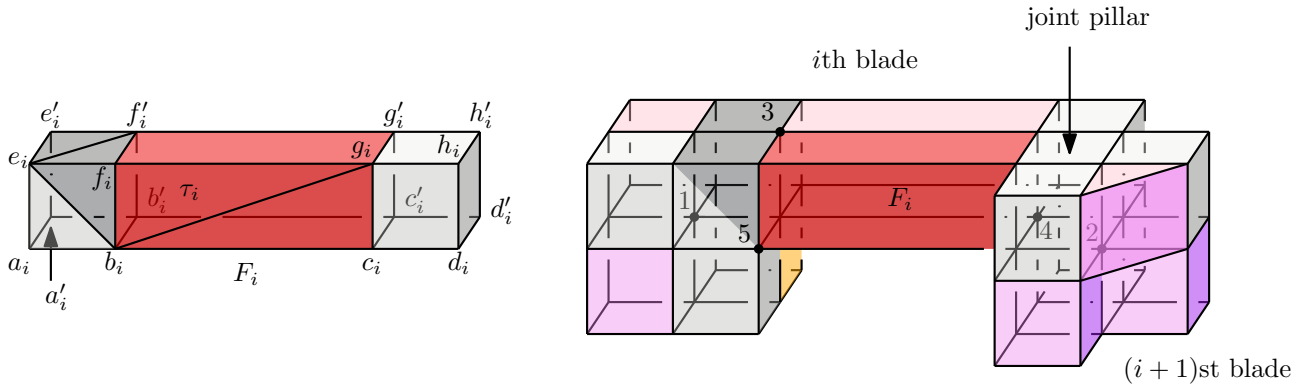}
    \caption{Left: Names of vertices of $F_i$ and the two cubes sharing a
    square with $F_i$ and also some important diagonals. Right: A slightly larger neighborhood of $F_i$ including
    some cubes of the $(i+1)$st blade. The numbers indicate the relative order
    of the first five vertices in this neighborhood with respect to our total
    order.}
    \label{f:F_neighborhood_blade}
  \end{center}
\end{figure}

\begin{enumerate}[(R1)]
  \item
The first fifteen vertices in our order are the vertices $a'_1, a'_2, a'_3,
    d_1, d_2, d_3, f'_1, f'_2, f'_3, c'_1, c'_2, c'_3, b_1, b_2$ and $b_3$ in
    this order. See Figure~\ref{f:F_neighborhood_blade}, right, for a relative
    order of these vertices in a neighborhood of $F_i$ (only this relative
    order will really matter). 
    \item
    Then we put in an arbitrary order those
vertices that are a component of intersection of some cube with the boundary of
the thick 1-house. 
    \item
    Next, among the remaining ones, we put in an arbitrary
order those that are a in an edge which is a component of intersection of some
cube with the boundary of the thick 1-house. 
    \item 
    Then we put the remaining
vertices in an arbitrary order. 
\end{enumerate}

Note that $b_ig_i$, $b_ie_i$ and $e_if_i'$ are diagonals in this triangulation. 
In particular, $b_ig_if_i$ is a triangle in this triangulation and we denote it
$\tau_i$; see Figure~\ref{f:F_neighborhood_blade}, left.

\paragraph{The attachment complex and the shelling complexes.} Similarly as in
the case of thick 1-house, we want to describe some auxiliary complexes on the
boundary of thick turbine. There will be one \emph{attachment complex} along
which the turbine will be glued to other gadgets. On the other hand, there will
be more \emph{shelling complexes}---for satisfiable formulas they will somehow
correspond to a choice of a satisfied literal in a clause.

Let $p, q, r, s, t, u$ be the six vertices on the bottom side of the thick
central triangle as in Figure~\ref{f:thick_triangle}, left. By $J_i$ we denote
the subcomplex of the boundary consisting of the triangle $b_if_ie_i$, the subdivided
square $e_if_if_i'e_i'$ and a collection of subdivided quadrilaterals (usually
squares) connecting the square $e_if_if_i'e_i'$ and the quadrilateral $pqsr$ as
in Figures~\ref{f:thick_blade} and~\ref{f:thick_triangle}. Part of this
collection is invisible in Figure~\ref{f:thick_blade}; here we believe that
Figure~\ref{f:turbine_orthogonal} clarifies which squares are chosen. The
quadrilateral $pqsr$ is also included in $J_i$. Altogether, $J_i$ consists of
$9 + 9 + 9 + 2$ subdivided quadrilaterals and one triangle. Next we set $J = J_1 \cup J_2
\cup J_3$ and the attachment complex consists of $J$, the subdivided
quadrilateral $rsut$ and the triangles $\tau_i$. See Figure~\ref{f:J_turbine}
for the attachment complex after reshaping (this shape will be useful
later on).

\begin{figure}
  \begin{center}
    \includegraphics[page=25]{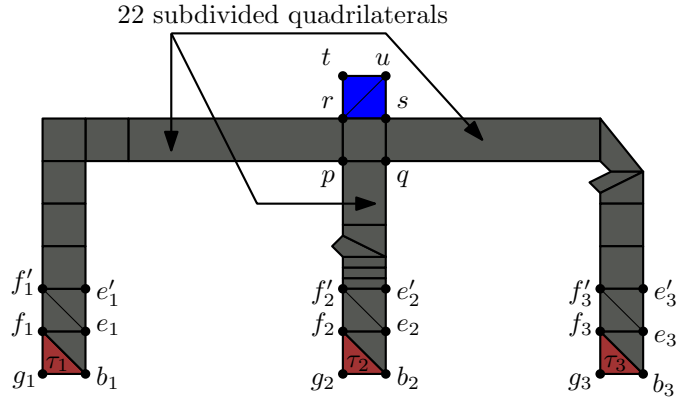}
    \caption{The attachment complex of a turbine.
    Some of the polygons are significantly reshaped in order to get this shape.}
    \label{f:J_turbine}
  \end{center}
\end{figure}

There are seven possible shelling complexes. There are obtained by removing
one, two or three pairs of vertices from the attachment complex among $\{b_1,
g_1\}, \{b_2, g_2\}$ and $\{b_3, g_3\}$.

\begin{lemma}
  \label{l:shelling_turbine}
  Assume that $K$ is a pure $3$-complex. Assume that $K = T \cup L$ where
  $T$ is the thick turbine and $T \cap L$ is one of seven possible shelling
  complexes of $T$. Then $K$ shells to $L$.
\end{lemma}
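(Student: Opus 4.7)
The plan is to mimic the strategy used in Lemma~\ref{l:shelling_house} by repeatedly applying Lemma~\ref{l:shell_canonical} to shell the thick turbine polytope by polytope, while following at a macro level the order of collapses of the thin turbine carried out in the proof of Lemma~\ref{l:thin_turbine}. Let $I\subseteq\{1,2,3\}$ be the nonempty set of indices $i$ such that the pair $\{b_i,g_i\}$ has been removed from the attachment complex to form the shelling complex $T\cap L$; thus for $i\in I$ the triangle $\tau_i=b_ig_if_i$ and the portion of $J_i$ near $\{b_i,g_i\}$ are absent from $L$, while for $j\notin I$ these faces belong to $L$ and must be preserved throughout.

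First, for each $i\in I$, I would shell the cuboid $F_i$ using Lemma~\ref{l:shell_canonical}: the intersection of $F_i$ with the current complex is a disk, essentially by the same vertex-order analysis performed for the cube $F$ in Lemma~\ref{l:shelling_house}, because the fifteen distinguished vertices in rule (R1) were chosen so that in a neighbourhood of each $F_i$ the first five relevant vertices play the same role as vertices $1,\dots,5$ played for $F$. Then I would shell the three cuboids adjacent to $F_i$ inside blade $i$ (the analogues of $F_\to$, $F_\searrow$, $F_\nearrow$) and continue through the rest of the cuboids of blade $i$, exactly as in the roof/wall/floor sweep from the proof of Lemma~\ref{l:shelling_house}.

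Once all blades indexed by $I$ are gone, I would shell the portion of the thick central triangle lying between the rays $m_{i-1}c$ and $m_ic$ for each $i\in I$ (avoiding the column above $e$), prism by prism via Lemma~\ref{l:shell_canonical}. This mirrors precisely the central-triangle collapses in the proof of Lemma~\ref{l:thin_turbine}. For each remaining blade $j\notin I$, I would then shell its cuboids starting from the base (the face merged with the central triangle) and moving outward, so that each cuboid meets the current complex in a disk formed by the still-present squares of $J_j$ on one side and already-shelled neighbours on the other. Finally I would clean up the column of cubes above $e$ and any leftover central-triangle prisms.

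The main obstacle lies in the last two operations: for $j\notin I$, we must shell blade $j$ all the way up to its cuboid $F_j$ without disturbing the triangle $\tau_j$ or the squares of $J_j$ that are in $L$. This forces us to approach blade $j$ from the central-triangle side rather than from the ``free'' end, and we must verify that some order of the prisms exists in which each one meets the intermediate complex in a disk. As in Lemma~\ref{l:shelling_house}, this reduces to a local case analysis at each prism (depending on which faces are on $\partial T$, which are attached to $L$, and which are attached to already-shelled prisms), and in every case the canonical triangulation together with the position of the apex chosen by our total order guarantees that the intersection is a disk; extendable shellability of the auxiliary $2$-dimensional ``front'' subcomplex then justifies that a suitable global order actually exists.
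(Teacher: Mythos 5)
Your overall skeleton---shelling the thick turbine polytope by polytope via Lemma~\ref{l:shell_canonical}, following the collapses of the thin turbine from Lemma~\ref{l:thin_turbine} at the macro level---is the same strategy as the paper's. However, the step you yourself flag as the main obstacle is precisely where your argument has a genuine gap, and the justification you offer for it is not sound. What must be checked for Lemma~\ref{l:shell_canonical} is that the polytope meets the \emph{still-present} part of the complex (i.e.\ $L$ together with the not-yet-shelled polytopes) in a disk; your phrase ``a disk formed by the still-present squares of $J_j$ on one side and already-shelled neighbours on the other'' is backwards, since shelled neighbours contribute nothing to that intersection. Moreover, neither ``the position of the apex chosen by our total order'' nor extendable shellability of an auxiliary front complex can certify this disk condition: the condition is a purely polytopal-level statement about which boundary $2$-faces lie in $L$ and which neighbours remain (the apex only enters the internal proof of Lemma~\ref{l:shell_canonical}), and in Lemma~\ref{l:shelling_house} extendable shellability is invoked only for the prisms meeting the shelling complex at most $1$-dimensionally, never for those meeting it in $2$-faces. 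Note also that even for $i\in I$ only the faces containing $b_i,g_i$ leave the shelling complex, so the cubes next to $F_i$ carrying the square $e_if_if_i'e_i'$ and the rest of the $J_i$-strip still meet $L$ two-dimensionally; removing them right after $F_i$ and sweeping ``as in Lemma~\ref{l:shelling_house}'', as in your first step, and likewise shelling a non-free blade from the centre out to $F_j$ while preserving $J_j$ and $\tau_j$, are exactly the delicate verifications, and they are asserted rather than carried out.

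The paper closes this gap with a cleaner organization: it defines $\Upsilon$ as the collection of polytopes whose intersection with the chosen shelling complex is $2$-dimensional (the whole $J$-strip, the central prisms over $pqsr$, $rsu$, $tru$ and $rsut$, and $F_j$ for every retained pair $\{b_j,g_j\}$), shells everything \emph{outside} $\Upsilon$ first in the order dictated by the thin-turbine collapses (so the non-free blades are emptied in this phase rather than approached afterwards from the centre, and the $J$-carrying neighbours of $F_i$ are deferred), and only then removes the $\Upsilon$-polytopes in the direction from $F_1,F_2,F_3$ towards the centre, ending with the three central prisms; in that order every intersection is manifestly a small disk of two or three quadrilaterals. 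Your alternative order may well be repairable, but the disk checks at the $\Upsilon$-type polytopes are the entire content of the lemma, and your proposal leaves them unproved while resting on the incorrect heuristics above.
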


\begin{proof}
  The proof is very similar to the proof of Lemma~\ref{l:shelling_house}. We
  first consider the decomposition of the thick turbine into cubes, cuboids
  and triangular prisms as in Figures~\ref{f:thick_blade}, right
  and~\ref{f:thick_triangle}, right. We will shell the tetrahedra contained in
  these polytopes repeatedly using Lemma~\ref{l:shell_canonical} on these
  polytopes, one by one. Let $\Upsilon$ be the collection of 
  polytopes in our polytopal decomposition such that their intersection with
  the shelling complex is $2$-dimensional. First we intend to shell tetrahedra
  contained in prisms outside $\Upsilon$. Second, we intend to shell tetrahedra
  contained in prisms in $\Upsilon$.

  For the first step, we follow the
  collapses from the proof of collapsibility of the thin turbine; that is, from
  the proof of Lemma~\ref{l:thin_turbine}. See also
  Figures~\ref{f:collapsing_blade_1} and~\ref{f:collapsing_blade_3}. Note,
  due to our definition of the shelling complex, that at least one of the
  cuboids $F_1$, $F_2$ or $F_3$ is not in $\Upsilon$. Similarly
  as in the proof of Lemma~\ref{l:thin_turbine}, we can assume without loss of
  generality that $F_1$ does not belong to $\Upsilon$. In this case, the order
  of the first few polytopes for the application of
  Lemma~\ref{l:shell_canonical} is depicted in Figure~\ref{f:turbine_removals}.
  This corresponds to the first collapsing step in
  Figure~\ref{f:collapsing_blade_1} and a small part of the second one
  (removing the (subdivided) edge through which the collapse starts). The way
  how do we follow other collapses is analogous. Regarding $F_2$ and $F_3$ we
  remove them on the way if they do not belong to $\Upsilon$.

\begin{figure}
  \begin{center}
    \includegraphics[page=52]{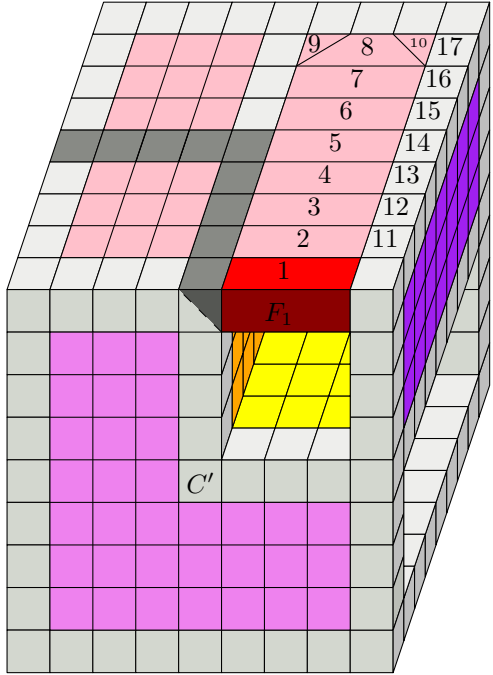}
    \caption{First few shellings of the thick turbine using
    Lemma~\ref{l:shell_canonical}. (The cube denoted $C'$ is not related to the
    proof of Lemma~\ref{l:shelling_turbine} but it is used in a proof of
    Lemma~\ref{l:turbine_blocked}.)}
    \label{f:turbine_removals}
  \end{center}
\end{figure}
  
  For the second step, when only the prisms from $\Upsilon$ remain, we again
  apply Lemma~\ref{l:shell_canonical} repeatedly. We first remove the prisms in
  directions from $F_1$, $F_2$ and $F_3$ and then the remaining three prisms
  meeting containing the quadrilateral $pqsr$, or the triangle $rsu$ or the 
  triangle
  $tru$ (see Figure~\ref{f:thick_triangle} for notation).

\end{proof}

\begin{lemma}
\label{l:turbine_blocked}
Assume that $B$ is a triangulated $3$-ball. Assume that $T \subseteq B$ where
  $T$ is a thick turbine. Assume that every face of $\partial T$ which is
  not in the attachment complex is also a face of $\partial B$. 
  In addition, assume that none of the triangles $\tau_1, \tau_2, \tau_3$ of $T$ is in $\partial B$. Then there is no free
  tetrahedron of $B$ contained in $T$.
\end{lemma}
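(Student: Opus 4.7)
The plan is to mimic the case analysis used in the proof of Lemma~\ref{l:1house_blocked}, adapted to the polytopal decomposition of the thick turbine described in Subsection~\ref{ss:thick_turbine}. Let $\Delta$ be a tetrahedron contained in $T$, sitting inside some polytope $C$ of our polytopal decomposition (a cube, rectangular cuboid, or prism from a blade or the thick central triangle). The goal is to show $\Delta\cap\partial B$ is not a disk. Because every face of $\partial T$ outside the attachment complex lies in $\partial B$, we have $\Delta\cap\partial B\subseteq C\cap\partial T$, so we only need to analyse the shape of $C\cap\partial T$.

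First I would dispose of all cubes $C$ that are not adjacent to any of the cuboids $F_i$ and do not lie in the thick central triangle. For these cubes, $C\cap\partial T$ has exactly the same local combinatorial shapes as the cubes appearing in the thick 1-house: two opposite squares (interior of a $2$-face); two squares sharing an edge plus a disjoint edge (interior of an edge where two $2$-faces meet at a right angle); a square together with two edges (edge where three $2$-faces meet); two or more components with at least one isolated vertex (thickening of a vertex where several $2$-faces meet perpendicularly, possibly with a $3\pi/2$ angle). The arguments from Lemma~\ref{l:1house_blocked} (cases two through five) apply verbatim, using rules (R2)--(R4) of the canonical triangulation to force $C$ to be triangulated as a cone with a suitable apex forcing $\Delta\cap\partial B$ to be either disconnected or to contain an isolated vertex.

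Next I would handle the cuboids $F_i$ and their immediate neighbours. This is where the triangles $\tau_i\notin\partial B$ come into play: for the cuboid $F_i$ itself, the canonical triangulation (determined by rule (R1) and the vertex ordering in Figure~\ref{f:F_neighborhood_blade}) behaves just like $F$ in the thick 1-house, so removing $\tau_i$ from $\partial B$ leaves $\Delta\cap\partial B$ inside a union of five triangles that does not contain any disk with the requisite four vertices and the cone apex (playing the role of $b'$ in the 1-house argument). The two cubes that share a square with $F_i$ (the analogs of $F_\to$ and the cube above $F$ in the 1-house proof) are triangulated in a very similar way because the first $15$ vertices of the total order sit exactly around each $F_i$; the same five-case analysis applies, with~$\tau_i$ absent from $\partial B$ again being crucial to exclude the one potentially problematic configuration.

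Finally I would handle the prisms coming from the thick central triangle. These are prisms over triangles or quadrilaterals; all their vertices lie on $\partial T$ and, because none of them meets the attachment complex in a $2$-dimensional piece involving internal vertices, all their vertices lie on $\partial B$. The intersection $C\cap\partial T$ in each such prism consists either of two parallel faces (top and bottom), or of two parallel faces together with one or two side faces, and the exact same disconnection arguments as in Lemma~\ref{l:1house_blocked} show that $\Delta\cap\partial B$ cannot be a disk; the cones at the boundary edges between blades and the central triangle are handled by the same edge-of-thickening case already used above. The main obstacle I expect is the bookkeeping around each cuboid $F_i$: one has to verify that the triangulation rule (R1) really does make $F_i$ and its two neighbouring cubes share an apex structure analogous to that of $F$ in the 1-house, and that the only triangles of $F_i$ that could in principle form a disk on $\partial B$ together with the apex are precisely those meeting $\tau_i$, which is ruled out by hypothesis.
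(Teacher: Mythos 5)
Your overall strategy---work polytope by polytope in the decomposition of \S~\ref{ss:thick_turbine}, use the canonical (cone) triangulation to force every tetrahedron to contain a distinguished apex, and show the intersection with $\partial B$ is disconnected or otherwise not a disk, with $\tau_i\notin\partial B$ doing the work at $F_i$---is the right kind of argument and close in spirit to what the paper does. The paper, however, organizes it much more economically: instead of replaying the five-case analysis of Lemma~\ref{l:1house_blocked}, it isolates a single observation---every polytope $Q$ of the thick turbine is combinatorially a cube or a triangular prism, so every tetrahedron of its canonical triangulation contains the minimal-label vertex $w$ of $Q$ \emph{and} at least one vertex of any chosen facet $G$; hence if $w$ and all vertices of $G$ lie in different components of $Q\cap\partial B$, no tetrahedron of $Q$ is free. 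This is applied uniformly to almost all polytopes (including the non-cube prisms), the bricks $F_i$ are checked directly, and exactly one exceptional prism $C'$ (Figure~\ref{f:turbine_removals}) is treated by hand: there $\partial B\cap C'$ is a square with a pendant edge together with two edges sharing a vertex, no facet $G$ as above exists when the minimal vertex lies in the big component, and the intersection of a tetrahedron with $\partial B$ can then be \emph{connected} (a triangle with a pendant edge) and must be argued not to be a disk.

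As written, your proposal has two concrete soft spots. First, the claim that cases two through five of Lemma~\ref{l:1house_blocked} apply ``verbatim'' outside the $F_i$'s and the central triangle is too strong: the turbine decomposition is not made of cubes only (the modified roof polytopes near each $F_i$ and the prisms over triangles/quadrilaterals in the thick central triangle, Figures~\ref{f:thick_blade} and~\ref{f:thick_triangle}), and the merging of the blades with the central triangle produces the configuration of $C'$ above. Your analogue of the fifth 1-house case does cover that combinatorial shape, but a complete proof has to locate it and verify that no further new local configurations arise; your partition into ``generic cubes / neighbours of $F_i$ / central-triangle prisms'' does not obviously account for all of them. Second, the statement that no central-triangle prism meets the attachment complex two-dimensionally is false: the prisms carrying $pqsr$ and $rsut$, and more generally the polytopes along the strip $J$ (which runs along the blades into the central triangle), do meet it in subdivided quadrilaterals. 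This is repairable---every vertex of $J$ also lies on boundary faces outside the attachment complex, hence on $\partial B$, so the ``two disjoint pieces'' argument (the analogue of the lower-wall paragraph in the proof of Lemma~\ref{l:1house_blocked}) still goes through---but the justification you give is not the correct one. Finally, a minor point: in the 1-house proof the hypothesis $\tau\notin\partial B$ is used only for $F$ itself, and in the paper's turbine proof the hypothesis on the $\tau_i$ enters only in the check of the bricks $F_i$; you should verify whether the neighbouring cubes really need it, as your sketch asserts.
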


\begin{proof} 
  As one easily checks the bricks $F_i$ do not contain any free tetrahedra.
  
  The key for checking the the tetrahedra in remaining prisms is the  following 
  observation: If $Q$ is a prism in the turbine
  and there is a face $G$ of $Q$, whose all vertices lie in different
  components of $Q\cap \partial B$ than the vertex $w$ with the smallest
  label in $Q$, then there is no free tetrahedron in $Q$. Indeed, in such a
  case any such tetrahedron contains $w$ and also some vertex of $G$ as
  any prism in the construction of the thick turbine is combinatorially equivalent
  to a triangular prism or to a cube. Therefore, the intersection of the
  tetrahedron with $\partial B$ is not connected.

  In particular, there are no free tetrahedra among the prisms containing the
  vertices 1,2,3,4,5.  In fact, the observation can be applied to almost all
  prisms; see Figures~\ref{f:F_neighborhood_blade}, \ref{f:thick_blade} and \ref{f:thick_triangle}. The only possible exception is the prism
  $C'$ of Figure~\ref{f:turbine_removals}, where $\partial B\cap C'$ consists
  of two components: a square $S$ with an attached edge $e$ and two edges
  $f_1,f_2$ sharing a vertex.  Moreover, this is an exception if and only if
  the minimal vertex lies in $S\cup e$, in which case some tetrahedra of $C'$
  intersect $\partial B$ in two components, and the remaining ones in $T\cup
  e$, where $T$ is a triangle inside $S$. In particular, the intersection is
  not a disk in either case.  
\end{proof}

\subsection{Conjunction cone}
\label{ss:conjunction_cone}

Now we describe a complex which will serve as a gadget for conjunction. It will
depend on a parameter $k$ which will indicate the number of `incoming
gadgets'.

First we take a (convex) $(k+1)$-gon $P$ with edges denoted $\varepsilon_0, \dots,
\varepsilon_k$ in clockwise order. By $w_+$ we denote the vertex shared by
$\varepsilon_0$ and $\varepsilon_1$ and by $w_-$ we denote the vertex shared by
$\varepsilon_0$ and $\varepsilon_k$. We triangulate $P$ as a cone with apex
$w_-$. For using our construction later on, it is
convenient to think of $\varepsilon_0$ as a vertical edge on the left side
and of the vertices of $P$ on a circle centered in the midpoint of
$\varepsilon_0$; see Figure~\ref{f:conjunction_cone}, left. The
\emph{conjunction cone} is then obtained as a cone with apex $d$ over $P$
triangulated as above; see Figure~\ref{f:conjunction_cone}, right. For any $i
\in \{1, \dots, k-1\}$, the unique tetrahedron containing the edge
$\varepsilon_i$ (and $d$ and $w_-$) will be denoted $\Delta_i$.

\begin{figure}
  \begin{center}
    \includegraphics[page=50]{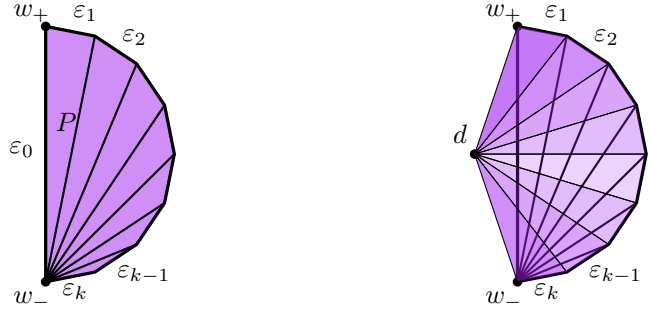}
    \caption{Left: The polygon $P$. Right: The conjunction cone.}
    \label{f:conjunction_cone}
  \end{center}
\end{figure}

The conjunction cone will be capable to perform a conjunction in the following
sense. It will be glued to the other gadgets so that all the edges containing
$d$ will be on the boundary. The triangle $d\varepsilon_0$ will be blocked (by
some other tetrahedron) and then we require that all triangles $d\varepsilon_i$ for
$i \in \{1, \dots, k\}$ have to be `unblocked' before shelling $\Delta_1$. This performs a `conjunction' on these
triangles. The precise statement is given in the following lemma.

\begin{lemma}
  \label{l:conjunction_cone}
Assume that $B$ is a triangulated $3$-ball. Assume that $C \subseteq B$ where
  $C$ is a conjunction cone (with notation as above). Assume that all the edges
  of $C$ containing $d$ belong to $\partial B$. Assume that the none of the
  triangles
  $d\varepsilon_i$ belongs to $\partial B$ for $i \in \{0, \dots, k\}$. Let
  $\Delta'_i$ be the (unique) tetrahedron of $B$ not belonging to $C$ containing
  $d\varepsilon_i$ for $i \in \{0, \dots, k\}$. 
  In any shelling down of $B$ (to a tetrahedron) before
  shelling the tetrahedron $\Delta_1$ either $\Delta'_0$ has to be shelled
  or all the tetrahedra $\Delta'_i$ for $i \in \{1, \dots, k\}$ have to be shelled. 
\end{lemma}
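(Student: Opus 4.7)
The plan is to combine Lemmas~\ref{l:all_balls} and~\ref{l:free_facet} with the cone structure $\Delta_i = d * (w_- p_i p_{i+1})$ by passing to the link of $d$. Label the vertices of $P$ in clockwise order as $w_- = p_0$, $w_+ = p_1$, $p_2, \dots, p_k$, so that $\varepsilon_i = p_i p_{i+1}$ for $i < k$ and $\varepsilon_k = p_k p_0$; the cone triangulation of $P$ from $w_-$ then gives $\Delta_i = d p_0 p_i p_{i+1}$ for $i \in \{1, \dots, k-1\}$. Fix any shelling down of $B$ to a tetrahedron, and let $B'$ be the intermediate ball just before $\Delta_1$ is shelled. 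By Lemma~\ref{l:all_balls}, $B'$ is a PL $3$-ball, and by Lemma~\ref{l:free_facet}, $\Delta_1 \cap \partial B'$ is a $2$-disk. I will track this condition through $D' := \lk(d, B')$, which is itself a $2$-disk since $d \in \partial B$.

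The main step is a link-of-$d$ translation of freeness. Writing $\Delta_1 = d * \tau$ with $\tau = p_0 p_1 p_2$, we have $\lk(d, \Delta_1 \cap \partial B') = \tau \cap \partial D'$; since $\Delta_1 \cap \partial B'$ is a triangulated $2$-disk with $d$ as a vertex, this link must be an arc or a circle. The hypothesis that every edge $dp_j$ of $C$ lies in $\partial B$ forces each $p_j$ into $\partial D \subseteq \partial D'$, so $\tau \cap \partial D'$ contains all three vertices of $\tau$; being a $1$-dimensional subcomplex of a triangle that contains all three of its vertices and is an arc or a circle, it must contain at least two edges of $\tau$. An edge $p_a p_b$ of $\tau$ lies in $\partial D'$ iff the triangle $d p_a p_b$ lies in $\partial B'$, iff the unique tetrahedron of $B$ sharing this triangle other than $\Delta_1$ has been shelled before $\Delta_1$. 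The three partners are: $\Delta'_0$ for $p_0 p_1 = \varepsilon_0$; $\Delta'_1$ for $p_1 p_2 = \varepsilon_1$; and, for $p_0 p_2$, either $\Delta_2$ when $k \geq 3$ (then $p_0 p_2$ is a diagonal of $P$, shared inside $C$ only with $\Delta_2$, so $d p_0 p_2$ is never on $\partial B$) or $\Delta'_k = \Delta'_2$ when $k = 2$ (then $p_0 p_2 = \varepsilon_k$). Thus at least two of these three partner tetrahedra are shelled before $\Delta_1$. If $\Delta'_0$ is one of them, the first alternative holds; for $k = 2$, if $\Delta'_0$ is not, then necessarily $\Delta'_1$ and $\Delta'_2$ have been shelled, which is the second alternative.

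The remaining case is $k \geq 3$ with $\Delta'_0$ not yet shelled, so both $\Delta'_1$ and $\Delta_2$ are shelled strictly before $\Delta_1$. I close the argument by a descending iteration: I claim that for each $i \in \{2, \dots, k-1\}$, $\Delta_i$ is shelled strictly before $\Delta_{i-1}$, and at the same time $\Delta'_i$ is shelled before $\Delta_i$, and finally $\Delta'_k$ is shelled before $\Delta_{k-1}$. Grant this chain and all of $\Delta'_1, \Delta'_2, \dots, \Delta'_k$ are shelled before $\Delta_1$, establishing the second alternative. Inductively, assume the claim up to $\Delta_i$; applying the same link-of-$d$ analysis to $\Delta_i = d * (p_0 p_i p_{i+1})$ at the moment it is shelled, the triangle $p_0 p_i p_{i+1}$ must have at least two of its edges in the current boundary. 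The edge $p_0 p_i$ is shared inside $C$ only with $\Delta_{i-1}$, which by the induction hypothesis has not been shelled yet, so $p_0 p_i$ is not in the current boundary; the remaining two edges $p_i p_{i+1} = \varepsilon_i$ and $p_0 p_{i+1}$ must therefore be, forcing respectively $\Delta'_i$ and either $\Delta_{i+1}$ (when $i \leq k-2$) or $\Delta'_k$ (when $i = k-1$, since then $p_0 p_{i+1} = \varepsilon_k$) to be shelled before $\Delta_i$. This is the induction step, and after iterating from $i = 2$ up to $i = k-1$ the chain is complete. The main subtlety is the link-of-$d$ reduction converting the $3$-dimensional freeness of $\Delta_1$ into the cleaner $2$-dimensional condition on $\tau \cap \partial D'$; the rest is a direct combinatorial cascade along the cone.
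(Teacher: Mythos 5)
Your proof is correct and follows essentially the same route as the paper's: the paper likewise passes to the intermediate ball just before each $\Delta_i$ is shelled (via Lemmas~\ref{l:all_balls} and~\ref{l:free_facet}) and invokes Observation~\ref{o:tetrahedron} --- that freeness together with the three edges at $d$ lying in the boundary forces at least two of the three triangles at $d$ into the boundary --- which is exactly what your link-of-$d$ computation establishes, before running the same descending cascade along the fan of tetrahedra $\Delta_i$. Your explicit treatment of the degenerate case $k=2$ (where $p_0p_2=\varepsilon_k$ and the partner is $\Delta'_k$ rather than $\Delta_2$) is a small point the paper glosses over, but otherwise the arguments coincide.
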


The lemma will be proved by a repeated application of the following
observation.

\begin{observation}
  \label{o:tetrahedron}
  Assume that $B''$ is a triangulated $3$-ball and $\Delta''$ is a tetrahedron of $B$
  with vertices $a$, $b$, $c$ and $d$. Assume that the edges $ad$, $bd$ and
  $cd$ belong to $\partial B''$. On the other hand, assume that 
  at least two of the triangles $abd$, $acd$, $bcd$ are 
  not contained in $\partial B''$.
  Then $\Delta''$ is not
  free.
\end{observation}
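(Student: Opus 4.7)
The plan is to analyze the simplicial structure of $L := \Delta'' \cap \partial B''$ and show it cannot be a $2$-ball, which is what freeness requires in dimension $3$. Concretely, $L$ is a subcomplex of the boundary of the tetrahedron $\Delta''$, namely the union of those proper faces of $\Delta''$ that belong to $\partial B''$.

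By hypothesis the three edges $ad$, $bd$, $cd$ all lie in $L$, so the vertex $d$ is in $L$ and the link $\lk(d, L)$ contains the three vertices $a$, $b$, $c$. Without loss of generality assume that the triangles $abd$ and $acd$ are not in $\partial B''$; these triangles are therefore not in $L$, which means the edges $ab$ and $ac$ are not in $\lk(d, L)$. Hence $\lk(d, L)$ is a $1$-complex on the vertex set $\{a,b,c\}$ whose only possible edge is $bc$, the latter being present iff the triangle $bcd$ lies in $\partial B''$.

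The key step is then the standard fact that in a triangulated $2$-ball the link of any vertex is either a path (if the vertex is on the boundary of the ball) or a circle (if it is in the interior). In our situation $\lk(d, L)$ is either three isolated vertices or the disjoint union of an edge and an isolated vertex; in both cases it is neither a path nor a circle. Consequently $L$ cannot be a $2$-ball, so $\Delta''$ is not free, which proves the observation.

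The only subtle point to double-check is the identification of the topological intersection $\Delta'' \cap \partial B''$ with the simplicial subcomplex $L$ described above; this is routine since $\partial B''$ is a subcomplex of $B''$ and $\Delta''$ is a face of $B''$, so the two notions of intersection coincide. Everything else is a short link computation, so this should be the shortest of the ``blocking'' lemmas in this section.
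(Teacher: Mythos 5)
Your proof is correct: freeness means $\Delta''\cap\partial B''$ is a $2$-ball, and your computation of $\lk(d,L)$ (three isolated vertices, or one edge plus an isolated vertex, hence neither an arc nor a circle) rules this out, using the standard fact that vertex links in a triangulated $2$-ball are arcs or circles. The paper gets the same contradiction by an even shorter route: with the same choice of the two missing triangles $abd$ and $acd$, the edge $ad$ lies in $\partial B''$ but is contained in no triangle of $\Delta''$ lying in $\partial B''$, so the intersection has a dangling edge and fails to be pure $2$-dimensional, hence cannot be a $2$-ball. So the two arguments locate the obstruction at different places (you at the vertex $d$ via its link, the paper at the edge $ad$ via purity) and invoke different but equally elementary facts about triangulated disks; the paper's version avoids the link lemma altogether, while yours has the mild advantage of being the kind of link argument that generalizes routinely to higher dimensions. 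Your closing remark about identifying the topological intersection with the subcomplex generated by the faces of $\Delta''$ lying in $\partial B''$ is indeed routine and matches how the paper uses the definition of a free facet.
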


\begin{proof}
  Assume, without loss of generality, that $abd$ and $acd$ are not contained in
  $\partial B''$. Then $ad$ is an edge in $\partial B''$ which is not contained in any
  triangle of $\Delta''$ in $\partial B''$. Thus $\Delta''$ cannot meet $\partial B$
  in a $2$-ball.
\end{proof}

\begin{proof}[Proof of Lemma~\ref{l:conjunction_cone}]
Let us consider a shelling down of $B$ and let $B'_1$ be a $3$-ball obtained
  exactly one step before removing $\Delta_1$. (Note that $B'_1$ is a ball by
  Lemma~\ref{l:all_balls}.) If $B'_1$ does not contain
  $\Delta'_0$, then we are in one of the conclusions of the lemma. Thus it
  remains to assume that $B'_1$ contains $\Delta'_0$ and we want to deduce
  that no $\Delta'_i$ for $i \in \{1, \dots, k\}$ is contained in $B'_1$. 

  Now we apply Observation~\ref{o:tetrahedron} with ball $B'_1$ and tetrahedron
  $\Delta_1$ ($d$ remains the same). The edges of $\Delta_1$ containing $d$
  belong to $\partial B'_1$ because they belong to $\partial B$. Thus this assumption of the observation is
  satisfied. We also know that $\Delta_1$ is free in $B'_1$ as it is just about to be
  shelled. Therefore at most one of the three triangles 
  of $\Delta_1$ containing $d$ is not contained in $\partial
  B'_1$. This must be the triangle $d\varepsilon_0$ (as we assume that $B'_1$
  contains $\Delta'_0$). We conclude that the tetrahedra $\Delta'_1$ and
  $\Delta_2$ are shelled before $\Delta_1$.

  Next, let $B'_2$ be a $3$-ball obtained exactly one step before removing
  $\Delta_2$. We apply Observation~\ref{o:tetrahedron} with ball $B'_2$ and the
  tetrahedron $\Delta_2$. The edges of $\Delta_2$ containing $d$ belong to
  $B'_2$ because they belong to $B$. We also know that $\Delta_2$ is free in
  $B'_2$ as it is just about to be shelled. Therefore at most one of the three
  triangles of $\Delta_2$ containing $d$ is not contained in $\partial
  B'_2$. This must be the triangle shared by $\Delta_1$ and $\Delta_2$ as
  $\Delta_2$ is shelled before $\Delta_1$. We conclude that the tetrahedra
  $\Delta'_2$ and
  $\Delta_3$ are shelled before $\Delta_2$.
  
By repeating the argument above inductively, we obtain that $\Delta'_i$ and
  $\Delta_{i+1}$ are shelled before $\Delta_i$ for $i \in \{1, \dots, k-2\}$.
  In the final step, we obtain that $\Delta'_{k-1}$ and $\Delta'_k$ are shelled
  before $\Delta_{k-1}$. This proves the lemma.
\end{proof}

\section{The construction for shellability}
\label{s:shellability_construction}

In this section we assume that $\phi$ is an instance of the planar
monotone rectilinear 3SAT and we aim to build a triangulated $3$-ball $\cK_\phi$
such that $\cK_\phi$ is shellable if and only if $\phi$ is satisfiable. It will
follow immediately from the construction that $\cK_\phi$ is a
$3$-pseudomanifold and the $\cK_\phi$ can be built in polynomial time. We will
separately check in the following sections that $\cK_\phi$ is a $3$-ball and the equivalence `$\cK_\phi$
is shellable if and only if $\phi$ is satisfiable.' This will prove
Theorem~\ref{t:main}.

Because the construction of $\cK_\phi$ is quite complex, we collect
important objects used in the construction in Table~\ref{t:symbols}.

\begin{table}
\begin{tabular}{llll}
  symbol & introduced & in gadgets & remarks \\

  \hline
\hline
  $\aa_\ell$ & \S~\ref{ss:thick_variable} & $\cV_\vx$, $\cS_\ell$ & a vertex of
  the variable gadget \\
  $\aa_{\ell, \kappa}$ & \S~\ref{ss:thick_splitter} & $\cS_\ell$, $\cI_{\ell,
  \kappa}$ & a vertex of the splitter house \\
  $\bb_\ell$ & \S~\ref{ss:thick_variable} & $\cV_\vx$, $\cS_\ell$, $\cB_i$ & a vertex of the variable gadget \\
  $\bb_{\ell, \kappa}$ & \S~\ref{ss:thick_splitter} & $\cS_\ell$, $\cI_{\ell,
  \kappa}$ & a vertex of the splitter house \\
  $\cc_\ell$ & \S~\ref{ss:thick_variable} & $\cV_\vx$, $\cB_i$ & a vertex of the variable gadget \\
  $\cc_{\ell, \kappa}$ & \S~\ref{ss:thick_splitter} & $\cS_\ell$, $\cI_{\ell,
  \kappa}$, $\cT_0$ & a vertex of the splitter house \\
  $\dd_\ell$ & \S~\ref{ss:thick_variable} & $\cV_\vx$, $\cS_\ell$, $\cB_i$,
  $\cT_0$ & a vertex of the variable gadget \\
  $\dd_{\ell, \kappa}$ & \S~\ref{ss:thick_splitter} & $\cS_\ell$, $\cI_{\ell,
  \kappa}$, $\cT_0$ & a vertex of the splitter house \\
  $\dd$ & \S~\ref{ss:conjunction_gadget} & $\cA, \cB_n$, all $\cO_\kappa$ & the
  apex of the conjunction cone \\
  $\ee_\ell$ & \S~\ref{ss:thick_splitter} & $\cS_\ell$, $\cB_i$,
  $\cT_0$ & a vertex of the splitter house \\
  $\rr_\ell$ & \S~\ref{ss:template} & $\cS_\ell, \cT_0$ & $\rr_\vx$ and $\rr_{\neg \vx}$ are
  points in $\RR_\vx$ \\
  $\rr_\kappa$ & \S~\ref{ss:clause_gadget} & $\cC_\kappa, \cO_\kappa, \cT_0$ &
  a vertex of the clause gadget \\
  $\ss_\kappa$ & \S~\ref{ss:clause_gadget} & $\cC_\kappa, \cO_\kappa, \cT_0$ &
  a vertex of the clause gadget \\
  $\ttt_\kappa$ & \S~\ref{ss:clause_gadget} & $\cC_\kappa, \cO_\kappa$ &
  a vertex of the clause gadget \\
  $\uu_\kappa$ & \S~\ref{ss:clause_gadget} & $\cC_\kappa, \cO_\kappa$ &
  a vertex of the clause gadget \\
  $\vv_\vx$ & \S~\ref{ss:template} & none & the auxiliary midpoint of $\RR_\vx$
  (not a vertex of $\cK_\phi$) \\
  $\vv_\kappa$ & \S~\ref{ss:template} & none & an auxiliary point in
  $\RR_\kappa$ (not a vertex of $\cK_\phi)$ \\
  $\vv_\pm$ & \S~\ref{ss:template} & $\cA$, $\cT_0$, some $\cO_\kappa$ & rightmost point
  of the $x$-axis in the template\\
  $\ww_{\ell, \kappa}$ & \S~\ref{ss:thick_incoming} & $\cI_{\ell,  \kappa}$,
  $\cC_\kappa$, $\cT_0$ & a vertex of the incoming house \\
  $\ww_+$, $\ww_-$ & \S~\ref{ss:outgoing_house} & $\cA$, $\cB_n$, $\cT_0$ some
  $\cO_\kappa$ & vertices of $\PP$ \\
  $\xx_{\ell, \kappa}$ & \S~\ref{ss:thick_incoming} & $\cI_{\ell,  \kappa}$,
  $\cC_\kappa$, & a vertex of the incoming house \\
  $\yy_{\ell, \kappa}$ & \S~\ref{ss:thick_incoming} & $\cI_{\ell,  \kappa}$,
  $\cC_\kappa$, & a vertex of the incoming house \\
  $\zz_{\ell, \kappa}$ & \S~\ref{ss:thick_incoming} & $\cI_{\ell,  \kappa}$,
  $\cC_\kappa$, $\cT_0$ & a vertex of the incoming house \\
  $\ggamma_{\ell,\kappa}$ &\S~\ref{ss:template} & none & an auxiliary curve in
  the template \\
  $\ddelta_{\kappa}$ &\S~\ref{ss:template} & none & an auxiliary curve in
  the template \\
  $\vvarepsilon_\pm$ &\S~\ref{ss:template} & none & an auxiliary segment in
  the template \\
  $\PP$ & \S~\ref{ss:outgoing_house} & $\cA$, $\cT_0$ & a polygon in the
  template to which we glue $\cA$ \\
  $\PP_i$ & \S~\ref{ss:blocker_house} & $\cB_i$ & a distinguished 
  polytope inside $\cB_i$ \\	
  $\RR_\vx$ & \S~\ref{ss:pmr3sat} & none & rectangle in the template
  representing $\vx$\\
  $\RR_\kappa$ & \S~\ref{ss:pmr3sat} & none & rectangle in the template
  representing $\kappa$\\
  $\cA$ & \S~\ref{ss:conjunction_gadget} & & the conjunction gadget \\
  $\cB_i$ & \S~\ref{ss:blocker_house} & & the blocker house \\
  $\cC_\kappa$ & \S~\ref{ss:clause_gadget} & & the clause gadget \\
  $\cI_{\ell, \kappa}$ & \S~\ref{ss:thick_incoming} & & the incoming house \\
  $\cO_{\kappa}$ & \S~\ref{ss:outgoing_house} & & the outgoing house \\
  $\cS_\ell$ & \S~\ref{ss:thick_splitter} & & the splitter house \\
  $\cT_0$ & \S~\ref{ss:template_gadget} & & the template gadget \\
  $\cV_\vx$ & \S~\ref{ss:thick_variable} & & the variable gadget \\
\end{tabular}
  \caption{Symbols used throughout the construction of $\cK_\phi$. Here $\vx$ is a variable,
  $\ell$ is a literal which is either $\vx$ or $\neg \vx$ and $\kappa$ is a
  clause. If applicable, $\ell \in \kappa$. Remarks for vertices usually 
  emphasize only the gadget where the vertex was introduced.}
\label{t:symbols}
\end{table}

\subsection{The template gadget}
\label{ss:template_gadget}

We start with the template as in Subsection~\ref{ss:template} and its
bounding box and we use the notation 
$\vv_{\vx}, \vv_{\kappa}, \vv_\pm, \rr_{\ell}, \RR_{\vx}, \RR_{\kappa}, \ggamma_{\ell,
\kappa}, \ddelta_{\kappa}$ and $\vvarepsilon_\pm$ for a variable $\vx$, a literal
$\ell$ and a clause $\kappa$ in the same way as in
Subsections~\ref{ss:pmr3sat} and~\ref{ss:template}.\footnote{On the other hand, the notation used
beyond Subsection~\ref{ss:template} will be completely redefined.} (See
also Figure~\ref{f:template}.)

We will build a certain triangulation $T$ of the bounding box
(which will be gradually specified during the whole construction). Then we
thicken this by taking the product with the interval $T \times I$. By this we
mean, that we get $T \times I$ as a polytopal complex where each polytope is a
prism over a triangle. We triangulate these prisms so that we consider the
vertices of $T \times I$ in an arbitrary order and take the canonical
triangulation with respect to this order. This way we obtain the \emph{template
gadget} $\cT$. Note that $\cT$ is a triangulated $3$-ball. 

We use the same conventions as in Subsection~\ref{ss:template} regarding
positioning the template gadget. If we consider $I$ as interval $[0,1]$, then
$\cT$ is in between the planes $z=0$ and $z=1$. Considering the negative
direction of $z$ as the direction `in front of', the front side of the template
is $\cT \times \{0\}$ and we will denote it $\cT_0$. Note that $\cT_0$ is
a triangulated disk isomorphic to $T$ mentioned earlier. We will glue other
gadgets only to the front side $\cT_0$. 

\subsection{The variable gadget}
\label{ss:thick_variable}
Now for every variable $\vx$ we take a copy of the triangular prism
and we denote
it $\cV_{\vx}$. This will be our \emph{variable gadget}. We will use the new
notation for some of the vertices of the variable gadget. This notation is set
up to $a' \to \aa_{\neg \vx}$, $b' \to \bb_{\neg \vx}$, $c' \to \cc_{\neg \vx}$, $d'
\to \dd_{\neg \vx}$, $a \to \aa_{\vx}$, $b \to \bb_{\vx}$, $c \to \cc_{\vx}$ and $d \to
\dd_{\vx}$ where $y \to \yy$ stands for replacing a vertex $y$ of the diamond prism
with a vertex $\yy$ of the variable gadget. We do not introduce a new notation
for the vertex $e$ of the triangular prism as we will not use it anymore.
See Figure~\ref{f:variable_gadget}, left for the variable gadget
in current notation.

\begin{figure}
  \begin{center}
    \includegraphics[page=38]{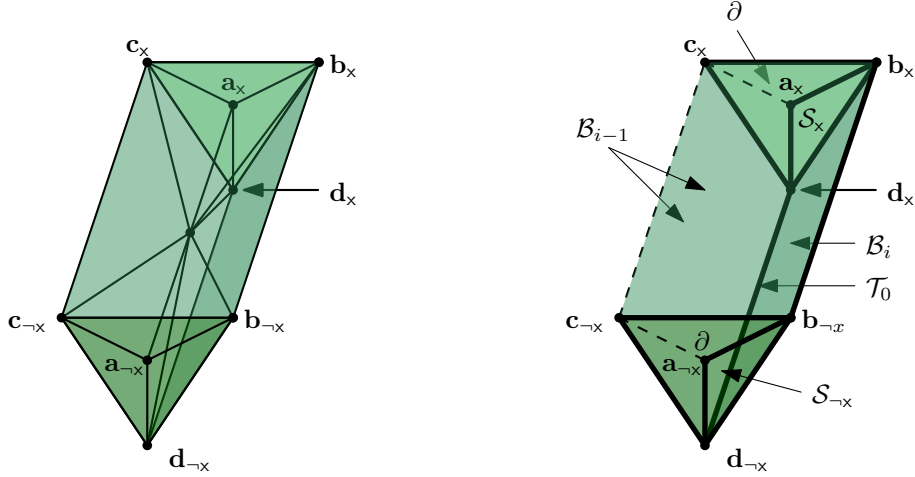}
  \end{center}
  \caption{Left: The notation for the vertices of the variable gadget. Right:
  Intended identifications of the variable gadget with other gadgets. At the
  moment, other gadgets, except $\cT_0$ are not introduced; however this
  picture will become useful when introducing the other gadgets. Dashed lines
  indicate that the neighboring pieces belong to the same gadget. The symbol
  $\partial$ stands for the piece of the boundary of the variable gadget that
  will be also a boundary of $\cK_\phi$.}
  \label{f:variable_gadget}
\end{figure}

We also recall that in the three rectangles of the variable gadget we are
flexible how to choose the diagonals of these rectangles which influences the
final construction of the variable gadget. We do not specify this now but we
will do so later on when gluing the variable gadget to other gadgets along
these rectangles.

Now we glue the variable gadget to the template, namely to $\cT_0$. We glue
$\cV_{\vx}$ to $\cT_0$ along the segment $\dd_{\vx} \dd_{\neg \vx}$; it remains to
position this segment in $\cT_0$. We place $\dd_{\vx}$ so that it is an interior
point of the segment $\vv_{\vx}\rr_{\vx}$ (of the template) and symmetrically
$\dd_{\neg \vx}$ is an interior point of $\vv_{\vx}\rr_{\neg \vx}$ (intuitively
both $\dd_{\vx}$
and $\dd_{\neg \vx}$ are sufficiently close to $\vv_{\vx}$). Note that the point
$\vv_{\vx}$ serves as an auxiliary point in the template but it will not be a part
of the triangulation of $\cT_0$.

\subsection{The splitter house}
\label{ss:thick_splitter}

Now for every literal $\ell$ we add a \emph{splitter house} $\cS_\ell$ to our
construction. Namely, $\cS_\ell$ is a copy of the thick 1-house triangulated according to the splitter case with the number of branches (parameter $k$) equal to number
of paths $\ggamma_{\ell,\kappa}$ emanating from $\rr_\ell$. Now we explain
how to glue $\cS_\ell$ to the current stage of our construction.

First we equip some of the vertices of $\cS_\ell$ with new names. These are the names
that will be used globally throughout the whole construction, and some of the
vertices may be already used. If we use a previously
used name, then we mean to identify two vertices with the same name.
Namely we set up
$d' \to \aa_\ell$, $e' \to \bb_\ell$, $e \to \dd_\ell$, $g' \to \ee_\ell$, $b
\to \rr_\ell$,
$a_i \to \aa_{\ell, \kappa_i}$, $b_i \to \bb_{\ell, \kappa_i}$, $c_i \to
\cc_{\ell, \kappa_i}$, $d_i
\to \dd_{\ell, \kappa_i}$ where $y \to \yy$ stands for renaming the vertex $y$ in the notation used
for the definition of the thick $1$-house with vertex $\yy$ in the notation of the construction
(see Figures~\ref{f:F_triangulate}
and~\ref{f:splitter_wall} for the old notation and
Figure~\ref{f:splitter_attachment}, left, for the attachment complex of $\cS_\ell$ in the
new notation). For cases such as $a_i
\to \aa_{\ell, \kappa_i}$ we temporarily denote the paths $\ggamma_{\ell,\kappa}$
emanating from $\rr_\ell$
from left to right as $\ggamma_{\ell,\kappa_1}, \dots \ggamma_{\ell,
\kappa_k}$.
We emphasize that
$\aa_\ell, \bb_\ell$ and $\dd_\ell$ are already known vertices of the variable
gadget and $\rr_\ell$ is an auxiliary point on $\cT_0$ while $\ee_\ell,
\aa_{\ell, \kappa_i},  \bb_{\ell, \kappa_i}, \cc_{\ell, \kappa_i}$ and
$\dd_{\ell, \kappa_i}$ are
completely new vertices. Finally, regarding the notation, some of the names we
use only locally in this subsection and we do not need a global notation for
them; this regards vertices $g, h, f, c$ (here the notation agrees in both
cases) and also the rectangles $R_1$ and $R_2$ in the original notation on
$\cS_\ell$.

\begin{figure}
  \begin{center}
    \includegraphics[page=43, scale=.95]{gadgets}
    \caption{The attachment complex of $\cS_\ell$ (the picture is mirrored when
    compared with Figure~\ref{f:splitter_wall}). Left: The notation of
    vertices.  Right: Intended identifications with other gadgets; only $\cT_0$ and
    $\cV_\ell$
    have been introduced so far.}
    \label{f:splitter_attachment}
  \end{center}
\end{figure}

Now we prepare $\cT_0$ a little bit for gluing; see Figure~\ref{f:gluing_splitter}
(for $\ell = \vx$) while following this preparation.

\begin{figure}
  \begin{center}
    \includegraphics[page=19]{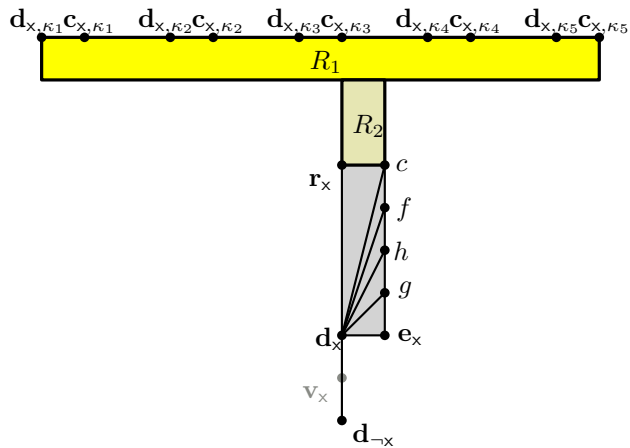}
    \caption{Gluing the splitter house to $\cT_0$ in case $\ell = \vx$. (The
    point $\vv_{\vx}$ is
    displayed in gray as it is part of $\cT_0$ but not a vertex of the
    triangulation we build.)}
    \label{f:gluing_splitter}
  \end{center}
\end{figure}

We draw an axis aligned rectangle in $\cT_0$ with vertices $\dd_{\ell},
\ee_{\ell}, c, \rr_\ell$ right of the segment $\dd_\ell \rr_\ell$. The
rectangle is sufficiently thin so that it does not intersect anything else. 
Then we further subdivide the edge $\ee_\ell c$ into a path $\ee_\ell g h f c$
of length four. Now we triangulate our rectangle as a cone with apex $\dd_\ell$ using also all vertices of the path $\ee_\ell g h f c$.

Next we draw an axis aligned copy of the rectangle $R_2$ (from the definition
of the attachment complex in the splitter case) above $\rr_\ell c$ if $\ell$ is
a positive literal or below $\rr_\ell c$ if $\ell$ is a negative literal. Then
we draw an axis aligned copy of the rectangle $R_1$ (again from the definition
of the attachment complex in the splitter case) above $R_2$ if $\ell$ is
positive and below $R_2$ if $\ell$ is negative. We require that $R_1$ and $R_2$
meet in the same way as in the attachment complex. We triangulate $R_1$ and
$R_2$ in the same way as they are triangulated in the attachment complex. 

All triangulations that we have just described (including the triangulation of
the rectangle $\dd_\ell \ee_\ell c \rr_\ell$) will be part of the triangulation
of $\cT_0$.

Now, we identify vertices of $\cS_\ell$ with the same named vertices of the
previous construction (on $\cT_0$ or the variable gadget). The triangulation
described on $\cT_0$ exactly matches the relevant part of $\cS_\ell$. This also
introduces vertices $\cc_{\ell,\kappa_i}$ and $\dd_{\ell, \kappa_i}$ on $\cT_0$ as in
Figure~\ref{f:gluing_splitter}.

Considering the attachment complex of $\cS_\ell$, at the moment, the triangle
$\tau$ (in the notation of thick 1-house) is identified with the triangle
$\aa_\ell \bb_\ell \dd_\ell$ of the variable gadget (compare
Figures~\ref{f:variable_gadget} and~\ref{f:splitter_attachment}, right). The triangle of the attachment
complex next to $\tau$ ($\bb_\ell \dd_\ell \ee_\ell$ in the notation of the
construction) as well as the subdivided squares $\aa_{\ell, \kappa_i}
\bb_{\ell, \kappa_i}
\cc_{\ell, \kappa_i} \dd_{\ell, \kappa_i}$ in the lower wall are not attached to any part of the previous construction. This will be completed
later on so that $\cS_\ell$ will meet the rest of the construction indeed in the
attachment complex.

Considering the shelling complexes of $\cS_\ell$, note that the first shelling
complex is the induced
subcomplex of the attachment complex induced by all vertices of the attachment
complex except $\aa_\ell$. In other words, it is the part of the attachment
complex which is attached to the gadgets $\cB_i$, $\cT_0$, $\cI_{\ell, \kappa_i}$
(sometimes not yet introduced) as in Figure~\ref{f:splitter_attachment}, right.
The second shelling complex is only the part of the attachment complex which is
attached to $\cT_0$ and $\cI_{\ell, c_i}$.

\subsection{The incoming house}
\label{ss:thick_incoming}

For every curve $\ggamma_{\ell,\kappa}$ we define one incoming house
$\cI_{\ell,\kappa}$
(possibly with multiplicities if there are more curves $\ggamma_{\ell,\kappa}$). We
take a thick $1$-house triangulated according to the incoming case where
the number of crossing annuli equals the number of crossings of our
$\ggamma_{\ell,\kappa}$ with curves $\ddelta_{\kappa'}$ (for arbitrary $\kappa'$). 

We start with renaming the vertices of $\cI_{\ell, \kappa}$ using the same
notation $y \to \yy$ as in the splitter case (or for the variable gadget). We
set up $d' \to \aa_{\ell,\kappa}, e' \to \bb_{\ell,\kappa}, g' \to
\cc_{\ell,\kappa}, e \to \dd_{\ell,\kappa}, w \to \ww_{\ell,\kappa}, x \to
\xx_{\ell,\kappa}, y \to \yy_{\ell,\kappa}$ and $z \to \zz_{\ell,\kappa}$.
(See Figures~\ref{f:F_triangulate}
and~\ref{f:incoming_wall} for the old notation and
Figure~\ref{f:incoming_attachment}, left, for the new notation.)
The vertices $\aa_{\ell,\kappa},  \bb_{\ell,\kappa}, \cc_{\ell,\kappa}$ and
$\dd_{\ell,\kappa}$ are vertices of the splitter house while $\ww_{\ell,\kappa},
\xx_{\ell,\kappa},
\yy_{\ell,\kappa}$ and $\zz_{\ell,\kappa}$ are newly introduced. Note that the diagonal
$\bb_{\ell,\kappa
} \dd_{\ell,\kappa}$ of the square $\aa_{\ell,\kappa} \bb_{\ell,\kappa}
\cc_{\ell,\kappa} \dd_{\ell,\kappa}$ exactly
matches the edge $ee'$ in the original notation of $\cI_{\ell,\kappa}$. 
We use the local notation for vertices $b, c, f, g, h$ of $\cI_{\ell,\kappa}$ as well
as for the rectangles $R_1$ and $R_2$ (in the triangulation of the lower wall
in the incoming case).

\begin{figure}
  \begin{center}
    \includegraphics[page=44, scale=.91]{gadgets}
    \caption{The attachment complex of the incoming house $\cI_{\ell,\kappa}$ (the
    picture is mirrored when compared with Figure~\ref{f:incoming_wall}. Left:
    The notation of vertices. Right: Intended identifications with other
    gadgets; only $\cT_0$ and $\cS_\ell$ have been introduced so far.}
    \label{f:incoming_attachment}
  \end{center}
\end{figure}

The choice of the notation of vertices already predetermines how $\cI_{\ell,
\kappa}$
will be glued to the splitter house $\cS_\ell$ (compare
Figures~\ref{f:splitter_attachment} and~\ref{f:incoming_attachment}, right). Therefore, it remains to explain
how to glue $\cI_{\ell,\kappa}$ to $\cT_0$; see Figure~\ref{f:gluing_incoming}.
For simplicity of the description we will assume that $\ell$ is a positive literal,
$\ell = \vx$ where $\vx$ is a variable. The case $\ell = \neg \vx$ is analogous in a
mirror symmetric fashion.

\begin{figure}
  \begin{center}
    \includegraphics[page=22]{gadgets}
    \caption{Gluing the incoming house to $\cT_0$, part inside $\RR_{\vx}$.}
    \label{f:gluing_incoming}
  \end{center}
\end{figure}

First we draw an axis aligned rectangle in $\cT_0$ with vertices
$\dd_{\vx,\kappa}\cc_{\vx,\kappa}cb$ above the edge
$\dd_{\vx,\kappa}\cc_{\vx,\kappa}$; we
subdivide the edge $\cc_{\vx,\kappa}c$ to a path $\cc_{\vx,\kappa} ghfc$; and
we triangulate the rectangle as a cone with apex $\dd_{\vx,\kappa}$. This is very
analogous to gluing the splitter house up to the notation. Then we glue the
rectangle $R_2$ above this rectangle as in Figure~\ref{f:gluing_incoming}. 

Now we aim to glue $R_1$. Here we describe how to glue individual squares of
$R_1$ (from left to right when referring to Figure~\ref{f:incoming_attachment}). The
first two squares are directly above $R_2$ while they straighten the bend
between $R_1$ and $R_2$. (Compare Figures~\ref{f:incoming_attachment} and
Figure~\ref{f:gluing_incoming}.) So far we assume that the drawing is in
sufficiently small neighborhood of $\rr_{\vx}$ so that we are still inside the
rectangle $\RR_{\vx}$. We draw the third square of $R_1$ as a parallelogram above
the second square with the `upper edge' inside the boundary of $\RR_{\vx}$ so that
$\ggamma_{\vx,\kappa}$ meets this `upper edge' in its midpoint. Then the forth square
of $R_1$ is drawn as a rectangle above $\RR_{\vx}$ which connects the boundary of
$\RR_{\vx}$ and the boundary of $\RR_\kappa$. (Part of the curve $\ggamma_{\vx,
\kappa}$ is an axis of symmetry of this rectangle.)

The remaining squares of $R_1$ will be glued inside $\RR_\kappa$; see
Figure~\ref{f:gluing_incoming_c}. The squares of $R_1$ shared with crossing
annuli are positioned as small axis-aligned squares (or rectangles) containing
the crossings of $\ggamma_{\vx,\kappa}$ with curves $\ddelta_{\kappa'}$; one square is used
for one such crossing. The edge $\ww_{\vx,\kappa}\zz_{\vx,\kappa}$ 
is positioned as a small 
horizontal segment slightly below $\vv_\kappa$ 
such that $\ggamma_{\vx,\kappa}$ meets it in the midpoint. Now the remaining squares
of $R_1$ are drawn as quadrilaterals which interconnect the already placed
squares and $\ww_{\vx,\kappa}\zz_{\vx,\kappa}$ as straightforwardly as possible (without
extra bends or detours; we can assume that the portion of $\ggamma_{\vx,\kappa}$
between the boundary of $\RR_\kappa$ and the segment
$\ww_{\vx,\kappa}\zz_{\vx,\kappa}$ is
covered by $R_1$). This finishes the description of the attachment of
$\cI_{\vx,\kappa}$ to $\cT_0$.

\begin{figure}
  \begin{center}
    \includegraphics[page=26]{gadgets}
    \caption{Gluing the incoming house to $\cT_0$, part inside $\RR_\kappa$.}
    \label{f:gluing_incoming_c}
  \end{center}
\end{figure}

We remark that the square $S$ in the 
notation of Figure~\ref{f:incoming_attachment}, left, is not yet glued to anything; it will
be used later on. In particular, the vertices $\xx_{\vx,\kappa},
\yy_{\vx,\kappa}$ 
are not part of $\cT_0$. Similarly, the remainders of crossing annuli are not
glued to anything else yet. As usual, the attachment of these objects will be
explained later on when introducing other gadgets.

For the shelling complexes, we point out that the first shelling complex is a subcomplex of the
attachment complex induced by all vertices of the attachment complex except
$\aa_{\ell,\kappa}, \bb_{\ell,\kappa}$ and the vertices of the crossing annuli outside
the rectangle $R_1$. In other words, it is the subcomplex of the attachment complex
glued to the gadgets $\cT_0$ and $\cC_\kappa$ according to
Figure~\ref{f:incoming_attachment}, right (where $\cC_\kappa$ has not been introduced
yet). The second shelling complex is the subcomplex of the attachment complex
glued to $\cT_0$.

\subsection{The clause gadget}
\label{ss:clause_gadget}

For every clause $\kappa$ we define one \emph{clause gadget} $\cC_\kappa$ which is a copy
of the thick turbine. For description of the attachment of the clause gadget to
other gadgets, we assume that $\kappa$ is positive clause, that is, $\kappa =
(\vx \vee \vy \vee \vz)$ where $\vx, \vy, \vz$ are variables. For a negative clause (with all literals
negative), the attachment is done in a mirror symmetric fashion. We recall that
it may happen that $\kappa$ is obtained from a clause with less than three
literals by duplication of variables; that is, for example, $\vx = \vy$. However in
this case, there are still three curves $\ggamma_{\vx,\kappa}$,
$\ggamma_{\vy,\kappa}$ and $\ggamma_{\vz,\kappa}$ entering the vertex
$\vv_\kappa$. With respect to our earlier slight
abuse of the notation, we distinguish these curves exactly by using $\vx, \vy,
\vz$ even in this case. We also assume that the notation is chosen so that
$\ggamma_{\vx,\kappa}$, $\ggamma_{\vy,\kappa}$ and $\ggamma_{\vz,\kappa}$ enter
$\vv_\kappa$ from left to
right as in Figure~\ref{f:template}.

Now we set up a new notation on $\cC_\kappa$; compare Figures~\ref{f:J_turbine} (old
notation) and~\ref{f:clause_attachment}, left (new notation). Namely, we set up $r \to
\rr_\kappa, s \to \ss_\kappa, t \to \ttt_\kappa, u \to \uu_\kappa, e_1 \to
\ww_{\vx,\kappa},  e_2 \to
\ww_{\vy,\kappa}, e_3 \to \ww_{\vz,\kappa}, b_1 \to \xx_{\vx,\kappa}, b_2 \to
\xx_{\vy,\kappa}, b_3 \to \xx_{\vz,\kappa}, g_1 \to \yy_{\vx,\kappa}, g_2 \to
\yy_{\vy,\kappa}, g_3 \to \yy_{\vz,\kappa}, f_1 \to \zz_{\vx,\kappa}, f_2 \to
\zz_{\vy,\kappa}, f_3 \to \zz_{\vy,\kappa}$. The vertices $\rr_\kappa,
\ss_\kappa, \ttt_\kappa, \uu_\kappa$ are newly introduced
vertices while all the other vertices are in some incoming house. This also
means that the triangles $\xx_{\vx,\kappa}\yy_{\vx,\kappa}\zz_{\vx,\kappa}$ and
$\ww_{\vx,\kappa}\xx_{\vx,\kappa}\zz_{\vx,\kappa}$ are identified in
$\cI_{\vx,\kappa}$ and $\cC_\kappa$;
compare Figures~\ref{f:incoming_attachment} and~\ref{f:clause_attachment},
right. (We also
perform analogous identifications for variables $\vy$ and $\vz$, of course.) We
also emphasize that the triangle
$\xx_{\vx,\kappa}\yy_{\vx,\kappa}\zz_{\vx,\kappa}$ corresponds to
$\tau_1$ in the definition of thick turbine. (Similarly
$\xx_{\vy,\kappa}\yy_{\vy,\kappa}\zz_{\vy,\kappa}$ and
$\xx_{\vz,\kappa}\yy_{\vz,\kappa}\zz_{\vz,\kappa}$ correspond to
$\tau_2$ and $\tau_3$.)
For
points $p$ and $q$ we use the old notation (as well as for few other points
used only in pictures).

\begin{figure}
  \begin{center}
    \includegraphics[page=27,scale=.86]{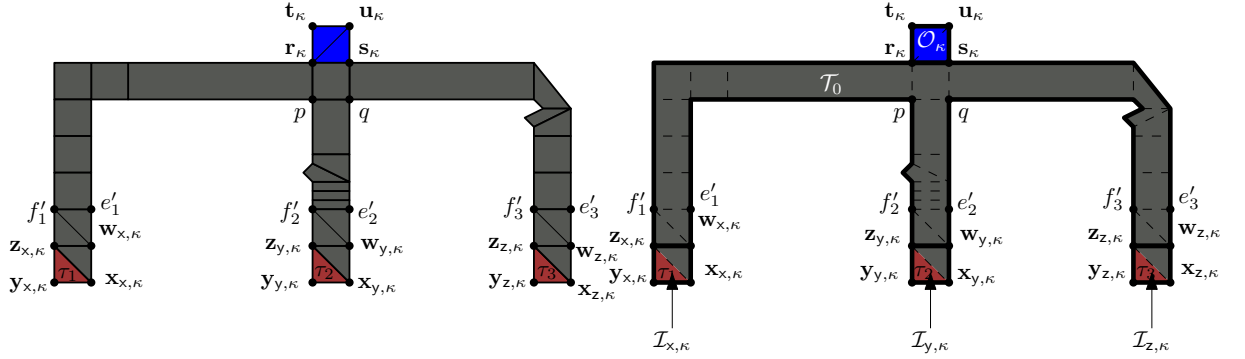}
    \caption{The attachment complex of the clause gadget $\cC_\kappa$.
    Left: The notation of vertices. Right: Intended identifications with other
    gadgets ($\cO_\kappa$ has not been introduced yet).}
    \label{f:clause_attachment}
  \end{center}
\end{figure}

Next we describe the attachment of $\cC_\kappa$ to $\cT_0$. This will be along the
subcomplex of the attachment complex in Figure~\ref{f:gluing_clause}. At
the moment only the edges $\xx_{\vx,\kappa}\yy_{\vx,\kappa}$,
$\xx_{\vy,\kappa}\yy_{\vy,\kappa}$
$\xx_{\vz,\kappa}\yy_{\vz,\kappa}$ of $\cC_\kappa$ are glued to $\cT_0$. (We can assume that all
three edges are in the same height---there was enough flexibility to position
them this way in the constructions of $\cI_{\vx, \kappa}, \cI_{\vy, \kappa}$
and $\cI_{\vz, \kappa}$.)
Next we position the square $pq\ss_\kappa\rr_\kappa$ as a small axis aligned square with
midpoint $\vv_\kappa$. (The point $\vv_\kappa$ is an auxiliary point of the template but
it is not part of the triangulation of $\cT_0$.) Then we simply glue the
remaining parts of the attachment complex (between $pq\ss_\kappa\rr_\kappa$ and
the edges $\xx_{\vx,\kappa}\yy_{\vx,\kappa}$, $\xx_{\vy,\kappa}\yy_{\vy,\kappa}$
$\xx_{\vz,\kappa}\yy_{\vz,\kappa}$) as in Figure~\ref{f:gluing_clause}, possibly again
adjusting the shape slightly (to match the distances between edges and the
right position of the square). Everything occurs inside the rectangle
$\RR_\kappa$.

\begin{figure}
  \begin{center}
    \includegraphics[page=45,scale=.86]{gadgets}
    \caption{Attaching $\cC_\kappa$ to $\cT_0$.}
    \label{f:gluing_clause}
  \end{center}
\end{figure}

This finishes the attachment of $\cC_\kappa$. The subdivided square
$\rr_\kappa\ss_\kappa\uu_\kappa\ttt_\kappa$ will be attached in the next step to the outgoing
house.

For shelling complexes, recall that there are seven possible shelling
complexes in this case. There are obtained by removing
one, two or three pairs of vertices from the attachment complex among
$\{\xx_{\vx,\kappa},\yy_{\vx,\kappa}\}, \{\xx_{\vy,\kappa},\yy_{\vy,\kappa}\}$
and $\{\xx_{\vz,\kappa},\yy_{\vz,\kappa}\}$.

\subsection{The outgoing house}
\label{ss:outgoing_house}

For every clause we define one outgoing house $\cO_\kappa$. This is the thick
1-house with the lower wall triangulated according to the outgoing case. The
number of squares $k$ will be specified during the construction. 

We perform the following renaming of the vertices: $e \to \rr_\kappa, g' \to
\ss_\kappa,
d' \to \ttt_\kappa, e' \to \uu_\kappa$. 
For the vertices $b, c, f, g, h, o, p, q$, the
rectangle $R$ and the subcomplex $J$ we keep the old notation and we use them
only locally here; see Figure~\ref{f:outgoing_attachment}, left, for the attachment
complex in the new notation (and Figures~\ref{f:F_triangulate}
and~\ref{f:outgoing_wall} for the old notation).
The first four vertices $\rr_\kappa, \ss_\kappa, \ttt_\kappa$ and $\uu_\kappa$ are
already in the clause gadget $\cC_\kappa$. In particular, the subdivided square
$\rr_\kappa\ss_\kappa\uu_\kappa\ttt_\kappa$ of $\cC_\kappa$ is identified with the two triangles of
$\cO_\kappa$; compare Figures~\ref{f:clause_attachment}
and~\ref{f:outgoing_attachment}, right. 
We also emphasize that $\tau$ in the description of thick 1-house translates
as the triangle $\rr_\kappa\uu_\kappa\ttt_\kappa$ in the new notation. 

\begin{figure}
  \begin{center}
    \includegraphics[page=46]{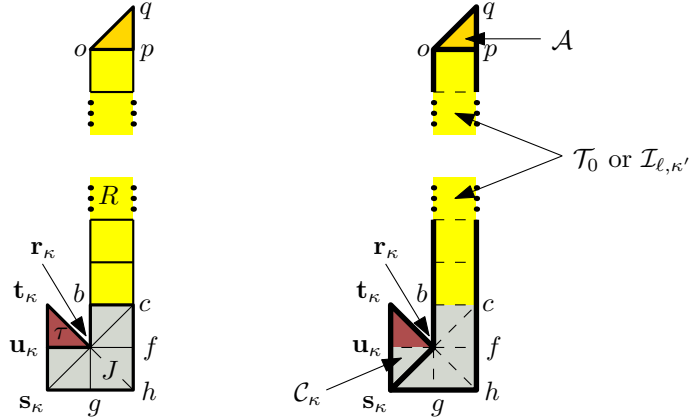}
    \caption{The attachment complex of the outgoing house $\cO_\kappa$ (the
    picture is mirrored when compared with Figure~\ref{f:outgoing_wall}. Left:
    The notation of vertices. Right: Intended identifications with other
    gadgets. In the picture we did not attempt to distinguish where this
    attachment complex is glued to $\cT_0$ and where it is glued to the
    crossing annuli of some $\cI_{\ell,\kappa'}$ (possibly to several such annuli
    for different incoming houses). The gadget $\cA$ has not been introduced
    yet.}
    \label{f:outgoing_attachment}
  \end{center}
\end{figure}

Now we start gluing the attachment complex of $\cO_\kappa$ to other gadgets (mostly
$\cT_0$). We first
glue the remaining triangles of $J$ to $\cT_0$ directly above the edge
$\rr_\kappa\ss_\kappa$ as depicted in Figure~\ref{f:J_incoming}. (This is the familiar
picture which also appeared in the cases of the splitter house and the incoming house.)

\begin{figure}
  \begin{center}
    \includegraphics[page=30]{gadgets}
    \caption{The triangles of $J$ glued to $\cT_0$ except
    $\rr_\kappa\ss_\kappa\uu_\kappa$.}
    \label{f:J_incoming}
  \end{center}
\end{figure}

Next we describe how to glue $R$. Here our description is slightly informal
referring to Figure~\ref{f:R_incoming} in order to simplify at least a small
bit an already complicated notation. We start with placing a certain polygon
$\PP$ into $\cT_0$. We draw an auxiliary small enough circle passing through
$\vv_\pm$ which is tangent to the boundary of $\cT_0$. The topmost point of
this circle is denoted $\ww_+$, the bottommost is denoted $\ww_-$. The vertices
of $\PP$ will be $\vv_\pm$, $\ww_+$, $\ww_-$ and some number of vertices on
the arcs between $\ww_+$ and $\vv_\pm$ and between $\vv_\pm$ and $\ww_-$. 
Namely we add $k_+ -1$ new vertices on the arc between $\ww_+$ and $\vv_\pm$
where $k_+$ is the number of positive clauses and $k_- -1$ new vertices on the arc between 
$\vv_\pm$ and $\ww_-$ where $k_-$ is the number of negative clauses. (In
Figure~\ref{f:R_incoming}, $\PP$ is larger than really desired for easier
visualisation.)
Note that
we can assume that both $k_+$ and $k_-$ are at least one as a formula without
positive or without negative clauses is trivially satisfiable and we can leave
out such a formula from our considerations.\footnote{Purely, formally, we can
set $\cK_\phi$ to be a single tetrahedron for such a formula, for example.}

Then, for a positive clause $\kappa$, we glue $R$, starting from segment $bc$
along the curve $\ddelta_\kappa$: The first square is above $J$ until we reach
the first crossing of $\ddelta_\kappa$ with
some $\ggamma_{\vx,\kappa'}$ (if it exists); more precisely, until we reach
$\cI_{\vx,\kappa'}$ already glued to $\cT_0$. If such a crossing occurs, we glue next
squares to the corresponding crossing annulus of $\cI_{\vx,\kappa'}$ (more precisely
to the bend formed by those squares of the annulus that are not yet glued to
$\cT_0$). Here we crucially use that in the outgoing case we can prescribe the
diagonals of $R$ as the need arises as the diagonals of the crossing annulus
are already prescribed. In Figure~\ref{f:R_incoming}, passing through the
crossing annulus is depicted by interrupting $R$. Then we again
continue up until we reach another crossing (if it exists) and we glue along
another annulus. 

As soon as we pass through all crossings we continue up close to the top
boundary of $\cT_0$ until we reach the first bend of $\ddelta_\kappa$. 
Then we turn right towards the right boundary, and then we turn right once more and glue the other end of 
$R$ to one of the edges of $\PP$ between $\vv_{\pm}$ and $\ww_+$. (After the
second bend, $\ddelta_\kappa$ may leave $R$ but it is not a problem,
$\ddelta_\kappa$ is only an auxiliary curve, not a part of construction.)

When gluing different outgoing houses simultaneously, we require that their $R$
do not cross and they meet only in at most one vertex of $\PP$ as in
Figure~\ref{f:R_incoming}. (In particular, we choose different edges of $\PP$
for different outgoing houses.)

\begin{figure}
  \begin{center}
    \includegraphics[page=29]{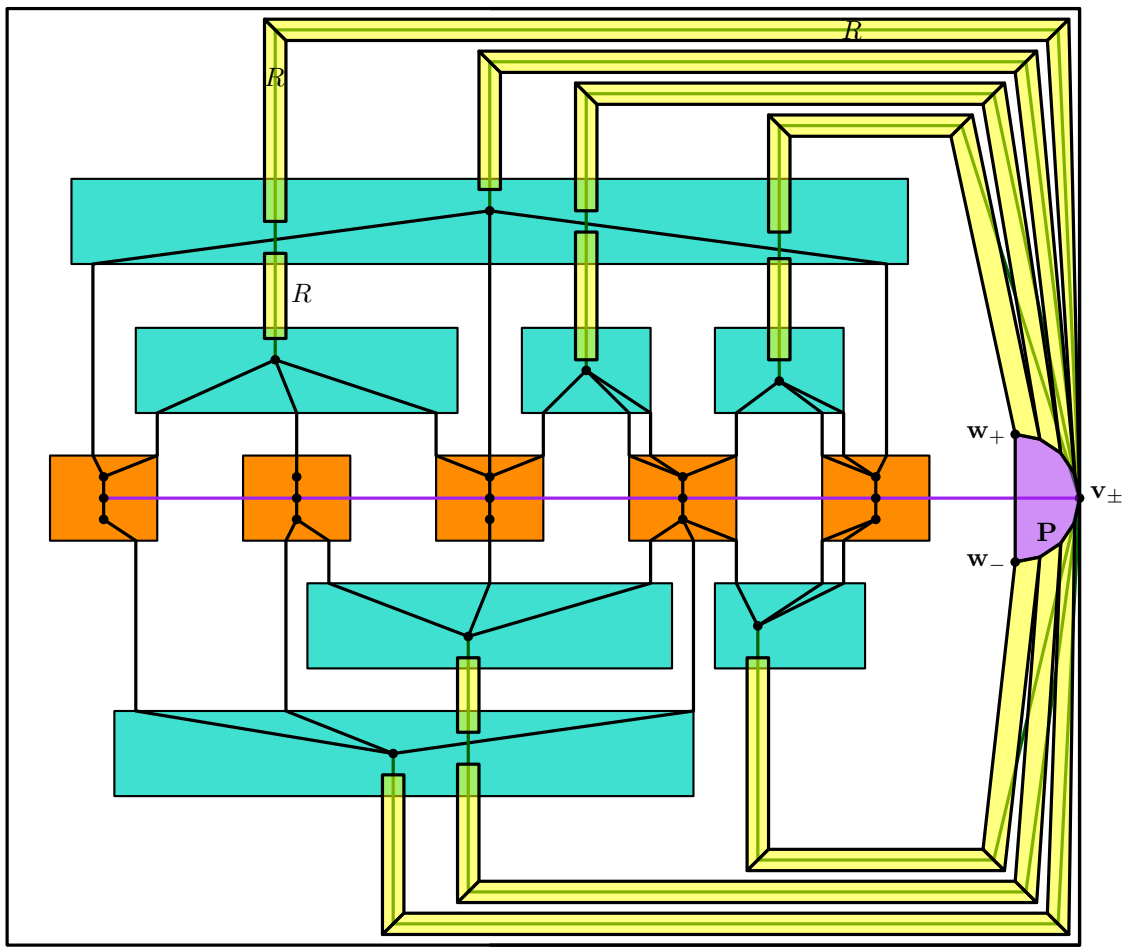}
    \caption{The outgoing houses glued to the template}
    \label{f:R_incoming}
  \end{center}
\end{figure}

If $\kappa$ is a negative clause, the construction is analogous, mirror symmetric.
This finishes gluing the outgoing house(s) to the previous gadgets. The number
of squares in $R$ is chosen so that the gluing described above is possible.
The triangle $opq$ will be glued to the conjunction gadget $\cA$ in the next step.

For the shelling complex, we point out that this is the induced subcomplex of
the attachment complex on all vertices except $\ttt_\kappa$ and $\uu_\kappa$. In other
words, it is a part of the attachment complex glued to $\cT_0$, some
$\cI_{\ell, \kappa'}$ and $\cA$ as in Figure~\ref{f:outgoing_attachment}, right.

\subsection{The conjunction gadget}
\label{ss:conjunction_gadget}

Now we define the conjunction gadget $\cA$ (we use $\cA$ for `and' as $\cC$ is
already taken). This will be the conjunction cone chosen so that we identify
the polygon $P$ of the conjunction cone with $\PP$ in the template (and thus we
triangulate $\PP$ appropriately in the template). The vertex $w_+$ in the
definition of the conjunction cone is identified with $\ww_+$ and $w_-$ is
identified with $\ww_-$. The apex of $\cA$ is denoted
$\dd$. Now each outgoing house $\cO_\kappa$ meets $\PP$ in some edge $\varepsilon$.
Recall that the attachment complex of $\cO_\kappa$ still contains a triangle
not attached to any gadget yet while it contains $\varepsilon$. (These is the triangle $opq$ of
Figure~\ref{f:outgoing_attachment} and $\varepsilon$ is the edge $op$.) 
We identify the remaining vertex (vertex $q$ of
Figure~\ref{f:outgoing_attachment})
with $\dd$. (This, in particular means that each triangle $opq$ is identified
with a triangle of $\cA$.) This finishes the construction of $\cA$.

\subsection{The blocker house}
\label{ss:blocker_house}

Finally, we describe the \emph{blocker house}. Similarly as in the
collapsibility case we temporarily assume for this construction 
that the variables are $\vx_1, \dots, \vx_n$ ordered from left to right on the
template. We will have $n+1$ blocker houses $\cB_i$ for $i \in \{0, \dots,
n\}$.

We will perform the following identifications of vertices of $\cB_i$ with
previous gadgets (see Figures~\ref{f:blocker_i}, \ref{f:blocker_0}
and~\ref{f:blocker_n}).

\begin{figure}
  \begin{center}
    \includegraphics[page=40]{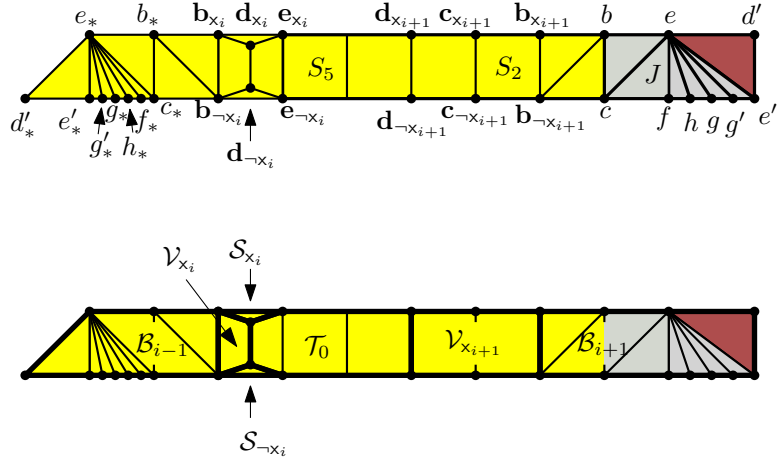}
    \caption{The attachment complex of $\cB_i$ for $1 \leq i \leq n-1$. Top:
   Notation of vertices. Bottom: Neighboring gadgets.}
    \label{f:blocker_i}
  \end{center}
\end{figure}
\begin{figure}
  \begin{center}
    \includegraphics[page=41]{gadgets}
    \caption{The attachment complex of $\cB_0$. Top:
   Notation of vertices. Bottom: Neighboring gadgets.}
    \label{f:blocker_0}
  \end{center}
\end{figure}
\begin{figure}
  \begin{center}
    \includegraphics[page=42]{gadgets}
    \caption{The attachment complex of $\cB_n$. Top:
   Notation of vertices. Bottom: Neighboring gadgets.}
    \label{f:blocker_n}
  \end{center}
\end{figure}

For $i \in \{0, \dots, n-1\}$ we set up:
$b'_+ \to \bb_{\vx_{i+1}}, b'_- \to \bb_{\neg \vx_{i+1}},
c'_+ \to \cc_{\vx_{i+1}}, c'_- \to \cc_{\neg \vx_{i+1}},
d'_+ \to \dd_{\vx_{i+1}}, d'_- \to \dd_{\neg \vx_{i+1}}.
$

For $i \in \{1, \dots, n\}$ we set up:
$
b_+ \to \bb_{\vx_i}, b_- \to \bb_{\neg \vx_i},
d_+ \to \dd_{\vx_i}, d_- \to \dd_{\neg \vx_i},
e_+ \to \ee_{\vx_i}, e_- \to \ee_{\neg \vx_i}.
$

For $i= n$ we set up $e \to \ww_+$, $e' \to \ww_-$, $d' \to \dd$.

In addition for every $i \in \{0, \dots, n-1\}$ we identify the vertex $b$ of
$\cB_i$ with the vertex $b_*$ of $\cB_{i+1}$. Similarly, we identify $c$ with
$c_*$, $d'$ with $d'_*$, $e$ with $e_*$, $e'$ with $e'_*$, $f$ with $f_*$, $g$
with $g_*$, $g'$ with $g'_*$ and $h$ with $h_*$ where the first identified
vertex always come from $\cB_i$ and the second one from $\cB_{i+1}$. 

We also recall that lower wall in the blocker case contains a distinguished
triangulated polytope $P_6$. In order to emphasize $i$, we rename it to $\PP_i$
inside $\cB_i$.

Now we explain/clarify how do we glue the remaining vertices of the attachment
complex of $\cB_i$ and what is the result of all these identifications on
$2$-faces of the attachment complex and the other gadgets. We provide this
explanation roughly in `direction' form $d'$ towards $d'_*$ (see Figures~\ref{f:blocker_i}, \ref{f:blocker_0}
and~\ref{f:blocker_n}) while discussing different cases depending on $i$.

If $i = n$, then the triangle $\ww_+\ww_-\dd$ is identified with the
corresponding triangle of the conjunction gadget. Then, still for $i=n$, the
subcomplex consisting of $J$, the subdivided squares $S_1, \dots,
S_5$ and quadrilateral $\ee_{\vx_n}\dd_{\vx_n}\dd_{\neg \vx_n}\ee_{\vx_n}$ is glued to
$\cT_0$ in between the edges $\ww_+\ww_-$ and $\dd_{\vx_n}\dd_{\neg \vx_n}$. As
the diagonals in the attachment complex are predetermined, they are chosen in
the same way in $\cT_0$. 

If $i \in \{0, \dots, n-1\}$, then the subcomplex consisting of the triangle
$d'ee'$, the subcomplex $J$ and $S_1$ is identified to a subcomplex of
attachment complex of $\cB_{i+1}$. Note that these two subcomplexes exactly
match each other according to the earlier choices of identifications of
vertices. Then, still for $i \in \{0, \dots, n-1\}$, the squares $S_2$ and
$S_3$ are identified with two subdivided squares (or rather rectangles) of the variable gadget
$\cV_{\vx_{i+1}}$; see also Figure~\ref{f:variable_gadget}, right, considering
$\vx = \vx_{i+1}$. Here we use that in the variable gadget, we are flexible to
choose the diagonals of these squares, thus the identification is possible.

Now, if $i = 0$, we glue the remainder of the complex, that is, the subdivided squares 
$S_4, \dots, S_8$ and the triangle $d'_*e'_*e_*$, to $\cT_0$ on the left of the
edge $\dd_{\vx_1}\dd_{\neg \vx_1}$ so that it does not interact with anything else.
(This finishes the explanation if $i=0$.)

If $i \in \{1, \dots, n-1\}$, then we glue the subdivided squares $S_4$,
$S_5$ and $\ee_{\vx_i}\dd_{\vx_i}\dd_{\neg \vx_i}\ee_{\vx_i}$ to $\cT_0$ so that it
fits in between the edges of $\dd_{\vx_{i+1}}\dd_{\neg \vx_{i+1}}$ and
$\dd_{\vx_i}\dd_{\neg \vx_i}$ on $\cT_0$. 

For the remainder of the explanation, we assume $i \in \{1, \dots, n\}$. The
triangle $\dd_{\vx_i}\ee_{\vx_i}\bb_{\vx_i}$ is identified with the corresponding
triangle of the splitter house $\cS_{\vx_i}$ and similarly $\dd_{\neg
\vx_i}\ee_{\neg \vx_i}\bb_{\neg \vx_i}$ with the corresponding triangle of
$\cS_{\neg \vx_i}$; see also Figure~\ref{f:splitter_attachment}, right,
considering $\ell = \vx_i$ and $\ell = \neg \vx_i$. The subdivided square
$\dd_{\vx_i}\bb_{\vx_i}\bb_{\neg \vx_i}\dd_{\neg \vx_i}$ is identified with the
corresponding square of the variable gadget $\cV_{\vx_i}$; see also
Figure~\ref{f:variable_gadget}, right, this time considering $\vx
= \vx_i$. (Again, we use that we can choose the diagonals of these squares 
on a variable gadget.) The rest of the complex, consisting of subdivided
squares $S_7$ and $S_8$ and the triangle $d'_*e'_*e_*$ is glued to $\cB_{i-1}$
according to the earlier identification of vertices between $\cB_i$ and
$\cB_{i-1}$.

For shelling complexes: Type $0$ shelling complex will be considered only for
$\cB_0$ and it is a subcomplex of the attachment complex attached to $\cT_0$ or
$\cV_{\vx_1}$. The type $i$ shelling complex will be considered only for 
$\cB_i$ with $i \in \{1, \dots, n-1\}$ and it is the union of $\PP_i$ and the
subcomplex of the attachment complex attached to $\cT_0$, $\cV_{\vx_i}$,
$\cV_{\vx_{i+1}}$,
$\cB_{i-1}$, $\cS_{\vx_i}$ and $\cS_{\neg \vx_i}$. Finally the 
type $n$ shelling complex will be considered only for
$\cB_n$ and it is the union of $\PP_n$ and the subcomplex of the attachment complex attached to
$\cT_0$, $\cV_{\vx_n}$, $\cB_{n-1}$, $\cS_{\vx_n}$ and $\cS_{\neg \vx_n}$.

\subsection{Triangulating the template}
Gluing other gadgets to $\cT_0$ enforces some of the triangles, edges, or
vertices to be present in a triangulation of $\cT_0$. 
We extend this
arbitrarily to a full triangulation of $\cT_0$; we only require that it remains
of polynomial size in the number of variables. Then we deduce a triangulation
of $\cT$ as explained in Subsection~\ref{ss:template_gadget}. 

\subsection{$\cK_\phi$ is a simplicial complex}
We have to be a bit careful when gluing the gadgets to verify that we get
indeed a simplicial complex. In order to visualize a possible problem consider
the two simplicial complexes in Figure~\ref{f:chords}. If we glue them together
along their boundaries identifying the vertices with the same name, we do not
get a simplicial complex. Indeed the edge $cf$ appears in both of them, thus,
after gluing it is not uniquely determined by its vertices.

\begin{figure}
  \begin{center}
    \includegraphics[page=57]{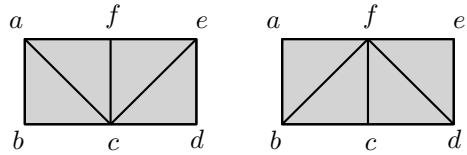}
  \end{center}
  \caption{Two simplicial complexes that do not yield a simplicial complex
  after gluing.}
  \label{f:chords}
\end{figure}

Now we start verifying
that $\cK_\phi$ is a simplicial complex. For purpose of
this verification, we extend the definition of the attachment complex also to
 the template gadget $\cT$, the conjunction gadget $\cA$ and the variable
 gadgets $\cV_\vx$ so that it is the part of the boundary of these gadgets
 which is glued to other gadgets. In order to glue all gadgets to a simplicial
 complex, we need to check that there are no faces $\sigma$ and $\sigma'$
 in two different gadgets $\cG$, $\cG'$ respectively (before the
 identification) so that $\partial \sigma$ and $\partial \sigma'$ are
 identified in $\cK_\phi$ while one of them does not belong to the attachment
 complex of its gadget. (Without loss of generality $\sigma$ does not belong to
 the attachment complex of $\cG$.) Let us remark that, if in the setting above,
 both $\sigma$ and $\sigma'$ belong to the attachment complexes, then they are
 identified as well thus there is no problem.
 
We will inspect all possibly \emph{problematic} faces $\sigma$ in a gadget $\cG$ such
that $\partial \sigma$ belongs to the attachment complex while $\sigma$ does
not belong to it. In each such case we will verify that there is no $\sigma'$
in another gadget $\cG'$ with $\partial \sigma'$ identified with $\partial
\sigma$. Given a simplicial complex $K$ and a subcomplex $L$, we say that $L$
is an induced subcomplex of $K$ if for every face $\vartheta$ of $K$ whenever
all vertices of $\vartheta$ belong to $L$, then $\vartheta$ belongs to $L$ as
well. If the attachment complex of a gadget $\cG$ is an induced subcomplex,
then there is no candidate for a problematic $\sigma$ in $\cG$.

If $\cG$ is the template gadget $\cT$, the attachment complex is a subcomplex
of $\cT_0$. Depending on the chosen triangulation of $\cT_0$, the attachment
complex may be an induced subcomplex of $\cT_0$ and therefore of $\cT$, or not.
In any case, if $\sigma$ is a face of $\cT$ (and therefore of $\cT_0$) such
that the attachment complex contains $\partial \sigma$ while not $\sigma$, then
$\sigma$ belongs to the boundary of $\cK_\phi$ due to the way how was the
attachment complex glued to other gadgets (by inspection). Therefore, there is
no $\sigma'$ in another gadget with $\partial \sigma'$ identified with
$\partial \sigma$.

If $\cG$ is the variable gadget $\cV_\vx$, the attachment complex consists of
almost entire boundary of $\cV_\vx$ except the (interiors of the) faces
$\aa_\vx\cc_\vx$, $\aa_\vx\bb_\vx\cc_\vx$, $\aa_\vx\cc_\vx\dd_\vx$, $\aa_{\neg
\vx}\cc_{\neg \vx}$, $\aa_{\neg \vx}\bb_{\neg \vx}\cc_{\neg \vx}$ and
$\aa_{\neg \vx}\cc_{\neg \vx}\dd_{\neg \vx}$; see
Figure~\ref{f:variable_gadget}, right. By a direct inspection using
that $\partial \cV_\vx$ is triangulated as a cone over the boundary, the only
problematic $\sigma$ are $\aa_\vx\cc_\vx$ and $\aa_{\neg
\vx}\cc_{\neg \vx}$. The only other gadget that contains $\aa_\vx$ or
$\aa_{\neg \vx}$ is $\cS_\vx$ or $\cS_{\neg \vx}$. However non-of them contains
$\cc_\vx$ or $\cc_{\neg \vx}$ thus there is no $\sigma'$ in another
gadget with $\partial \sigma'$ identified with $\partial \sigma$.

If $\cG$ is one of the houses $\cS_\ell$, $\cI_{\ell, \kappa}$, $\cO_{\kappa}$
or $\cB_i$, then the attachment complex is actually an induced subcomplex of
$\cG$. This follows from inspection of each case individually; however, we
emphasize the joint properties referring to the notation of
Subsection~\ref{ss:thick_1house}. The attachment complex always contains
$\tau$, $J$, some subcomplex of the front side of the lower wall and possibly
crossing annuli in the incoming case. The union of $J$ and $\tau$ is an induced
subcomplex of $\partial \cG$ by a direct inspection. This can be further used
to check that the whole attachment complex is an induced subcomplex using that
cubes (or prisms in the lower wall) in the polytopal decomposition of the thick 1-house meet the attachment
complex usually only inside a single face. The exceptions from this rule are
the cube $F$, the cube sharing the edge $ee'$ with $F$ (those two are relevant
for $\tau$ and $J$), the exceptional polytope $P_6$ in the blocker case, and
the cubes at the bends of crossing annuli in the incoming case.  The latter two
exceptions could possibly yield a problematic $\sigma$; however this does not
occur in case of $P_6$ due to its triangulation as a cone with apex $p$ and in
case of the cubes at the bends due to their triangulations using rule (R3) when
triangulating the thick 1-house.

If $\cG$ is the clause gadget $\cC_\kappa$ (i. e. a turbine), we find all problematic $\sigma$ in the
following way. Such a $\sigma$ is inside some prism $Q$ considering the turbine as
a polytopal complex. Each $Q$ that meets the attachment complex inside a single
face does not admit such $\sigma$ (by inspection). It remains to inspect $F_i$,
the cubes, left of $F_i$ and the cubes where the attachment complex bends
(similarly as in the case of crossing annuli); see
Figures~\ref{f:thick_blade},~\ref{f:thick_triangle}
and~\ref{f:F_neighborhood_blade}. The cubes of bends can be ruled out similarly
as in the case of crossing annuli, using the rule (R3) from the triangulation
of the turbine. The inspection of the other cases reveals that the only
problematic $\sigma$ are the edges $f'_ib_i$ and $f'_ig_i$ in the notation of
Figure~\ref{f:F_neighborhood_blade}. In $\cC_\kappa$, these edges translate as
$f'_i\xx_{\ell, \kappa}$ and $f'_i\yy_{\ell, \kappa}$ for some literal $\ell$ in
$\kappa$; compare
Figures~\ref{f:J_turbine} and~\ref{f:clause_attachment}. The only other gadget
containing $\xx_{\ell, \kappa}$ or $\yy_{\ell, \kappa}$ is $\cI_{\ell, \kappa}$
which does not contain $f_1'$. Thus there is no $\sigma'$ in another
gadget with $\partial \sigma'$ identified with $\partial \sigma$.

Finally, if $\cG$ is the conjunction gadget $\cA$, then the direct inspection
reveals that the only problematic $\sigma$ are the triangles
$dw_-\varepsilon_i$ for $i \in \{1, \dots, k-1\}$ in the notation of
Figure~\ref{f:conjunction_cone}. In this case, $\partial \sigma$ does not
belong to any other gadget because $w_-\varepsilon_i = \ww_-\varepsilon_i$ belongs only to $\cT$
while $d = \dd$ does not belong to it.

\section{Satisfiable implies shellable}
\label{s:sat=>shell}

In this section we assume that our formula $\phi$ is satisfiable and we will
show that $\cK_\phi$ is shellable. We fix a satisfying assignment and from this
assignment we derive a shelling of $\cK_\phi$
in several steps which roughly correspond to the order of gluing gadgets when
building $\cK_\phi$. All shellings we describe in this section are
shellings down. The
\emph{intermediate complex} in any stage of our description 
is the subcomplex of $\cK_\phi$ to which $\cK_\phi$ shells via the previously
described shellings. If $\cG$ is a gadget fully contained in the intermediate
complex, then the \emph{remainder} of the intermediate complex (with respect to
$\cG$) is the subcomplex generated by those faces of the intermediate complex
that do not belong to $\cG$; that is, it is formed by those faces which are
contained in a face that is not contained in $\cG$. Note that $\cG$ intersects
the remainder exactly in those faces that belong to $\cG$ but they are contained
in a face that does not belong to $\cG$.

\paragraph{Step 1, first shelling of variable gadgets:}
Let $\vx$ be a variable. Recall that the variable gadget $\cV_{\vx}$ is triangulated as a cone over its boundary; see Figure~\ref{f:variable_gadget}, left.

First assume that $\vx$ is assigned TRUE. In this case we first remove the unique tetrahedra
of $\cV_{\vx}$ containing the triangle $\aa_{\vx}\bb_{\vx}\cc_{\vx}$, then the
one containing the triangle $\aa_{\vx}\cc_{\vx}\dd_{\vx}$ and then the one
containing $\aa_{\vx}\bb_{\vx}\dd_{\vx}$. It is routine to check that this is a shelling. 

Analogously, if $\vx$ is assigned FALSE, we remove the three unique tetrahedra of
$\cV_{\vx}$ containing the triangles $\aa_{\neg \vx}\bb_{\neg \vx}\cc_{\neg \vx}$,
$\aa_{\neg \vx}\cc_{\neg \vx}\dd_{\neg \vx}$ and $\aa_{\neg \vx}\bb_{\neg
\vx}\dd_{\neg \vx}$ in this order.

We perform the shellings above for every variable $\vx$ in an arbitrary order.

\paragraph{Step 2, shelling of positive splitter houses:}
Let $\ell$ be a literal assigned TRUE (that is, either $\ell = \vx$ where $\vx$ is
a variable assigned TRUE, or $\ell = \neg \vx$ where $\vx$ is a variable assigned
FALSE). Now we shell the splitter house $\cS_\ell$ (for each such $\ell$ in
arbitrary order). Note that in the intermediate complex, the splitter house $\cS_\ell$ meets
the remainder of the complex exactly in the first shelling complex in the splitter
case. Indeed, the simplices in the attachment complex of $\cS_\ell$ which are not
in the first shelling complex of $\cS_\ell$ are exactly those that contain vertex
$\aa_\ell$ 
(recall that the attachment complex of $\cS_\ell$
and its attachment to other gadgets is depicted in
Figure~\ref{f:splitter_attachment} and that the first shelling complex in the
splitter case is the subcomplex of the attachment complex glued to $\cB_i$,
$\cT_0$ or $\cI_{\ell,\kappa_j}$). 
 Vertex $\aa_\ell$ appears
only in $\cV_{\vx}$ and $\cS_\ell$; however, shellings in Step 1 caused that any
simplex of the intermediate complex which contains $\aa_\ell$ belongs to
$\cS_\ell$. A fortiori any simplex of the intermediate complex 
containing a face containing $\aa_\ell$
belongs to $\cS_\ell$. Therefore no face of the intermediate complex containing $\aa_\ell$
is in the remainder. On the other hand, all remaining simplices of the attachment
complex belong to the remainder as they also belong to other gadgets which
have not been shelled yet; see Figure~\ref{f:splitter_attachment}, right.

Once we have checked that $\cS_\ell$ meets
the remainder in the (first) shelling complex, we perform the shellings on $\cS_\ell$
according to Lemma~\ref{l:shelling_house} with the intermediate complex as $K$,
the remainder as $L$ and $\cS_\ell$ as $H$. (This essentially removes $\cS_\ell$
from the picture. More precisely, the intermediate complex after this step will
contain only the shelling complex of $\cS_\ell$ which anyway belongs to other
gadgets as well.)

\paragraph{Step 3, shelling of positive incoming houses, clause gadgets and
outgoing houses:}
Similarly as in the collapsibility case, we would like to shell incoming houses
but it cannot be done immediately for all incoming houses as some of them may
be blocked by outgoing houses. Thus we have to interlace these shellings:
shelling some incoming house will release the clause gadget and the
corresponding outgoing house which may release also another incoming house.

We order the clauses in the following way: We start with positive clauses (with
all three literals positive) and we order them according to the $y$-coordinate
of $\vv_\kappa$ starting with the lowest $y$-coordinate. Then we continue with the
negative clauses. We again order them according to the $y$-coordinate
of $\vv_\kappa$, this time starting with the highest $y$-coordinate (i. e. closest
to the $x$-axis). 

We consider the clauses $\kappa$ one by one in the aforementioned order and for each
of them we perform the following shellings. We inductively assume that all
tetrahedra of each $\cO_{\kappa'}$ with $\kappa'$ preceding $\kappa$ have been
shelled before performing the shellings for $\kappa$. We also remark that
now we shell only those incoming $\cI_{\ell, \kappa}$ for which $\ell$ is a literal of $\kappa$ assigned TRUE. For fixed $\kappa$ we shell the gadgets in the order: some incoming houses, the clause gadget and then the outcoming house.

\bigskip

\subparagraph{Incoming houses:} For every literal $\ell$ in $\kappa$ such that
$\ell$ is assigned TRUE we aim to perform shellings on the incoming house
$\cI_{\ell,\kappa}$. (As the assignment is satisfying, there is at least one such
$\ell$ for our $\kappa$. If there are more such $\ell$, we consider them in
arbitrary order.) First, we check that $\cI_{\ell,\kappa}$ meets the remainder of the
intermediate complex exactly in the shelling complex (in the incoming case).

Recall that the shelling complex of $\cI_{\ell,\kappa}$ is the part of the
attachment complex which is glued to $\cT_0$ or $\cC_\kappa$; see
Figure~\ref{f:incoming_attachment}, right. This part of course
belongs to the remainder as neither $\cT_0$ nor $\cC_\kappa$ have been shelled yet.
On the other hand no other face of the shelling complex belongs to the remainder
as $\cS_\ell$ has been shelled in Step~2 and $\cO_{\kappa'}$ with $\kappa'$
preceding $\kappa$
in previous stages of Step~3 due to our inductive assumption. (Note that our
order satisfies that if $\cO_{\kappa'}$ meets $\cI_{\ell,\kappa}$, then
$\kappa'$ precedes $\kappa$.) This finishes the check.

Thus we may perform shelling according to Lemma~\ref{l:shelling_house} with the
intermediate complex as $K$, the remainder as $L$ and $\cI_{\ell,\kappa}$ as
$H$, which we do. This essentially removes $\cI_{\ell,\kappa}$ from the intermediate complex
(up to the parts in other gadgets not shelled yet).

\subparagraph{Clause gadget:}
Now we want to shell the clause gadget $\cC_\kappa$. As usual, we want to
describe how $\cC_\kappa$ intersects the remainder of the intermediate complex
(with respect to $\cC_\kappa$). Assume that $\kappa = (\ell_1 \vee \ell_2 \vee
\ell_3)$ where $\ell_1, \ell_2$ and $\ell_3$ are literals, possibly with
repetitions if $\kappa$
arose from a clause with less than three literals by repetitions. 
This intersection is a subcomplex of the attachment complex of $\cC_\kappa$ which
surely misses vertices $\xx_{\ell_i,\kappa}$ and $\yy_{\ell_i,\kappa}$ such that $\ell_i$
is assigned TRUE because the incoming house $\cI_{\ell_i,\kappa}$ has been already
shelled; see Figure~\ref{f:clause_attachment}, right (with $\ell_1 = \vx$,
$\ell_2 = \vy$ and $\ell_3 = \vz$). On the other hand this intersection surely contains the induced subcomplex of the attachment
complex on all vertices except the aforementioned vertices
$\xx_{\ell_i,\kappa}$ and $\yy_{\ell_i,\kappa}$ for such $\ell_i$ that are
assigned TRUE because none of $\cT_0$, $\cO_\kappa$ nor $\cI_{\ell_i,\kappa}$ with $\ell_i$ assigned FALSE have been shelled yet; see again Figure~\ref{f:clause_attachment}, right.
Because our initial assignment is satisfying, this means that $\cC_\kappa$
intersects the remainder in one of seven possible shelling complexes (for
turbine)---they are reminded in last paragraph of
Subsection~\ref{ss:clause_gadget}.

This means that we can apply Lemma~\ref{l:shelling_turbine} with $K$ as the
intermediate complex, $L$ as the remainder and $T$ as $\cC_\kappa$. Thus we perform
the shelling from this lemma which essentially removes $\cC_\kappa$ from the
intermediate complex.

\subparagraph{Outgoing house:} Our final shelling for $\kappa$ is the shelling of
the outgoing house $\cO_\kappa$. This will also verify our inductive assumption.

Again, we want to describe how $\cO_\kappa$ intersects the
remainder (with respect to $\cO_\kappa$). As usual, we need that this
intersection is the shelling complex of $\cO_\kappa$. This follows immediately from the description of the shelling complex in last paragraph of
Subsection~\ref{ss:outgoing_house} and the fact that $\cC_\kappa$ has been already
shelled while $\cT_0$, $\cA$ and those $\cI_{\ell, \kappa'}$ which meet
$\cO_\kappa$ have
not been shelled yet. (Such $\cI_{\ell, \kappa'}$ satisfies that $\kappa'$ comes later
than $\kappa$ in our order.)

Therefore, we may use Lemma~\ref{l:shelling_house}  with $K$ as the
intermediate complex, $L$ as the remainder and $H$ as $\cO_\kappa$ and we perform
the shelling from this lemma. This essentially removes $\cO_\kappa$ from the
intermediate complex.

\paragraph{Step 4, shelling of the conjunction gadget:}

Now because all outgoing houses $\cO_\kappa$ have been shelled, we may shell the
conjunction gadget $\cA$. Using the notation in the definition of conjunction
cone (see Subsection~\ref{ss:conjunction_cone}) we shell the tetrahedra in the order
$\Delta_{k-1}, \Delta_{k-2}, \dots, \Delta_1$.

\paragraph{Step 5, partial shelling of the blocker houses:}
Now we shell the blocker houses in order $\cB_n, \cB_{n-1}, \dots, \cB_0$. We
recall that the lower wall each $\cB_i$ contains a distinguished polytope
$\PP_i$ (denoted $P_6$ when triangulation the lower wall in the blocker case).
We point out in advance that for $i \in \{1, \dots, n\}$, $\cB_i$ will be
shelled so that $\PP_i$ remains after shelling (while all tetrahedra outside
$\PP_i$ will be removed).

Now we describe shelling of $\cB_n$. As usual, we want to use
Lemma~\ref{l:shelling_house} but we need to be more careful now because we
intend to use it with $3$-dimensional type $n$ shelling complex in blocker
case. Namely, we intend to use Lemma~\ref{l:shelling_house} so that $K$ is the
intermediate complex, $L$ is the union of $\PP_n$ with the remainder (with
respect to $\cB_n$), and $H$ is $\cB_n$. We need to verify that $H \cap L$ is
the type $n$ shelling complex.

Note that $H \cap L$ of the union of $\PP_n$ and the intersection of $H =
\cB_n$ with the remainder. The following gadgets have been already
shelled: $\cA$ (in Step 4) and either $\cS_{\vx_n}$ if $\vx_n$ is assigned TRUE,
or $\cS_{\neg \vx_n}$ if $\vx_n$ is assigned false (in Step 2). We also remark
that $\cV_{\vx_n}$ has been partially shelled in Step 1; however, the
subdivided rectangle $\bb_{\vx_n}\dd_{\vx_n} \dd_{\neg \vx_n}\bb_{\vx_n}$ is
still part of the remainder as no tetrahedron of $\cV_{\vx_n}$ containing a
triangle subdividing $\bb_{\vx_n}\dd_{\vx_n} \dd_{\neg \vx_n}\bb_{\vx_n}$ has been shelled. Therefore, the intersection of $\cB_n$ and the remainder coincides with
subcomplex of the attachment complex which is glued to $\cT_0$, $\cB_{n-1}$,
$\cV_{\vx_n}$ and one of $\cS_{\vx_n}$ or $\cS_{\neg \vx_n}$ (see
Figure~\ref{f:blocker_n}). After taking the union with
$\PP_n$ we exactly get type $n$ shelling complex in the blocker case according
to the description in last paragraph of Subsection~\ref{ss:blocker_house}.
(Note that the intersection of the attachment complex with both $\cS_{\vx_n}$
and $\cS_{\neg \vx_n}$ is also part of $\PP_n$.)

Therefore we have verified the assumptions of Lemma~\ref{l:shelling_house} and
we apply the shelling from this lemma. This essentially removes $\cB_n$ from
the intermediate complex with the exception that $\PP_n$ persists in the intermediate complex.

Next we describe shelling of $\cB_i$ for $i \in \{1, \dots, n-1\}$ assuming
that $\cB_j$ for $j > i$ has been already shelled (to $\PP_j$). The approach is
very analogous to the case of $\cB_n$, thus our description is more brief.

We intend to use Lemma~\ref{l:shelling_house} so that $K$ is the
intermediate complex, $L$ is the union of $\PP_i$ with the remainder (with
respect to $\cB_i$), and $H$ is $\cB_i$. We observe that the intersection of
$\cB_i$ with the remainder is the subcomplex of the attachment complex glued to
$\cT_0$, $\cV_{\vx_i}$, $\cV_{\vx_{i+1}}$, $\cB_{i-1}$ and one of $\cS_{\vx_i}$
or $\cS_{\neg \vx_i}$ (see Figure~\ref{f:blocker_i}). After taking the union with
$\PP_i$ we get exactly type $i$ shelling complex in the blocker case according
to the description in last paragraph of Subsection~\ref{ss:blocker_house}.
Therefore, we can apply Lemma~\ref{l:shelling_house}. This essentially removes
$\cB_i$ from
the intermediate complex with the exception that $\PP_i$ persists in the intermediate complex.

Finally, we shell $\cB_0$. This time, $\PP_0$ is not a part of the type $0$ shelling
complex, thus we only need to verify the intersection of $\cB_0$ with the
remainder (with respect to $\cB_0$) is the type $0$ shelling complex. This
follows from the description of the shelling complex in last paragraph of
Subsection~\ref{ss:blocker_house} as it is easy to check that this intersection
is the subcomplex of the attachment complex glued to $\cT_0$ or $\cV_{\vx_1}$
(because $\cB_1$ has been already shelled; see also Figure~\ref{f:blocker_0}).
Thus by applying shelling from Lemma~\ref{l:shelling_house}, we essentially
remove $\cB_i$ from the intermediate complex (this time including $\PP_0$).

\paragraph{Step 6, second shelling of variable gadgets:}

Now we shell the remainders of variable gadgets in an arbitrary order. See
Figure~\ref{f:variable_gadget} when following the shelling.

Let $\vx$ be a variable. First assume that $\vx$ is assigned TRUE. In this case
the unique tetrahedra of $\cV_{\vx}$ containing 
$\aa_{\vx}\bb_{\vx}\cc_{\vx}$, $\aa_{\vx}\cc_{\vx}\dd_{\vx}$ and
$\aa_{\vx}\bb_{\vx}\dd_{\vx}$ have been already
removed. Let $\cV'_{\vx}$ be the intersection of $\cV_{\vx}$ with the intermediate
complex; that is, $\cV'_{\vx}$ is obtained from $\cV_{\vx}$ after shelling the
aforementioned tetrahedra.

Because the blocker houses (except $\PP_i$s) as well as $\cS_{\vx}$ 
have been shelled in the previous steps, $\cV'_{\vx}$ meets the remainder of
the complex only in the triangle $\aa_{\neg \vx}
\bb_{\neg \vx}\dd_{\neg \vx}$ shared with $\cS_{\neg \vx}$ and the rectangle
$\bb_{\vx}\dd_{\vx}\dd_{\neg \vx}\bb_{\neg \vx}$. (This rectangle is shared with some
$\PP_i$).

Thus we can first greedily shell the tetrahedra meeting one of the rectangles
$\bb_{\vx}\cc_{\vx}\cc_{\neg \vx}\bb_{\neg \vx}$, $\cc_{\vx}\dd_{\vx}\dd_{\neg
\vx}\cc_{\neg \vx}$, in a triangle. (We do not specify the exact order 
because it depends on the choice the diagonals.) Then we shell the unique
tetrahedra containing $\aa_{\neg \vx} \bb_{\neg \vx}\cc_{\neg \vx}$ and
$\aa_{\neg \vx} \cc_{\neg \vx}\dd_{\neg \vx}$. Then we shell the tetrahedra meeting
$\bb_{\vx}\dd_{\vx}\dd_{\neg \vx}\bb_{\neg \vx}$ in a triangle. (Valid order again depends
on the choice of the diagonal). As the last one we remove the unique tetrahedron
containing $\aa_{\neg \vx} \bb_{\neg \vx}\dd_{\neg \vx}$. This essentially removes
the remainder of $\cV_{\vx}$ from the intermediate complex.

If $\vx$ is assigned false, then we apply the same approach as above after
swapping $\vx$ and $\neg \vx$ (except for $\cV_{\vx}$ and $\cV'_{\vx}$).

\paragraph{Step 7, shelling of $\PP_i$:}

In this step, we shell $\PP_i$ for $i \in \{1, \dots, n\}$ one by one in
an arbitrary order. Due to the previous shelling, each $\PP_i$
meets the remainder of the complex (with respect to $\PP_i$) in the rectangle
$\dd_{\vx_i}\ee_{\vx_i}\ee_{\neg \vx_i}\dd_{\neg \vx_i}$ and exactly one of the
triangles $\bb_{\vx_i}\dd_{\vx_i}\ee_{\vx_i}$ or $\bb_{\neg \vx_i}\dd_{\neg
\vx_i}\ee_{\neg \vx_i}$ (see Figures~\ref{f:blocker_i} and~\ref{f:blocker_n}). Because $\PP_i$ itself is a canonically triangulated
polytope (due to its construction) we may apply Lemma~\ref{l:shell_canonical}
where $K$ is the intermediate complex, $P = \PP_i$ and $L$ is the remainder
(with respect to $\PP_i$). This essentially removes $\PP_i$ from the
intermediate complex.

\paragraph{Step 8, shelling of negative splitter houses:}

Now we shell splitter houses $\cS_{\ell}$ for literals $\ell$ assigned FALSE in
arbitrary order. It is routine to check that $\cS_{\ell}$ meets the remainder (with
respect to $\cS_{\ell}$) in the part of the attachment
complex attached to $\cT_0$ or to $\cI_{\ell, \kappa_j}$ (for some $j$); see
Figure~\ref{f:splitter_attachment}. This is
exactly the second shelling complex in the splitter case and thus we may apply
Lemma~\ref{l:shelling_house} to shell $\cS_{\ell}$ as usual.

\paragraph{Step 9, shelling of incoming houses with negative literals:}

Now we shell incoming houses $\cI_{\ell, \kappa_j}$ for literals $\ell$ assigned
FALSE. (Those with $\ell$ assigned true have been shelled in Step 3.)

As everything else has been already shelled, $\cI_{\ell, \kappa_j}$ meets the
remainder (with respect to $\cI_{\ell, \kappa_j}$) only in $\cT_0$; see
Figure~\ref{f:incoming_attachment}. This is the
second shelling complex in the incoming case, thus we shell according to
Lemma~\ref{l:shelling_house}.

\paragraph{Step 10, shelling of $\cT$.}
After Step 9, the intermediate complex coincides with $\cT$ as we have
shelled all other gadgets. Recall that $\cT$ is obtained by taking a
triangulated disk $T$ (which coincides with $\cT_0$ up to isomorphism); 
considering the product $T \times I$ obtaining first a 
polytopal decomposition of $T \times I$ into prisms; and the triangulating by
taking some canonical triangulation.

It is well known (and easy to prove) that every triangulated disk is shellable; see,
e.g.~\cite{danaraj-klee78algo} for an even stronger statement. Removing the
triangular prisms $P$ of $T \times I$ in the order corresponding to a shelling
means that such a prism $P$ satisfies the assumptions of
Lemma~\ref{l:shell_canonical}. Thus, $\cT_0$ is shellable by a repeated
application of Lemma~\ref{l:shell_canonical}. This shelling finishes shelling
of $\cK_\phi$.

\section{Intermezzo: $\cK_\phi$ is a ball}
Before proving the second implication, we will show that $\cK_\phi$ is a ball.

\begin{proposition}
\label{p:ball}
  $\cK_\phi$ is a $3$-ball.
\end{proposition}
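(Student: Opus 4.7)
The plan is to exhibit $\cK_\phi$ as an iterated union of PL $3$-balls glued along PL $2$-balls in their boundaries, and then invoke the classical Corollary~3.16 of Rourke--Sanderson (the very ingredient used in the proof of Lemma~\ref{l:pseudomanifold}) to conclude that $\cK_\phi$ is a PL $3$-ball. As a preliminary step, I verify that each gadget is itself a PL $3$-ball: the template $\cT=T\times I$ is the product of a disk with an interval; the variable gadget $\cV_\vx$ is a canonically triangulated triangular prism; the conjunction gadget $\cA$ is a cone over a convex polygon; and each thick $1$-house (splitter, incoming, outgoing, blocker) and each thick turbine (clause gadget) is polytopally a $3$-ball in the Bing~\cite{bing64} style, a property which is preserved by the canonical triangulation relative to the chosen vertex order.

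Next, I order the gluings so that at every stage the piece being added meets the currently constructed union along a PL $2$-disk. Rather than gluing each gadget as a single block, I glue it tetrahedron by tetrahedron following an internal shelling of the gadget, making use of Lemmas~\ref{l:shelling_house} and~\ref{l:shelling_turbine} (and analogous straightforward shellings of $\cT$, $\cV_\vx$ and $\cA$). A concrete sequencing is: shell up $\cT$ from a single tetrahedron (possible since $T\times I$ for a disk $T$ is shellable); then, for each variable $\vx$, successively add $\cS_\vx,\cS_{\neg\vx}$ along $2$-disks of $\cT_0$; then the incoming houses, clause gadgets and outgoing houses, each interleaved so that the corresponding shelling complex from Section~\ref{s:shellability_gadgets} is always present on the boundary; then $\cA$ along the polygon $\PP$; and finally the blocker houses $\cB_0,\dots,\cB_n$, interleaved with the remaining tetrahedra of the variable gadgets.

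The main obstacle is the variable gadget: its attachment complex on $\partial\cV_\vx$ is a sphere minus two disjoint disks (the two caps $\aa_\vx\bb_\vx\cc_\vx\cup\aa_\vx\cc_\vx\dd_\vx$ and $\aa_{\neg\vx}\bb_{\neg\vx}\cc_{\neg\vx}\cup\aa_{\neg\vx}\cc_{\neg\vx}\dd_{\neg\vx}$), hence an annulus and not a disk, so $\cV_\vx$ cannot be glued in a single step. The remedy is to split the twelve tetrahedra of $\cV_\vx$ into three groups and interleave their gluings with the surrounding gadgets: first the three tetrahedra forming the cone from $e$ over the left cap (which meet the union along the single triangle $\aa_\vx\bb_\vx\dd_\vx$ of $\cS_\vx$), symmetrically the three right-cap tetrahedra, and then the six ``middle'' tetrahedra (cones from $e$ over the three rectangular faces), glued one at a time after the adjacent blocker houses have been brought in, in an order that always presents a pair of already-glued triangles (one on a rectangle shared with a blocker, one on a floor of a cap) sharing an edge. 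The verification that each individual gluing step is along a PL $2$-disk is then routine but tedious; it reduces to a local combinatorial inspection using the attachment-complex and shelling-complex descriptions given in Section~\ref{s:shellability_gadgets}, together with the observation that the gluings of the other gadgets, which are governed by the planar template, are along connected disk-like pieces by construction.
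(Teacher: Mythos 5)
Your block-level skeleton (each gadget is a PL ball; glue along PL $2$-disks; conclude via the Rourke--Sanderson gluing result, i.e.\ the mechanism behind Lemma~\ref{l:pseudomanifold}) is sound, and read in reverse it is essentially the paper's argument. But the refinement you insist on --- gluing ``tetrahedron by tetrahedron following an internal shelling of the gadget,'' so that \emph{every single tetrahedron} meets the previously built union in a $2$-disk --- is fatal. Such an ordering of all facets of $\cK_\phi$ is in particular a shelling up in the sense of Definition~\ref{d:shell_up} (a $2$-disk made of facets of $\partial F_i$ is pure and $2$-dimensional), so if your construction could be carried out for every $\phi$, then $\cK_\phi$ would always be shellable; Section~\ref{s:shell=>sat} shows this forces $\phi$ to be satisfiable. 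Hence for unsatisfiable $\phi$ the ordering you need simply does not exist, and the ``routine but tedious'' verification at the end cannot go through --- the failure is exactly at the interleaving of splitters, variable gadgets, blockers and the conjunction cone, which is the mechanism the construction was designed to make obstruct simplicial shellings. (Indeed, reversing your proposed order gives a shelling down that removes blocker-house tetrahedra while $\cA$ and all other houses are intact, which Lemma~\ref{l:1house_blocked} forbids, so your specific order is impossible even for satisfiable $\phi$.) A secondary problem: Lemmas~\ref{l:shelling_house} and~\ref{l:shelling_turbine} only apply when the gadget meets the rest in one of its designated shelling complexes, which your order does not arrange (e.g.\ a splitter glued only to $\cT_0$, before its incoming houses exist), and even where they apply they yield shellings, i.e.\ pure $2$-dimensional intersections, \emph{not} disk intersections at every step --- and you cannot invoke Lemma~\ref{l:all_balls} to upgrade this, since it presupposes that the ambient complex is already known to be a ball.

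This is precisely why the paper proves Proposition~\ref{p:ball} with shellings in the PL sense, removing whole gadgets (PL balls) in single steps: PL shellability is available for all $\phi$, whereas a simplicial shelling may not be. The paper also resolves the annulus problem you correctly identified for $\cV_\vx$ differently: instead of interleaving individual tetrahedra of the variable gadget with the blockers, it bundles all blocker houses and variable gadgets into one subcomplex $\cU$, shows via the chain $\cB_0,\cV_{\vx_1},\cB_1,\dots,\cV_{\vx_n},\cB_n$ of elementary PL shellings that $\cU$ is a PL ball meeting the rest in a single disk (Lemma~\ref{l:shell_U}), removes $\cU$ in one PL step, then PL-shells the remaining gadgets block by block down to $\cT$ and applies Lemma~\ref{l:pseudomanifold}. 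If you want to salvage your write-up, keep your gadget-by-gadget gluing but abandon the tetrahedron-level refinement entirely and treat the blocker/variable chain as the paper does.
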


In fact, if $\phi$ is satisfiable, then it can be checked that any shelling used
in the previous section is a shelling in PL sense, thus it can be deduced (by
a little thought) that $\cK_\phi$ is a ball from Lemma~\ref{l:pseudomanifold}.
However, we need to know that $\cK_\phi$ is a ball even if $\phi$ is not
satisfiable. We will show this by providing another shelling in PL sense which
does not distinguish whether $\phi$ is satisfiable or not. Of course, this
shelling cannot be a classical shelling in simplicial sense if we aim to show
that $\cK_\phi$ is not shellable for $\phi$ not satisfiable. We will be using
that $\cK_\phi$ is a $3$-pseudomanifold which follows from our construction.
Whenever we glued two gadgets together the triangles of the intersection are
only in those two gadgets.

\begin{lemma}
\label{l:shell_U}
  Let $\cU$ be the union of all blocker houses and variable gadgets inside
  $\cK_\phi$. Let $\cR$ be the subcomplex of $\cK_\phi$ formed by all
  tetrahedra not in $\cU$. Then $\cU$ is a ball which meets $\cR$ in a disk.  
\end{lemma}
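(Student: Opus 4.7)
The plan is to verify both assertions by sequentially gluing the gadgets of $\cU$ together, building $\cU$ as a $3$-ball step by step and then identifying $\cU \cap \cR$ at the end.

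First, I would check that each individual gadget of $\cU$ is a $3$-ball. The variable gadget $\cV_\vx$ is a canonically triangulated triangular prism (a cone from its interior apex $e$ over a subdivided boundary), hence a $3$-ball. Each blocker house $\cB_i$ is a thickened $1$-house, i.e.\ a canonically triangulated polytopal $3$-ball obtained by thickening a contractible $2$-complex embedded in $\R^3$ (with the exceptional piece $\PP_i$ triangulated as a cone from its apex $p$); this is a $3$-ball, which can also be confirmed by applying Lemma~\ref{l:shelling_house} with $L$ empty together with Lemmas~\ref{l:pseudomanifold} and~\ref{l:all_balls}.

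Next, I would construct $\cU$ in stages by attaching the gadgets in the order $\cB_0, \cV_{\vx_1}, \cB_1, \cV_{\vx_2}, \ldots, \cV_{\vx_n}, \cB_n$, maintaining the invariant that the partial union is a $3$-ball. The key ingredient is the standard fact that the union of two $3$-balls along a $2$-disk in their common boundary is a $3$-ball (cf.\ \cite[Corollary~3.16]{rourke-sanderson72}, as employed in the proof of Lemma~\ref{l:pseudomanifold}). The intersections that need to be checked to be disks are: $\cV_{\vx_i} \cap \cB_{i-1}$ (and symmetrically $\cV_{\vx_{i+1}} \cap \cB_i$) is the subdivided rectangle $\bb_{\vx_i}\dd_{\vx_i}\dd_{\neg \vx_i}\bb_{\neg \vx_i}$ identified with the back face of $\PP_{i-1}$ (respectively $\PP_i$), clearly a disk; and $\cB_i \cap (\cB_{i-1} \cup \cV_{\vx_i})$ is the union of the subcomplex $S_7 \cup S_8$ together with the triangle $d'_* e'_* e_*$ of $\cB_i$ (glued to $\cB_{i-1}$) with the back face of $\PP_i$ (glued to $\cV_{\vx_i}$); these two subdisks share the edge $\bb_{\vx_i}\bb_{\neg \vx_i}$ on $\partial \cB_i$ and so together form a single $2$-disk.

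For the second assertion, $\cU \cap \cR$ consists by construction of the remaining portions of the attachment complexes of gadgets in $\cU$, namely the portions glued to gadgets in $\cR$. These are, for every variable $\vx$, the segment $\dd_\vx\dd_{\neg \vx}$ (glued to $\cT_0$) and the triangles $\aa_\vx\bb_\vx\dd_\vx$, $\aa_{\neg \vx}\bb_{\neg \vx}\dd_{\neg \vx}$ (glued to splitter houses); for every middle blocker $\cB_i$, the $\cT_0$-part $S_4 \cup S_5$ and the rectangle $\ee_{\vx_i}\dd_{\vx_i}\dd_{\neg \vx_i}\ee_{\neg \vx_i}$, along with the triangles $\bb_{\vx_i}\dd_{\vx_i}\ee_{\vx_i}$ and $\bb_{\neg \vx_i}\dd_{\neg \vx_i}\ee_{\neg \vx_i}$ (glued to splitter houses); the corresponding (slightly larger) pieces for the end blockers $\cB_0$ and $\cB_n$; and the triangle $\ww_+\ww_-\dd$ of $\cB_n$ (glued to $\cA$). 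I would argue that these pieces fit together along shared edges into a connected, simply connected subcomplex of $\partial \cU$ with a single boundary circle, hence a $2$-disk. Equivalently, since $\partial \cU$ is a $2$-sphere and splits into $\cU \cap \cR$ and $\cU \cap \partial \cK_\phi$, it suffices to verify that the latter is a disk, which is transparent from the geometry: it is the ``outward-facing'' side of the snake-shaped region $\cU$ stretched along the template.

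The main obstacle is the combinatorial bookkeeping. The delicate point is verifying that at every inductive step the intersection is a \emph{single} disk rather than two disjoint disks, and that in the global description of $\cU \cap \cR$ all the small triangles attached to splitter houses seam correctly with the larger template-facing region around each variable. This requires a careful examination of Figures~\ref{f:variable_gadget}, \ref{f:blocker_i}, \ref{f:blocker_0}, and \ref{f:blocker_n}, tracking the identifications among the $\bb$-, $\dd$-, and $\ee$-vertices, to confirm the absence of extra boundary components.
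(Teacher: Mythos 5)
Your overall route is essentially the paper's, just read in the opposite direction: the paper removes the gadgets in the order $\cB_0, \cV_{\vx_1}, \cB_1, \dots, \cV_{\vx_n}$ by elementary shellings in PL sense (checking at each step that the gadget meets the remainder in a disk) and then invokes Lemma~\ref{l:pseudomanifold} to conclude that $\cU$, a pseudomanifold that PL-shells to the PL ball $\cB_n$, is itself a PL ball; you build $\cU$ up in the same order and apply the union-of-two-balls-along-a-boundary-disk fact \cite[Corollary~3.16]{rourke-sanderson72} directly, which is exactly the content of the proof of Lemma~\ref{l:pseudomanifold}, so the two arguments are interchangeable. Your treatment of $\cU\cap\cR$ is at the same level of detail as the paper's (the paper calls it ``easy to check'' and just lists the pieces, as you do), so there is no complaint there.

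However, your identification of the gluing disks is wrong, and since the substance of this lemma is precisely that bookkeeping, it needs fixing. The rectangle $\bb_{\vx_i}\dd_{\vx_i}\dd_{\neg\vx_i}\bb_{\neg\vx_i}$ of $\cV_{\vx_i}$ is \emph{not} what $\cV_{\vx_i}$ shares with $\cB_{i-1}$: it is the intersection $\cV_{\vx_i}\cap\cB_i$, identified with part of the \emph{front} face $S_6$ of the exceptional polytope $\PP_i$ inside $\cB_i$; the back side of a blocker's lower wall is not glued to anything and lies on $\partial\cK_\phi$, so ``back face of $\PP_{i-1}$'' cannot be correct. The intersection $\cV_{\vx_i}\cap\cB_{i-1}$ is instead the union of the other two side rectangles $\bb_{\vx_i}\cc_{\vx_i}\cc_{\neg\vx_i}\bb_{\neg\vx_i}$ and $\cc_{\vx_i}\dd_{\vx_i}\dd_{\neg\vx_i}\cc_{\neg\vx_i}$, identified with $S_2\cup S_3$ of $\cB_{i-1}$ (see Subsection~\ref{ss:blocker_house} and Step~6 of Section~\ref{s:sat=>shell}). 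With the corrected identifications your induction survives: $S_2\cup S_3$ is a disk, and $S_7\cup S_8$ together with the triangle $d'_*e'_*e_*$ and the sub-square of $S_6$ form a disk, so every attachment is still a union of two balls along a boundary disk; but as written the disks you glue along are not the intersections that actually occur in $\cK_\phi$. A minor further point: Lemma~\ref{l:shelling_house} cannot be invoked ``with $L$ empty,'' since it requires $H\cap L$ to be a shelling complex; if you want a reason that a single thick 1-house (blocker) is a PL ball, argue as in that lemma's proof by shelling its bricks one by one in PL sense down to a single brick and applying Lemma~\ref{l:pseudomanifold} --- which is also all the paper offers for $\cB_n$.
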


\begin{proof}
  It is easy to check that $\cU$ meets $\cR$ in a disk. This disk consists of
  the parts of the blocker houses attached to $\cT_0$, $\cA$, $\cS_{\vx}$,
  $\cS_{\neg \vx}$
  (for some $x$) and parts of the variable gadgets attached to $\cS_{\vx}$,
  $\cS_{\neg \vx}$ (for some $x$); see
  Figures~\ref{f:variable_gadget},~\ref{f:blocker_i},~\ref{f:blocker_0}
  and~\ref{f:blocker_n}. It remains to show that $\cU$ is a ball. 

  For the remainder of this proof, all the gadgets are considered only inside
  $\cU$. We again use a notion of intermediate complex and the remainder
  analogously as in the beginning of Section~\ref{s:sat=>shell} with exception
  that everything is taken inside $\cU$ rather than $\cK_\phi$.

  First we observe that $\cB_0$ intersects the remainder in a disk (formed by
  the parts of the attachment complex glued to $\cV_{\vx_1}$ or $\cB_1$; see
  Figure~\ref{f:blocker_0}). Thus we may elementarily shell $\cB_0$ in PL sense. Then
  $\cV_{\vx_1}$ intersects the remainder in a disk formed by the intersection of
  $\cV_{\vx_1}$ with $\cB_1$ (see Figure~\ref{f:variable_gadget}). This allows us
  to elementarily shell $\cV_{\vx_1}$ in PL sense. Next, $\cB_1$ intersects the remainder in a
  disk (formed by the parts of the attachment complex glued to $\cV_{\vx_2}$ or
  $\cB_2$; see  Figure~\ref{f:blocker_i}). This allows us to perform an
  elementary shelling of $\cB_1$ in PL sense. We continue by elementary
  shellings of $\cV_{\vx_2}, \cB_2, \dots, \cV_{\vx_n}$ in this order yielding
  $\cB_n$. In other words $\cU$ shells in PL sense to $\cB_n$. Because
  $\cK_\phi$ is a $3$-pseudomanifold, $\cU$ is a $3$-pseudomanifold as well.
  Thus, because $\cB_n$ is a PL $3$-ball, we deduce that $\cU$ is a PL $3$-ball
  as well by Lemma~\ref{l:pseudomanifold}.
\end{proof}

\begin{proof}[Proof of Proposition~\ref{p:ball}] Now Lemma~\ref{l:shell_U} implies that $\cK_\phi$ elementarily shells in PL
sense to $\cR$. Then we can shell elementarily in PL sense all splitter houses
one by one. This is similar to Step~2 in Section~\ref{s:sat=>shell} with
exception that we can perform this for all literals (because the variable
gadgets were already shelled). Another exception is that the blockers are
already missing. But this is not a problem; we do not have to verify the
assumptions of Lemma~\ref{l:shelling_house}. We only check that
  each splitter meets the remainder in a disk which is sufficient for an elementary shelling in
PL sense. Now we can shell incoming houses, clause gadgets and outgoing
houses. Here we can use the approach from Step 3 again used for all literals
but otherwise essentially in verbatim (because the attachments are the
same---in case of turbine only one type of shelling complex occurs). We also
point out that for the purposes of this proof, Step~3 could be simplified but
we do not attempt so as we can use something that has been already done.
Finally we shell the conjunction gadget by an elementary shelling in PL sense
obtaining $\cT$ which is a $3$-ball. Altogether, $\cK_\phi$ shells in PL sense
to $\cT$. Therefore, Lemma~\ref{l:pseudomanifold}
implies that $\cK_\phi$ is a $3$-ball as well.
\end{proof}

\section{Shellable implies satisfiable}
\label{s:shell=>sat}
Assume that our triangulated ball $\cK_\phi$ is shellable. We aim to show that
$\phi$ is satisfiable. We fix an arbitrary shelling down of $\cK_\phi$.
For tetrahedra $\Lambda$ and $\Gamma$ of $\cK_\phi$ we write $\Lambda \prec
\Gamma$ if $\Lambda$ is appears before $\Gamma$ in this shelling. For a tetrahedron $\Lambda$ of
$\cK_\phi$ we
also denote $\cK_\Lambda$ the subcomplex of $\cK_\phi$ formed by $\Lambda$ and
the tetrahedra that will follow in our shelling including the one that remains
after finishing the shelling; $\cK_\Lambda := (\cK_\phi)_\Lambda$ in the notation introduced above
Lemma~\ref{l:all_balls}. We know that $\cK_\Lambda$ is
a ball by Lemma~\ref{l:all_balls} and Proposition~\ref{p:ball} and we also know
that $\Lambda$ is free in $\cK_\Lambda$ by Lemma~\ref{l:free_facet}.

Given a gadget $\cG$, let $\Gamma(\cG)$ be the first tetrahedron removed from
$\cG$ during our shelling. In addition, let $\cH$ be a splitter house, an
incoming house, an outgoing house, or a blocker house. Recall that $\cH$
contains an important triangle $\tau$, which we further denote $\tau(\cH)$
whenever we want to emphasize $\cH$. In each case, $\tau(\cH)$ is in two
tetrahedra of $\cK_\phi$; exactly one of them is inside $\cH$. The tetrahedron
containing $\tau(\cH)$ inside $\cH$ will be denoted by $\Theta(\cH)$ whereas
the tetrahedron containing $\tau(\cH)$ outside $\cH$ will be denoted
$\Xi(\cH)$.

\stepcounter{theorem}

\begin{claim}
  \label{c:order_H}
  For every thick 1-house $\cH$ (i.e. $\cH$ is one of $\cS_\ell$,
  $\cI_{\ell,\kappa}$, $\cO_\kappa$ or $\cB_i$) we have
\[
  \Xi(\cH) \prec \Gamma(\cH) \preceq \Theta(\cH).
\]
\end{claim}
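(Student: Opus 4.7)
The plan is to split the two inequalities. The right-hand inequality $\Gamma(\cH) \preceq \Theta(\cH)$ is essentially trivial: $\Theta(\cH)$ is a tetrahedron of $\cH$, and $\Gamma(\cH)$ is by definition the first tetrahedron of $\cH$ to appear in our shelling, so the first tetrahedron of $\cH$ shelled cannot come after $\Theta(\cH)$. All the real content lies in the left-hand inequality $\Xi(\cH) \prec \Gamma(\cH)$.

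To prove $\Xi(\cH) \prec \Gamma(\cH)$ I would argue by contradiction. Assume $\Gamma(\cH) \prec \Xi(\cH)$; equality is impossible since $\Xi(\cH) \notin \cH$ while $\Gamma(\cH) \in \cH$. Set $B := \cK_{\Gamma(\cH)}$, the intermediate ball at the moment just before $\Gamma(\cH)$ is shelled. By Proposition~\ref{p:ball} and Lemma~\ref{l:all_balls}, $B$ is a triangulated $3$-ball, and by Lemma~\ref{l:free_facet}, $\Gamma(\cH)$ is a free facet of $B$. The strategy is now to invoke Lemma~\ref{l:1house_blocked} on the pair $(B,\cH)$: if its hypotheses hold, the lemma yields that no tetrahedron of $B$ contained in $\cH$ is free, which directly contradicts the freeness of $\Gamma(\cH)\in\cH$.

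Verifying the two hypotheses of Lemma~\ref{l:1house_blocked} is the main technical step. First, every face of $\partial \cH$ outside the attachment complex must lie in $\partial B$: by the construction of Section~\ref{s:shellability_construction}, $\cH$ is glued to the other gadgets of $\cK_\phi$ exactly along its attachment complex, so those ``unglued'' boundary faces of $\cH$ lie in $\partial \cK_\phi$, and since they all sit inside $\cH$ and no tetrahedron of $\cH$ has been shelled yet (by definition of $\Gamma(\cH)$), they still belong to $\partial B$. Second, $\tau(\cH)$ must not lie in $\partial B$: the triangle $\tau(\cH)$ is contained in exactly two tetrahedra of $\cK_\phi$, namely $\Theta(\cH)$ and $\Xi(\cH)$; both are still present in $B$ (the former because $\cH$ is still intact, the latter by our contradiction hypothesis), so $\tau(\cH)$ is an interior triangle of $B$. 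The expected obstacle is purely bookkeeping in the first hypothesis—certifying that each type of thick $1$-house (splitter, incoming, outgoing, blocker) really does meet the rest of $\cK_\phi$ only along its attachment complex; this should follow by an inspection of the gluing steps already carried out in Section~\ref{s:shellability_construction} and the verification that $\cK_\phi$ is a simplicial complex.
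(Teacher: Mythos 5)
Your proposal is correct and follows essentially the same route as the paper: the right inequality is immediate from the definition of $\Gamma(\cH)$, and the left one is obtained by contradiction, applying Lemma~\ref{l:1house_blocked} to the intermediate ball $\cK_{\Gamma(\cH)}$ (a ball by Proposition~\ref{p:ball} and Lemma~\ref{l:all_balls}, with $\Gamma(\cH)$ free by Lemma~\ref{l:free_facet}), after checking that the non-attached boundary faces of $\cH$ lie in $\partial\cK_\phi$ and hence in the boundary of the intermediate ball, and that $\tau(\cH)$ is interior because both $\Theta(\cH)$ and $\Xi(\cH)$ are still present. The verification of the hypotheses matches the paper's argument, so there is nothing to add.
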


\begin{proof}
  The inequality $\Gamma(\cH) \preceq \Theta(\cH)$ follows immediately from the
  fact that $\Gamma(\cH)$ is the first tetrahedron removed from $\cH$ during
  our shelling, thus it remains to show $\Xi(\cH) \prec \Gamma(\cH)$.

 For simplicity, let $\Gamma := \Gamma(\cH)$. We use Lemma~\ref{l:1house_blocked} with $B
= \cK_{\Gamma}$ and $H = \cH$. The assumption on faces of $\partial
H$ not in the attachment complex is satisfied as every face of $\partial H =
\partial \cH$ not in the attachment complex is also a face of $\partial
\cK_\phi$ due to our construction. But then it also has to be a face of $\partial
\cK_\Gamma$ as $\cK_\Gamma$ is a subcomplex of $\partial \cK_\phi$ (and both
  are balls). Now for contradiction assume that $\Gamma \prec \Xi(\cH)$.
  As we also have $\Gamma \preceq \Theta(\cH)$ we deduce that 
  both $\Theta(\cH)$ and $\Xi(\cH)$ belong to $\cK_\Gamma$. This
  means that the assumption of Lemma~\ref{l:1house_blocked} 
  that the triangle $\tau(\cH)$ is not in 
  $\partial B = \partial \cK_{\Gamma}$ is also satisfied. From
  Lemma~\ref{l:1house_blocked} we
deduce that there is no free tetrahedron of $\cK_{\Gamma}$ contained in $\cH$
which contradicts the fact that $\Gamma$ is a free tetrahedron of
$\cK_{\Gamma}$ in $\cH$. This finishes the proof of the claim.
\end{proof}

Now, for a clause $\kappa$, let $\Xi_\kappa$ be the unique tetrahedron of $\cO_\kappa$ which meets $\cA$ in a
triangle. 
\begin{claim}
  \label{c:order_A}
 For every clause $\kappa$ we have
\[
  \Xi_\kappa \prec \Xi(\cB_n).
\]
\end{claim}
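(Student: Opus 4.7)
The plan is to combine Lemma~\ref{l:conjunction_cone} (applied to $C=\cA$ inside $B=\cK_\phi$) with Claim~\ref{c:order_H} applied to the blocker house $\cB_n$. The first task is to identify the abstractly defined tetrahedra of Lemma~\ref{l:conjunction_cone} with concrete gadget tetrahedra. Since $\cA$ is the cone with apex $\dd$ over the polygon $\PP$, which itself is triangulated as a cone with apex $\ww_-$, the tetrahedron $\Delta_1$ of $\cA$ contains the triangle $\dd\ww_-\ww_+$. Under the identifications $e\to\ww_+$, $e'\to\ww_-$, $d'\to\dd$ of Subsection~\ref{ss:blocker_house} this triangle is precisely $\tau(\cB_n)$, and therefore $\Delta_1=\Xi(\cB_n)$, while its partner $\Delta_0'$ across that triangle (the unique tetrahedron of $\cK_\phi$ outside $\cA$ containing it) lies in $\cB_n$ and equals $\Theta(\cB_n)$. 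For each $i\in\{1,\dots,k\}$, the edge $\varepsilon_i$ of $\PP$ is shared with a unique outgoing house $\cO_{\kappa_i}$, the triangle $\dd\varepsilon_i$ is the triangle $opq$ of $\cO_{\kappa_i}$ described in Subsection~\ref{ss:conjunction_gadget}, and its companion $\Delta_i'$ outside $\cA$ is the unique tetrahedron of $\cO_{\kappa_i}$ on the other side, which is precisely $\Xi_{\kappa_i}$.

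Next I would verify the hypotheses of Lemma~\ref{l:conjunction_cone}. The triangles $\dd\varepsilon_i$ for $i\in\{0,\dots,k\}$ are all interior to $\cK_\phi$ since they are the gluings of $\cA$ with either $\cB_n$ (for $i=0$) or with the corresponding outgoing house (for $i\ge 1$); and every edge of $\cA$ through $\dd$ lies on $\partial\cK_\phi$, which follows from inspecting Figures~\ref{f:blocker_n} and~\ref{f:outgoing_attachment} together with the observation that $\dd$ appears only in $\cA$, $\cB_n$ and the outgoing houses, with the incident faces around each such edge arranged so that a boundary face is always visible.

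Finally, Claim~\ref{c:order_H} applied to $\cB_n$ yields $\Xi(\cB_n)\prec\Theta(\cB_n)$, i.e.\ $\Delta_1\prec\Delta_0'$, so $\Delta_0'$ is not shelled before $\Delta_1$. The dichotomy of Lemma~\ref{l:conjunction_cone} then forces the other alternative: $\Delta_i'\prec\Delta_1$ for every $i\in\{1,\dots,k\}$. For a given clause $\kappa$, picking the $i$ for which $\cO_\kappa=\cO_{\kappa_i}$ gives $\Xi_\kappa=\Delta_i'\prec\Delta_1=\Xi(\cB_n)$, as required. The only mildly technical part is the hypothesis check for Lemma~\ref{l:conjunction_cone}; after that, the claim follows from a two-line comparison.
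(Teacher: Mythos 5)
Your proposal is correct and takes essentially the same route as the paper: apply Lemma~\ref{l:conjunction_cone} with $B=\cK_\phi$ and $C=\cA$, identify $\Delta_1=\Xi(\cB_n)$, $\Delta'_0=\Theta(\cB_n)$ and the $\Delta'_i$ ($i\geq 1$) with the tetrahedra $\Xi_{\kappa_i}$ of the outgoing houses, and exclude the alternative ``$\Delta'_0$ shelled before $\Delta_1$'' by Claim~\ref{c:order_H}. Your explicit spelling-out of these identifications and of the lemma's hypotheses is just a more detailed version of what the paper dispatches as ``satisfied due to the construction.''
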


\begin{proof}
We use Lemma~\ref{l:conjunction_cone} with $B = \cK_\phi$ and $C = \cA$.
The two initial assumptions are satisfied due to the construction. Note that
the tetrahedron $\Delta_1$ in the statement of the lemma is the tetrahedron
$\Xi(\cB_n)$ and $\Delta'_0$ is the tetrahedron $\Theta(\cB_n)$. 
We remark that conclusion that $\Delta'_0 = \Theta(\cB_n)$ is shelled
before $\Delta_1 = \Xi(\cB_n)$ cannot be satisfied due to Claim~\ref{c:order_H}.
  Thus all the tetrahedra $\Delta'_i$ (for $i \in \{1, \dots, k-1\}$) in the statement of
Lemma~\ref{l:conjunction_cone} have to be shelled before $\Xi(\cB_n)$ while
$\Xi_\kappa$ is one of them. This finishes the proof of the claim.
\end{proof}

Now, for a clause $\kappa$ and a literal $\ell$ in $\kappa$, let
$\Xi_\ell(\cC_\kappa)$ be the
unique tetrahedron outside $\cC_\kappa$ which contains the triangle $\xx_{\ell,
\kappa} \yy_{\ell,\kappa}\zz_{\ell,\kappa}$. We recall that this triangle is
one of the triangles $\tau_1$, $\tau_2$ or $\tau_3$ of $\cC_\kappa$; see
Figure~\ref{f:clause_attachment}. We also remark that $\Xi_\ell(\cC_\kappa)$ belongs
to the incoming house $\cI_{\ell,\kappa}$.

\begin{claim}
  \label{c:order_C}
 For every clause $\kappa$ there is a literal $\ell$ in this clause such that
\[
  \Xi_\ell(\cC_\kappa) \prec \Gamma(\cC_\kappa).
\]
\end{claim}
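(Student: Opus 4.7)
The plan is to mimic the proof of Claim~\ref{c:order_H} but using Lemma~\ref{l:turbine_blocked} in place of Lemma~\ref{l:1house_blocked}. I would argue by contradiction: suppose that for a fixed clause $\kappa$ we have $\Gamma(\cC_\kappa) \prec \Xi_\ell(\cC_\kappa)$ for \emph{every} literal $\ell$ in $\kappa$. Set $\Gamma := \Gamma(\cC_\kappa)$ and $B := \cK_\Gamma$, which is a $3$-ball in which $\Gamma$ is free by Lemma~\ref{l:all_balls}, Proposition~\ref{p:ball} and Lemma~\ref{l:free_facet}.

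The strategy is now to apply Lemma~\ref{l:turbine_blocked} to $B$ with $T := \cC_\kappa$, which will yield that no tetrahedron of $B$ inside $\cC_\kappa$ is free---directly contradicting that $\Gamma \in \cC_\kappa$ is free in $B$.

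To invoke the lemma I need to verify its three hypotheses. First, $\cC_\kappa \subseteq B$ holds because $\Gamma$ is by definition the first tetrahedron of $\cC_\kappa$ shelled, so at the moment just before removing $\Gamma$ all of $\cC_\kappa$ is still present. Second, every face of $\partial \cC_\kappa$ not in the attachment complex lies in $\partial \cK_\phi$ by the way we glued gadgets (such faces only ever belonged to $\cC_\kappa$); since $B$ is a $3$-ball subcomplex of the $3$-ball $\cK_\phi$, any such face of $\partial \cK_\phi$ that is still in $B$ is also in $\partial B$. Third, I need to see that none of the triangles $\tau_1, \tau_2, \tau_3$ lies in $\partial B$. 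Each $\tau_i$ is identified with $\xx_{\ell,\kappa}\yy_{\ell,\kappa}\zz_{\ell,\kappa}$ for the corresponding literal $\ell$ of $\kappa$, and this triangle is shared between exactly two tetrahedra of $\cK_\phi$: one inside $\cC_\kappa$ (not yet removed, since $\Gamma$ is the very first tetrahedron of $\cC_\kappa$ to be shelled), and $\Xi_\ell(\cC_\kappa)$ itself, which by our contradiction hypothesis has not been shelled before $\Gamma$. Hence both tetrahedra belong to $B$, so $\tau_i \notin \partial B$.

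With all three hypotheses checked, Lemma~\ref{l:turbine_blocked} forbids any free tetrahedron of $B$ inside $\cC_\kappa$, which is the desired contradiction. I do not expect a real obstacle here; the only subtle point is the bookkeeping that ensures each $\tau_i$ truly has both of its two containing tetrahedra present in $B$, and this reduces to the inequality $\Gamma \preceq \Theta_\ell(\cC_\kappa)$ (by definition of $\Gamma$) together with the assumed $\Gamma \prec \Xi_\ell(\cC_\kappa)$, mirroring the analogous step in Claim~\ref{c:order_H}.
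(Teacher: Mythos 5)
Your proposal is correct and follows essentially the same route as the paper's own proof: both argue by contradiction, apply Lemma~\ref{l:turbine_blocked} with $B = \cK_{\Gamma}$ and $T = \cC_\kappa$ where $\Gamma = \Gamma(\cC_\kappa)$, verify the boundary and $\tau_1,\tau_2,\tau_3$ hypotheses exactly as you do, and conclude a contradiction with $\Gamma$ being free in $\cK_\Gamma$. The only cosmetic difference is your ad hoc notation $\Theta_\ell(\cC_\kappa)$, which the paper does not introduce but whose meaning is clear.
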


\begin{proof}
  The claim follows from Lemma~\ref{l:turbine_blocked} in a similar way as
  Claim~\ref{c:order_H} from Lemma~\ref{l:1house_blocked}. Namely, we use
  Lemma~\ref{l:turbine_blocked} with $B = \cK_\Gamma$ and $T = \cC_\kappa$ where
  $\Gamma = \Gamma(\cC_\kappa)$. The
  assumption on faces of $T$ not in the attachment complex is satisfied 
  as every face of $\partial T =
\partial \cC_\kappa$ not in the attachment complex is also a face of $\partial
\cK_\phi$ due to our construction. But then it also has to be a face of $\partial
\cK_\Gamma$ as $\cK_\Gamma$ is a subcomplex of $\cK_\phi$ (and both
  are balls).
  
  Now for contradiction assume that $\Gamma \prec \Xi_\ell(\cC_\kappa)$ for every
  $\ell$ in $\kappa$. We deduce that the tetrahedra $\Xi_\ell(\cC_\kappa)$ belong to
  $\cK_\Gamma$. As whole $\cC_\kappa$ belongs to $\cK_\Gamma$ as well (due to
  definition of $\Gamma = \Gamma(\cC_\kappa)$), we deduce that none of the three
  triangles $\tau_1, \tau_2, \tau_3$ of the thick turbine $\cC_\kappa$ belongs to
  $\partial \cK_\Gamma$. Lemma~\ref{l:turbine_blocked} implies that there is no
  free tetrahedron of $B = \cK_\Gamma$ contained in $T = \cC_\kappa$. However
  $\Gamma$ is a free tetrahedron of $B = \cK_\Gamma$ contained in $T =
  \cC_\kappa$
  which is the required contradiction.
\end{proof}

Now we aim to define our assignment of variables of the formula $\phi$. If 
$\Xi_\ell(\cC_\kappa) \prec \Gamma(\cC_\kappa)$ we set $\ell$ to TRUE 
(that is, if $\ell = \vx$ for a variable $\vx$, then we set $\vx$ to TRUE and
if $\ell = \neg \vx$, we set $\vx$ to FALSE). If $\vy$ is a variable which did not get any assignment from this rule, we set it arbitrarily TRUE or FALSE. As soon as we verify that there are no conflicts, that is, no literal has been assigned both TRUE and FALSE, we get a satisfying assignment due to
Claim~\ref{c:order_C}.

Thus it remains to check that there are no conflicts. For contradiction assume
that $\ell$ has been set both TRUE and FALSE. Without loss of generality $\ell
= \vx$ for some variable $\vx$ (otherwise we swap $\ell$ with $\neg \ell$). Then we
have $\Xi_{\vx}(\cC_\kappa) \prec \Gamma(\cC_\kappa)$ for some clause $\kappa$
containing $\vx$ 
and $\Xi_{\neg \vx}(\cC_{\kappa'}) \prec \Gamma(\cC_{\kappa'})$ for another
clause $\kappa'$ containing $\neg \vx$. Using the former inequality, 
Claims~\ref{c:order_H} and \ref{c:order_A} and the
facts that $\Xi(\cI_{\vx,\kappa})$ belongs to $\cS_{\vx}$,
$\Xi_{\vx}(\cC_\kappa)$ belongs to $\cI_{\vx,\kappa}$, $\Xi(\cO_\kappa)$
belongs to $\cC_\kappa$, and $\Xi_\kappa$ belongs to $\cO_\kappa$ we
deduce:

\begin{equation}
\label{e:Sx}
   \Xi(\cS_{\vx}) \prec \Gamma(\cS_{\vx}) \preceq \Xi(\cI_{\vx,\kappa}) \prec
   \Gamma(\cI_{\vx,\kappa}) \preceq \Xi_{\vx}(\cC_\kappa) \prec
   \Gamma(\cC_\kappa) \preceq \Xi(\cO_\kappa) \prec
    \Gamma(\cO_\kappa) \preceq \Xi_\kappa \prec \Xi(\cB_n).
\end{equation}

Analogously, we deduce:
\begin{equation}
\label{e:Snx}
   \Xi(\cS_{\neg \vx}) \prec \Gamma(\cS_{\neg \vx}) \preceq \Xi(\cI_{\neg
   \vx,\kappa'}) \prec \Gamma(\cI_{\neg \vx,\kappa'})
      \preceq \Xi_{\neg \vx}(\cC_{\kappa'})
      \prec \Gamma(\cC_{\kappa'}) \preceq \Xi(\cO_{\kappa'}) \prec
      \Gamma(\cO_{\kappa'}) \preceq \Xi_{\kappa'} \prec \Xi(\cB_n).
\end{equation}

In addition, using Claim~\ref{c:order_H} and the fact that $\Xi(\cB_i)$ belongs
to $\cB_{i+1}$ for $i \in \{0, \dots, n-1\}$ we deduce
\begin{equation}
\label{e:Bs}
  \Xi(\cB_n) \prec \Gamma(\cB_n) \preceq \Xi(\cB_{n-1}) \prec \Gamma(\cB_{n-1})
  \preceq \cdots \preceq \Xi(\cB_0) \prec \Gamma(\cB_0).
\end{equation}

Note that $\Xi(\cS_{\vx})$ and $\Xi(\cS_{\neg \vx})$ both belong to the variable
gadget $\cV_{\vx}$ and they are the unique tetrahedron of $\cV_{\vx}$ containing the
triangle $\aa_{\vx}\bb_{\vx}\dd_{\vx}$ and $\aa_{\neg \vx}\bb_{\neg
\vx}\dd_{\neg \vx}$, respectively. (To recall the notation see
Figure~\ref{f:variable_gadget}.) Because $\Xi(\cS_{\vx}) \neq \Xi(\cS_{\neg
\vx})$, we either have $\Xi(\cS_{\vx}) \prec \Xi(\cS_{\neg \vx})$ or
$\Xi(\cS_{\neg \vx}) \prec \Xi(\cS_{\vx})$. 

First, we assume $\Xi(\cS_{\vx}) \prec \Xi(\cS_{\neg \vx})$ and we will bring this
case to the contradiction. The other case will be analogous. Let $\Psi$ be the
tetrahedron that immediately follows after $\Xi(\cS_{\neg \vx})$ in our shelling.

\begin{claim}
\label{c:six_tetrahedra}
  The complex $\cK_\Psi$ contains all six tetrahedra of $\cK_{\phi}$ which do not
  belong to $\cV_{\vx}$ while they meet one of the rectangles
  $\bb_{\vx}\cc_{\vx}\cc_{\neg \vx}\bb_{\neg \vx}$, $\bb_{\vx}\dd_{\vx}\dd_{\neg
  \vx}\bb_{\neg \vx}$ or $\cc_{\vx}\dd_{\vx}\dd_{\neg \vx}\cc_{\neg \vx}$ in a triangle. 
\end{claim}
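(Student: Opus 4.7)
The plan is to show that at the moment we are about to shell $\Psi$ no tetrahedron of any blocker house has yet been removed, after which the claim follows immediately by observing that each of the six tetrahedra in question lies inside a blocker.

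First I would identify the six tetrahedra explicitly. Writing $\vx = \vx_j$, the construction in Subsection~\ref{ss:blocker_house} shows that the rectangle $\bb_\vx\dd_\vx\dd_{\neg \vx}\bb_{\neg \vx}$ of $\cV_\vx$ is identified with the corresponding square of $\cB_j$, while the other two rectangles $\bb_\vx\cc_\vx\cc_{\neg \vx}\bb_{\neg \vx}$ and $\cc_\vx\dd_\vx\dd_{\neg \vx}\cc_{\neg \vx}$ are identified with the squares $S_2$ and $S_3$ of $\cB_{j-1}$; the boundary cases $j=1$ and $j=n$ cause no issue since both $\cB_0$ and $\cB_n$ are present in the construction. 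Each rectangle is subdivided into two triangles, and because $\cK_\phi$ is a $3$-pseudomanifold each subdividing triangle lies in precisely one tetrahedron outside $\cV_\vx$, necessarily inside the blocker along which the rectangle is glued. Thus the six tetrahedra split as four tetrahedra of $\cB_{j-1}$ and two tetrahedra of $\cB_j$.

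Second, I would combine the two chains already established. From (\ref{e:Snx}) we have $\Xi(\cS_{\neg \vx}) \prec \Xi(\cB_n)$; since $\Psi$ is defined as the tetrahedron immediately following $\Xi(\cS_{\neg \vx})$ in the shelling, this forces $\Psi \preceq \Xi(\cB_n)$. From (\ref{e:Bs}) we read off $\Xi(\cB_n) \prec \Gamma(\cB_i)$ for every $i \in \{0, \dots, n\}$. Chaining the two gives $\Psi \prec \Gamma(\cB_i)$ for every $i$, and since $\Gamma(\cB_i)$ is by definition the first tetrahedron of $\cB_i$ to be shelled, no tetrahedron of any $\cB_i$ has been shelled strictly before $\Psi$. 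Consequently every tetrahedron of every blocker belongs to $\cK_\Psi$, and in particular all six tetrahedra identified in the first step lie in $\cK_\Psi$.

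The only real obstacle is the bookkeeping of the first step: one must verify carefully that each of the three rectangles of $\partial\cV_\vx$ is glued entirely to a blocker house (and not, e.g., partly to $\cT_0$ or to a splitter) and that each subdividing triangle produces exactly one outside tetrahedron lying in that blocker. This is a direct inspection of the identifications recorded in Subsections~\ref{ss:thick_variable}, \ref{ss:thick_splitter} and~\ref{ss:blocker_house}; everything else is a formal chase through the inequalities already proved.
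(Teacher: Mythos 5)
Your proposal is correct and follows essentially the same route as the paper: observe that the six tetrahedra lie in blocker houses (four in $\cB_{j-1}$, two in $\cB_j$, which you make more explicit than the paper does), and then deduce $\Psi \prec \Gamma(\cB_i)$ for all $i$ by chaining~\eqref{e:Snx}, \eqref{e:Bs} and the definition of $\Psi$. The paper's own proof is exactly this, citing Figure~\ref{f:variable_gadget} for the first step.
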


\begin{proof}
  Note that these six tetrahedra are contained in $\cB_i$ for some $i$ (see
  Figure~\ref{f:variable_gadget}). Therefore the claim follows from the fact
  that $\Psi \prec \Gamma(\cB_i)$ for all $i$ which follows from~\eqref{e:Snx}
  and~\eqref{e:Bs} and the definition of $\Psi$. 
\end{proof}

Now we use Lemma~\ref{l:tp_blocked}
with $P = \cV_{\vx}$, $B = \cK_\phi$ and $L = \cK_\Psi$. The assumption on $P
\cap \partial B$ is satisfied due to the construction of $\cK_\phi$.
Also the assumption on six tetrahedra is satisfied due to
Claim~\ref{c:six_tetrahedra}. It follows from the lemma that either
$\Xi(\cS_{\vx})$ or $\Xi(\cS_{\neg \vx})$ belongs to $\cK_\Psi$. (Note that
$\Xi(\cS_{\vx})$ is one of the first three tetrahedra while
$\Xi(\cS_{\neg \vx})$ is one of the second three tetrahedra in the conclusion of
the lemma.) This contradicts that $\Xi(\cS_{\vx}) \prec \Xi(\cS_{\neg \vx}) \prec \Psi$.

The other case $\Xi(\cS_{\neg \vx}) \prec \Xi(\cS_{\vx})$ yields a contradiction by
a symmetric argument replacing $\vx$ and $\neg \vx$. In an analogy of
Claim~\ref{c:six_tetrahedra}, we use~\eqref{e:Sx} and~\eqref{e:Bs}.

This finishes the proof that $\phi$ is satisfiable and thereby the proof of
Theorem~\ref{t:main} as well.

\section{Hardness of shellability for embedded $2$-complexes.}
\label{s:embedded_hard}
In this section we use the notation from Section~\ref{s:thin_construction}.
Let us recall that $\cK'_\phi$ is the complex built in
Subsection~\ref{ss:gluing_thin}. We consider a $2$-complex $\cK''_\phi$ which
is a subcomplex obtained from $\cK'_\phi$ by removing the interior of each
variable gadget. (In other words we remove the edge $\aa_{\vx}\aa_{\neg \vx}$
from each variable gadget $\cV_\vx$ and all simplices containing this edge.)
We aim to show that $\cK''_\phi$ is shellable if and only if $\phi$ is
satisfiable. Note that $\cK''_\phi$ is a 2-complex and it is embeddable into
$3$-space because it is a subcomplex of $\cK'_\phi$. Therefore the
aforementioned equivalence will prove Theorem~\ref{t:embedded_hard}.

\paragraph{Satisfiable implies shellable.}
For the proof of the first implication, we need a few auxiliary tools.

Let $K$ be a simplicial complex. We say that the complex $K$ collapses to a
complex $K'$ via an $(i,j)$ collapse, if it arises from $K$ by an elementary
collapse removing a free face of dimension $j$ contained in a unique
face of dimension $i$. 

\begin{observation}
  \label{o:2i}
  A $2$-complex is shellable if it collapses to a triangle using
  only $(2,0)$ and $(2,1)$ collapses.
\end{observation}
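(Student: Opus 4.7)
The plan is to show that the sequence of triangles removed by the given $(2,0)$ and $(2,1)$ collapses, listed in the same order, is itself a shelling down of $K$ to the final triangle. Write $F_1, F_2, \dots, F_{m-1}$ for the triangles removed (in order) and $F_m$ for the triangle that remains at the end. I will verify the defining condition of Definition~\ref{d:shell_down}: for every $i$, $K[F_i] \cap K[F_{i+1}, \dots, F_m]$ is pure and $1$-dimensional.

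The key preliminary is that $K$ and every intermediate complex arising along the collapse sequence are pure $2$-dimensional. Any ``loose'' edge or vertex (one not contained in any triangle) created along the way could never be removed by a later $(2,0)$ or $(2,1)$ collapse, since these collapses only delete simplices together with a containing triangle; its presence would therefore contradict the hypothesis that the process terminates at a single triangle. Two consequences I will use are: (a) the complex left after the $i$th collapse coincides with $K[F_i, F_{i+1}, \dots, F_m]$; and (b) every edge of $F_i$ that is \emph{not} deleted by the $i$th collapse must already belong to some later $F_j$.

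A two-case check then verifies the shelling condition. In the $(2,1)$ case, the free edge $e$ of $\sigma := F_i$ is contained only in $\sigma$ and thus vanishes, while the other two edges $e', e''$ of $\sigma$ survive by (b); hence $K[F_i] \cap K[F_{i+1}, \dots, F_m] = K[e', e'']$, a pure $1$-complex. In the $(2,0)$ case at vertex $v \in \sigma$, purity forces the only edges of $K$ containing $v$ to be the two edges $vw_1, vw_2$ of $\sigma$ at $v$ (any triangle containing an edge through $v$ contains $v$ and hence, by freeness of $v$, equals $\sigma$); both of these edges belong only to $\sigma$ and vanish together with $v$, while the opposite edge $w_1w_2$ persists by (b). Thus $K[F_i] \cap K[F_{i+1}, \dots, F_m] = K[w_1 w_2]$, again pure $1$-dimensional.

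This exhibits $F_1, \dots, F_{m-1}$ as a shelling down of $K$ to $F_m$, so $K$ is shellable. The only mildly delicate point is the purity/persistence argument in the second paragraph; once it is in place, everything reduces to a direct inspection and I do not anticipate any serious obstacle.
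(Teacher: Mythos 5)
Your proof is correct and follows the same route as the paper, which simply observes that the order in which triangles are removed by the $(2,0)$ and $(2,1)$ collapses is a shelling down; you merely spell out the verification (purity, no leftover loose faces, and the two-case check of the intersection condition) that the paper leaves implicit. The only blemish is a harmless indexing slip in your point (a): the complex left after the $i$th collapse is $K[F_{i+1},\dots,F_m]$, not $K[F_i,\dots,F_m]$.
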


\begin{proof}
It is sufficient to observe that the order of removing triangles by $(2,0)$ and
  $(2,1)$ collapses is a shelling down.
\end{proof}

The following lemma is a strengthening of Lemma~16 in~\cite{gpptw19}.

\begin{lemma} 
\label{l:stronger_16}
  Let $D$ be a triangulated disk and $T$ be a tree in 1-skeleton of
  $D$ with at least $2$ vertices (i.e., $T$ is 1-dimensional). Then $D$
  collapses to $T$ using only $(2,0)$ and $(2,1)$ collapses.
\end{lemma}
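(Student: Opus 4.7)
The plan is to proceed by induction on the number $n$ of triangles of $D$. For the base case $n = 1$, the disk is a single triangle and $T$ is either a single edge or a path of two edges of the triangle; the former is handled by one $(2,0)$ collapse on the vertex opposite the $T$-edge, and the latter by one $(2,1)$ collapse on the edge of the triangle not in $T$.

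For the inductive step $n \geq 2$, the aim is to find a single $(2,0)$ or $(2,1)$ collapse that preserves $T$ and reduces $n$. Natural candidates, in order of preference, are: (i) a $(2,0)$ collapse on a boundary ``ear'' vertex $v$ of $D$ (a degree-$2$ boundary vertex whose two incident edges belong to a common triangle) satisfying $v \notin T$, yielding a smaller disk containing $T$; (ii) a $(2,1)$ collapse on a boundary edge $e \notin T$ whose containing triangle $\tau$ has exactly one boundary edge and whose opposite vertex is interior to $D$, again yielding a smaller disk containing $T$; (iii) a $(2,1)$ collapse on a boundary edge $e \notin T$ of a two-boundary-edge triangle at an ear vertex $v \in T$, where the companion boundary edge $e'$ of the triangle lies in $T$ (which is forced because the vertex $v \in T$ cannot be isolated in the connected tree $T$). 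A pigeon-hole argument shows that at least one case is always available: otherwise, every ear of $D$ would lie in $T$ and every boundary edge would lie in $T$, so $\partial D \subseteq T$, contradicting that $T$ is acyclic while $\partial D$ is a cycle.

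The main obstacle will be case (iii), because that collapse produces a pendant $T$-edge attached to the reduced disk, so the intermediate complex is no longer itself a triangulated disk and the induction hypothesis does not literally apply. I would resolve this by strengthening the inductive statement to allow a forest $F \subseteq T$ of edges attached along the boundary of the disk, carrying $F$ through the induction and refining the case analysis so that attachment vertices of $F$ (which acquire the attached $F$-edges as additional maximal faces, blocking $(2,0)$ collapses) are not selected for ear collapses. A remaining delicate edge case is when $|T| = 2$: the leaf-removal in case (iii) would reduce the target to a single vertex, violating the $\geq 2$-vertex hypothesis. I would handle this separately using the well-known shellability of triangulated disks (every shelling-down sequence consists only of $(2,0)$ and $(2,1)$ collapses), picking an initial triangle containing the unique edge of $T$ and finishing with one $(2,0)$ collapse on its remaining non-$T$ vertex.
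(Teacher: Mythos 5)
Your inductive strategy has a genuine gap: the pigeon-hole claim is false. Your three moves do not cover the case of a boundary edge $e\notin T$ whose unique triangle $\tau$ has $e$ as its \emph{only} boundary edge but whose opposite vertex nevertheless lies on $\partial D$. In that situation (i) and (iii) are irrelevant and (ii) is excluded by your interiority hypothesis, yet it does not follow that $\partial D\subseteq T$. Concretely, take the pentagon $p_1p_2p_3p_4p_5$ triangulated by the diagonals $p_1p_3$ and $p_1p_4$, and let $T$ be the boundary path with edges $p_2p_3,\,p_1p_2,\,p_1p_5,\,p_4p_5$. The only boundary edge outside $T$ is $p_3p_4$; its triangle $p_1p_3p_4$ has a single boundary edge and its apex $p_1$ is a boundary vertex, both ears $p_2,p_5$ lie in $T$ with both of their boundary edges in $T$, so none of (i)--(iii) applies, even though the lemma's conclusion of course still holds (collapse $p_3p_4$, then $p_1p_3$ and $p_1p_4$). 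Performing the $(2,1)$ collapse at such a ``pinch'' edge destroys the disk: the result is a wedge of two disks joined at the boundary apex, which is not covered by your proposed strengthening (a disk with a pendant forest) either, so the induction cannot be closed as described.

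This pinch case is exactly where the paper's proof does its real work: it performs the $(2,1)$ collapse anyway, writes the result as a wedge $D_u\cup D_v$ meeting in the apex $w$, proves (by a separate contradiction argument, using the standing assumption that no disk-preserving collapse exists) that each of $T\cap D_u$ and $T\cap D_v$ still has at least two vertices, and then applies the induction hypothesis to the two pieces separately. So to complete your argument you would need to add this splitting case and a two-piece induction (or an equivalent device); the moves (i)--(iii) alone cannot avoid it. Your remaining ingredients are fine in themselves: the base case, the observation in (iii) that $v\in T$ forces the companion boundary edge into $T$, and the treatment of the one-edge tree via extendable shellability of disks are all correct, but they are peripheral to the missing case.
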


\begin{proof}
  We will prove the claim by induction in the number of triangles. It is easy
  to resolve the case when $D$ contains exactly one triangle.

  For the second induction step, we first consider the case that $D$ collapses
  to a complex $D'$ via a $(2,0)$ or $(2,1)$ collapse so that $D'$ is again a
  disk which contains $T$. Then we perform this collapse and we provide the
  required collapses of $D'$ by induction.

  Thus, it remains to consider the case where such a collapse is not possible.
  Let $\varepsilon = uv$ be an edge on $\partial D$ which is not contained in $T$. Let
  $uvw$ be the unique triangle of $D$ containing $\varepsilon$. We deduce that
  $w$ belongs to $\partial D$, otherwise a $(2,1)$ collapse through
  $\varepsilon$ would yield a disk. 

  Let $D'$ be complex obtained by a $(2,1)$ collapse through
  $\varepsilon$. 
  Because $w \in \partial D$, we deduce that
  $D'$ is a wedge of two subcomplexes $D_u$ and $D_v$ which overlap in $w$;
	$D_u$ contains $u$ and $D_v$ contains $v$. Each of $D_u, D_v$ may either be an edge or a disk but two edges are impossible as $D$ contains at least two triangles.
	Let $T_u := D_u \cap T$ and $T_v := D_v \cap T$; see Figure~\ref{f:disk_collapse}. 

\begin{figure}
	\begin{center}
		\includegraphics{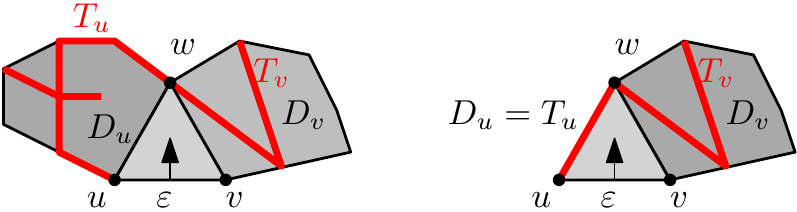}
		\caption{$D_u$ and $D_v$. Left: Both of them are disks. Right: one of them is an edge.}
		\label{f:disk_collapse}
	\end{center}
\end{figure}

  First we deduce that each of $T_u, T_v$ has at least two vertices (i.e. at least one
  edge). For contradiction assume, for example, that $T_u$ contains at most one
  vertex. (If it contains one vertex, then this vertex has to be $w$ as $T$ is
  connected.) If $D_u$ contains no triangle, then we may perform the $(2,0)$
  collapse in $D$ through $u$ yielding a disk contradicting the fact that such
  a collapse is impossible. Similarly, if $D_u$ contains single triangle, we
  again may perform a $(2,0)$ collapse in $D$ removing that triangle yielding a
  disk, a contradiction. Thus we may assume that $D_u$ contains at least two
  triangles. If $D_u$ contains a vertex in the interior of $D_u$, then there is
  also a vertex $x$ in the interior of $D_u$ which forms a triangle with some
  edge $\varepsilon'$ on $\partial D_u$. Then the $(2,1)$ collapse in $D$
  through $\varepsilon'$ yields a disk, contradiction. Thus we may assume that
  $D_u$ contains no vertex in the interior. In other words, $D_u$ is a polygon
  with at least four vertices triangulated by adding diagonals. Such a polygon
  contains at least two triangles with two edges on the boundary (consider the
  dual graph). One of these triangles can be removed from $D$ by a $(2,0)$
  collapse yielding a disk, a contradiction. (The only way how to block this is
  if the two edges on the boundary of the triangle share $w$; this may happen
  only for one triangle.) This finishes the proof that both $T_u$ and $T_v$
  have at least two vertices.

 Now let us assume that one of $D_u$, $D_v$, say $D_u$, is just an edge. Then
  necessarily $D_u = T_u$ is the tree formed by the edge $uw$. By induction, 
  $D_v$ collapses to $T_v$ using only $(2,0)$ and $(2,1)$ collapses. These
  collapses also work in $D'$ as the edge $uw$ cannot block any of them.
  Therefore $D$ collapses to $T$ using only $(2,0)$ and $(2,1)$ collapses by
  first collapsing to $D'$ and using the aforementioned collapses.

 Finally assume that both $D_u$ and $D_v$ are disks. By induction,
  $D_u$ collapses to $T_u$ and $D_v$ collapses to $T_v$ 
  using only $(2,0)$ and $(2,1)$ collapses. These collapses can be performed in
  $D'$ (say first those in $D_u$, then those in $D_v$) because no simplex of
  $D_u$ may block a collapse in $D_v$ except possible $w$ but $w$ necessarily belongs to
  $T$. Similarly, no simplex of $D_u$ may block a collapse in $D_u$. Therefore $D$ collapses to $T$ using only $(2,0)$ and $(2,1)$ collapses by 
  first collapsing to $D'$ and then using the aforementioned collapses.
\end{proof}

Now we are ready to prove that if $\phi$ is satisfiable, then $\cK''_\phi$ is
shellable. The proof is very similar to the proof in
Subsection~\ref{ss:sat=>col} and we will often refer there. We start describing
a shelling down of $\cK''_\phi$ (assuming that $\phi$ is satisfiable and after
fixing an assignment).

For every variable $\vx$ that is assigned TRUE we start with shelling of
triangles $\uu_\vx\vv_\vx\aa_\vx$, $\uu_\vx\ww_\vx\aa_\vx$ and
$\vv_\vx\ww_\vx\aa_\vx$ in this order. 
Similarly, if $\vx$ that is assigned FALSE we start with shelling of
triangles $\uu_\vx\vv_\vx\aa_{\neg \vx}$ and $\uu_\vx\ww_\vx\aa_{\neg \vx}$,
$\vv_\vx\ww_\vx\aa_{\neg \vx}$ in this order. This we do for each variable
independently for an arbitrary order of the variables. We obtain an auxiliary
complex $\cK'''$. (This is $\cK''_\phi[\tau_i, \dots, \tau_m]$ where $\tau_i,
\dots, \tau_m$ are triangles of $\cK''_\phi$ which do not appear in the
shelling above.) Note that $\vv_\vx\aa_\vx$ is free in $\cK'''$ if $\vx$ is
assigned TRUE and $\vv_\vx\aa_{\neg \vx}$ is free $\cK'''$ if $\vx$ is assigned
FALSE.

Now we start collapsing $\cK'''$ using only $(2,0)$ and $(2,1)$ collapses (to
a triangle). From Observation~\ref{o:2i} we will get that $\cK'''$ is shellable and therefore
$\cK''_\phi$ is shellable as well. Now we perform the collapses as in
Subsection~\ref{ss:sat=>col} starting with $\cS_\ell$ essentially in verbatim, we just need to be
careful about a few details:

\begin{itemize}
  \item We need to perform only $(2,0)$ or $(2,1)$ collapses. The collapses in
    Subsection~\ref{ss:sat=>col} are given either by Lemma~\ref{l:thin_1house}
    or Lemma~\ref{l:thin_turbine}. In both cases, by short backtracking of the
    references, the proof relies on Lemma~16 in~\cite{gpptw19}. Once we use the
    stronger version, Lemma~\ref{l:stronger_16}, we get versions of Lemma~\ref{l:thin_1house}
    or Lemma~\ref{l:thin_turbine} (the part regarding collapsibility) 
    using only $(2,0)$ or $(2,1)$ collapses.
  \item When collapsing the remainder of the variable gadget (one but last
    paragraph of Subsection~\ref{ss:sat=>col}), we collapse only the remainder
    of the boundary, because the interior is not in $\cK''_\phi$.
  \item In the very last step, we do not collapse the template to a point but
    to a triangle. (We can use Lemma~\ref{l:stronger_16} to collapse to an edge
    and then undo the last step.)
\end{itemize}

\paragraph{Shellable implies satisfiable.}
We again need a few auxiliary claims first.

\begin{lemma}
  \label{l:cK'_contractible}
  $\cK'_\phi$ is contractible
\end{lemma}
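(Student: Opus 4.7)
The plan is to prove contractibility by exploiting the gadget-based structure of $\cK'_\phi$ and iteratively applying the standard fact that if a CW-pair $(A \cup B, A \cap B)$ is formed from contractible subcomplexes $A$ and $B$ with contractible intersection $A \cap B$, then $A \cup B$ is contractible. (This follows from van Kampen to make $X$ simply connected, from Mayer--Vietoris to kill reduced homology, and from Whitehead's theorem, or more directly from the fact that $A \cap B \hookrightarrow A$ is a cofibration whose mapping cone is homotopy equivalent to $A \cup B$.)

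First, I would verify that each individual gadget is contractible. The template is a triangulated disk, hence contractible. The bipyramid variable gadget deformation retracts to its central edge $a^+a^-$ (equivalently, collapses to it by the collapses of Lemma~\ref{l:bipyramid}(i) followed by one more elementary collapse). Each 1-house (splitter, incoming, outgoing, blocker) is collapsible to a subtree of its lower wall by Lemma~\ref{l:thin_1house}, hence contractible. Each turbine is collapsible to its central tree by Lemma~\ref{l:thin_turbine}, hence contractible. Finally, the extra triangles inserted to resolve crossings in Step~5 of Subsection~\ref{ss:gluing_thin} are each individually contractible.

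Second, I would build $\cK'_\phi$ by gluing the gadgets in the order of Subsection~\ref{ss:gluing_thin} (template, then variable gadgets, splitters, incoming houses, clause gadgets, outgoing houses with their associated crossing triangles, and finally the chain of blockers). At each step I would check that the new gadget meets the already-built subcomplex in a contractible set. This intersection is always a $1$-dimensional subcomplex that is a (possibly trivial) tree: a variable gadget meets the template at the single point $\vv_\vx$; a splitter $\cS_\ell$ meets the prior complex along its central path glued to $\vv_\vx\rr_\ell$ together with the free edge $\vv_\vx\aa_\ell$ shared with $\cV_\vx$, joined at $\vv_\vx$; an incoming house $\cI_{\ell,\kappa}$ meets the prior complex along its central path (a subcurve of $\ggamma_{\ell,\kappa}$) together with its free edge glued to one edge of the splitting star of $\cS_\ell$, joined at $\rr_\ell$ (the splitting-star edge $\vv_{\ell,\kappa}\ww_{\ell,\kappa}$ is dangling at this stage because the clause gadget is not yet glued); each clause gadget $\cC_\kappa$ meets the prior complex along the tripod of three segments $\vv_\kappa\vv_{\ell,\kappa}$ in $\cT$ unioned with the three edges $\vv_{\ell,\kappa}\ww_{\ell,\kappa}$ from the incoming houses, which is a tree; each outgoing house together with its crossing triangles meets the prior complex along a detoured path (a subdivision of $\ddelta_\kappa$ together with its free edge $\vv_\kappa\uu_\kappa$ shared with $\cC_\kappa$), and subsequent outgoing houses additionally identify a single splitting-star edge at $\vv_\pm$; each blocker $\cB_i$ in turn meets the prior complex in a tree built from boundary edges of the triangles $\uu_{\vx_j}\vv_{\vx_j}\ww_{\vx_j}$, the segment $\vvarepsilon_\pm$, and the free/splitting-star edges shared with $\cB_{i-1}$.

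The main obstacle is purely bookkeeping: confirming at each step that the described intersection is genuinely a tree rather than a cycle or a disconnected set. The nontrivial cases are the outgoing houses, where the central path is rerouted through crossing annuli via extra triangles, and the blocker chain, which accumulates attachments to many variable gadgets and to the previous blocker. In each situation, a direct inspection of Figures~\ref{f:extra_triangles} and the construction in Step~6 shows the intersection is a connected $1$-complex without cycles, and induction then yields contractibility of $\cK'_\phi$.
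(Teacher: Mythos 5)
Your proposal is correct and follows essentially the same route as the paper: build $\cK'_\phi$ by gluing the gadgets in the construction order, note that each gadget is contractible (via the collapsibility lemmas) and that each new gadget meets the previously built complex in a tree, and conclude by a standard homotopy gluing argument. The only cosmetic difference is the tool invoked at the gluing step (your van Kampen/Mayer--Vietoris/Whitehead or cofibration argument versus the paper's contraction of the shared tree via \cite[Proposition~4.1.5]{matousek03}), which amounts to the same thing.
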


\begin{proof}
  If we build $\cK'_\phi$ by gluing the gadgets in the order as in
  Subsection~\ref{ss:gluing_thin}, we observe that every new gadget is added to
  the previous gadgets along a tree. Also every gadget (including the template)
  is contractible (due to Lemmas~\ref{l:thin_1house} and~~\ref{l:thin_turbine}
  as collapses preserve the homotopy type, or trivially by the construction in
  case of the variable gadget). This means that $\cK'_\phi$ is contractible.
  (This follows for example from~\cite[Proposition~4.1.5]{matousek03} by
  contracting the shared tree in every step.
\end{proof}

\begin{corollary}
  \label{c:wedge}
  $\cK''_\phi$ is homotopy equivalent to 
  a wedge of $n$ 2-spheres, where $n$ is the number of variables of
  $\phi$.
\end{corollary}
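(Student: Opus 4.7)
The plan is to combine Lemma~\ref{l:cK'_contractible} with the long exact sequence of the pair $(\cK'_\phi, \cK''_\phi)$ to compute the reduced homology of $\cK''_\phi$, to separately verify that $\cK''_\phi$ is simply connected, and then to upgrade the resulting numerical agreement with a wedge of $2$-spheres to a genuine homotopy equivalence via Whitehead's theorem.

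First I would verify the geometric picture. Each variable gadget $\cV_\vx$ is a triangulated $3$-ball and its boundary $\partial \cV_\vx$ is a triangulated $2$-sphere consisting of the six ``outer'' triangles $\{\uu_\vx,\vv_\vx,\aa_\vx\}$, $\{\uu_\vx,\vv_\vx,\aa_{\neg\vx}\}$, $\{\uu_\vx,\ww_\vx,\aa_\vx\}$, $\{\uu_\vx,\ww_\vx,\aa_{\neg\vx}\}$, $\{\vv_\vx,\ww_\vx,\aa_\vx\}$, $\{\vv_\vx,\ww_\vx,\aa_{\neg\vx}\}$. The simplices removed from $\cV_\vx$ in passing to $\cK''_\phi$ (the edge $\aa_\vx\aa_{\neg\vx}$ together with the three triangles and three tetrahedra containing it) form precisely the open interior of this $3$-ball, so $\cV_\vx \cap \cK''_\phi = \partial \cV_\vx$. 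Moreover, since the vertex labels $\uu_\vx,\vv_\vx,\ww_\vx,\aa_\vx,\aa_{\neg\vx}$ are specific to $\vx$ and distinct variable gadgets share no vertices, the spheres $\partial \cV_\vx$ are pairwise disjoint subcomplexes of $\cK''_\phi$.

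Therefore $\cK'_\phi$ is obtained from $\cK''_\phi$ by attaching $n$ disjoint $3$-balls along $n$ disjoint $2$-spheres. I would then equip $\cK'_\phi$ with the \emph{coarser} CW-structure in which $\cK''_\phi$ is a subcomplex and each $\cV_\vx$ contributes a single $3$-cell attached along $\partial \cV_\vx$; since attaching $k$-cells with $k\ge 3$ does not change $\pi_1$, Lemma~\ref{l:cK'_contractible} gives $\pi_1(\cK''_\phi) \cong \pi_1(\cK'_\phi) = 1$. For the homology computation, collapsing $\cK''_\phi$ inside $\cK'_\phi$ turns each $\cV_\vx$ into $\cV_\vx / \partial \cV_\vx \cong S^3$, with all these $3$-spheres sharing the collapsed point, giving $\cK'_\phi / \cK''_\phi \cong \bigvee_{i=1}^{n} S^3$. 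Plugging this together with contractibility of $\cK'_\phi$ into the long exact sequence of the pair yields
\[
\tilde H_k(\cK''_\phi) \cong \tilde H_{k+1}\bigl(\cK'_\phi/\cK''_\phi\bigr) \cong \begin{cases}\mathbb{Z}^n & k=2,\\ 0 & k\neq 2.\end{cases}
\]

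Finally, I would pick for each variable $\vx$ a path in $\cK''_\phi$ from a fixed basepoint to a vertex of $\partial \cV_\vx$ and use these paths to assemble a map $f\colon \bigvee_{i=1}^{n} S^2 \to \cK''_\phi$ whose $i$-th wedge summand is essentially the inclusion $\partial \cV_{\vx_i} \hookrightarrow \cK''_\phi$. The homology calculation above shows that these boundary spheres form a basis of $H_2(\cK''_\phi)$, so $f$ induces an isomorphism on all reduced homology groups; both source and target are simply connected CW-complexes, so Whitehead's theorem promotes $f$ to a homotopy equivalence. The step requiring the most care is the $\pi_1$ argument: the simplicial attachment of $\cV_\vx$ introduces new $1$-, $2$- and $3$-cells, so one must first pass to the coarser CW-structure (which is legitimate because each $\mathrm{int}(\cV_\vx)$ is topologically a single open $3$-ball) before invoking the fact that attaching $3$-cells preserves the fundamental group.
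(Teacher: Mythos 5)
Your proof is correct, but it follows a genuinely different route from the paper's. Both arguments start from the same geometric observation---$\cK'_\phi$ is obtained from $\cK''_\phi$ by gluing, for each variable $\vx$, the $3$-ball $\cV_\vx$ back in along its boundary $2$-sphere (and these spheres are pairwise disjoint)---and both rely on Lemma~\ref{l:cK'_contractible}. The paper then stays in elementary homotopy theory: it contracts each (contractible) variable gadget inside $\cK'_\phi$ and each boundary-minus-a-triangle disk inside $\cK''_\phi$, and uses the fact that quotienting a CW pair by a contractible subcomplex is a homotopy equivalence (\cite[Proposition~4.1.5]{matousek03}) to identify $\cK''_\phi$, up to homotopy, with a contractible space carrying $n$ spheres wedged on at points, i.e.\ with $\bigvee_{i=1}^n S^2$ directly, without computing any homology or fundamental group. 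You instead run the long exact sequence of the pair to get $\tilde H_k(\cK''_\phi)\cong\tilde H_{k+1}(\bigvee_{i=1}^n S^3)$, establish $\pi_1(\cK''_\phi)=1$ via the coarse CW structure in which each $\cV_\vx$ is a single $3$-cell (a clean and valid step), and finish with the homology Whitehead theorem applied to the map $\bigvee S^2\to\cK''_\phi$ given by the boundary spheres. This costs more machinery (Hurewicz/Whitehead) and two extra verifications: the $\pi_1$ argument, and the claim that the classes $[\partial\cV_\vx]$ form a basis of $H_2(\cK''_\phi)$---the latter is true but deserves a sentence, since it uses excision to identify $H_3(\cK'_\phi,\cK''_\phi)\cong\bigoplus_\vx H_3(\cV_\vx,\partial\cV_\vx)$ and the fact that the connecting homomorphism carries each relative fundamental class to $[\partial\cV_\vx]$. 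What your approach buys is that it sidesteps the slightly delicate bookkeeping of what the two quotient spaces in the paper's proof look like, and it would apply verbatim in settings where the attaching data is less explicit; the paper's approach is shorter and more elementary for this particular complex.
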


\begin{proof}
  Let $X'$ be the topological space obtained by contracting each variable
  gadget in $\cK'_\phi$ and $X''$ be the topological space obtained by
  contracting the boundary minus a triangle of each variable gadget. By
  construction, $X'$ and $X''$ differ so that $X''$ contains additionally a $2$-sphere
  attached to each contracted variable gadget.
  By~\cite[Proposition~4.1.5]{matousek03}, $X'$ is homotopy equivalent to
  $\cK'_\phi$, therefore contractible by Lemma~\ref{l:cK'_contractible}. 
  This implies that $X''$ is homotopy
  equivalent to the wedge of $n$ 2-spheres. By~\cite[Proposition~4.1.5]{matousek03},
  again, $X''$ is homotopy equivalent to $\cK''_\phi$.
\end{proof}

Now we assume that $\cK''_\phi$ is shellable. Let $n$ be the number of
variables of $\phi$. By Corollary~\ref{c:wedge},
$\cK''_\phi$ is homotopy equivalent to a wedge of $n$ 2-spheres. Therefore its
reduced Euler characteristic is equal to $n$. By~\cite[Theorem~8,
(i)$\Rightarrow$(iii)]{hachimori08}, $\cK''_\phi$ is collapsible after removing
some $n$ triangles $\tau_1, \dots, \tau_n$. Let $\bar \cK$ be the complex
obtained from $\cK''_\phi$ by removing these triangles. We remark that in each
$\partial \cV_{\vx}$ we have to remove at least one triangle otherwise it $\bar
\cK$ cannot be contractible, a fortiori it cannot be collapsible. Because we
only have $n$ triangles, we have to remove exactly one from each $\partial
\cV_{\vx}$. It is not hard to observe that $\cK'_\phi$ collapses to $\bar \cK$
as each variable gadget $\cV_\vx$ collapses to its boundary minus a triangle.
Therefore, because $\bar \cK$ is collapsible, we deduce that $\cK'_\phi$ is collapsible as
well. It follows that $\phi$ is satisfiable; see Subsection~\ref{ss:col=>sat}.
This finishes the proof of Theorem~\ref{t:embedded_hard}.

\bibliographystyle{alpha}
\bibliography{shard3D}
\end{document}